\documentclass[aps,physrev,onecolumn,10pt,amsmath,amssymb,nofootinbib,longbibliography]{revtex4-2}
\usepackage{amsthm,amsfonts,mathrsfs,mathtools}
\usepackage[utf8]{inputenc}
\usepackage[T1]{fontenc}
\usepackage{lmodern,graphicx,array,booktabs}
\usepackage{tikz}
\usetikzlibrary{arrows.meta,positioning,calc}
\usepackage{xcolor}
\usepackage{hyperref}
\hypersetup{colorlinks=true,linkcolor=blue,citecolor=blue,urlcolor=blue,
 pdftitle={Sequential Capacity of Quantum Processes with Finite Memory},pdfauthor={Yibin Wang}}
\theoremstyle{plain}
\newtheorem{theorem}{Theorem}[section]
\newtheorem{lemma}[theorem]{Lemma}
\newtheorem{proposition}[theorem]{Proposition}
\newtheorem{corollary}[theorem]{Corollary}
\theoremstyle{definition}
\newtheorem{definition}[theorem]{Definition}
\newtheorem{remark}[theorem]{Remark}
\DeclareMathOperator{\Tr}{Tr}
\DeclareMathOperator{\rank}{rank}
\DeclareMathOperator{\diag}{diag}

\begin{document}

\title{Sequential Capacity of Quantum Processes with Finite Memory}
\author{Yibin Wang}
\email{yibinw0210@gmail.com}
\affiliation{Graduate School of Mathematics, Nagoya University, Nagoya, 464-8601, Aichi, Japan.}
\date{September 29, 2026}

\begin{abstract}
How complex can the responses of a quantum device become as it runs longer
with a fixed internal memory? We quantify this complexity through sequential
response capacity: how many adaptive testing stages, each using a fresh run,
can continue to separate possible processes by a prescribed gap in response
probabilities. For fixed system and memory sizes, we establish a tight law
relating this capacity to run length and probability resolution. At fixed
resolution, the capacity grows on the order of $K\log K$, where $K$ is the
number of time steps in each run. Our construction attains this growth using
time-dependent phase rotations on a single visible qubit with no additional
internal memory; its tests give response probabilities exactly zero or one.
Under the same tests, classical stochastic processes that measure in a fixed
basis at every step have only linear capacity at fixed sizes and resolution.
For phase sequences selected by a stored classical label, we then quantify
how known independent Pauli noise changes this logarithmic enhancement.
With ideal controls and weak residual phase noise after correction, we prove
matching capacity bounds at a fixed small probability gap. These bounds
identify the inverse residual phase-flip probability as the coherence
timescale that limits the extra logarithmic growth.
\end{abstract}
\maketitle

\section{Introduction}
\label{main:introduction}

A quantum process exchanges a sequence of inputs and outputs while
carrying an internal register between transitions. Fixing the size of this
register sets a persistent-memory budget. We study how the class of
observable responses grows with duration under this budget.

Sequential state generation uses a finite-dimensional ancillary system
that interacts successively with fresh systems
\cite{Schon2005sequential}. Memory-cost formulations quantify the quantum
register required by a multiround protocol with classical assistance
\cite{bisio2012memory}. Here the internal memory budget includes every
persistent classical record as well as the quantum register.

A tester specifies the interventions and the final
binary measurement of one experiment; its interventions may adapt to
earlier outcomes within that run. Successive experiments may themselves
depend on earlier response labels, and each uses a fresh run of one fixed
process. We ask how
long such queries can continue to separate the possible responses by a
fixed probability gap.

This question is expressed by sequential fat-shattering dimension
\cite{rakhlin2015sequential}.
A binary tree describes the experiments that may be announced after
earlier response labels. Each complete path must be consistent with one
process, whose response lies above or below the announced thresholds by
the required margin. The dimension measures the worst-case depth of this
separation across repeated complete experiments.
Fat-shattering dimension also enters the analysis of learning quantum states
from measurement probabilities~\cite{aaronson2007learnability}.

Write $d$ for the visible input and output dimension, $r$ for the
dimension cap on the complete register propagated inside the process,
and $K$ for its number of transitions. The required probability gap is
$\gamma$.

Raza et al.\ construct strategy-norm covers and obtain online guarantees
for a specified bounded-component class of multitime qubit processes
\cite{raza2024online}. Their method approximates local components and
adds the approximation errors. Applied to the initial state and the
fixed-dimensional transition channels here, the same method gives the
upper order $O_{d,r}(K\log((K+1)/\gamma))$.
Appendix~\ref{up:cover-origin} gives the source bound and this connection.
Bansal et al.\ give online regret bounds for the full convex set of
quantum strategies, meaning causal input--output processes
\cite{bansal2025panoply}.

\textbf{Main Results.} For every fixed $d\geq2$, $r\geq1$, we prove that this upper order is
attained, determining the joint duration--precision law.
Time-dependent qubit phase channels attain the $K\log K$ duration
dependence without propagated memory, with exact zero-or-one responses.
Each query visits every transition once and uses phase coefficients in
$\{-1,0,1\}$. M\"obius inversion of subset sums, also called Boolean
inversion, isolates successive binary digits with offsets determined by
earlier labels. Probability bisection supplies the precision dependence;
related one-observable halving trees appear in Ref.~\cite{meyer2025pure}.

Under the same tester permissions, classical stochastic processes that
completely measure the input and memory at each transition have linear
capacity in $K$ at fixed visible dimension, internal state count and gap.
The exact quantum response tree also forces order $K\log K$ prediction
errors in a protocol that reveals each announced experiment's response
probability to fixed accuracy; a physical cover gives the matching upper
bound.

We next give the process a measured classical address of dimension
$R=2^h$, included in its propagated memory and loaded in $h$ transitions.
The address selects one of $R$
phase sequences of length $T$, whose entries vary independently across
targets. We determine how known independent noise after each phase
changes this family's capacity. Let $q_I,q_X,q_Y,q_Z$ be the probabilities of the identity and
three Pauli errors, and put
$e=\min(q_I,q_Z)+\min(q_X,q_Y)$. This is the residual phase-flip
probability after a specified syndrome correction. The lower construction
uses ideal tester controls. With $T$ phase
transitions and total horizon $K=h+T$, the law is
\mbox{$\Theta_\gamma(RT\log_2[1+\min(T,1/e)])$}, uniformly in $R,T,q$
for fixed $0<\gamma\leq1/16$ and $0\leq e\leq1/8$;
$1/0=\infty$ includes the noiseless endpoint and pure transverse
bit flips. The upper bound follows by encoding each phase
table in a quantum state supplied to a fixed process simulator.

For the noisy family, the bound for convex mixtures of known multitime
processes in Ref.~\cite{raza2024online} supplies a learning antecedent.
Parameter-independent channel simulation with adaptive ancillary
processing is established in Ref.~\cite{matsumoto2010metric}.
Classical simulation of noisy phase channels also underlies dephasing
bounds in quantum metrology~\cite{demkowicz2012elusive}.
With ideal fast recovery, quantum metrology retains precision proportional
to the inverse evolution time over a window controlled by weak residual
noise~\cite{zhou2018heisenberg}. Here we determine the sequential response
capacity of independently varying phase tables, with a matching upper
bound for all normalized adaptive testers.

The fixed-memory law combines the phase-tree construction in
Appendix~\ref{low:memoryless-section} with the physical cover in
Appendix~\ref{up:cover-origin}. Appendices~\ref{pred:section}
and~\ref{temp:section} give the prediction consequences and resource
comparisons; Appendices~\ref{noise:section} and~\ref{pauli:section}
prove the noise results.
Sections~\ref{main:growing} and~\ref{main:upper} treat growing memory,
using controlled unitary columns, address encodings and additional
program bounds. These bounds match the exponential duration dependence
once the memory can realize
every causal process on the interface.

\section{Processes, tests, and sequential capacity}
\label{main:model}

Fix integers $d\geq2$, $r\geq1$, and $K\geq1$. An initialized realization
consists of a state $\rho_{B_0M_0}$ and completely positive trace-preserving
(CPTP) maps
\begin{equation}
 \Phi_t:A_tM_{t-1}\longrightarrow B_tM_t,
 \qquad t=1,\ldots,K,
 \label{main:realization}
\end{equation}
where all $A_t$ and $B_t$ have dimension $d$ and
$\dim M_t\leq r$. A complete run starts by preparing $\rho_{B_0M_0}$
and delivering $B_0$ to the tester while retaining $M_0$ inside the process.
The tester then supplies $A_t$ and receives $B_t$ for each $t=1,\ldots,K$,
using its retained systems and earlier outcomes to choose each intervention.
A new run prepares the same initial state and uses the same transition maps.
The final $M_K$ is discarded. All persistent
classical labels are included in $M_t$; fresh local environments that are
discarded at a transition are allowed. Let $\mathcal P_{d,r}(K)$ be the
observable processes admitting such a realization. One fixed tuple
$(\rho_{B_0M_0},\Phi_1,\ldots,\Phi_K)$ represents a target throughout
all its experiments. Memory one permits time-dependent channels, with
no hidden register passed from one transition to the next.
For auxiliary direct-sum statements we also use $K=0$: the process
then consists only of its initialized state, $B_0$ is delivered and
$M_0$ is discarded. The principal horizon laws below assume $K\geq1$.

A normalized binary tester may initialize arbitrary finite private
registers, retain earlier outputs, adapt its interventions, and perform
a final binary measurement. Each intervention is a complete instrument:
its outcome maps are completely positive and sum to a trace-preserving
map. The final accepting probability includes all branches without
conditioning on a successful intermediate outcome. The tester has no
initially supplied information about the unknown target.
Write $f_P(E)\in[0,1]$ for its accepting
probability on process $P$. The full operational distance is
\begin{equation}
 D(P,Q)=2\sup_E|f_P(E)-f_Q(E)|.
 \label{main:distance}
\end{equation}
This is the strategy-norm distance for the complete multiround
interaction~\cite{gutoski2012distance}.
The tester's private workspace is a separate resource from the memory
retained inside the process.

We identify each process $P$ with its response function $E\mapsto f_P(E)$
on this common tester domain. The capacity bounds are information-theoretic.

\begin{definition}[Sequential response capacity]
\label{main:sfat}
For $0<\gamma\leq1$, a depth-$L$ binary tree is $\gamma$-shattered if
every node $u\in\{0,1\}^{<L}$ has a tester $E_u$ and real threshold $s_u$,
and every leaf $x\in\{0,1\}^L$ has one process
$P_x\in\mathcal P_{d,r}(K)$ such that
\begin{equation}
 (2x_j-1)\bigl[f_{P_x}(E_{x_{<j}})-s_{x_{<j}}\bigr]
 \geq\gamma/2\qquad (j=1,\ldots,L).
 \label{main:shattering}
\end{equation}
Here $x_{<j}=(x_1,\ldots,x_{j-1})$ is the prefix preceding bit $j$.
Thus $\gamma$ is the required separation between the two response ranges,
with margin $\gamma/2$ on either side of a node's threshold.
The supremum of such depths is
$C_\gamma(K;d,r):=\operatorname{sfat}_\gamma\mathcal P_{d,r}(K)$.
\end{definition}

A node is a new experiment on a fresh run of the complete target. The
labels in the definition describe its response branch; they need not
be the output of one Bernoulli sample. All nodes on a path use the same
fixed target, including its initialization and all transition channels.

We also use the normalized Choi description of a process
\cite{chiribella2009theoretical}. Each $A'_t$ is a reference copy of
the input $A_t$. Let $\mathcal C_d(K)$ be the set of positive trace-one
operators $\rho_K$ on $B_0A'_1B_1\cdots A'_KB_K$ whose prefix marginals
$\rho_t$ on $B_0A'_1B_1\cdots A'_tB_t$ obey the causal constraints
\begin{equation}
 \Tr_{B_t}\rho_t=\rho_{t-1}\otimes I_{A'_t}/d.
 \label{main:causality}
\end{equation}
Its matrix dimension is $d^{2K+1}$. The usual unnormalized process Choi
operator is $d^K\rho_K$. The process-tester pairing, with the transpose
convention absorbed into the tester, gives positive operators $T_E,T_E^c$
whose sum pairs to one with every causal state and may differ from the
ambient identity. Their outcome probabilities satisfy
\begin{equation}
 f_P(E)=\Tr(T_E\rho_P),\qquad
 1-f_P(E)=\Tr(T_E^c\rho_P).
 \label{main:positive-pair}
\end{equation}
These are the positive measuring-tester operators
of Ref.~\cite{chiribella2009theoretical}, scaled by $d^K$ and transposed
consistently with the process Choi convention. Conversely, the causal
equations have a sequential channel realization; a purification proof
and a sufficient memory bound appear in Lemma~\ref{sat:realization}.

\section{The fixed-memory law and its phase mechanism}
\label{main:fixed}

\begin{theorem}[Fixed-memory horizon and precision law]
\label{main:fixed-law}
For every fixed $d\geq2$ and $r\geq1$,
\begin{equation}
 C_\gamma(K;d,r)
 =\Theta_{d,r}\!\left(K\log_2\frac{K+1}{\gamma}\right),
 \qquad K\geq1,\quad0<\gamma\leq1.
 \label{main:fixed-equation}
\end{equation}
The constants can depend on $d,r$, and are independent of $K,\gamma$.
At fixed margin the matching horizon lower bound is attained by
memory-one qubit unitary processes with perfect zero-or-one responses.
\end{theorem}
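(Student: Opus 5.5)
The proof splits into an upper bound via a physical cover and a lower bound that combines a phase-tree construction with probability bisection.

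\textbf{Upper bound.} I would bound $\operatorname{sfat}_\gamma$ by the logarithm of an $\ell_\infty$-type cover of the response class on the full tester domain, in the sequential sense. By \eqref{main:positive-pair}, $|f_P(E)-f_Q(E)|\le D(P,Q)/2$ uniformly in $E$. So a finite set of processes that is $\gamma/4$-dense in the strategy distance $D$ is a cover valid for every tree simultaneously; in particular it is a sequential cover. The standard argument then applies: a $\gamma$-shattered tree of depth $L$ has $2^L$ leaves, and two leaves that branch at a node are separated by $\gamma$ on that node's tester. Hence no single cover element can serve both, which gives $L\le\log_2 N(\gamma/4)$.

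To build the cover, I approximate $\rho_{B_0M_0}$ and each $\Phi_t$ (on fixed dimensions $d,r$) within $\epsilon$ in diamond norm. Nets of size $(c/\epsilon)^{O(d^4r^4)}$ per component suffice. Because errors add along the $K+1$ components under the strategy norm (telescoping over a sequential composition), taking $\epsilon=\gamma/(8(K+1))$ gives $\log N=O_{d,r}(K\log((K+1)/\gamma))$. This is the bound recorded in Appendix~\ref{up:cover-origin}.

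\textbf{Lower bound.} The target is $\Omega(K\log K)$ at fixed $\gamma$, plus $\Omega(K\log(1/\gamma))$, with constants depending on $d,r$. It suffices to prove both at $d=2$, $r=1$, since embedding a qubit in $d$ dimensions and ignoring extra memory only enlarges $\mathcal P_{d,r}(K)$. Since $\log\frac{K+1}{\gamma}\asymp\log(K+1)+\log(1/\gamma)$, a lower bound of the form $\max$ of the two terms is enough.

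\emph{Precision term.} I use a fixed tester at each node and bisect the response probability. With $\Theta(K)$ independent single-qubit transitions, each having a tunable rotation angle, each transition supports a halving tree of depth $\Theta(\log(1/\gamma))$ on its own probability: thresholds are set at dyadic points and the angle is chosen per leaf. Concatenating the $K$ blocks gives depth $\Omega(K\log(1/\gamma))$. Testers for block $t$ probe only transition $t$ and pass the other transitions through with a fixed input.

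\emph{Horizon term: the main obstacle.} Here the response must be exactly $0$ or $1$, so that $\gamma$ up to $1$ works. Each leaf is a memory-one process whose transition $t$ applies the phase $e^{i\theta_t Z}$. A query prepares $|+\rangle$, feeds it through selected transitions with sign pattern $c_t\in\{-1,0,1\}$ (using $X$ conjugation to flip signs), and measures in a rotated $X$ basis. The resulting response is $\cos^2(\sum_t c_t\theta_t-\phi)$ for a chosen offset $\phi$.

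I would encode the phases as $\theta_t=\pi\sum_k b_{t,k}2^{-k}$ for $k\le m\approx\log_2 K$ bits. The aim is to extract $\Omega(K\log K)$ bits total. A single combination $\sum_t c_t\theta_t$ yields one bit: its binary digit at position $j$, after subtracting the offset fixed by earlier revealed labels. Combinations over subsets of size up to $2^{m}$, inverted by Möbius inversion over the Boolean lattice, would isolate roughly $\log K$ fresh digits per group of coordinates. This is the delicate combinatorial step. One must choose subset-sum queries whose $\{-1,0,1\}$ combinations, together with previously learned lower digits (the offsets), make the next unknown digit exactly determine whether the response is $0$ or $1$. One must also show the collection of these queries is free enough that every label sequence is realized by some $\theta$.

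I would first try grouping transitions into blocks of size $m$ and assigning each block $\Theta(m\cdot\text{block size})$ bits via an inclusion–exclusion design. Verifying realizability and exactness of the $0/1$ responses there is where I expect most of the work.
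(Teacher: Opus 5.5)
Your upper bound and your precision term follow the paper's route, and they are fine up to routine details. The upper bound is the componentwise physical cover with telescoping (Proposition~\ref{up:physical-cover}), after padding every memory to a common $\mathbb C^r$. The precision term is probability bisection on independent coordinates (Proposition~\ref{low:grid}).

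The genuine gap is the horizon term, which is the substance of the theorem. At $\gamma=1$ the margin forces threshold $1/2$ and responses exactly $0$ or $1$, and you never exhibit such a tree of depth $\Omega(K\log K)$. The encoding you start from also cannot give one. Suppose the leaves range over all the independent digit strings $b_{t,k}$ with $m\geq2$. Then every nonzero combination $\sum_tc_t\theta_t$ with $c_t\in\{-1,0,1\}$ is uniformly distributed over the leaves on the $\pi2^{-m}$ grid modulo $\pi$. So already the root response $\cos^2(\sum_tc_t\theta_t-\phi)$ cannot be exact on every leaf. More basically, a transition used once with coefficient of modulus at most one is never amplified. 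A deep digit of an individual $\theta_t$ therefore can never become the decisive digit of a query. The factor $2^k$ needed to read a $k$th digit must come from signed sums over many \emph{different} transitions, and the leaf set must be a structured lattice rather than a product of per-transition digit strings.

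The missing design is as follows. Take $K=2^n$ and index the transitions by $Z\subseteq[n]$. Place the labels in the superset sums $\alpha_T=\sum_{Z\supseteq T}\theta_Z=2\pi m_T/2^{|T|+1}$, with independent $m_T\in\{0,\ldots,2^{|T|+1}-1\}$. M\"obius inversion turns every label assignment into one phase vector, so realizability is automatic.

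The sign string $q_{T,J}(Z)=\prod_{i\in J}(2z_i-1)\prod_{i\in T\setminus J}z_i\in\{-1,0,1\}$ gives $\sum_Zq_{T,J}(Z)\theta_Z=2^{|J|}\alpha_T+\beta_{T,J}$. Here $\beta_{T,J}$ involves only coordinates indexed by proper subsets of $T$.

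Process the sets $T$ by increasing cardinality, and the bits of $m_T$ from the least significant upward. At bit $k$, choose $|J|=|T|-k$ and the offset $\phi=\beta_{T,J}+2\pi(m_T\bmod 2^k)/2^{k+1}$; both are fixed by earlier labels. The response $(1+\cos(\sum_Zq_{T,J}(Z)\theta_Z-\phi))/2$ then equals $1-b_k$ exactly. This gives $\sum_T(|T|+1)=K(1+n/2)$ levels, and the largest dyadic block with identity padding handles general $K$ (Theorem~\ref{low:phase-tree}).

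Without this or an equivalent explicit construction, your argument yields only the probability-grid lower bound. That bound is linear in $K$ at fixed $\gamma$, so it misses both the $K\log K$ order and the final claim about memory-one qubit unitaries with zero-or-one responses.
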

\begin{proof}
Corollary~\ref{low:fixed-memory-joint} combines the signed-phase tree,
probability grids, and the physical cover of
Proposition~\ref{up:physical-cover}.
\end{proof}

The unitary at transition $t$ in the lower construction is
$U_t(\theta_t)=\diag(1,e^{i\theta_t})$, where
$\theta=(\theta_1,\ldots,\theta_K)$ is the target's fixed unknown phase vector.
The tester chooses the coefficient vector $s\in\{-1,0,1\}^K$ and the
final measurement phase $\phi$; at a tree node these choices use only
earlier response labels. A single control qubit retained by the tester
implements
\begin{equation}
 f_\theta(E_{s,\phi})
 =\frac{1+\cos(\sum_ts_t\theta_t-\phi)}2.
 \label{main:phase-query}
\end{equation}
The tester initializes the control in $|+\rangle$. It conditionally
prepares a computational-basis input, sends it through the transition,
and reverses that known preparation after receiving the output.
The relative phase accumulates on the retained control, which is finally
measured with the accepting projection onto
$(|0\rangle+e^{i\phi}|1\rangle)/\sqrt2$.
This realizes the three allowed coefficients using each transition once
in the forward direction.
Figure~\ref{main:phase-figure} links this control to a complete response path.

\begin{figure}[t]
\centering
\begin{tikzpicture}[x=1cm,y=1cm,>=Latex,
 font=\fontsize{11}{13}\selectfont,
 known/.style={draw,rounded corners=1mm,fill=black!3,align=center,minimum height=7mm},
 process/.style={draw,fill=blue!7,align=center,minimum height=7mm},
 dot/.style={circle,fill=black!40,inner sep=1.1pt},
 selected/.style={circle,fill=blue!65!black,inner sep=1.6pt}]
\node[anchor=west,font=\bfseries\fontsize{11}{13}\selectfont] at (0,8.1)
 {(a) One transition inside a controlled experiment};
\node[known,minimum width=2.8cm] (prepare) at (3.8,6.65) {Prepare $s_t$};
\node[process,minimum width=2.8cm] (transition) at (7.8,6.65) {$U_t(\theta_t)$};
\node[known,minimum width=2.8cm] (unprepare) at (11.8,6.65) {Unprepare};
\node[anchor=east] at (1.15,6.65) {$|0\rangle$};
\draw[->] (1.25,6.65)--(prepare.west);
\draw[->] (prepare.east)--node[above] {$A_t$}(transition.west);
\draw[->] (transition.east)--node[above] {$B_t$}(unprepare.west);
\draw[->] (unprepare.east)--(14.25,6.65);
\node[anchor=west] at (14.3,6.65) {$|0\rangle$};
\draw[->] (1.25,7.35)--(14.8,7.35);
\fill (3.8,7.35) circle (1.4pt);\draw (3.8,7.35)--(prepare.north);
\fill (11.8,7.35) circle (1.4pt);\draw (11.8,7.35)--(unprepare.north);
\node[fill=white,inner sep=2pt] at (7.8,7.65) {Retained tester qubit $C$};
\node[align=center] at (7.8,5.85)
 {$s_t\in\{-1,0,1\}$ adds relative phase $s_t\theta_t$ to $C$; the visible carrier resets.};

\node[anchor=west,font=\bfseries\fontsize{11}{13}\selectfont] at (0,5.1)
 {(b) One response path and its fixed target};
\coordinate (root) at (2,3.6);
\coordinate (v0) at (1,2.7);\coordinate (v1) at (3,2.7);
\coordinate (v00) at (.5,1.8);\coordinate (v01) at (1.5,1.8);
\coordinate (v10) at (2.5,1.8);\coordinate (v11) at (3.5,1.8);
\coordinate (v000) at (.25,.9);\coordinate (v001) at (.75,.9);
\coordinate (v010) at (1.25,.9);\coordinate (v011) at (1.75,.9);
\coordinate (v100) at (2.25,.9);\coordinate (v101) at (2.75,.9);
\coordinate (v110) at (3.25,.9);\coordinate (v111) at (3.75,.9);
\foreach \a/\b in {root/v0,root/v1,v0/v00,v0/v01,v1/v10,v1/v11,
 v00/v000,v00/v001,v01/v010,v01/v011,v10/v100,v10/v101,v11/v110,v11/v111}
 \draw[black!35] (\a)--(\b);
\foreach \n in {root,v0,v1,v00,v01,v10,v11,v000,v001,v010,v011,v100,v101,v110,v111}
 \node[dot] at (\n) {};
\draw[blue!65!black,very thick] (root)--node[above right,inner sep=2pt] {$1$}(v1)
 --node[above left,inner sep=2pt] {$0$}(v10)--node[right,inner sep=2pt] {$1$}(v101);
\foreach \n in {root,v1,v10,v101}\node[selected] at (\n) {};
\node[above=1mm] at (root) {$E_1$};
\node[right=1.5mm,fill=white,inner sep=1pt] at (v1) {$E_2$};
\node[left=1.5mm,fill=white,inner sep=1pt] at (v10) {$E_3$};
\node[below=2mm,blue!65!black] at (v101) {path $101$};
\node[anchor=north west,process,minimum width=10.3cm] at (5,4.55)
 {Fixed target $P$ in every row:\quad $\theta_1=-\pi/2,\quad\theta_2=\pi/2$};
\node at (5.95,3.45) {Experiment};
\node at (8.4,3.45) {$(s_1,s_2;\phi)$};
\node at (11.65,3.45) {Calls via (a)};
\node at (14.35,3.45) {$f_P(E)$};
\draw[black!40] (5,3.16)--(15.3,3.16);
\foreach \y/\run/\settings/\answer in
 {2.65/1/{(1,1;0)}/1,1.65/2/{(-1,1;0)}/0,0.65/3/{(0,1;\pi/2)}/1}{
 \node at (5.95,\y) {$E_{\run}$};
 \node[known,minimum width=2.5cm] at (8.4,\y) {$\settings$};
 \node[process,minimum width=8mm] (first\run) at (11,\y) {$U_1$};
 \node[process,minimum width=8mm] (second\run) at (12.3,\y) {$U_2$};
 \draw[->] (first\run)--(second\run);
 \draw[->] (second\run.east)--(13.85,\y);
 \node[circle,draw=blue!65!black,fill=blue!7,inner sep=3pt] at (14.35,\y) {$\answer$};
}
\end{tikzpicture}
\caption{A controlled query and a complete response path for a process
with no hidden memory. In (a), known
tester operations select the signed phase in Eq.~\eqref{main:phase-query}
using one forward call to the process transition. The visible carrier
starts and ends in $|0\rangle$. The retained qubit $C$ starts in $|+\rangle$
once per complete run and is measured after all transitions, with accepting
projector onto $(|0\rangle+e^{i\phi}|1\rangle)/\sqrt2$.
In (b), the highlighted
leaf has responses $1,0,1$ at threshold $1/2$ and is witnessed by one
fixed phase pair. Each row is a fresh complete controlled experiment,
with the same $U_1,U_2$ and settings determined by the earlier responses;
each call uses the controls in (a).}
\label{main:phase-figure}
\end{figure}

The phase descriptions have resolution increasing with $K$, as specified in
Remark~\ref{low:finite-phase-alphabet}. The construction assigns independent
dyadic values to Boolean sums of the phases. After the earlier coordinates have been
fixed by a path, a signed Boolean inversion query recovers the next
binary digit. For $K=2^\ell$, this gives exactly
$K(1+\ell/2)$ independent labels. Padding handles other horizons.
For a two-transition example, set
$\alpha=\theta_1+\theta_2\in\{0,\pi\}$ and
$\beta=\theta_2\in\{0,\pi/2,\pi,3\pi/2\}$ independently.
The signed query $(1,1)$ first reveals the bit in $\alpha$.
The query $(-1,1)$, with known offset $-\alpha$, then measures
$\cos(2\beta)$ and reveals the parity $b$ of $2\beta/\pi$
through acceptance probability $1-b$.
Finally $(0,1)$, with the now known offset $\pi b/2$,
reveals the remaining bit of $\beta$. All three responses are exact zero
or one, and no transition is used twice in any query.
The full finite construction and its one-process-per-leaf proof are
given in Appendix~\ref{low:memoryless-section}.

The same horizon order can be attained without an additional private
qubit: the tester coherently recycles the visible interface and uses
known $X$ gates to implement signs $\pm1$ around forward phase calls.
In contrast, repeating one unknown phase for all $K$ transitions has
capacity $\Theta_\gamma(\log(K+1))$. Appendix~\ref{temp:section}
proves both statements. Independent time-dependent parameters and
coherent access to the interface are distinct resources.

The phase tree organizes responses across fresh runs and already has
superlinear depth at memory one. By comparison, let a finite
classical label select the target independently of the tester's initial
state. After one complete run, the mutual information between that label
and the tester's final quantum--classical register is at most
\mbox{$2(K+1)\log_2d$} bits (Proposition~\ref{inf:trajectory}).

\section{Classical comparison and prediction}
\label{main:prediction}

Fix bases for the visible system and the internal register. Consider
processes with a diagonal initial state whose transitions measure the
input--memory pair in these bases and sample a new output--memory pair
from an arbitrary stochastic kernel. Denote this class by
$\mathcal C_{d,r}(K)$. The kernels may vary with time and
have arbitrary real probabilities. All persistent classical records fit
within the $r$-state internal register. The tester retains the full
quantum permissions of Section~\ref{main:model}.
Proposition~\ref{pred:classical-capacity} gives, for fixed $d,r$ and
$0<\gamma\leq1$,
\begin{equation}
 \operatorname{sfat}_\gamma\mathcal C_{d,r}(K)
 =\Theta_{d,r,\gamma}(K),
 \label{main:classical-order}
\end{equation}
compared with $\Theta_{d,r,\gamma}(K\log(K+1))$ for the quantum
class. Each measured stochastic transition is an exact convex mixture
of finitely many deterministic tables, which provides its linear
information bound. The quantum lower uses $r=1$.

\paragraph*{Probability feedback.}
The exact response tree also has a scalar prediction interpretation with
unrestricted computation and one target fixed before learning begins.
In round $t$, a learner receives the specification of a tester $E_t$,
predicts a scalar $\widehat y_t$, then receives feedback, including when
its prediction is correct, in the form of a scalar $b_t$ satisfying
$|b_t-f_P(E_t)|\leq\varepsilon/8$. The learner receives no additional
target queries, internal tester records or retained quantum systems.
The announced testers may
depend on earlier scalar feedback. Let $M_\varepsilon$ count rounds
with $|\widehat y_t-f_P(E_t)|>\varepsilon$ among $N$ rounds.
Appendix~\ref{pred:section} counts the complete process runs needed to
estimate these feedback probabilities with simultaneous confidence,
separately from the mistake count.
Theorem~\ref{pred:mistake-law} proves that the minimax expected count is
\begin{equation}
 \Theta_{d,r,\varepsilon}\!\left(
       \min\{N,K\log(K+1)\}\right),
 \qquad 0<\varepsilon<1/2,
 \label{main:prediction-order}
\end{equation}
for every fixed $d,r,\varepsilon$. A finite physical cover and median
prediction give the upper bound. For the lower bound, a uniform leaf of
the exact phase tree fixes the target before the first prediction; its
next response remains a fair bit given the previous ones. This argument
also covers randomized learners and exact probability feedback.

\paragraph*{One outcome per run.}
Fix one target and perform one fresh complete run per round, revealing
one binary outcome. Experiment selection and learning use only the public
classical history of announced experiments, predictions and outcomes,
and randomness independent of the target.
For $N$ rounds, Corollary~\ref{pred:bernoulli-risk} gives the minimax
expected average squared error relative to the true acceptance probabilities,
\[
 \Theta_{d,r}\!\left(\min\{1,K\ln(K+N)/N\}\right).
\]
Multiplication by $N$ gives the cumulative squared error. The experiment
uses $N$ complete process runs and $KN$ unknown transitions.

\section{Known noise in a classical-address family}
\label{main:noise}

We study known independent Pauli errors applied after each phase. The
process retains a measured classical address, and the tester uses ideal
controls. Dephasing is a special case with an explicit classical
simulation.

Let $h\geq0$ and $T\geq1$ be integers, $R=2^h$, and $K=h+T$.
Initialize the visible qubit and the memory in $|0\rangle$ and the
zero $h$-bit string. At loading transition $\ell\leq h$, measure the
visible input in the computational basis and overwrite address bit
$\ell$ with its outcome; return the visible state $|0\rangle$.
On the entire input algebra this is the channel
\begin{equation}
 \mathcal L_\ell(X)=\sum_{a\in\{0,1\}^h}\sum_{b=0}^1
 \langle b,a|X|b,a\rangle\,
 |0,w_\ell(a,b)\rangle\langle0,w_\ell(a,b)|,
 \label{main:noisy-loader}
\end{equation}
where $w_\ell(a,b)$ overwrites bit $\ell$ of $a$. Each address is
therefore accessible by supplying its $h$ ordinary basis bits in $h$ transitions.

\paragraph*{Capacity under known Pauli noise.}
Use the same initialization and loader, and let
$q=(q_I,q_X,q_Y,q_Z)$ be a known probability vector. With
$U_\theta=\diag(1,e^{i\theta})$ and the Pauli matrices $X,Y,Z$,
replace each local phase channel by
\begin{equation}
 \mathcal E_{\theta,q}(\rho)
 =\sum_{P\in\{I,X,Y,Z\}}q_P P U_\theta\rho U_\theta^*P^*.
 \label{main:pauli-channel}
\end{equation}
For a target phase table $\Theta=(\theta_{tj})$, with $1\leq t\leq T$
and $j\in\{0,1\}^h$, phase slot $t$ is overall transition $h+t$.
The table entries vary independently across targets; the noise vector
$q$ and loading maps are fixed and known.
At phase slot $t$ the full addressed channel has Kraus operators
$\sqrt{q_P}P U_{\theta_{tj}}\otimes|j\rangle\langle j|$.
The address is measured and preserved, and independent noise
environments are discarded at each slot. Write
$\mathcal N^{\mathrm P}_{R,T,q}$ for this initialized family, with
one fixed table per target and all normalized testers in
Definition~\ref{main:sfat}. The number $RT$ counts phase parameters,
$R$ is the propagated memory dimension, and $h=\log_2R$ is its bit count.
Every experiment includes the $h$ loading and $T$ phase transitions.
The final address is discarded.

Define the residual phase-flip probability
\begin{equation}
 e(q)=\min(q_I,q_Z)+\min(q_X,q_Y).
 \label{main:pauli-residual}
\end{equation}
Theorem~\ref{pauli:law} proves, for fixed $0<\gamma\leq1/16$,
\begin{equation}
 \operatorname{sfat}_\gamma\mathcal N^{\mathrm P}_{R,T,q}
 =\Theta_\gamma\!\left(RT\log_2[1+\min(T,1/e(q))]\right),
 \qquad 0\leq e(q)\leq1/8,
 \label{main:pauli-law}
\end{equation}
uniformly over all $R=2^h$, $T\geq1$ and allowed $q$, with
$1/0=\infty$. The tester uses known controlled-NOT and sign controls to
encode its private qubit jointly with a visible carrier initialized in
$|0\rangle$, and reverses that encoding after the transition.
Measuring the returned carrier distinguishes the error
pairs $\{I,Z\}$ and $\{X,Y\}$. Recovery corrects the more probable phase
error in each pair, leaving phase-flip probability exactly $e(q)$.
The control qubit then realizes short-block tests with coherence factor
$1-2e(q)$ per active transition.

For $e(q)>0$, a doubled-angle phase program provides an auxiliary
information encoding, with storage measured separately from the physical
memory $R$ and the tester's control qubit. It supplies an exact
deterministic simulation using at most
$\lceil\log_2(1/e(q))\rceil$ program qubits per table entry.
Keeping every failure branch and allocating it to a Pauli pair gives
an identity of complete channels. The common program therefore bounds
all adaptive testers.
Appendix~\ref{pauli:section} gives the circuit and processor, and proves
that some three-component Pauli families admit no finite exact
classical channel dictionary.

For pure transverse flips, $q_X=a$, $q_I=1-a$ and $q_Y=q_Z=0$,
one has $e(q)=0$ at every $0\leq a\leq1$: the logarithmic enhancement
survives. Fixed positive $e(q)$ truncates it. The relevant quantity in
this family is the phase noise remaining after the specified correction.

\paragraph*{Dephasing and an exact classical program.}
Specialize to known $0\leq p\leq1/4$ with $q_Z=p/2$, $q_I=1-p/2$
and $q_X=q_Y=0$, so $e=p/2$. For a fixed phase table
$\Theta=(\theta_{tj})_{1\leq t\leq T,\ j\in\{0,1\}^h}$,
the remaining transitions are
\begin{align}
 V_\theta&=\diag(e^{i\theta},1),\qquad Z=\diag(1,-1),\notag\\
 \mathcal D_{\theta,p}(\rho)
  &=(1-p/2)V_\theta\rho V_\theta^\dagger
    +(p/2)V_\theta Z\rho ZV_\theta^\dagger,\label{main:noisy-local}\\
 \Lambda_{t,\Theta,p}(X)
  &=\sum_j\mathcal D_{\theta_{tj},p}
       (\langle j|_M X|j\rangle_M)\otimes|j\rangle\langle j|_M.
 \label{main:noisy-address}
\end{align}
Thus the visible off-diagonal entry is multiplied by
$(1-p)e^{i\theta}$ and its diagonal entries are unchanged. The unitary convention satisfies
$V_\theta=e^{i\theta}U_{-\theta}$ for the phase unitary of
Section~\ref{main:fixed}; negating the phase and measurement
offset used there gives this response convention.
Equation~\eqref{main:noisy-address} measures the address on every
transition and preserves its classical value. It defines the channel
also on inputs with cross-address coherence.

Denote this initialized process family by $\mathcal N_{R,T,p}$,
with response functions evaluated on all testers in
Definition~\ref{main:sfat}. Its entire propagated memory has dimension
$R$, so $\mathcal N_{R,T,p}\subseteq\mathcal P_{2,R}(h+T)$.

For $0\leq u\leq1$, write
\[
 h_2(u)=-u\log_2u-(1-u)\log_2(1-u),\qquad 0\log_2 0=0,
\]
for the binary entropy.

\begin{theorem}[Capacity with known dephasing]
\label{main:noisy-law}
For every $R=2^h$, integer $T\geq1$, $0\leq p\leq1/4$, and
$0<\gamma\leq1/16$, put $H=\min(T,1/p)$ with $1/0=\infty$.
Then
\begin{equation}
 \frac{RT}{16}\log_2(1+H)
 \leq\operatorname{sfat}_\gamma\mathcal N_{R,T,p}
 \leq\kappa_\gamma RT\log_2(1+H),
 \label{main:noisy-matching}
\end{equation}
where one may take
\begin{equation}
 c_\gamma=1-h_2((1-\gamma)/2),\qquad
 \kappa_\gamma=\max\{1+\log_2(1+4\pi/\gamma),\ 2/c_\gamma\}.
 \label{main:noisy-constants}
\end{equation}
These constants are independent of $R,T,p$.
\end{theorem}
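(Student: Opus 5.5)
The proof splits into a lower bound from a blocked phase tree, whose block length is capped at the coherence scale $1/p$, and two upper bounds: a phase-grid cover when $H=T$, and a program-state random-access bound when $H=1/p<T$. All constants stay independent of $R,T,p$.

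\textbf{Lower bound.} I would reduce to the noiseless signed-phase tree of Appendix~\ref{low:memoryless-section}, applied separately on each address and on short blocks of slots. A tester node first loads a chosen address $j$ by feeding its $h$ basis bits. It then runs the controlled query of Section~\ref{main:fixed} with coefficients $s_t\in\{-1,0,1\}$ supported on one block of $m$ consecutive phase slots, and $s_t=0$ elsewhere. On inactive slots the visible carrier stays diagonal, so $\mathcal D_{\theta,p}$ acts trivially on the control. On each active slot the control coherence is multiplied by $(1-p)$. The accepting probability is therefore
\[
\tfrac12\bigl[1+(1-p)^{|s|}\cos\bigl(\textstyle\sum_t s_t\theta_{tj}-\phi\bigr)\bigr],
\qquad |s|\le m.
\]
The exact zero/one responses of the noiseless tree become $\tfrac12(1\pm c)$ with $c\ge(1-p)^m$. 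Choosing $m$ a power of two with $m\le H$ gives $(1-p)^m\ge(1-1/m)^m\ge 1/4$ for $m\ge2$ and $p\le1/4$. With threshold $1/2$ this margin is $c/2\ge1/8$, which exceeds $\gamma/2\le1/32$.

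Because table entries vary independently across addresses and slots, the trees for different (address, block) pairs concatenate. Each later subtree is fixed along the path without affecting earlier responses. This yields $R\lfloor T/m\rfloor\, m(1+\tfrac12\log_2 m)$ labels. Taking $m$ the largest power of two at most $H$ gives $m\ge H/2$ and $\lfloor T/m\rfloor m\ge T/2$, and the bound is at least $\tfrac{RT}{16}\log_2(1+H)$. When $H<2$, use $m=1$, i.e.\ one label per entry. The only bookkeeping risk is the floors, which the constant $1/16$ is meant to absorb.

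\textbf{Upper bound when $H=T$.} I would use a finite cover and the fact that $\operatorname{sfat}_\gamma$ of a class admitting an $\ell_\infty$ cover at scale below $\gamma/2$ is at most $\log_2$ of the cover size. The response is Lipschitz in the table: changing every entry by at most $\delta$ changes $f_P(E)$ by at most $T\delta$. Each run passes every slot once, and the measured classical address makes each slot act as a single unitary on each branch, so a telescoping hybrid over the $T$ slots applies. A grid of $\lceil 4\pi T/\gamma\rceil$ angles per entry then gives $\operatorname{sfat}_\gamma\le RT\log_2(1+4\pi T/\gamma)$. Using $\log_2(1+ab)\le\log_2(1+a)+\log_2(1+b)$ and $\log_2(1+T)\ge1$, this is at most $RT\log_2(1+H)\,[1+\log_2(1+4\pi/\gamma)]$. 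The same bound holds whenever $T\le 1/p$.

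\textbf{Upper bound when $H=1/p<T$.} This is the main obstacle. I would invoke the doubled-angle program simulation of Appendix~\ref{pauli:section}, with $e=p/2$. Each noisy addressed process equals one fixed process simulator applied to a program state of $n\le RT\lceil\log_2(1/e)\rceil$ qubits encoding the table. Keeping all failure branches makes this an identity of complete channels, so it holds for all adaptive testers simultaneously. Every $f_P(E)$ is then $\Tr(\tilde T_E\,\omega_\Theta)$ for a two-outcome measurement $\tilde T_E$ on the program register.

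A $\gamma$-shattered tree of depth $L$ becomes an adaptive random access code. Encode a uniform leaf $x$ in $\omega_{\Theta_x}$; at level $j$ the decoder, knowing $x_{<j}$, guesses $x_j$ with bias at least $\gamma/2$. A Nayak-type chain-rule argument then gives $L\,c_\gamma\le I(X;\text{program})\le 2n$, with $c_\gamma=1-h_2((1-\gamma)/2)$. The factor $2$ covers the adaptive, prefix-conditioned decoding via Holevo on a classical--quantum state with purification. Since $\lceil\log_2(2/p)\rceil\le 2\log_2(1+1/p)$ for $p\le1/4$, this yields $L\le (2/c_\gamma)\cdot O(RT\log_2(1+H))$; matching the stated $\kappa_\gamma$ requires tightening this program-qubit count.

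The delicate points are two. First, verifying that the adaptive, prefix-conditioned decoding really satisfies the per-level information bound even though earlier measurements disturb the state. I would handle this by running each level on a fresh copy of the program state, since each node is a fresh run. Second, checking that the exact simulation identity covers all normalized testers.
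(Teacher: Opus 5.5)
\textbf{What matches the paper.} Your lower bound and your phase-grid upper bound follow the paper's proof. The paper caps blocks at $\min(T,\lfloor 1/(4p)\rfloor)$ to get visibility $3/4$. Your cap $m\le H$, with $(1-1/m)^m\ge 1/4$, also gives margin $1/8>\gamma/2$, and your floor bookkeeping does yield $RT/16$.

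\textbf{A different program for $T>1/p$.} Here you use a different exact program from the paper. The paper writes $\mathcal D_{\theta,p}$ as a convex mixture of the $m=\lceil\pi/\arccos(1-p)\rceil$ vertex unitaries of a polygon containing the coherence circle. This gives a diagonal, classical program of dimension $m^{RT}$ with $\log_2 m\le 2\log_2(1+1/p)$. Your doubled-angle program also applies. It is the case $q=(1-p/2,0,0,p/2)$, $\beta=p/2$, $W=I$, with the program encoding $-\theta$ because $V_\theta=e^{i\theta}U_{-\theta}$, and it needs $b=\lceil\log_2(1/p)\rceil$ qubits per entry. The polygon is more elementary, while your program is the one that extends to general Pauli noise.

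\textbf{The gap: a spurious factor $2$.} The problem is your information step $Lc_\gamma\le I(X;Q)\le 2n$. With your own count $n\le 2RT\log_2(1+1/p)$, it only gives $\max\{1+\log_2(1+4\pi/\gamma),\,4/c_\gamma\}$. For $\gamma\le 1/16$ the second term dominates, so your constant is exactly twice the stated $\kappa_\gamma$.

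A factor $2$ belongs to entanglement-assisted random-access codes, and there is no shared entanglement here. The argument of Lemma~\ref{up:information} runs as follows.
\begin{itemize}
\item Each node effect $A_E$ is target-independent.
\item You apply it as a thought measurement, not sequentially, to the conditional average of the same program state over the two child subtrees, given the classical prefix $X_{<j}=u$.
\item Data processing gives $I(X_j:Q\mid X_{<j}=u)\ge c_\gamma$.
\item The classical--quantum chain rule sums these increments to $I(X:Q)$, and $I(X:Q)\le S(Q)\le n$.
\end{itemize}
For the same reason, do not run levels on fresh copies of the program. $L$ copies would multiply the information budget by $L$ and destroy the bound. The fresh-run structure enters only through $f_P(E)=\operatorname{Tr}(A_E\sigma_P)$ on the undisturbed program.

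\textbf{Your proposed remedy would not work, and the right fix is simple.} Tightening the qubit count cannot absorb the factor $2$. With it present you would need $b\le\log_2(1+1/p)$, but already at $p=1/5$ one has $b=\lceil\log_2 5\rceil=3>\log_2 6$. Once the factor $2$ is dropped, your stated count suffices, since $\lceil\log_2(2/p)\rceil\le\log_2(1/p)+2\le 2\log_2(1+1/p)$ for $p\le1/4$. This gives $L\le (2/c_\gamma)RT\log_2(1+H)$ and hence exactly the stated $\kappa_\gamma$.
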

\begin{proof}
Appendix~\ref{noise:section} proves both bounds on the complete
channels and constructs one fixed table for every leaf.
\end{proof}

The upper bound uses a global exact mixture. For $p>0$, a regular
polygon with
$m=\lceil\pi/\arccos(1-p)\rceil$ vertices contains the coherence
circle of radius $1-p$. Its fixed unitary vertices simulate the
channel exactly, including arbitrary entangled reference inputs.
Independent vertex choices for all $RT$ coordinates give a common
program of dimension $m^{RT}$ for complete processes. All adaptive
testers act as effects on this same program space. The serial
information bound then combines with a phase grid of size
$N=\lceil4\pi T/\gamma\rceil$ to give the sharper finite estimate
\begin{equation}
 \operatorname{sfat}_\gamma\mathcal N_{R,T,p}
 \leq\left\lfloor RT\min\{\log_2N,\ \log_2m/c_\gamma\}\right\rfloor.
 \label{main:noisy-finite-upper}
\end{equation}
For $p=0$ the phase-grid term alone applies.

The lower bound partitions the phase sequence into short blocks.
Each short-block experiment uses $h$ loading and $T$ phase transitions.
The shot count for simultaneous probability estimates over adaptively
selected experiments is given in Appendix~\ref{noise:sampling}.
For signs $s_t\in\{-1,0,1\}$, known controlled preparation and its
inverse return the visible carrier to $|0\rangle$ after every
transition. The private control acquires the response
\begin{equation}
 f_\Theta(E_{j,s,\phi})=
 \frac{1+(1-p)^{|\{t:s_t\ne0\}|}
  \cos(\sum_ts_t\theta_{tj}-\phi)}2.
 \label{main:noisy-query}
\end{equation}
The carrier resets in each noise Kraus branch; averaging the branches
gives the factor $1-p$ at each active slot.
At inactive slots the control is parked and a basis-state carrier
passes through the transition. For $p>0$, blocks of length at most
$\min(T,\lfloor1/(4p)\rfloor)$ retain a fixed margin; at $p=0$ the
whole sequence is available. Boolean phase trees on the blocks,
concatenated over all addresses, give the uniform lower bound,
including singletons and the final remainder.

\begin{table}[t]
\caption{The measured-address family at fixed
$0<\gamma\leq1/16$ and dephasing parameter $0\leq p\leq1/4$
(phase-flip probability $p/2$). Each query costs $K=h+T$ transitions.
The constants in the middle row are uniform in $p$; after fixing
$p>0$ they give the last row.}
\label{main:noisy-regimes}
\centering
\begin{tabular}{ll}
\toprule
Noise range & Sequential capacity order\\
\midrule
$p=0$ or $p\leq1/T$ & $RT\log_2(1+T)$\\
$1/T<p\leq1/4$ & $RT\log_2(1+1/p)$\\
Fixed $p>0$, growing $T$ & $RT$\\
\bottomrule
\end{tabular}
\end{table}

Table~\ref{main:noisy-regimes} also covers noise that varies with
duration. For example, $p=1/\log_2T$ gives order
$RT\log_2\log_2T$ for sufficiently large $T$.

\paragraph*{Classical and quantum program storage.}
Response capacity also differs from the classical program needed to
simulate a complete process. A random classical label may select a
fixed causal quantum simulator with arbitrary internal memory.
The program is supplied once at the start of a complete simulated run;
the classical lower bound allows correlations throughout that run.
For independent phases with known Pauli weights
$q=(1-2p,p,p,0)$, $0<p\leq1/8$, Theorem~\ref{pauli:whole-program-cost}
shows that such programs require $\Theta(T\log T)$ bits at fixed
total response error $0<\epsilon\leq1/64$. The same family has
$\Theta_{\gamma,p}(T)$ response capacity for fixed
$0<\gamma\leq1/16$, and an exact quantum program using $O_p(T)$
qubits.

\section{Growing memory and the saturated endpoint}
\label{main:growing}

Allowing $r=r(K)$ compares process classes with different memory budgets.
Each realization obeys its selected cap at every transition.
The constants in Theorem~\ref{main:fixed-law} may depend on $r$;
the following constructions give explicit lower bounds on that resource
dependence. Table~\ref{main:regimes} summarizes the fixed-memory law
and these larger-memory regimes.

\begin{table}[!htb]
\caption{Capacity regimes for fixed visible dimension $d\geq2$.
Constants in the first row may depend on $d,r$. The other rows fix
the margin; the quadratic-memory lower requires $d=2$ and
$0<\gamma\leq1/16$. The last row gives a sufficient memory for saturation.}
\label{main:regimes}
\centering
\begin{tabular}{lll}
\toprule
Regime & Sequential capacity & Status\\
\midrule
Fixed $r$ & $\Theta_{d,r}(K\log((K+1)/\gamma))$ & Joint $K,\gamma$ order\\
$d=2$, $r=K^2$ & $\Omega_\gamma(K^4/\log^2K)$ to $O_\gamma(K^{11/2}\log K)$ & Gap remains\\
$r\geq d^{2K-1}$ & $\Theta_{d,\gamma}(d^{2K})$ & Fixed-margin horizon order\\
\bottomrule
\end{tabular}
\end{table}

We first give a lower bound based on unitary columns. Let
\[
 R=d^\ell\leq r,\quad \ell\geq1,\qquad
 Q=\lceil\sqrt R\rceil,\qquad
 G=\left\lfloor\frac{K}{2\ell Q}\right\rfloor.
\]
For $0<\gamma\leq1/16$,
\begin{equation}
 C_\gamma(K;d,r)\geq \frac{G R(R-1)}2.
 \label{main:unitary-bound}
\end{equation}
All target-dependent choices are fixed before the first query of a leaf.
The tree queries the column signs in order. Earlier branch labels
determine the preceding columns and hence the known vectors and
reflections used to query the next sign. Amplitude amplification converts
the resulting overlap difference into a fixed probability gap. The process contains
the complete forward/inverse schedule; loading and unloading one
$R$-dimensional operation takes $2\ell$ transitions. Theorem
\ref{low:unitary-lower} proves Eq.~\eqref{main:unitary-bound}.

Addressing several such blocks gives a direct-sum construction with
memory cost equal to the sum of the branch memories, including its
retained address, and an explicit loading horizon. One consequence is
particularly simple to compare with the upper bounds.

\begin{corollary}[A quadratic-memory slice]
\label{main:quadratic}
For $d=2$, $r=K^2$, $K\geq8$ and $0<\gamma\leq1/16$,
\begin{equation}
 C_\gamma(K;2,K^2)
 \geq\frac{K^4}{4096\sqrt2\,\lceil\log_2(K+1)\rceil^2}.
 \label{main:quadratic-lower}
\end{equation}
For every such fixed $\gamma$, the optimized-program upper bound gives
\begin{equation}
 C_\gamma(K;2,K^2)=O_\gamma(K^{11/2}\log K).
 \label{main:quadratic-upper}
\end{equation}
\end{corollary}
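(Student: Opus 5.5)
The lower bound \eqref{main:quadratic-lower} will come from an addressed direct sum of unitary-column blocks, each certified by Theorem~\ref{low:unitary-lower}, with the sizes tuned to $r=K^2$. The upper bound \eqref{main:quadratic-upper} will come from specializing the optimized-program bound of Section~\ref{main:upper} to $d=2$, $r=K^2$.

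\textbf{Why a single block is not enough.} Equation~\eqref{main:unitary-bound} gives capacity of order $G R^2$ with $G\approx K/(2\ell\sqrt R)$, so one block of memory $R$ yields order $K R^{3/2}/\ell$. This requires $G\geq1$, that is $2\ell\lceil\sqrt R\rceil\leq K$, which caps $R$ at about $K^2/(4\ell^2)$. Using only $r=K^2$ this way would waste the memory and give only order $K^4/\ell^4$.

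\textbf{Construction.} I would load an $a$-bit classical address in the first $a$ transitions, exactly as the loader $\mathcal L_\ell$ of Eq.~\eqref{main:noisy-loader}. The address selects one of $2^a$ independent unitary-column blocks, each of dimension $R=2^\ell$. The direct-sum memory cost is the sum of branch memories, $2^aR$, which must not exceed $K^2$. Each block then runs on the remaining $K-a$ transitions with
\[
G=\bigl\lfloor (K-a)/(2\ell\lceil\sqrt R\rceil)\bigr\rfloor .
\]
For the tree, I concatenate the shattering trees of the $2^a$ blocks, querying the address branches in order. Each query supplies the address bits through the ordinary basis inputs. Every leaf fixes one table per block, so there is one process per leaf, and the depths add to
\[
2^a G R(R-1)/2 .
\]

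\textbf{Choice of parameters.} I take $\ell$ even, roughly $2\log_2 K-2\log_2\lceil\log_2(K+1)\rceil-O(1)$. This makes $\sqrt R\approx K/(c\,\lceil\log_2(K+1)\rceil)$ and so $G\geq1$. I then take $2^a$ to be the largest power of two with $2^aR\leq K^2$, so $2^a=\Theta(\lceil\log_2(K+1)\rceil^2)$ and hence $a=O(\log\log K)$. This keeps the loading horizon $a\leq K/2$ for $K\geq8$.

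The total is then
\[
2^a G R(R-1)/2 \gtrsim (K^2/R)\cdot 1\cdot R^2/4=K^2R/4 ,
\]
which is of order $K^4/\lceil\log_2(K+1)\rceil^2$.

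\textbf{Main obstacle.} The hard part is the bookkeeping of the explicit constant $4096\sqrt2$. Each power-of-two rounding of $2^a$ and $R$, the ceiling in $Q$, the floor in $G$, and the factor $R-1\geq R/2$ each costs at most a factor of $2$ or $\sqrt2$. Odd $\ell$ produces the $\sqrt2$. I would verify the worst case at small $K\geq8$ directly, and check separately that $\ell\geq1$ and $G\geq1$ hold there.

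\textbf{Upper bound.} I would take the optimized-program bound proved in Section~\ref{main:upper}, which bounds $C_\gamma(K;d,r)$ by the logarithm of a covering/program dimension polynomial in $K$ and $r$. Substituting $d=2$ and $r=K^2$ should collapse it to $O_\gamma(K^{11/2}\log K)$. The only check needed is that every $\gamma$-dependent factor is independent of $K$.
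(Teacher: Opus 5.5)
Your proposal is correct and follows the paper's route. The paper's lower bound is Corollary~\ref{low:quadratic-memory}, i.e.\ Theorem~\ref{low:sufficient-regime} at $d=2$, $r=K^2$: an addressed direct sum (Theorem~\ref{low:direct-sum}) of unitary-column blocks in the time-limited regime $\sqrt R\sim K/\log_2K$, with $A=\lfloor r/R\rfloor\sim(\log_2K)^2$ addresses loaded by dense coding rather than your one-bit-per-transition loader; the slack in $4096\sqrt2$ absorbs that difference, and for $8\le K\le15$ the right side of \eqref{main:quadratic-lower} is below one anyway. The upper bound is obtained as you intend, by setting $n=2K^2$ and $\eta=\gamma/2$ in Theorem~\ref{up:optimized-program}, where $K\log_2B\le25Kn^2\sqrt{K/\eta}\,\log_2\bigl[e\bigl(1+\tfrac{n}{24}\sqrt{\eta/K}\bigr)\bigr]=O_\gamma(K^{11/2}\log K)$.
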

\begin{proof}
The lower bound is Corollary~\ref{low:quadratic-memory}.
In Theorem~\ref{up:optimized-program}, set $n=2K^2$ and
$\eta=\gamma/2$ and use Eq.~\eqref{up:optimized-simplified}.
\end{proof}

In particular, a bound of the form
$C K r[\log(K+1)+\log(r+1)]^a$ cannot hold uniformly at these fixed
$d,\gamma$ for any constants $C,a\geq0$. On $r=K^2$, its growth is
$O(K^3\log^aK)$, whereas Eq.~\eqref{main:quadratic-lower} grows
as $K^4/\log^2K$. Exact address-program classes such as
Eq.~\eqref{inf:address-form} still obey their more restrictive
information bound.

For sufficiently large memory, the remaining dependence on $K$ is
settled. The lower construction uses process memory of dimension
$d^{2K-1}$, which also suffices to realize every causal process on the
interface. Theorem~\ref{sat:matching} proves, for
$r\geq d^{2K-1}$ and $0<\gamma\leq1$,
\begin{equation}
 d^{2K}\lfloor\log_2d\rfloor
 \leq C_\gamma(K;d,r)
 \leq\left\lfloor\frac{4d^{2K+1}\ln(4/\gamma)}{\gamma^2}\right\rfloor.
 \label{main:saturated}
\end{equation}
The lower construction uses Bell-pair address digits to select an
arbitrary output table.

\section{Upper bounds from simulation and causality}
\label{main:upper}

In the following simulation bounds, program storage and simulator workspace
are auxiliary encoding resources, separate from the target's physical memory.
Put $n=dr$ and $c_z=1-h_2((1-z)/2)$. Suppose each target $P$ has
a fixed state $\sigma_P$ on one common $D$-dimensional program space.
For every tester $E$, suppose there is a target-independent effect
$0\leq A_E\leq I$ on that space such that, for every $P$,
\[
 \sup_E|f_P(E)-\Tr(A_E\sigma_P)|\leq\eta/2.
\]
The program information bound gives
\begin{equation}
 C_\gamma(K;d,r)\leq\left\lfloor\frac{\log_2D}{c_{\gamma-\eta}}\right\rfloor,
 \qquad0\leq\eta<\gamma.
 \label{main:program-bound}
\end{equation}
Below, $C_\gamma$ abbreviates $C_\gamma(K;d,r)$.

Every node effect is evaluated separately on the same target encoding
$\sigma_P$.
The classical--quantum chain rule gives one conditional information
increment per tree level. Random-access coding bounds relate quantum
encoding dimension to recoverable bits~\cite{nayak1999optimal}; serial
quantum encoding supplies the corresponding tree information step
\cite{aaronson2019online}. Appendix~\ref{up:section} proves this step
and the required whole-process simulators.

\begin{remark}[Static response representations]
For fixed $d,r$ and $0<\xi<1/2$, the smallest logarithmic dimension
$\log_2D$ of a common state encoding with target-independent effects
and uniform response error at most $\xi$ has order
$K\log(K+1)$. Indeed, the perfect phase tree and
Lemma~\ref{up:information}, with $\eta=2\xi$ and $\gamma=1$, give
$\log_2D\geq[1-h_2(\xi)]L(K)$, where $L(K)$ is the phase-tree
depth in Eq.~\eqref{low:binary-depth}. For the upper bound, encode each
center of a proper $D(P,Q)$-cover of radius $2\xi$, whose centers
belong to the process class, as an orthogonal label and put its tester responses on the diagonal of each effect.
Proposition~\ref{up:physical-cover} bounds the label count.
\end{remark}

The Kraus rank of a channel is the minimum number of operators in a
Kraus representation. Low-rank channel mixtures and common quantum
program spaces were developed in
Refs.~\cite{gschwendtner2021programmability,yoshida2026isometry}.
Port-based teleportation transfers an input to a selected output port
without a correction on that port. Its optimized forms provide the
retrieval primitives
\cite{yang2020optimal,yoshida2026pbt,studzinski2017pbt}.

For $0<\eta<\gamma$, Theorem~\ref{up:optimized-program} decomposes
each transition into channels of Kraus rank at most $n$ and constructs
a deterministic program. Each component admits an inner-product-preserving
embedding into the output and an $n$-dimensional environment.
A correlated port-based teleportation resource encodes these
embeddings in a common compressed image space, where mixtures require
no additional component label. Equations~\eqref{up:program-parameters}
and~\eqref{up:optimized-simplified} give finite parameters, including
when the local dimension grows with the horizon. The complete program
includes the initial state and all $K$ transitions; the composed
retrieval error is uniform over adaptive testers.

There are useful alternatives. Physical covers give
\begin{equation}
 C_\gamma\leq
 \left\lfloor[(n^2-1)+K(n^4-n^2)]
 \log_2\!\left(1+\frac{8(K+1)}\gamma\right)\right\rfloor.
 \label{main:cover-bound}
\end{equation}
For a given $(K,d,r,\gamma)$, compare the following bounds and take
their minimum; optimize $0<\eta<\gamma$ in the deterministic-program bound.
\begin{center}
\begin{tabular}{@{}lll@{}}
\toprule
Method & Formula & Use\\
\midrule
Physical cover & \eqref{main:cover-bound} & Fixed-memory order\\
Deterministic program & \eqref{up:optimized-capacity}, \eqref{up:optimized-simplified} & Growing-memory estimates\\
Heralded program & \eqref{inf:heralded-bound} & Alternative margin dependence\\
Full causality & Upper bound in \eqref{main:saturated} & Saturated endpoint; independent of $r$\\
\bottomrule
\end{tabular}
\end{center}
The rank-mixture construction in Proposition~\ref{up:rank-upper} gives an
alternative classical convex approximation.

The full-causal upper applies log-determinant and online-potential
methods~\cite{tsai2023logdet,liu2022ftrl}. On the normalized causal
Choi domain, take
$R(\rho)=-\ln\det\rho-d^{2K+1}\ln(d^{2K+1})$.
A causal tangent $\Delta$ is a Hermitian perturbation satisfying the
homogeneous linear causal constraints. For each such $\Delta$, the two
positive tester outcomes in
Eq.~\eqref{main:positive-pair} imply
\[
 D^2R(\rho)[\Delta,\Delta]\geq
 8\sup_E|\Tr(T_E\Delta)|^2.
\]
Shrinking toward the maximally mixed causal state controls the
potential range. The resulting online inequality and a shattered-tree
path give the upper side of Eq.~\eqref{main:saturated}.
The full proof appears in Appendix~\ref{up:causal-section}.

Under direct tree conversion of the stated regret or potential estimates,
the full-causal upper also sharpens the horizon dependence obtained by
combining the full-strategy matrix-entropy estimate of
Ref.~\cite{bansal2025panoply} with a tree argument.
Writing $\nu=d^{2K+1}$, that direct conversion gives order
$\nu^2\ln\nu/\gamma^2$, whereas the causal upper above is
$4\nu\ln(4/\gamma)/\gamma^2$.

\section{Discussion}
\label{main:discussion}

Across growing memory budgets, the unitary and addressing
constructions give lower bounds while common quantum programs give
upper bounds. The saturated endpoint determines the duration dependence
once the memory can realize every causal process on the interface.
The optimal joint dependence on duration, memory and margin remains
open when the memory budget varies. The quadratic-memory slice exhibits
a gap already at fixed margin. A further question concerns the least
memory budget at which the capacity attains the saturated order.
For Theorem~\ref{main:noisy-law}, the optimal joint small-margin dependence
remains open.

\begin{remark}[Information and computation]
Open computational questions concern program preparation, constrained
optimization and learning from an experimental record. Proper prediction
requires a predicted response function in the process class; its
computational complexity is a further question for specified experiment
and loss restrictions.
Different target and resource models are used in online program
optimization for changing noisy channels~\cite{chittoor2022online} and
noisy quantum learning with ancillary learner memory~\cite{cotler2026noisy}.
\end{remark}

\section*{Data Availability}
The proofs are contained in this article and its appendices. No experimental data was generated.

\begin{acknowledgments}
Yibin Wang is the sole author. OpenAI Codex assisted with language editing, formatting, diagram preparation, technical checks and improvement, and code for numerical experiments and counterexample searches. The author takes responsibility for the content of this work.
\end{acknowledgments}

\appendix
\section{Perfect-margin trees without process memory}
\label{low:memoryless-section}

Coherent control across successive transitions realizes bounded signed phase
queries. We first construct a family with
one-dimensional process memory whose sequential fat dimension is of order
$K\log K$, even at the largest margin.  A second, classical family supplies
the dependence on the margin.  Throughout this section, a query means a fresh
run of the complete initialized process.

For an integer $K\geq 1$, define
\begin{equation}
 L(K)=\sum_{n:\,\operatorname{bit}_n(K)=1}
       2^n\left(1+\frac n2\right),
 \qquad L(0)=0.
 \label{low:binary-depth}
\end{equation}
Here $\operatorname{bit}_n(K)$ is the $n$th digit of the binary expansion of
$K$, indexed from $n=0$ at the least significant bit.

\begin{theorem}[Perfect-margin phase trees]
\label{low:phase-tree}
For integers $d\geq2$, $r\geq1$, and $K\geq1$, and every
$0<\gamma\leq1$,
\begin{equation}
 \operatorname{sfat}_{\gamma}\bigl(\mathcal P_{d,r}(K)\bigr)
 \geq L(K).
 \label{low:phase-lower}
\end{equation}
The lower bound is realized by a finite family of memory-one unitary
processes with exactly $2^{L(K)}$ distinct response functions.  Its sequential
fat dimension is exactly $L(K)$ at every such margin, and every response used
by the tree equals either zero or one.  In particular, when $K=2^n$, its
depth is $K(1+n/2)$.
\end{theorem}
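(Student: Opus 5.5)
The plan is to build an explicit finite family of memory-one qubit phase processes $U_t(\theta_t)=\diag(1,e^{i\theta_t})$. I embed the visible qubit in the first two basis levels of $A_t,B_t$; the remaining levels pass through unchanged, and the memory is trivial. That gives the family inside $\mathcal P_{d,r}(K)$ for every $d\geq2$, $r\geq1$. The queries are the controlled experiments of Eq.~\eqref{main:phase-query}, with $s\in\{-1,0,1\}^K$ and offset $\phi$; each uses every transition once.

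I first treat $K=2^n$ and index the transitions by subsets $B\subseteq[n]$. The count to match is
\[
\sum_{A\subseteq[n]}(1+|A|)=2^n+n2^{n-1}=K(1+n/2).
\]
So each $A$ should carry a dyadic phase $\beta_A\in(2\pi/2^{|A|+1})\mathbb Z$ mod $2\pi$, which has $1+|A|$ free binary digits. The target phases are defined from these by a Boolean (zeta/Möbius) transform, $\theta_B=\sum_{A}c_{B,A}\beta_A$ with integer $c$. The design goal is that signed cube sums over $[A\setminus J,A]$, with signs $(-1)^{|J\setminus B|}$, evaluate to $2^{|J|}\beta_A$ plus a combination of the $\beta_{A'}$ for earlier-ordered $A'$. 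This generalizes the two-transition example: there $(1,1)$ gives $\alpha$ and $(-1,1)$ gives $2\beta-\alpha$.

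The digits are queried in a fixed order. Sets $A$ are processed in an order compatible with inclusion, and within each $A$ the coarsest digit comes first, via $|J|=|A|$, then $|J|=|A|-1$, and so on down to $J=\emptyset$. When a digit is queried, every earlier digit is fixed by the path prefix. The signed sum then equals $\pi b+\phi_{\rm known}$, where $b$ is the new digit. Setting $\phi=\phi_{\rm known}$ gives acceptance probability exactly $1-b$, so every node has a response in $\{0,1\}$. With threshold $1/2$ the margin is $1/2\geq\gamma/2$ for every $0<\gamma\leq1$. Each leaf $x$ determines all digits, hence one table $(\beta_A)$ and one fixed $\theta$, so the leaf is witnessed by a single process.

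For general $K$, I write $K=\sum_n 2^n\operatorname{bit}_n(K)$ and split the transitions into consecutive blocks of those sizes. The family is the product of the block families. A query for a block sets $s_t=0$ on the other blocks, and those transitions then contribute no phase. The trees are concatenated, so the depths add to $L(K)$.

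The family has exactly $2^{L(K)}$ members, and distinct members have distinct response functions, since some perfect query separates them. For exactness of the dimension: in any $\gamma$-shattered tree, two leaves first diverge at some node, where the two processes' responses lie on opposite sides of $s_u\pm\gamma/2$. The leaf-to-process map is therefore injective, so the depth is at most $\log_2 2^{L(K)}=L(K)$.

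I expect the main obstacle to be the integer design of $c_{B,A}$ and the sign patterns. The signs must stay in $\{-1,0,1\}$ with each transition used once, while alternating cube sums produce the multiplier $2^{|J|}$ on $\beta_A$ and leave only already-determined terms. Lower-order digits of $\beta_A$ must not leak unknown contributions, because $2^{|J|}\beta_A$ mod $2\pi$ exposes digit $|A|-|J|$ only when the finer digits vanish modulo $2\pi$. My first attempt is to take $\theta$ as the Möbius inverse of a zeta transform of $\beta$ and verify the cube identity by inclusion--exclusion. I would check it directly against the $n=1$ example, where $\alpha=\beta_\emptyset$, before doing the induction on $n$.
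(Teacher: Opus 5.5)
\textbf{Verdict: there is a genuine gap.} Your scaffolding matches the paper's proof: controlled signed-phase queries, $1+|A|$ dyadic bits per subset $A\subseteq[n]$, sets ordered compatibly with inclusion, dyadic blocks for general $K$, the embedding for $d>2$, and the first-divergence count for exactness. The gap is the step you defer, which is the real content of the proof, and the specific design you propose cannot be completed. Write $e_B$ for the unit vector at transition $B$. With support $[A\setminus J,A]$ and signs $(-1)^{|J\setminus B|}$, the queries for $A=\varnothing$ and $A=\{i\}$ are $e_\varnothing$, $e_{\{i\}}-e_\varnothing$ (for $J=\{i\}$) and $e_{\{i\}}$ (for $J=\varnothing$). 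The third is the sum of the first two with unit coefficients. Once the first two responses are exact zeros or ones, $\theta_{\{i\}}$ is fixed modulo $2\pi$, so the third node cannot branch, whatever integer transform $c_{B,A}$ you choose. Equivalently, your design identities would force $\beta_{\{i\}}$ to be a combination of earlier coordinates. Your rule also does not reproduce the two-transition example you cite: at $A=\varnothing$ it gives the vector $(1,0)$, not $(1,1)$.

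\textbf{The missing idea.} Take your coordinates to be up-set sums $\alpha_A=\sum_{Z\supseteq A}\theta_Z$, inverted by $\theta_Z=\sum_{A\supseteq Z}(-1)^{|A|-|Z|}\alpha_A$. Then extend each query to the whole up-set $\{Z\supseteq A\setminus J\}$. Your sign rule is right; only the top of the interval must be $[n]$ rather than $A$. With $z_i=\mathbf 1_{\{i\in Z\}}$, this query is $q_{A,J}(Z)=\prod_{i\in J}(2z_i-1)\prod_{i\in A\setminus J}z_i$. Expanding $\prod_{i\in J}(2z_i-1)=\sum_{C\subseteq J}(-1)^{|J|-|C|}2^{|C|}\prod_{i\in C}z_i$ gives
\[
 \sum_Z q_{A,J}(Z)\theta_Z=2^{|J|}\alpha_A+\sum_{C\subsetneq J}(-1)^{|J|-|C|}2^{|C|}\alpha_{(A\setminus J)\cup C}.
\]
The factor $2^{|J|}$ comes from the coefficient $2$ in $2z_i-1$, even though every entry is $\pm1$, and all other indices are proper subsets of $A$. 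For $n=1$ this gives $(1,1)$, $(-1,1)$, $(0,1)$. Their relation $(1,1)+(-1,1)=2(0,1)$ leaves exactly one new bit at the last node.

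\textbf{A labelling error.} Your within-block order, $|J|=|A|$ first, is correct. But it reads the least significant bit of $m_A$ first, where $\alpha_A=2\pi m_A/2^{|A|+1}$ and $2^{|A|}\alpha_A=\pi m_A$; it does not read the coarsest digit. At bit $k$, with $|J|=|A|-k$, the higher bits vanish modulo $2\pi$ automatically. The lower bits $m_A\bmod 2^k$ are already known and enter the offset $2\pi(m_A\bmod 2^k)/2^{k+1}$.
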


\begin{proof}
We begin with qubit transitions
\begin{equation}
 U_t(\theta_t)=\operatorname{diag}(1,e^{i\theta_t}),
 \qquad t=1,\ldots,K,
 \label{low:phase-process}
\end{equation}
and fixed initial output $|0\rangle$.  The phases belong to the fixed
descriptions of the transitions.  The process propagates no private
register.

For any $s\in\{-1,0,1\}^K$ and any known phase $\phi$, a causal tester can
produce the response
\begin{equation}
 f_{\theta}(E_{s,\phi})
 =\frac{1+\cos(\sum_t s_t\theta_t-\phi)}2.
 \label{low:signed-response}
\end{equation}
To see this, prepare a private control qubit in $|+\rangle$.  Before
transition $t$, prepare a fresh visible input conditionally on the control.
The two computational input values on the control branches $0,1$ are
$(0,1)$ if $s_t=1$, $(1,0)$ if $s_t=-1$, and $(0,0)$ if $s_t=0$.
These preparations extend to unitaries.  After receiving the output,
reverse the same preparation, returning its carrier to $|0\rangle$.
The relative phase accumulated on the control is $e^{is_t\theta_t}$.
After all transitions, its state, up to a global phase, is
\[
 \frac{|0\rangle+e^{i\sum_t s_t\theta_t}|1\rangle}{\sqrt2}.
\]
The effect projecting onto
$(|0\rangle+e^{i\phi}|1\rangle)/\sqrt2$ gives
Eq.~\eqref{low:signed-response}.  This experiment uses every transition
once and involves no postselection.  The control and the temporary visible
carriers belong to the tester.

Suppose first that $K=2^n$, and index the transitions by subsets
$Z\subseteq[n]$.  For every $T\subseteq[n]$, put
\begin{equation}
 \alpha_T=\sum_{Z\supseteq T}\theta_Z.
 \label{low:mobius-forward}
\end{equation}
These coordinates are independent: their inverse transformation is
\begin{equation}
 \theta_Z=\sum_{T\supseteq Z}(-1)^{|T|-|Z|}\alpha_T.
 \label{low:mobius-inverse}
\end{equation}
Indeed, substitution into Eq.~\eqref{low:mobius-forward} gives the
coefficient
$\sum_{T\subseteq Z\subseteq U}(-1)^{|U|-|Z|}
=(1-1)^{|U|-|T|}$ for $\alpha_U$.  Choose independently
\begin{equation}
 m_T\in\{0,\ldots,2^{|T|+1}-1\},
 \qquad \alpha_T=\frac{2\pi m_T}{2^{|T|+1}}.
 \label{low:phase-labels}
\end{equation}
Each complete choice fixes the whole process through
Eq.~\eqref{low:mobius-inverse}.  It contains
\begin{equation}
 M_n=\sum_{T\subseteq[n]}(|T|+1)
     =2^n+n2^{n-1}=K\left(1+\frac n2\right)
 \label{low:phase-bit-count}
\end{equation}
bits; for $n=0$ the sum is one.

It remains to query those bits using coefficients of magnitude at most one.
Write $z_i(Z)=\mathbf 1_{\{i\in Z\}}$.  Given $J\subseteq T$, define
\begin{equation}
 q_{T,J}(Z)=
 \prod_{i\in J}(2z_i(Z)-1)
 \prod_{i\in T\setminus J}z_i(Z).
 \label{low:signed-polynomial}
\end{equation}
Each entry belongs to $\{-1,0,1\}$.  Expanding the first product and using
Eq.~\eqref{low:mobius-forward}, with $j=|J|$, gives
\begin{align}
 \sum_Zq_{T,J}(Z)\theta_Z&=2^j\alpha_T+\beta_{T,J},
 \label{low:dyadic-isolation}\\
 \beta_{T,J}
 &=\sum_{A\subsetneq J}(-1)^{j-|A|}2^{|A|}
       \alpha_{(T\setminus J)\cup A}.
 \label{low:known-phase}
\end{align}
Every index in the last sum is a proper subset of $T$.  When $J$ is empty,
the sum is zero.

Order the sets $T$ by increasing cardinality, with a fixed order within
each cardinality.  In the block for $T$, reveal the bits of $m_T$ from
least to most significant.  At the node for bit $k\in\{0,\ldots,|T|\}$,
the preceding branches specify $r_k=m_T\bmod2^k$.  They also specify
all coordinates indexed by proper subsets of $T$.  Choose a fixed subset
$J\subseteq T$ with $|J|=|T|-k$ and set
\begin{equation}
 \phi=\beta_{T,J}+\frac{2\pi r_k}{2^{k+1}}.
 \label{low:node-phase}
\end{equation}
The tester $E_{q_{T,J},\phi}$ therefore depends only on the preceding
branches.  If
$m_T=r_k+2^k b_k+2^{k+1}h$, with $b_k\in\{0,1\}$, then
\begin{equation}
 \sum_Zq_{T,J}(Z)\theta_Z-\phi
 =\pi b_k\pmod{2\pi},
 \qquad f_\theta(E_{q_{T,J},\phi})=1-b_k.
 \label{low:perfect-bit}
\end{equation}
Use threshold $1/2$, assigning the positive branch to $b_k=0$.
Every complete branch assignment specifies all integers $m_T$, hence one
fixed phase vector.  Equation~\eqref{low:perfect-bit} holds at every node
of that complete path for this same process.  The signed response margin
is $1/2$.

For a general horizon, partition its transitions into the distinct dyadic
blocks appearing in the binary expansion of $K$.  Choose the parameters
of different blocks independently.  To query one block, use coefficient
zero at all other transitions.  Concatenating the block trees gives
depth $L(K)$.  Equivalently, keeping only the largest dyadic block and
padding by identity transitions gives depth $N(1+\log_2N/2)$, where
$N=2^{\lfloor\log_2K\rfloor}$.  For $d>2$, extend every qubit unitary
by the identity on the remaining levels and keep the tester inside its
qubit subspace.

There are $2^{L(K)}$ complete parameter assignments.  Distinct paths have
a first divergence at a common tester and threshold, so their witnesses
have different response functions.  Conversely, a positive-margin tree
of depth $m$ requires at least $2^m$ different response functions by the
same first-divergence argument.  This proves the asserted exact dimension
of the finite family.
\end{proof}

\begin{remark}
\label{low:finite-phase-alphabet}
The construction uses fixed, unrestricted transition descriptions;
gate synthesis cost is a separate resource.
For a dyadic block of length $K$, every phase in
Eq.~\eqref{low:phase-labels} has denominator dividing $2K$ in units of
$2\pi$.  Equation~\eqref{low:mobius-inverse} therefore makes every physical
phase a multiple of $\pi/K$, modulo $2\pi$.  Thus a finite set of at most
$2K$ diagonal gates per transition suffices for this family.
\end{remark}

Probability-bisection trees also appear in sequential-fat lower bounds
for pure-state measurements~\cite{meyer2025pure}.  The finite grid below
gives an exact count for independent outputs of an initialized process.

\begin{remark}
\label{low:probability-semantics}
The grid compares response probabilities with thresholds. A single
Bernoulli measurement gives a binary outcome, while the small-margin
levels measure separation between probability parameters.
The exact zero-or-one responses in Theorem~\ref{low:phase-tree} are a
stronger property of that particular family.
\end{remark}

\begin{proposition}[Probability-grid trees]
\label{low:grid}
For $d\geq2$, $r\geq1$, $K\geq1$, and $0<\gamma\leq1$,
\begin{equation}
 \operatorname{sfat}_{\gamma}\bigl(\mathcal P_{d,r}(K)\bigr)
 \geq (K+1)\left\lfloor\log_2(1+1/\gamma)\right\rfloor.
 \label{low:grid-lower}
\end{equation}
The right-hand side is the exact sequential fat dimension of a finite
family of memory-one replacer processes.
\end{proposition}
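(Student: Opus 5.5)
The plan is to build $K+1$ independent probability-bisection trees, one per visible output slot, and concatenate them. Put $m=\lfloor\log_2(1+1/\gamma)\rfloor$; if $m=0$ there is nothing to prove, so assume $m\geq1$, i.e.\ $\gamma\leq 1/(2^m-1)$.

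\textbf{The family.} A memory-one replacer process ignores every input $A_t$ and emits a fixed diagonal qubit state at each slot. At slot $t=0,\ldots,K$ it outputs
\[
\sigma_t=(1-p_t)|0\rangle\langle0|+p_t|1\rangle\langle1|,
\]
embedded in the first two levels when $d>2$. The state $\sigma_0$ is the initialized output $B_0$, and $\Phi_t(X)=\Tr(X)\,\sigma_t$ for $t\geq1$. These are CPTP maps with trivial memory, so the family lies in $\mathcal P_{d,r}(K)$ for every $r\geq1$. Each $p_t$ is restricted to the grid $\{j/(2^m-1): j=0,\ldots,2^m-1\}$, whose spacing $1/(2^m-1)$ is at least $\gamma$.

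\textbf{The tree for one slot.} The tester $E^{(t)}$ feeds arbitrary inputs, discards every output except $B_t$, and accepts on outcome $1$. Then $f_P(E^{(t)})=p_t$ exactly.

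For that slot we use the standard bisection tree of depth $m$ on the $2^m$ grid points. At each node the surviving grid points form a contiguous interval of $2^{m-k}$ points. We split it into halves and set the threshold $s_u$ at the midpoint between the top point of the lower half and the bottom point of the upper half. Every surviving point then lies at distance at least $1/(2(2^m-1))\geq\gamma/2$ from $s_u$, which is exactly the required margin in Eq.~\eqref{main:shattering}. After $m$ levels a unique $p_t$ remains.

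\textbf{Concatenation.} The slot trees are chained in the order $t=0,1,\ldots,K$. Because the $p_t$ are chosen independently, each leaf determines all $K+1$ grid values and hence a single fixed replacer process. That process satisfies every node inequality along its path, since the node testers for slot $t$ depend only on $p_t$. This gives depth $(K+1)m$ and proves Eq.~\eqref{low:grid-lower}.

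\textbf{Exactness for this finite family.} The family has exactly $2^{(K+1)m}$ members, and their response functions are pairwise distinct. Distinct members differ in some $p_t$, and the tester $E^{(t)}$ then gives them different responses. For the upper bound we reuse the first-divergence argument from the proof of Theorem~\ref{low:phase-tree}. A $\gamma$-shattered tree of depth $L$ needs $2^L$ witnesses with pairwise distinct response functions: any two leaves first diverge at a common node, where their responses lie on opposite sides of the same threshold. Hence $L\leq (K+1)m$, so the sequential fat dimension of the family is exactly $(K+1)m$.

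\textbf{Main obstacle.} The only delicate point is the margin arithmetic: the grid spacing must really satisfy $1/(2^m-1)\geq\gamma$, and the boundary case $m=0$ (i.e.\ $\gamma>1$, excluded here, or more generally $\lfloor\log_2(1+1/\gamma)\rfloor=0$) must be handled separately as the trivial bound.
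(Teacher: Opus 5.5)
Your proposal is correct and follows the paper's proof essentially step for step: a $2^m$-point grid of diagonal replacer outputs at each of the $K+1$ slots, midpoint bisection thresholds with spacing $1/(2^m-1)\geq\gamma$, concatenation of the slot trees, and the first-divergence count against the $2^{(K+1)m}$ distinct response functions for exactness. The $m=0$ discussion is unnecessary: for $0<\gamma\leq1$ one has $1+1/\gamma\geq2$, so $m\geq1$ automatically, as the paper records via $Q=2^B\geq2$.
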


\begin{proof}
Let $B=\lfloor\log_2(1+1/\gamma)\rfloor$ and $Q=2^B\geq2$.
Set $p_\ell=\ell/(Q-1)$ for $0\leq\ell<Q$ and
\[
 \rho(p)=(1-p)|0\rangle\langle0|+p|1\rangle\langle1|.
\]
Choose $K+1$ independent labels $\ell_0,\ldots,\ell_K$.
Emit $\rho(p_{\ell_0})$ initially; at transition $t$, discard the input
and emit $\rho(p_{\ell_t})$.  These states and replacer channels define
a memory-one process on the full $d$-dimensional input spaces.
A tester can measure the effect $|1\rangle\langle1|$ on any selected
output, retain its binary result privately, and complete the remaining
transitions.  Its response is exactly that coordinate's probability.

For one coordinate, construct a depth-$B$ tree by repeatedly bisecting
the ordered label set.  At each node, put the threshold halfway between
the largest probability in the lower half and the smallest probability
in the upper half.  Adjacent probabilities differ by
$1/(Q-1)\geq\gamma$, so every response in either half has the required
margin $\gamma/2$.  Concatenate these trees over all $K+1$ coordinates.
Each complete path fixes all labels, hence one process satisfying every
inequality on that path.  The family has exactly $Q^{K+1}$ distinct
response functions, and the first-divergence count gives exactness.
\end{proof}

\begin{corollary}[Joint horizon and margin order at fixed memory]
\label{low:fixed-memory-joint}
For integers $d\geq2$, $r\geq1$, $K\geq1$, and $0<\gamma\leq1$,
\begin{align}
 \frac K8\log_2\frac{K+1}{\gamma}
 &\leq\operatorname{sfat}_{\gamma}\bigl(\mathcal P_{d,r}(K)\bigr)
 \label{low:joint-lower}\\
 &\leq5\bigl[(dr)^4-1\bigr]K
                \log_2\frac{K+1}{\gamma}.
 \label{low:joint-upper}
\end{align}
\end{corollary}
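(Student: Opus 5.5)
The lower bound \eqref{low:joint-lower} will come from combining Theorem~\ref{low:phase-tree} and Proposition~\ref{low:grid}. The upper bound \eqref{low:joint-upper} will come from the physical cover bound \eqref{main:cover-bound} (Proposition~\ref{up:physical-cover}).

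\emph{Lower bound.} Since both constructions are valid lower bounds for the same class $\mathcal P_{d,r}(K)$, the capacity is at least the maximum of the two depths, and hence at least their average:
\[
 \operatorname{sfat}_\gamma\ge \tfrac12\bigl[L(K)+(K+1)\lfloor\log_2(1+1/\gamma)\rfloor\bigr].
\]
For the phase tree, let $N=2^{\lfloor\log_2K\rfloor}$, so that $N>K/2$. Then
\[
 L(K)\ge N(1+\tfrac12\log_2N)\ge \tfrac K2\bigl(1+\tfrac12(\log_2K-1)\bigr)\ge \tfrac K4\log_2(K+1),
\]
after checking small $K$ and using $\log_2K+1\ge\log_2(K+1)$.

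For the grid term, when $\gamma\le1$ we have $\lfloor\log_2(1+1/\gamma)\rfloor\ge1$ and $\lfloor x\rfloor\ge x/2$ for $x\ge1$. This gives
\[
 (K+1)\lfloor\log_2(1+1/\gamma)\rfloor\ge \tfrac K2\log_2(1/\gamma).
\]
Summing the two estimates and halving gives
\[
 \operatorname{sfat}_\gamma\ge \tfrac12\cdot\tfrac K4\bigl[\log_2(K+1)+\log_2(1/\gamma)\bigr]=\tfrac K8\log_2\tfrac{K+1}\gamma .
\]
The only care needed is the small-$K$ checks, for example $K=1$, where $L=1$.

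\emph{Upper bound.} Put $n=dr$. Equation~\eqref{main:cover-bound} gives
\[
 \operatorname{sfat}_\gamma\le\bigl[(n^2-1)+K(n^4-n^2)\bigr]\log_2\!\bigl(1+8(K+1)/\gamma\bigr).
\]
For the prefactor, use $K\ge1$:
\[
 (n^2-1)+K(n^4-n^2)\le K(n^4-1).
\]
For the logarithm, write $x=(K+1)/\gamma\ge2$. Then $1+8x\le x^{c}$ for a suitable $c$. Indeed $1+8x\le 9x\le x^{1+\log_2 9/\log_2 2}$, but that is too lossy; instead use
\[
 \log_2(1+8x)\le\log_2 x+\log_2(8.5)\le \log_2x\,(1+\log_2 8.5)\le 5\log_2 x,
\]
valid since $\log_2x\ge1$ and $\log_2 8.5<4$. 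This yields the factor $5[(dr)^4-1]K\log_2\frac{K+1}\gamma$.

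The main obstacle is just the constant bookkeeping in the lower bound, namely the dyadic truncation of $L(K)$ at small $K$. If the inequality $L(K)\ge\frac K4\log_2(K+1)$ fails for some tiny $K$, I would instead use $L(K)\ge K$ (since every block contributes at least its size) to cover those cases.
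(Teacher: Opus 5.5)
Your proposal is correct and follows the paper's proof essentially step for step. The lower bound takes the capacity to be at least the average of the largest-dyadic-block phase-tree depth and the probability-grid depth. The upper bound comes from the physical cover, with the bracket bounded by $K[(dr)^4-1]$ and $\log_2(1+8x)\le5\log_2 x$ for $x\ge2$.

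Two small remarks. First, your estimate $\frac K4(1+\log_2K)\ge\frac K4\log_2(K+1)$ already holds for every $K\ge1$, so no small-$K$ checks or fallback are needed. Second, the route you dismissed as too lossy, $\log_2(9x)\le(1+\log_2 9)\log_2 x$, actually gives a constant below $5$; it is exactly the step the paper uses.
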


\begin{proof}
Let $N=2^n\leq K<2^{n+1}$.  Theorem~\ref{low:phase-tree} gives
$\operatorname{sfat}_\gamma\geq N(1+n/2)
\geq(K/4)\log_2(K+1)$, since $N>K/2$ and
$\log_2(K+1)\leq n+1$.
For $y=\log_2(1/\gamma)\geq0$,
$\lfloor\log_2(1+2^y)\rfloor\geq y/2$:
when $y\leq2$, its value is at least one; when $y\geq2$, it is at least
$y-1$.  Proposition~\ref{low:grid} therefore also gives
$\operatorname{sfat}_\gamma\geq Ky/2$.
The capacity is at least the maximum of these two lower bounds, hence
at least half their sum.  This proves Eq.~\eqref{low:joint-lower} using
two subfamilies of the same process class.

For the upper bound, the physical cover of
Proposition~\ref{up:physical-cover}, with $a=dr$, gives
\[
 \operatorname{sfat}_\gamma
 \leq\left\lfloor
 [a^2-1+K(a^4-a^2)]\log_2\bigl(1+8(K+1)/\gamma\bigr)
 \right\rfloor.
\]
For $K\geq1$, the bracket is at most $K(a^4-1)$.
For $x=(K+1)/\gamma\geq2$,
$\log_2(1+8x)\leq\log_2(9x)\leq5\log_2x$.
This proves Eq.~\eqref{low:joint-upper} with constants independent
of the margin.
\end{proof}

\section{Unitary columns and their implementation time}
\label{low:unitary-section}

Larger process memory can store a state on which several visible inputs
act collectively.  We use this storage to query columns of a unitary.
The inverse unitary occurs in specified process transitions, and both
the forward and inverse operations require time to load and return the
state through the visible ports.

\begin{theorem}[Unitary-column lower bound]
\label{low:unitary-lower}
Let $d\geq2$, $\ell\geq1$, $R=d^\ell\leq r$, and $K\geq0$ be integers.
Put
\begin{equation}
 Q=\lceil\sqrt R\rceil,\qquad K_0=2\ell Q,
 \qquad G=\left\lfloor\frac K{K_0}\right\rfloor.
 \label{low:unitary-parameters}
\end{equation}
For every $0<\gamma\leq1/16$,
\begin{equation}
 \operatorname{sfat}_{\gamma}\bigl(\mathcal P_{d,r}(K)\bigr)
 \geq G\frac{R(R-1)}2.
 \label{low:unitary-bound}
\end{equation}
The witnesses have fixed initial visible output $|0\rangle$ and an
$R$-dimensional process memory.  If $K\geq K_0$, then
\begin{equation}
 \operatorname{sfat}_{\gamma}\bigl(\mathcal P_{d,r}(K)\bigr)
 \geq\frac{KR^{3/2}}{32\ell}.
 \label{low:unitary-simplified}
\end{equation}
\end{theorem}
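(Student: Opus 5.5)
The plan is a direct construction. We split the horizon into $G$ disjoint windows of $K_0=2\ell Q$ transitions each and pad the remaining transitions with identity channels. Window $g$ carries its own unknown $R$-dimensional unitary $W_g$, acting on the process memory $M\cong(\mathbb C^d)^{\otimes\ell}$. The labels of different windows are chosen independently, and a query aimed at one window passes every other window trivially: those windows are loaded with a known state and have known effect, so the tester can undo them. Concatenating per-window trees as in Theorem~\ref{low:phase-tree} then gives the factor $G$. It therefore suffices to build, inside one window, a tree of depth $R(R-1)/2$ at margin $\gamma\le1/16$.

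\textbf{Window schedule.} Within a window, the process alternates $Q$ application rounds, each using $2\ell$ transitions. In the first $\ell$ transitions the process swaps the visible inputs into the $\ell$ memory factors, returning $|0\rangle$ outputs. It then applies $W_g$ or $W_g^\dagger$ to $M$ according to a fixed schedule, and in the next $\ell$ transitions swaps the memory back out one factor at a time. The forward/inverse pattern is fixed and target-independent, for example alternating $W,W^\dagger$. The tester can therefore run any circuit that interleaves its own known operations with up to $Q$ calls to $W_g$ or $W_g^\dagger$ on an $R$-dimensional register. Rounds it does not need are absorbed by feeding in a known state and discarding the output, which costs nothing because the unused calls can be paired and cancelled.

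\textbf{Label encoding.} Build the columns $w_1,\dots,w_R$ of $W_g$ sequentially. Column $k$ is the normalized component, orthogonal to $w_1,\dots,w_{k-1}$, of
\[
 e_k+\tfrac{1}{\sqrt R}\sum_{i<k}\sigma_{ik}e_i ,
\]
possibly after a fixed rescaling to keep the Gram--Schmidt step well conditioned. Here the signs are $\sigma_{ik}\in\{\pm1\}$, which gives $\sum_k(k-1)=R(R-1)/2$ bits in total. The tree orders the bits by column and then by $i$. When bit $\sigma_{ik}$ is queried, the earlier path has fixed $w_1,\dots,w_{k-1}$ and $\sigma_{1k},\dots,\sigma_{i-1,k}$. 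The tester therefore knows a state $v$ and a value $c_0$ such that $\langle v|W_g|e_k\rangle=c_0+\sigma_{ik}\delta$ with $\delta=\Theta(1/\sqrt R)$, where $v$ is essentially $e_i$ projected against the known columns.

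\textbf{Main obstacle: amplification.} A single such query separates the two branches only by a probability gap of order $1/R$. The key step is to turn this into a fixed gap of at least $\gamma$ using fixed-point or exact amplitude amplification with $O(\sqrt R)\le Q$ calls to $W_g$ and $W_g^\dagger$. The reflections are about $|e_k\rangle$ and about the known target subspace, and both are determined by the path. The construction must satisfy three requirements. First, the constants must keep both branch responses on opposite sides of a threshold with margin $1/32$. Second, the entries to be queried must be exactly the overlaps above, with unqueried later labels leaving these overlaps untouched; this holds because column $k$ depends only on its own signs and the earlier columns. Third, the number of calls must stay within $Q=\lceil\sqrt R\rceil$. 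If a uniform constant turns out to be hard to obtain, I would tune the scale $1/\sqrt R$ in the encoding so that the initial success amplitude is exactly $\sin(\pi/(4Q+2))$-type, which makes amplification exact for one sign and bounded away for the other. Each complete path fixes every sign, hence a single $W_g$ and a single process, so every leaf has one witness.

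\textbf{Simplified bound.} If $K\ge K_0$, then $G\ge K/(2K_0)=K/(4\ell Q)$. Using $Q\le2\sqrt R$ and $R-1\ge R/2$ gives
\[
 G\,\frac{R(R-1)}2\ \ge\ \frac{K}{8\ell\sqrt R}\cdot\frac{R^2}{4}=\frac{KR^{3/2}}{32\ell}.
\]
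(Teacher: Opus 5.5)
\textbf{There is a genuine gap in the label encoding and in the amplification step.} Your window structure, the $2\ell$-transition load/apply/unload slots with a fixed alternating $W,W^\dagger$ schedule, and the arithmetic for the simplified bound all match the paper. (Other windows do not have a known effect, since their unitaries are unknown. Feeding zeros and discarding the outputs still works, because the zero-input unload resets the memory.)

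\textbf{The encoding carries no information.} With the signs placed on $e_i$, $i<k$, induction gives $\operatorname{span}(w_1,\dots,w_{k-1})=\operatorname{span}(e_1,\dots,e_{k-1})$. The component of $e_k+R^{-1/2}\sum_{i<k}\sigma_{ik}e_i$ orthogonal to the earlier columns is therefore exactly $e_k$. So $w_k=e_k$ and $W_g=I$ for every sign pattern, and your query vector ($e_i$ projected against the known columns) is zero. Moving the perturbations to later coordinates is not a cheap repair. After Gram--Schmidt and normalization, the overlap you query depends on the not-yet-queried signs of the same column. Each node's response must be independent of all suffix labels, and the fact that column $k$ depends only on its own signs and earlier columns does not give that.

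\textbf{The missing idea is to build each column directly inside the orthogonal complement of the earlier columns, with a norm independent of the signs.} Fix, by a deterministic rule from the earlier columns, an orthonormal basis $e_0,\dots,e_N$ of that complement, with $N=R-j-1$ for column $j$. Set $\psi_j=\frac{1}{\sqrt2}\bigl(e_0+N^{-1/2}\sum_{i=1}^N s_{j,i}e_i\bigr)$. This is automatically a unit vector orthogonal to the earlier columns, and it uses $\sum_j(R-j-1)=R(R-1)/2$ signs. With $g=(e_0+e_i)/\sqrt2$ it gives $\langle g|U|j\rangle=(1+s_{j,i}N^{-1/2})/2$ exactly, independent of every other sign.

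\textbf{The amplification step is left open, and the options you name do not obviously close it.} Fixed-point amplification is designed to drive every amplitude above a threshold toward one, which works against separating two nearby branches. The exact-amplification tuning is not carried out column by column. The call budget is also exactly $Q=\lceil\sqrt R\rceil$ slots in the fixed alternating pattern, not $O(\sqrt R)$.

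\textbf{How the paper closes it.} It uses the ordinary reflections $S_g$ and $S_j=I-2|j\rangle\langle j|$. The identity $US_jU^\dagger=I-2|\psi_j\rangle\langle\psi_j|$ confines the dynamics to the plane of $g$ and $\psi_j$. This gives $q_\pm=\sin^2(k\theta_\pm)$ with $\sin\theta_\pm=(1\pm N^{-1/2})/2$. Because $\theta_\pm\approx\pi/6$ is not small, the responses oscillate, so you must control $k(\theta_++\theta_-)$ as well as $k(\theta_+-\theta_-)$. The paper takes $k=1$ for $N<256$ and $k=6a+1\approx\sqrt N/4$ otherwise. This forces $k(\theta_++\theta_-)\bmod 2\pi\in[\pi/3,\pi/2]$ and $k(\theta_+-\theta_-)\in[1/4,5/6]$. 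Hence $q_+-q_-=\sin(k(\theta_++\theta_-))\sin(k(\theta_+-\theta_-))>3/20$, with odd $k\le\lceil\sqrt{N+1}\rceil\le Q$. Without exact, suffix-independent overlaps and an explicit choice of $k$ of this kind, the depth-$R(R-1)/2$ tree per window is not established.
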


We prove the theorem in three steps: a column family, a constant-gap
decoder, and its realization as a process with $d$-dimensional ports.

\begin{lemma}[A sequential family of unitary columns]
\label{low:column-family}
In dimension $R\geq2$, there is a family of real orthogonal matrices
indexed by $R(R-1)/2$ independent signs.  When the signs are ordered by
column, a query for any sign in column $j$ uses a known unit vector $g$
and input $|j\rangle$ with
\begin{equation}
 |\langle g|U|j\rangle|^2
 =p_\pm=\frac{(1\pm N^{-1/2})^2}{4},
 \qquad N=R-j-1\geq1.
 \label{low:column-probabilities}
\end{equation}
The vector $g$ depends only on earlier columns, and the two probabilities
are independent of all other signs of the current column and all later
columns.
\end{lemma}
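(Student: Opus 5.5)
The plan is to build the matrix column by column, in the spirit of Gram--Schmidt. Each column is a unit vector in the orthogonal complement of the earlier columns. It has one component that every target shares, plus a signed uniform superposition carrying that column's independent signs.

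Index columns by $j=0,\ldots,R-1$ and put $N_j=R-j-1$. Since $\sum_j N_j=R(R-1)/2$, column $j$ will carry exactly $N_j$ signs.

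\textbf{Construction.} Suppose columns $v_0,\ldots,v_{j-1}$ are already fixed and orthonormal. Let $W_j$ be their orthogonal complement in $\mathbb R^R$, so $\dim W_j=R-j=N_j+1$.
\begin{itemize}
\item[(i)] Fix a deterministic rule producing an orthonormal basis $e^{(j)}_0,\ldots,e^{(j)}_{N_j}$ of $W_j$ from $v_0,\ldots,v_{j-1}$. For instance, apply Gram--Schmidt to the projections of the standard basis vectors onto $W_j$ and keep the first $N_j+1$ nonzero outputs. This basis is then a function of the earlier columns only.
\item[(ii)] For $N_j\geq1$ and signs $\varepsilon_{j,1},\ldots,\varepsilon_{j,N_j}\in\{\pm1\}$, set
\[
 v_j=\frac1{\sqrt2}\,e^{(j)}_0+\frac1{\sqrt{2N_j}}\sum_{k=1}^{N_j}\varepsilon_{j,k}\,e^{(j)}_k .
\]
This is a unit vector in $W_j$.
\item[(iii)] For the last column, $j=R-1$, the space $W_{R-1}$ is one-dimensional; take $v_{R-1}=e^{(R-1)}_0$, which involves no signs.
\item[(iv)] Define $U=\sum_j v_j\langle j|$. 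By construction this is a real orthogonal matrix, and every assignment of the $R(R-1)/2$ signs gives a valid member of the family. Distinct sign choices give distinct matrices, because the coefficients $\langle e^{(j)}_k|v_j\rangle$ recover the signs column by column.
\end{itemize}

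\textbf{Query.} To query sign $\varepsilon_{j,k}$, use the known vector
\[
 g=\frac{e^{(j)}_0+e^{(j)}_k}{\sqrt2},
\]
which depends only on $v_0,\ldots,v_{j-1}$ through step (i). Orthonormality of the basis gives
\[
 \langle g|U|j\rangle=\langle g|v_j\rangle=\frac12+\frac{\varepsilon_{j,k}}{2\sqrt{N_j}} .
\]
The right-hand side involves only $\varepsilon_{j,k}$. Squaring yields $p_\pm=(1\pm N_j^{-1/2})^2/4$ with $N=R-j-1\geq1$, which is Eq.~\eqref{low:column-probabilities}.

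\textbf{Independence and ordering.} The overlap above does not depend on the other signs $\varepsilon_{j,k'}$ of the same column, because the $e^{(j)}_{k'}$ are orthogonal to $g$. It does not depend on any later column, since $g$ and $v_j$ are determined before those columns are chosen. Later bases $e^{(j')}$ with $j'>j$ do depend on the current column's signs. This is harmless: the tree reads signs in column order, so by the time column $j'$ is queried, the earlier branch labels have fixed all of $v_0,\ldots,v_{j'-1}$, and hence $g$ is a known vector at that node.

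\textbf{Main obstacle.} The substantive step is finding a real unit vector whose overlap with a known probe has a nonzero constant part independent of the unknown signs; without it the squared overlap would not separate the two signs. Reserving the shared direction $e^{(j)}_0$ does this. It costs one dimension per column, which is exactly why column $j$ carries $R-j-1$ rather than $R-j$ signs. The rest is bookkeeping: the total sign count and checking that the basis rule depends only on earlier columns.
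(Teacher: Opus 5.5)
Your proposal is correct and follows essentially the same route as the paper's proof. Both build a deterministic orthonormal basis of the complement of the earlier columns, set $\psi_j=(e_0+N^{-1/2}\sum_i s_{j,i}e_i)/\sqrt2$ with the last column sign-free, and query with $g=(e_0+e_i)/\sqrt2$ to obtain the overlap $(1+s_{j,i}N^{-1/2})/2$.
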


\begin{proof}
Given columns $0,\ldots,j-1$, select an ordered orthonormal basis
$e_0,\ldots,e_N$ of their orthogonal complement by a fixed deterministic
rule.  For example, project the standard coordinate vectors in order,
orthogonalize against the vectors already selected, skip zero residuals,
and divide each nonzero residual by its positive norm.  For $j\leq R-2$,
choose signs $s_{j,i}\in\{-1,1\}$ for $1\leq i\leq N$ and set
\begin{equation}
 \psi_j=\frac1{\sqrt2}
       \left(e_0+\frac1{\sqrt N}\sum_{i=1}^N s_{j,i}e_i\right).
 \label{low:column-vector}
\end{equation}
This is a unit vector orthogonal to all earlier columns.  Choose the
last column by the same basis rule in its one-dimensional complement.
The resulting columns define an orthogonal matrix $U$, with
$\sum_{j=0}^{R-2}(R-j-1)=R(R-1)/2$ signs.

Order the signs first by $j$, then by $i$.  At the node for $(j,i)$,
the preceding branches determine all earlier columns and therefore
$g=(e_0+e_i)/\sqrt2$.  Equation~\eqref{low:column-vector} gives
$\langle g|U|j\rangle=(1+s_{j,i}/\sqrt N)/2$, proving the claim.
\end{proof}

\begin{lemma}[Constant-gap column decoder]
\label{low:column-decoder}
For each integer $N\geq1$, the two cases in
Eq.~\eqref{low:column-probabilities} can be separated by a tester using
an odd number $k\leq\lceil\sqrt{N+1}\rceil$ of alternating operations
$U,U^\dagger,U,\ldots,U$, interspersed with known reflections.
Its two success probabilities satisfy $q_+-q_->1/16$.
\end{lemma}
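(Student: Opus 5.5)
The plan is to use amplitude amplification in the two-dimensional invariant subspace spanned by $U|j\rangle$ and $|g\rangle$, together with the two known reflections
\[
S_j=I-2|j\rangle\langle j|,\qquad S_g=I-2|g\rangle\langle g|.
\]
In this subspace the operator $U S_j U^\dagger$ is the reflection about $U|j\rangle$. Write $a_\pm=(1\pm N^{-1/2})/2=\langle g|U|j\rangle$, which is real and nonnegative by Lemma~\ref{low:column-family}, and set $\theta_\pm=\arcsin a_\pm\in[0,\pi/2]$. Standard Grover geometry then gives
\[
q_\pm(k)=\bigl|\langle g|(US_jU^\dagger S_g)^mU|j\rangle\bigr|^2=\sin^2(k\theta_\pm),\qquad k=2m+1.
\]
The operation sequence is $U,U^\dagger,U,\ldots,U$ with $k$ odd, interspersed with the known reflections $S_g$ and $S_j$. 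The tester prepares $|j\rangle$ and measures $|g\rangle\langle g|$ at the end; both choices depend only on earlier columns, as required.

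Step 1 treats small $N$ with $k=1$. Then $q_\pm=p_\pm$ and $q_+-q_-=a_+^2-a_-^2=N^{-1/2}$. This exceeds $1/16$ whenever $N<256$, and $1\leq\lceil\sqrt{N+1}\rceil$ holds trivially.

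Step 2 treats large $N$ ($N\geq256$) using the product form
\[
q_+-q_-=\sin(k\Sigma)\,\sin(k\Delta),\qquad \Sigma=\theta_++\theta_-,\quad \Delta=\theta_+-\theta_-.
\]
By the mean value theorem and $1\leq(\arcsin)'(x)\leq(1-x^2)^{-1/2}$ for $x\leq a_+\leq 9/16$, we get $N^{-1/2}\leq\Delta\leq 1.21\,N^{-1/2}$. Since $a_\pm$ lie within $1/32$ of $1/2$, $\Sigma$ lies within a small explicit distance of $\pi/3$. Let $k_0$ be the largest odd integer at most $\lceil\sqrt{N+1}\rceil$, and consider the three candidates $k\in\{k_0,k_0-2,k_0-4\}$. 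Each satisfies $k\geq\sqrt N-5\geq 0.68\sqrt N$ and $k\leq\sqrt N+2$. Hence $k\Delta$ lies in roughly $[0.68,\,1.36]$, up to $O(N^{-1/2})$ corrections, so $\sin(k\Delta)\geq 0.6$.

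For the other factor, consecutive candidates differ in $k\Sigma$ by $2\Sigma\approx 2\pi/3$. The three angles $k\Sigma\bmod 2\pi$ are therefore spread nearly evenly around the circle, and at least one of them lies within about $\pi/3+\epsilon$ of $\pi/2$, giving $\sin(k\Sigma)\geq\cos(\pi/3+\epsilon)$. Choose that $k$; it is chosen using only $N$, which is known. Then $q_+-q_->0.6\cdot 0.4>1/16$, with the sign correct because both factors are positive.

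The main obstacle is Step 2's uniform bookkeeping. First, the deviation $\epsilon$ of $2\Sigma$ from $2\pi/3$ must stay small for all $N\geq256$; it is $O(N^{-1/2})$ in each step, and only three steps are taken. Second, the $O(N^{-1/2})$ slack in $k\Delta$ must be controlled. I would first verify these inequalities crudely at the threshold $N=256$ and use monotonicity in $N$. If the constants prove tight, the fallback is to lower the small-$N$ cutoff slightly, or to allow the candidate window to extend one step further, while still respecting $k\leq\lceil\sqrt{N+1}\rceil$.
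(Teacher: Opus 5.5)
Your proof is correct, but it chooses the odd $k$ differently from the paper when $N\geq256$. The paper fixes the closed form $k=6a+1$ with $a=\lceil(\sqrt N/4-1)/6\rceil$, so that $\sqrt N/4\leq k<\sqrt N/4+6$ and $k\equiv1\pmod 6$. The congruence places $k\pi/3$ at $\pi/3$ modulo $2\pi$. The drift $k(\theta_++\theta_--\pi/3)\leq5/384$ then stays negligible, even after multiplication by $k$, because of the second-order bound $0\leq\theta_++\theta_--\pi/3\leq u^2/3$; the first-order terms cancel by the symmetry of $a_\pm$ about $1/2$. Together with $k\Delta\in[1/4,5/6]$ this gives $q_+-q_-\geq\sin(\pi/3)\sin(1/4)>3/20$.

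You instead take $k$ near the top of the permitted budget, so that $k\Delta\in[0.68,1.37]$ and $\sin(k\Delta)>0.62$. You then select among $k_0,k_0-2,k_0-4$ by pigeonhole. Only the spacing $2\Sigma\approx2\pi/3$ between the three angles matters, never the absolute position of $k_0\Sigma$. The largest gap among the three points is at most $2\pi/3+2|2\Sigma-2\pi/3|$, so some candidate has $\sin(k\Sigma)\geq\cos(\pi/3+|2\Sigma-2\pi/3|)$. First-order control of $\Sigma$ therefore already suffices: the crude bound $|\Sigma-\pi/3|\leq1.21u\leq0.076$ gives $\sin(k\Sigma)>0.36$. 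Your stated $0.4$ needs the sharper $O(1/N)$ drift, but nothing depends on it, the product exceeds $1/16$ with room, and none of your fallbacks is needed.

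What each route buys:
\begin{itemize}
\item The paper gets an explicit formula for $k$ that uses only about a quarter of the allowed calls.
\item You get a larger guaranteed gap (above $0.2$ even with the crude estimate) and an argument insensitive to the fine arithmetic of $\Sigma$. The cost is a three-way selection rule, still a deterministic function of $N$ and hence known at the node, and nearly all $\lceil\sqrt{N+1}\rceil$ calls. Both the lemma and the slot count $Q=\lceil\sqrt R\rceil$ in Theorem~\ref{low:unitary-lower} permit this.
\end{itemize}
A minor point: $US_jU^\dagger$ is minus the reflection about $U|j\rangle$, so the resulting factor $(-1)^m$ is only a global phase.
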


\begin{proof}
Let $S_g=I-2|g\rangle\langle g|$ and
$S_j=I-2|j\rangle\langle j|$.  Starting from $U|j\rangle$, apply
$m$ iterations of $-US_jU^\dagger S_g$.  Chronologically, each iteration
uses $S_g$, then $U^\dagger$, then $S_j$, then $U$.  The reflections
are specified by the node.  The identity
\begin{equation}
 US_jU^\dagger=I-2|\psi_j\rangle\langle\psi_j|
 \label{low:reflection-cancellation}
\end{equation}
removes dependence on the other columns from each complete iteration.
This is the amplitude-amplification construction of
Brassard, H\o yer, Mosca, and Tapp~\cite{brassard2002amplitude}.

Write $\psi_j=\sin\theta\,g+\cos\theta\,h$ in its plane with $g$,
where $h$ is a unit vector orthogonal to $g$.  In this ordered basis,
the iteration has matrix
\[
 \begin{pmatrix}
 \cos(2\theta)&\sin(2\theta)\\
 -\sin(2\theta)&\cos(2\theta)
 \end{pmatrix}.
\]
After $m$ iterations, measuring $|g\rangle\langle g|$ therefore gives
\begin{equation}
 q_\pm=\sin^2(k\theta_\pm),\qquad
 k=2m+1,\qquad
 \theta_\pm=\arcsin\frac{1\pm u}{2},\quad u=N^{-1/2}.
 \label{low:amplified-probabilities}
\end{equation}
The endpoint cases with probability zero or one follow directly from
the reflections, or by continuity.

For $1\leq N<256$, choose $k=1$.  Then
$q_+-q_-=u>1/16$.  For $N\geq256$, set
\begin{equation}
 a=\left\lceil\frac{\sqrt N/4-1}{6}\right\rceil,
 \qquad k=6a+1.
 \label{low:decoder-call-count}
\end{equation}
This is positive and odd, and
\begin{equation}
 \frac{\sqrt N}{4}\leq k<\frac{\sqrt N}{4}+6,
 \qquad \frac14\leq ku<\frac14+6u\leq\frac58.
 \label{low:call-count-bounds}
\end{equation}
In particular, $k<\sqrt N\leq\lceil\sqrt{N+1}\rceil$.

For $f(x)=\arcsin(x/2)$ on $0\leq x\leq5/4$,
\[
 \frac12\leq f'(x)=\frac1{\sqrt{4-x^2}}\leq\frac23,
 \qquad
 0\leq f''(x)=\frac{x}{(4-x^2)^{3/2}}\leq\frac13.
\]
For the last bound, $f''$ is increasing and
$f''(5/4)=80/(39\sqrt{39})<1/3$, as is seen by squaring the positive
quantities: $57600<59319$.  Since $u\leq1/16$, these derivative bounds
apply on $[1-u,1+u]$.  Integration yields
\begin{equation}
 u\leq\delta:=\theta_+-\theta_-\leq\frac{4u}{3},
 \qquad
 0\leq\sigma:=\theta_++\theta_--\frac\pi3\leq\frac{u^2}{3}.
 \label{low:decoder-angle-bounds}
\end{equation}
Indeed, the second difference is
$\int_0^u(u-t)[f''(1+t)+f''(1-t)]\,dt$.
Equations~\eqref{low:call-count-bounds} and
\eqref{low:decoder-angle-bounds} give
\[
 \frac14\leq k\delta\leq\frac56<\frac\pi2,
 \qquad 0\leq k\sigma\leq\frac5{384}.
\]
Because $k\equiv1\pmod6$, the angle $k(\theta_++\theta_-)$ modulo
$2\pi$ lies in $[\pi/3,\pi/3+5/384]\subset[\pi/3,\pi/2]$.
The sine subtraction identity now gives
\begin{align}
 q_+-q_-
 &=\sin\bigl(k(\theta_++\theta_-)\bigr)\sin(k\delta)\notag\\
 &\geq\sin(\pi/3)\sin(1/4)
   >\frac34\frac15=\frac3{20}>\frac1{16}.
 \label{low:decoder-gap}
\end{align}
Here $\sin(1/4)\geq1/4-(1/4)^3/6=95/384>1/5$.
The midpoint threshold $(q_++q_-)/2$ gives a signed margin greater
than $1/32$ on both branches.  It depends only on $N$ and the chosen $k$.
\end{proof}

\begin{lemma}[Streaming a scheduled unitary]
\label{low:streaming}
Let $R=d^\ell$ and let $V$ be any fixed unitary on $\mathbb C^R$.
A process with an $R$-dimensional memory can implement one use of $V$
on a tester-held register, including its correlations with arbitrary
references, in $2\ell$ transitions of port dimension $d$.  At the end
of these transitions its memory is reset to a fixed state on every input.
\end{lemma}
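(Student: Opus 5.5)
The plan is a load–apply–unload construction. Identify $\mathbb C^R$ with $(\mathbb C^d)^{\otimes\ell}$ and split the $2\ell$ transitions into $\ell$ loading transitions and $\ell$ unloading transitions. The tester sends its register one $d$-dimensional digit at a time, and the process later returns the transformed digits.

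The first step is loading. Transition $t\leq\ell$ acts on $A_tM_{t-1}$ by a fixed unitary. It swaps the incoming $d$-dimensional input into the $t$th tensor factor of the memory, and it emits that factor's previous content, which is the known reset state $|0\rangle$, on $B_t$. To keep $\dim M_t\leq R$ throughout, I would take every $M_t$ to be $(\mathbb C^d)^{\otimes\ell}$, initialized in $|0\rangle^{\otimes\ell}$. The tester then discards or ignores the returned $|0\rangle$ carriers. Since these are fixed product states independent of the tester's register, they carry no correlations.

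The second step applies $V$ and unloads. At transition $\ell+1$, the map first applies $V$ to the full memory; this is legitimate because $V$ is fixed and known to the process schedule. It then swaps the first memory factor with the input $A_{\ell+1}$, outputting digit $1$ of $V|\psi\rangle$ on $B_{\ell+1}$. The tester supplies $|0\rangle$ as that input, so the factor is reset. Transitions $\ell+2,\ldots,2\ell$ swap out the remaining factors in the same way.

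Every step is a unitary acting on the process side alone, so the composed action on the tester register together with arbitrary references is $V\otimes I$ up to fixed relabelings of which output systems hold which digits. The tester reassembles these digits into its register.

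The one delicate point, which I expect to be the main obstacle, is the requirement that the memory be reset to a fixed state \emph{on every input}. The tester might supply something other than $|0\rangle$ during unloading. To handle this, I would make each unloading transition a CPTP map that outputs the memory factor on $B$ and then discards the input and prepares $|0\rangle$ in that memory slot. This is a swap followed by a replacer, which is still a valid channel with a fresh discarded environment, as the model permits. After transition $2\ell$ the memory is then $|0\rangle^{\otimes\ell}$ regardless of the inputs. On the intended inputs the output still equals $V$ applied to the loaded state. Finally, I would check that the intermediate memory dimension never exceeds $R$ and that the Choi/causal bookkeeping is consistent; both follow directly from the construction.
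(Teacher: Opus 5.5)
Your proposal is correct and follows essentially the same load--apply--unload construction as the paper: swap the $\ell$ digits into memory, apply $V$ to the full memory, swap the digits back out, and make the final step(s) reset the memory by a discard-and-prepare map that agrees with the swap on the intended zero inputs. The differences are immaterial. The paper applies $V$ at the end of transition $\ell$ and resets the whole memory once at transition $2\ell$ via the Kraus maps $(I_B\otimes|0\rangle_M\langle z|)S_{\ell-1}$, whereas you apply $V$ at the start of transition $\ell+1$ and replace each slot by $|0\rangle$ as it is unloaded.
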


\begin{proof}
Identify the process's propagated memory with $\ell$ registers $M_0,\ldots,M_{\ell-1}$,
each of dimension $d$, initially in $|0\rangle$.  Factor the tester's
register $E$ in the same way.  During loading transition $a$, swap the
visible input with $M_a$, emitting the post-swap visible carrier.  At
the final loading transition also apply $V$ to the whole memory after
the swap.  If $S_a$ denotes the swap, that final channel is the unitary
$(I_B\otimes V_M)S_a$.  Sending the factors of $E$ in order loads its
whole state, including its reference entanglement.  All loading outputs
are zero in this experiment.

During the next $\ell$ transitions, send zero inputs and swap them with
$M_0,\ldots,M_{\ell-1}$ in the same order.  Retaining the outputs
returns $V$ applied to $E$.  At the final unloading transition, follow
the swap by discarding the remaining memory and preparing $|0\rangle_M$.
Explicitly, its Kraus operators are
\begin{equation}
 F_z=(I_B\otimes|0\rangle_M\langle z|)S_{\ell-1},
 \qquad z=0,\ldots,R-1,
 \label{low:reset-kraus}
\end{equation}
and $\sum_zF_z^\dagger F_z=I$.  In the specified zero-input experiment,
the memory is already zero after the swaps, so this last operation
preserves the implemented channel.  For arbitrary other inputs it
still resets the memory and removes all its correlations.  Thus each
transition is CPTP on its entire domain, with arbitrary references.
\end{proof}

\begin{proof}[Proof of Theorem~\ref{low:unitary-lower}]
A block contains $Q$ consecutive streaming slots, each of duration
$2\ell$.  Fix a matrix from Lemma~\ref{low:column-family}; odd slots
apply $U$ and even slots apply $U^\dagger$ using
Lemma~\ref{low:streaming}.  This schedule is the same for every query
to that matrix.  At the node for a column sign, the tester executes
Lemma~\ref{low:column-decoder} during the first $k$ slots.
Since $N+1\leq R$, there are enough slots.  All the known reflections
act on the tester's returned register between slots.
After its measurement, the tester retains the result privately and
completes the remaining slots using zero inputs and discarded outputs.
Normalized operations on the other systems leave the retained result's
probability unchanged.

Use $G$ such blocks with independent matrices and pad the remaining
$K-GK_0$ transitions by fixed discard-and-prepare channels.  Every slot
resets the process memory, so every block begins in its required state.
To query a bit in one block, the tester sends zero inputs and discards
the outputs of preceding blocks, executes the decoder in the selected
block, and retains its outcome through the rest of the process.
The initial visible output is fixed to zero and ignored.  All block
indices and all forward or inverse choices occur in the fixed
time-dependent channel descriptions.

Order the tree bits by block, column, and sign coordinate.  Each complete
path determines every column of every matrix.  Install these matrices
and their full alternating schedules once to obtain a single complete
process.  The cancellation in Eq.~\eqref{low:reflection-cancellation}
and Lemma~\ref{low:column-family} ensure that each node's response is
independent of all suffix signs.  Thus this same process realizes every
constraint on its path, with margin at least $\gamma/2$ for
$0<\gamma\leq1/16$.  There are $GR(R-1)/2$ levels, proving
Eq.~\eqref{low:unitary-bound}.

For $K\geq K_0$, use
$\lfloor K/K_0\rfloor\geq K/(2K_0)$,
$R(R-1)/2\geq R^2/4$, and $Q\leq2\sqrt R$.  Then
\[
 G\frac{R(R-1)}2
 \geq\frac{KR^2}{16\ell Q}
 \geq\frac{KR^{3/2}}{32\ell},
\]
which proves Eq.~\eqref{low:unitary-simplified}.
\end{proof}

\begin{corollary}[Failure of a uniform linear-memory bound]
\label{low:no-linear-memory}
There is no finite constant $C$ such that
\begin{equation}
 \operatorname{sfat}_{1/16}\bigl(\mathcal P_{2,r}(K)\bigr)
 \leq CKr\log_2(K+1)
 \label{low:excluded-uniform-bound}
\end{equation}
for all positive integers $r,K$.
\end{corollary}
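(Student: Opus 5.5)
The plan is to argue by contradiction, using Theorem~\ref{low:unitary-lower} along a family of parameters where memory grows with horizon. Suppose a constant $C$ satisfies Eq.~\eqref{low:excluded-uniform-bound} for all $r,K$. Specialize to $d=2$ and $\gamma=1/16$, which is allowed in Theorem~\ref{low:unitary-lower}. Take $r=R=2^\ell$. Then $Q=\lceil\sqrt R\rceil\leq 2^{\ell/2}+1$ and $K_0=2\ell Q$. Choose the horizon $K=K_0$ exactly, or any $K\geq K_0$, so that Eq.~\eqref{low:unitary-simplified} applies:
\[
 \operatorname{sfat}_{1/16}\bigl(\mathcal P_{2,R}(K)\bigr)\geq\frac{KR^{3/2}}{32\ell}.
\]

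Combining this with the assumed bound gives $KR^{3/2}/(32\ell)\leq CKR\log_2(K+1)$, that is,
\[
 \sqrt R\leq 32C\,\ell\log_2(K+1).
\]
With $K=2\ell\lceil 2^{\ell/2}\rceil$ we have $\log_2(K+1)=O(\ell)$. The right-hand side is therefore $O(C\ell^2)$, while the left-hand side is $2^{\ell/2}$. For $\ell$ sufficiently large (depending on $C$) this inequality fails, which is the contradiction. To make it explicit I would bound $K+1\leq 2\ell(2^{\ell/2}+1)+1\leq 2^{\ell}$ for $\ell\geq 8$, say. This gives $\log_2(K+1)\leq\ell$, so the condition becomes $2^{\ell/2}\leq 32C\ell^2$, and this fails once $\ell$ exceeds an explicit threshold in $C$.

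No step is a serious obstacle. The only care needed is bookkeeping: check that $r=R$ meets the hypothesis $R\leq r$; confirm that the constant $\gamma=1/16$ lies in the admissible range of Theorem~\ref{low:unitary-lower}; verify that $K$ is a positive integer with $K\geq K_0$; and carry out the elementary estimate $\log_2(K+1)\leq\ell$ for large $\ell$. Alternatively, one could use Eq.~\eqref{low:unitary-bound} directly with $G=1$, obtaining the lower bound $R(R-1)/2$. This yields the same contradiction, since $R(R-1)/2$ against $C\cdot O(\ell\sqrt R)\cdot R\cdot\ell$ again leaves a $\sqrt R$ versus polylog gap.
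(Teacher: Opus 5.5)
Your proposal is correct and follows the paper's proof: the same choice $r=R=2^\ell$, $K=2\ell\lceil\sqrt R\rceil$ in Theorem~\ref{low:unitary-lower}, and the same $\sqrt R$ versus $O(\ell^2)$ comparison (the paper uses the $G=1$ depth $R(R-1)/2$ directly, which is your alternative route). One small arithmetic slip: $K+1\leq2^\ell$ fails at $\ell=8$, where $K=256$ and $K+1=257$, but it holds for all $\ell\geq9$, which is all your argument needs.
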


\begin{proof}
Take $r=R=2^\ell$ and $K=2\ell\lceil\sqrt R\rceil$.
Theorem~\ref{low:unitary-lower} gives depth $R(R-1)/2$, and its ratio
to $Kr\log_2(K+1)$ is
\[
 \frac{R-1}{4\ell\lceil\sqrt R\rceil
                 \log_2(2\ell\lceil\sqrt R\rceil+1)}.
\]
Since $\lceil\sqrt R\rceil\leq2\,2^{\ell/2}$ and the logarithm is
$O(\ell)$, this ratio tends to infinity at least as fast as a positive
constant times $2^{\ell/2}/\ell^2$.
\end{proof}

\section{Addressing independent processes within one memory cap}
\label{low:direct-sum-section}

An address can select one of several independent processes while using
the same visible ports.  Its storage cost is incorporated by a direct
sum of the branch memories.  Dense coding reduces the loading time,
including the preparation and storage of the initial entangled pair.

For integers $A\geq1$, define
\begin{equation}
 a_d(A)=\left\lceil\log_{d^2}A\right\rceil.
 \label{low:address-time}
\end{equation}
In particular, $a_d(1)=0$.

\begin{theorem}[Direct sums of initialized processes]
\label{low:direct-sum}
Let $d\geq2$, $H\geq0$, and $A\geq1$ be integers, and let
$R_1,\ldots,R_A$ be positive integers.  Suppose that, for each
$x\in\{1,\ldots,A\}$, a nonempty family
$\mathcal F_x\subseteq\mathcal P_{d,R_x}(H)$ has a finite
$\gamma$-shattered tree of depth $L_x\geq0$, where $0<\gamma\leq1$.
Then
\begin{equation}
 \operatorname{sfat}_{\gamma}
 \left(\mathcal P_{d,\sum_xR_x}(H+a_d(A))\right)
 \geq\sum_{x=1}^A L_x.
 \label{low:direct-sum-bound}
\end{equation}
Consequently, for a common branch cap $R$,
\begin{equation}
 \operatorname{sfat}_{\gamma}
       \bigl(\mathcal P_{d,AR}(H+a_d(A))\bigr)
 \geq A\operatorname{sfat}_{\gamma}\bigl(\mathcal P_{d,R}(H)\bigr).
 \label{low:equal-direct-sum}
\end{equation}
\end{theorem}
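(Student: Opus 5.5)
We build one addressed process per choice of branch witnesses and concatenate the branch trees, ordered by address. Give the process a propagated memory $\bigoplus_x \mathbb C^{R_x}$ of total dimension $\sum_x R_x$. We use the first $a=a_d(A)$ transitions to load a classical address $x$ by dense coding, and the remaining $H$ transitions to run branch $x$'s realization inside the summand $\mathbb C^{R_x}$.

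\textbf{Loading with dense coding.} The initial state prepares a maximally entangled pair of local dimension $d^{a}$. One half is emitted gradually through the visible ports, and the other half is retained in memory. We must check that this retained half, together with the partially received address digits, fits within $\sum_x R_x$ at every loading transition. This is why the statement counts the pair's storage. If the count is too tight, we first fall back to loading with one $d$-ary digit per transition, and then reduce the time to $\lceil\log_{d^2}A\rceil$ by sending both halves of each $d^2$-valued digit.

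During loading transition $t$, the tester returns a half-pair factor after applying a generalized Pauli $X^{u}Z^{v}$. The process performs a Bell measurement on this factor and its stored partner, and records the digit $(u,v)$. After the last loading transition, the process maps the recorded digits to $x\in\{1,\ldots,A\}$. Out-of-range values are sent to $x=1$. The process then measures, so the address becomes classical. It prepares branch $x$'s initial memory $M_0^{(x)}$ in the summand $\mathbb C^{R_x}$.

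\textbf{The branch's initial output.} Branch $x$'s initialized output $B_0^{(x)}$ must be delivered. We handle this by a one-transition delay: the last loading transition emits $B_0^{(x)}$ together with the address measurement. If this costs an extra transition, we instead absorb it by letting the $B_0$ of the whole process be a dummy state $|0\rangle$, and we have the final loading transition output branch $x$'s initial $B_0$. This works because that transition is exactly where branch $x$ begins.

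\textbf{Running the branch.} Transitions $a+1,\ldots,a+H$ act as $\bigoplus_x \Phi^{(x)}_t$ controlled by the classical address. Each is a CPTP map on the whole direct-sum memory, with cross-address coherences killed by measurement. This gives valid realizations in $\mathcal P_{d,\sum R_x}(H+a)$.

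\textbf{Transferring trees.} For any branch tester $E$ on $\mathcal F_x$, define the tester $\tilde E_x$ that sends the fixed address $x$ by dense coding and then runs $E$. Then $f_{\tilde P}(\tilde E_x)=f_{P_x}(E)$, where $P_x$ is the branch-$x$ component of $\tilde P$.

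Concatenate the trees: first the depth-$L_1$ tree of $\mathcal F_1$ with testers $\tilde E_1$ and the same thresholds, then at each of its leaves the tree of $\mathcal F_2$, and so on. A complete path selects one leaf witness $P^{(x)}$ in every branch. Install them all in a single addressed process. Every node inequality holds because the node's response depends only on its own branch component. This gives depth $\sum_x L_x$ with the same margin.

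\textbf{Equal branches.} Equation~\eqref{low:equal-direct-sum} follows by taking every $\mathcal F_x$ to be a family witnessing a finite depth $L\le\operatorname{sfat}_\gamma\mathcal P_{d,R}(H)$. We then take the supremum over $L$, which covers the infinite case.

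\textbf{Main obstacle.} The main obstacle is the memory accounting for dense-coding loading. The stored half of the entangled resource, the digits already decoded, and the space needed to then prepare branch memories must all fit within $\sum_x R_x$ at every step, which requires $d^{a}$-type storage to be at most $\sum R_x$. When $\sum R_x\ge A$ but $d^{a}$ is larger, we plan to load the pair incrementally, one $d$-dimensional half per transition. This requires only $O(d)$ extra storage together with an encoding of the partial address into the direct sum. We would verify the count $\lceil A/d^{2t}\rceil$-type bound on partial addresses.
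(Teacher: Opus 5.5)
\textbf{There is a genuine gap: the loader's memory accounting.} Your architecture matches the paper's: a direct-sum memory $\bigoplus_x\mathbb C^{R_x}$, an address loaded by dense coding in $a_d(A)$ transitions, and a last loading transition that prepares the branch's whole initialized state $\sigma_x$. Its visible part is that transition's output, at no extra time cost. After that come block-diagonal branch channels and concatenated branch trees with one installed witness per branch. The gap is the step you flag as the main obstacle, and it is the only nontrivial content of the theorem. The loader must respect the cap $\sum_xR_x$, which can equal $A$ (all $R_x=1$) and so leaves no additive slack.

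Your primary scheme violates this cap. Holding one half of a single pair of local dimension $d^{a}$, while emitting the other half piece by piece, requires memory $d^{2a-1}$ right after $B_0$ is emitted. For $d=2$, $A=5$, $R_x=1$ one has $a=2$ and $8>5$. The fallback is not yet a proof:
\begin{itemize}
\item ``$O(d)$ extra storage'' is not available under this cap.
\item ``Sending both halves of each $d^2$-valued digit'' leaves the protocol unspecified.
\item $\lceil A/d^{2t}\rceil$ is not the number of surviving prefixes.
\end{itemize}
The alternative with a dummy global $B_0=|0\rangle$ also fails the time count. When $d<A\le d^2$, so that $a_d(A)=1$, a single transition with no previously emitted half-pair can load only $d$ values.

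\textbf{The missing construction} is the paper's incremental loader with pruning.
\begin{itemize}
\item Let the initial state be $|\Phi_q\rangle$ on $B_0M_0$ with $q=\min(d,A)$. Use the rectangular dense-coding basis of $b=dq$ vectors, so that the least $a$ with $b^a\ge A$ equals $a_d(A)$.
\item At each nonfinal loading transition, Bell-measure the returned port with its stored partner.
\item Keep the extended prefix only if it begins one of the first $A$ words in lexicographic order; otherwise keep a fixed valid prefix.
\item Emit a fresh half-pair and retain its partner.
\end{itemize}
The memory after transition $t<a$ is then $q\,\lceil A/b^{a-t}\rceil$, largest at $t=a-1$. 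You must prove $q\lceil A/b\rceil\le A\le\sum_xR_x$. One way is to write $A=bu+v$ with $0\le v<b$ and check the cases $v=0$, $u=0$, and $u,v\ge1$. The choice $q=\min(d,A)$ matters for $A<d$, where a $d$-dimensional Bell half alone already exceeds a cap equal to $A$. With this accounting supplied, the rest of your argument goes through.
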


\begin{proof}
The case $A=1$ requires no loading.  Suppose that $A\geq2$, and put
$q=\min(d,A)$ and $b=dq$.  Let $a$ be the smallest positive integer
with $b^a\geq A$.  If $2\leq A<d$, both $a$ and $a_d(A)$ equal one;
if $A\geq d$, then $b=d^2$.  Thus $a=a_d(A)$ in every case.

Prepare initially the visible-memory state
\begin{equation}
 |\Phi_q\rangle=\frac1{\sqrt q}
           \sum_{j=0}^{q-1}|j\rangle_{B_0}|j\rangle_M.
 \label{low:initial-bell}
\end{equation}
The retained Bell half has dimension $q$ and is part of the process
memory.  A visible phase followed by a cyclic shift can encode
$(u,e)\in\{0,\ldots,d-1\}\times\{0,\ldots,q-1\}$ as
\begin{equation}
 |\beta_{u,e}\rangle=\frac1{\sqrt q}
 \sum_{j=0}^{q-1}e^{2\pi i e j/q}|j+u\bmod d\rangle|j\rangle.
 \label{low:rectangular-bell}
\end{equation}
The phase operation on the first $q$ levels extends by the identity
on the unused levels.  These $dq$ vectors are an orthonormal basis:
the second register in an overlap forces the same $j$, the first
register then forces the same $u$, and the roots-of-unity sum forces
the same $e$.  This rectangular dense-coding alphabet is due to
Yan and Wang~\cite{yan2004dense}; the temporal loading construction
below also accounts for its retained Bell halves.

Select the first $A$ length-$a$ words over this alphabet in lexicographic
order.  Let $V_t$ be their distinct prefixes of length $t$.  For
$0\leq t<a$,
\begin{equation}
 |V_t|=\left\lceil\frac{A}{b^{a-t}}\right\rceil.
 \label{low:prefix-count}
\end{equation}
At nonfinal loading transition $t\in\{1,\ldots,a-1\}$, the memory
consists of the old prefix and one Bell half.  Measure the incoming
visible register and Bell half in the complete basis
Eq.~\eqref{low:rectangular-bell}, together with the old prefix basis.
Retain the extended prefix if it belongs to $V_t$; otherwise retain
a fixed member of $V_t$.  Emit a fresh visible half of
$|\Phi_q\rangle$ and retain its partner. Write $g_t(p,z)$ for this
updated prefix, including the stated fallback for an invalid extension.
On the supported loading subspace the Kraus maps have the form
\begin{equation}
 E_{p,z}=
 \bigl(|g_t(p,z)\rangle\otimes|\Phi_q\rangle\bigr)
 \bigl(\langle p|\otimes\langle\beta_z|\bigr),
 \label{low:prefix-kraus}
\end{equation}
up to the ordering of the output registers.  Their output vectors
are normalized and their input bras form an orthonormal basis.
Their adjoint products therefore sum to the identity on that subspace.

The largest memory dimension before the final write is
$q\lceil A/b\rceil\leq A$.  Write $A=bu+v$ with
$0\leq v<b$.  If $v=0$, the inequality is $qu\leq bu$.
If $u=0$, it is $q\leq A$.  If $u\geq1$ and $v>0$, then
\[
 A-q(u+1)=q(d-1)u+v-q\geq q(d-2)+1\geq1.
\]
Earlier prefix counts are smaller, and the initial memory costs
$q\leq A$ as well.

Choose one branch process $C_x\in\mathcal F_x$ for each address.
Pad its realization, if necessary, to a memory space $M_x$ of
dimension $R_x$, and write its initialized state as $\sigma_x$ on
$\mathbb C^d\otimes M_x$.  Set
\begin{equation}
 M=\bigoplus_{x=1}^A M_x,
 \qquad J_x:M_x\longrightarrow M,
 \qquad\dim M=\sum_xR_x.
 \label{low:memory-direct-sum}
\end{equation}
The address is already encoded by the direct-sum block.  During
loading, only the prefix and Bell-half subspaces occupy this carrier;
the branch state is prepared at the final loading transition.

At that last transition, perform the prefix/Bell measurement and
decode the address $x$, assigning invalid words to $x=1$.
Prepare $(I\otimes J_x)\sigma_x(I\otimes J_x^\dagger)$.
For a spectral decomposition
$\sigma_x=\sum_\nu\lambda_{x,\nu}|v_{x,\nu}\rangle
\langle v_{x,\nu}|$, explicit Kraus maps on the loading subspace are
\begin{equation}
 \sqrt{\lambda_{x,\nu}}
 (I_B\otimes J_x)|v_{x,\nu}\rangle
 (\langle p|\otimes\langle\beta_z|),
 \label{low:branch-initialization-kraus}
\end{equation}
where $x$ is the address decoded from $(p,z)$.
For each input basis vector, the sum of squared output norms is one.
Complete the input basis outside the loading subspace and map each
remaining basis vector to one fixed normalized output state by
discard-and-prepare Kraus maps.  Use the same type of completion for
Eq.~\eqref{low:prefix-kraus} and for padded branch channels.  All these
maps are CPTP on arbitrary inputs and references, including inputs
outside the subspaces used in the selected experiments.

The tester loads any selected address deterministically by applying
the known encodings of its word to the successive emitted Bell halves.
The last loading output is the selected branch's initial visible
output.  Its $H$ transitions follow, so the combined horizon is exactly
$a+H$.  This also covers $H=0$.

Let the Kraus maps of transition $t$ of branch $x$ be
$K_{x,t,\alpha}:\mathbb C^d\otimes M_x\to\mathbb C^d\otimes M_x$.
Use global Kraus maps
\begin{equation}
 L_{x,t,\alpha}=(I_B\otimes J_x)K_{x,t,\alpha}
                              (I_A\otimes J_x^\dagger).
 \label{low:branch-channel-kraus}
\end{equation}
Their adjoint products sum to
$I_A\otimes\sum_xJ_xJ_x^\dagger=I_A\otimes I_M$.
Thus the channel dephases distinct blocks and applies the branch
channel within each block, with a CPTP action on coherent inputs
and arbitrary references.  Once address $x$ has been loaded, its
state remains in the $x$th block and every subsequent transition
agrees exactly with that of $C_x$.

To query a node of the $x$th tree, first load $x$, then run the
original branch tester on its initialized visible output and $H$
transitions.  Its response is unchanged.  Concatenate the $A$ trees.
For each complete concatenated path, choose a witness $C_x$ for each
complete branch path, then install all their initialized states and
transitions in the single process just constructed.  That process
realizes every constraint on the entire concatenated path.  This
proves Eq.~\eqref{low:direct-sum-bound}.  Taking arbitrary finite
shattered depths in identical branch classes gives
Eq.~\eqref{low:equal-direct-sum}; the finite-parameter upper bounds
ensure that these dimensions are finite.
\end{proof}

\begin{corollary}[Addressed unitary-column lower bound]
\label{low:addressed-unitary}
Let $d\geq2$, $\ell\geq1$, $R=d^\ell$, and $A\geq1$ be integers
with $AR\leq r$ and $K\geq a_d(A)$.  Set
$H=K-a_d(A)$.  For $0<\gamma\leq1/16$,
\begin{equation}
 \operatorname{sfat}_{\gamma}\bigl(\mathcal P_{d,r}(K)\bigr)
 \geq A\left\lfloor\frac{H}{2\ell\lceil\sqrt R\rceil}\right\rfloor
                         \frac{R(R-1)}2.
 \label{low:addressed-bound}
\end{equation}
\end{corollary}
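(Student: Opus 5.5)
The corollary follows by feeding the unitary-column trees of Theorem~\ref{low:unitary-lower} into the direct-sum construction of Theorem~\ref{low:direct-sum}. Set $H=K-a_d(A)\geq0$ and take $A$ identical branch classes $\mathcal F_x=\mathcal P_{d,R}(H)$, each with memory cap $R_x=R=d^\ell$. Theorem~\ref{low:unitary-lower}, applied with horizon $H$ and $r=R$, gives every branch a finite $\gamma$-shattered tree of depth
\[
 L_x=\left\lfloor\frac{H}{2\ell\lceil\sqrt R\rceil}\right\rfloor\frac{R(R-1)}2
\]
for $0<\gamma\leq1/16$.

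That theorem allows horizon $K\geq0$, so $H=0$ or $H<2\ell\lceil\sqrt R\rceil$ is harmless. In those cases the floor is zero and the bound is trivial. Each branch family is nonempty, since it contains for example the padded replacer processes.

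Theorem~\ref{low:direct-sum} then gives
\[
 \operatorname{sfat}_\gamma\bigl(\mathcal P_{d,AR}(H+a_d(A))\bigr)\geq\sum_xL_x,
\]
which equals $A$ times the branch depth.

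It remains to pass from the class $\mathcal P_{d,AR}(K)$ to $\mathcal P_{d,r}(K)$, using $H+a_d(A)=K$. The classes are monotone in the memory cap, since each realization only requires $\dim M_t\leq r$. So $AR\leq r$ gives $\mathcal P_{d,AR}(K)\subseteq\mathcal P_{d,r}(K)$. Sequential fat dimension is monotone under inclusion, because a shattered tree for a subclass is a shattered tree for the larger class. This proves Eq.~\eqref{low:addressed-bound}.

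The only point needing care is the direct-sum step. Its memory cap must count the retained address block together with the Bell halves used during loading. Theorem~\ref{low:direct-sum} already shows that these loading memories stay at most $A\leq AR$, and that the direct-sum memory has dimension exactly $\sum_xR_x=AR$. So no further accounting is needed, and the proof is a direct combination of the two earlier results.
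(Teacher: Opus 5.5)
Your proposal is correct and follows the paper's proof: apply Theorem~\ref{low:direct-sum} to $A$ copies of the unitary-column construction of Theorem~\ref{low:unitary-lower} on the remaining $H=K-a_d(A)$ transitions, then use $AR\leq r$ to pass from $\mathcal P_{d,AR}(K)$ to $\mathcal P_{d,r}(K)$. The paper also remarks that the branch initial states are all fixed to zero, so the loader is independent of the column signs; this is not needed for the bound, since, as you use, the direct-sum theorem already installs arbitrary branch witnesses, so taking the full class $\mathcal P_{d,R}(H)$ as branch family works equally well.
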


\begin{proof}
Apply Theorem~\ref{low:direct-sum} to $A$ copies of the family in
Theorem~\ref{low:unitary-lower}, each on the remaining $H$ transitions.
Their initialized visible and memory states are all fixed to zero.
Thus the entire loader is independent of the column signs.
\end{proof}

\begin{theorem}[A uniform lower bound when the loading time is available]
\label{low:sufficient-regime}
For integers $d\geq2$, $r\geq d$, and $K\geq0$, define
\begin{equation}
 t=\lceil\log_{d^2}r\rceil,\quad h=K-t,\quad
 s=\lfloor\log_dr\rfloor,\quad
 \lambda_u=\lceil\log_d(u+1)\rceil\quad(u\geq1).
 \label{low:regime-parameters}
\end{equation}
If $h\geq2\lceil\sqrt d\rceil$ and $0<\gamma\leq1/16$, then
\begin{equation}
 \operatorname{sfat}_{\gamma}\bigl(\mathcal P_{d,r}(K)\bigr)
 \geq\frac{rh}{1024\sqrt d}
       \min\left\{\frac{\sqrt r}{s},\frac h{\lambda_h^2}\right\}.
 \label{low:uniform-h-bound}
\end{equation}
In particular, if $2t\leq K$ and $K\geq4\lceil\sqrt d\rceil$, then
\begin{equation}
 \operatorname{sfat}_{\gamma}\bigl(\mathcal P_{d,r}(K)\bigr)
 \geq\frac{rK}{4096\sqrt d}
       \min\left\{\frac{\sqrt r}{s},\frac K{\lambda_K^2}\right\}.
 \label{low:uniform-K-bound}
\end{equation}
\end{theorem}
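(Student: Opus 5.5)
The bound should follow from Corollary~\ref{low:addressed-unitary}, choosing the block dimension $R=d^\ell$ and address count $A$ and then splitting into two cases according to which constraint binds. Fix an integer $\ell$ with $1\leq\ell\leq s$, put $R=d^\ell\leq r$, and take $A=\lfloor r/R\rfloor\geq1$. Then $AR\leq r$, and $A\geq r/(2R)$ because $r/R\geq1$. Since $A\leq r$, we have $a_d(A)\leq\lceil\log_{d^2}r\rceil=t$, so $K\geq a_d(A)$ and $H=K-a_d(A)\geq h$.

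Suppose additionally that $2\ell\lceil\sqrt R\rceil\leq h$. Then the floor in Eq.~\eqref{low:addressed-bound} is at least $1$, and hence at least half its argument. Using $\lceil\sqrt R\rceil\leq2\sqrt R$ and $R(R-1)/2\geq R^2/4$ (valid since $R\geq2$), the corollary gives
\[
 \operatorname{sfat}_\gamma\geq\frac r{2R}\cdot\frac h{8\ell\sqrt R}\cdot\frac{R^2}4=\frac{rh\sqrt R}{64\,\ell}.
\]
It remains to choose $\ell$ so that $\sqrt R/\ell$ is large.

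\textbf{Case 1.} If $\ell=s$ satisfies $2s\lceil d^{s/2}\rceil\leq h$, use it. Then $R=d^s>r/d$, so $\sqrt R\geq\sqrt{r/d}$, and the bound becomes $rh\sqrt r/(64s\sqrt d)$, which dominates the first term of the minimum.

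\textbf{Case 2.} Otherwise, let $\ell<s$ be the largest integer with $2\ell\lceil d^{\ell/2}\rceil\leq h$. The hypothesis $h\geq2\lceil\sqrt d\rceil$ makes $\ell=1$ admissible, so $\ell\geq1$. The failure at $\ell+1$ gives $h<2(\ell+1)\lceil d^{(\ell+1)/2}\rceil\leq4(\ell+1)\sqrt d\,d^{\ell/2}$, hence $\sqrt R>h/(4(\ell+1)\sqrt d)$. For the denominator, admissibility gives $d^{\ell/2}\leq h$, so $\ell\leq2\log_dh\leq2\lambda_h$ and $\ell+1\leq3\lambda_h$. Therefore $\ell(\ell+1)\leq6\lambda_h^2$, and the bound is at least a constant times $rh^2/(\lambda_h^2\sqrt d)$. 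Carrying the constants through should give the factor $1/1024$; the loose estimates above leave room for this.

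\textbf{Corollary form.} For Eq.~\eqref{low:uniform-K-bound}, note that $2t\leq K$ gives $h\geq K/2$, and $K\geq4\lceil\sqrt d\rceil$ then yields the hypothesis $h\geq2\lceil\sqrt d\rceil$. Since $h\leq K$, we have $\lambda_h\leq\lambda_K$, so
\[
 h\min\{\sqrt r/s,\ h/\lambda_h^2\}\geq(K/2)\min\{\sqrt r/s,\ K/(2\lambda_K^2)\}\geq(K/4)\min\{\sqrt r/s,\ K/\lambda_K^2\}.
\]
This turns the constant $1024$ into $4096$.

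The main obstacle is the bookkeeping in Case 2. The ceilings in $\lceil d^{(\ell+1)/2}\rceil$ must be controlled, the bounds $\ell\leq2\lambda_h$ and $\ell+1\leq3\lambda_h$ must be checked at small $h$, and the edge case $\ell=s-1$ must be handled. I would first verify numerically that the slack suffices for the stated constant $1/1024$.
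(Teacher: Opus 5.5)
Your route is the same as the paper's. You apply Corollary~\ref{low:addressed-unitary} with $R=d^\ell$ and $A=\lfloor r/R\rfloor$, derive $\operatorname{sfat}_\gamma\geq rh\sqrt R/(64\ell)$ whenever $T_\ell:=2\ell\lceil d^{\ell/2}\rceil\leq h$, and split on whether $\ell=s$ is admissible. The paper's choice $\ell=\min(s,m)$, with $m=\max\{\ell\geq1:T_\ell\leq h\}$, is exactly your two cases, and your Case~2 index equals $m$. The edge case $\ell=s-1$ needs no separate treatment, since failure at $\ell+1=s$ is precisely your Case~2 hypothesis. The general bound, Case~1 and the derivation of Eq.~\eqref{low:uniform-K-bound} are correct as written.

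\textbf{The gap is the Case~2 constant.} Your claim that the loose estimates leave room for $1/1024$ is false. From $\ell\leq2\lambda_h$ and $\ell+1\leq3\lambda_h$ you get $\ell(\ell+1)\leq6\lambda_h^2$. Your chain then gives only $rh^2/(1536\sqrt d\,\lambda_h^2)$, which is short of the stated $1/1024$.

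\textbf{The fix is integrality.} Your own chain is strict: admissibility gives $d^{\ell/2}\leq h/(2\ell)<h+1$, so $\ell<2\log_d(h+1)\leq2\lambda_h$. Integrality then gives $\ell\leq2\lambda_h-1$, that is, $\ell+1\leq2\lambda_h$. Hence $\ell(\ell+1)\leq(2\lambda_h-1)\,2\lambda_h<4\lambda_h^2$, and $256\cdot4=1024$ recovers exactly the stated constant. This is the step the paper uses, and with it no numerical check is needed.
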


\begin{proof}
For $\ell\geq1$ put
$R_\ell=d^\ell$ and $T_\ell=2\ell\lceil\sqrt{R_\ell}\rceil$.
This sequence is strictly increasing and unbounded.  Define
\[
 m=\max\{\ell\geq1:T_\ell\leq h\},\qquad
 \ell=\min(s,m),\qquad R=d^\ell,\qquad A=\lfloor r/R\rfloor.
\]
The hypothesis on $h$ makes the set defining $m$ nonempty.  Since
$R\leq r$, these choices satisfy $A\geq1$ and $AR\leq r$.
Equation~\eqref{low:address-time} is monotone in $A$, so
$a_d(A)\leq t$ and $H=K-a_d(A)\geq h\geq T_\ell$.
The addressed construction gives the integer depth
\[
 F=A\left\lfloor\frac H{T_\ell}\right\rfloor\frac{R(R-1)}2.
\]
All ratios to which the following floor bound is applied are at
least one.  Using $\lfloor x\rfloor\geq x/2$, we have
\[
 A\geq\frac r{2R},\qquad
 \left\lfloor\frac H{T_\ell}\right\rfloor\geq\frac h{2T_\ell},
 \qquad \frac{R(R-1)}2\geq\frac{R^2}{4},\qquad
 \lceil\sqrt R\rceil\leq2\sqrt R.
\]
Their product yields
\begin{equation}
 F\geq\frac{rh\sqrt R}{64\ell}.
 \label{low:rounded-depth}
\end{equation}

If $s\leq m$, then $\ell=s$ and $R=d^s>r/d$, giving
\begin{equation}
 F\geq\frac{r^{3/2}h}{64\sqrt d\,s}.
 \label{low:memory-limited-depth}
\end{equation}
If $s>m$, then $\ell=m$.  Maximality of $m$ gives
\[
 h<2(m+1)\lceil\sqrt{d^{m+1}}\rceil
   \leq4\sqrt d\,(m+1)\sqrt{d^m}.
\]
Substitution in Eq.~\eqref{low:rounded-depth} implies
\begin{equation}
 F\geq\frac{rh^2}{256\sqrt d\,m(m+1)}.
 \label{low:time-limited-depth}
\end{equation}
Since $2m d^{m/2}\leq h$,
\[
 m\leq2\log_d\frac h{2m}
   <2\log_d(h+1)\leq2\lambda_h.
\]
Integrality therefore gives $m+1\leq2\lambda_h$, and hence
$m(m+1)\leq4\lambda_h^2$.  Applying this to
Eq.~\eqref{low:time-limited-depth}, and taking the smaller scale
of the two cases, proves Eq.~\eqref{low:uniform-h-bound}.

Under the second set of hypotheses, $h\geq K/2\geq2\lceil\sqrt d\rceil$
and $\lambda_h\leq\lambda_K$.  In
Eq.~\eqref{low:uniform-h-bound}, use
$h/\lambda_h^2\geq K/(2\lambda_K^2)$ and
$\min(x,y/2)\geq\min(x,y)/2$.
This gives Eq.~\eqref{low:uniform-K-bound}.
\end{proof}

\begin{corollary}[Quadratically growing qubit memory]
\label{low:quadratic-memory}
For every integer $K\geq8$ and $0<\gamma\leq1/16$,
\begin{equation}
 \operatorname{sfat}_{\gamma}\bigl(\mathcal P_{2,K^2}(K)\bigr)
 \geq\frac{K^4}{4096\sqrt2\,
                   \lceil\log_2(K+1)\rceil^2}.
 \label{low:quadratic-memory-bound}
\end{equation}
\end{corollary}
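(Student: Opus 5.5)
The plan is to specialize Theorem~\ref{low:sufficient-regime} to $d=2$ and $r=K^2$, using its second form, Eq.~\eqref{low:uniform-K-bound}. Two things are needed: the hypotheses of that form must hold for every $K\geq8$, and the minimum appearing in it must be bounded below by $K/\lceil\log_2(K+1)\rceil^2$. Then the prefactor $rK/(4096\sqrt2)=K^3/(4096\sqrt2)$ multiplies to exactly the claimed $K^4/(4096\sqrt2\,\lceil\log_2(K+1)\rceil^2)$.

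\emph{Step 1: hypotheses.} With $d=2$ we have $r=K^2\geq d$. The loading time is
\[
t=\lceil\log_4K^2\rceil=\lceil\log_2K\rceil.
\]
The condition $K\geq4\lceil\sqrt2\rceil=8$ is exactly the assumption $K\geq8$. For $2t\leq K$, check $K=8$ directly: $2\cdot3=6\leq8$. For larger $K$, $\lceil\log_2K\rceil\leq\log_2K+1$ grows more slowly than $K/2$, so the inequality persists; an induction on $K$ or a doubling argument settles it.

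\emph{Step 2: the minimum.} Here $\lambda_K=\lceil\log_2(K+1)\rceil$ and $s=\lfloor\log_2K^2\rfloor\leq2\log_2K<2\lambda_K$. Since $K\geq8$ gives $\lambda_K\geq4$, we get $s\leq2\lambda_K\leq\lambda_K^2$. Therefore
\[
\frac{\sqrt r}{s}=\frac Ks\geq\frac K{\lambda_K^2},
\]
and both entries of the minimum are at least $K/\lambda_K^2$.

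\emph{Step 3: conclusion.} Substituting into Eq.~\eqref{low:uniform-K-bound} gives Eq.~\eqref{low:quadratic-memory-bound}. The only delicate point is not losing a factor of two in Step 2. The cruder estimate $K/s\geq K/(2\lambda_K)$ alone would not reach the stated constant. Using $\lambda_K\geq4$ to absorb the $2$ into $\lambda_K^2$ avoids this loss.
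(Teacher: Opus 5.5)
Your proof is correct and is essentially the paper's argument. Like the paper, you specialize Theorem~\ref{low:sufficient-regime} to $d=2$, $r=K^2$, check $2t\leq K$ and $K\geq8$, and use $s\leq2\log_2K\leq2\lambda_K\leq\lambda_K^2$, so that the minimum equals $K/\lambda_K^2$. This chain needs only $\lambda_K\geq2$, so your closing worry about losing a factor of two is unfounded: $K/(2\lambda_K)\geq K/\lambda_K^2$ already holds. For $2t\leq K$ you can replace the appeal to induction with the paper's one-liner: with $n=\lceil\log_2K\rceil$, the case $n=3$ is $K=8$, and for $n\geq4$ one has $K\geq2^{n-1}+1\geq2n$.
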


\begin{proof}
Take $d=2$ and $r=K^2$ in Theorem~\ref{low:sufficient-regime}.
Then $t=\lceil\log_2K\rceil$, and $2t\leq K$ for $K\geq8$.
Indeed, if $n=\lceil\log_2K\rceil$, the case $n=3$ is $K=8$;
for $n\geq4$, $K\geq2^{n-1}+1\geq2n$.
The other horizon condition is $K\geq4\lceil\sqrt2\rceil=8$.
Writing $\lambda=\lceil\log_2(K+1)\rceil$ and
$s=\lfloor2\log_2K\rfloor$, we have
$\lambda\geq2$ and
$\lambda^2\geq2\lambda\geq2\log_2K\geq s$.
Thus the smaller term in Eq.~\eqref{low:uniform-K-bound} is
$K/\lambda^2$, which yields Eq.~\eqref{low:quadratic-memory-bound}.
\end{proof}

Determining the optimal joint dependence on all four parameters remains
open.

\section{Upper bounds from physical covers and quantum programs}
\label{up:section}

The upper bounds use two finite descriptions of a process. A physical cover
approximates its initial state and each transition separately. A quantum
program represents the same complete process by a state on which every tester
induces an ordinary binary measurement. The first description is sufficient
for the fixed-memory asymptotics. The second gives useful bounds when the
memory dimension grows.

Throughout this section, $d\geq2$, $r\geq1$, and $K\geq1$ are integers, and
$0<\gamma\leq1$. We write $n=dr$ and use base-two logarithms unless
$\ln$ is displayed. The operational distance $D(P,Q)$ is defined in
Eq.~\eqref{main:distance}, with the supremum over all normalized binary
testers with arbitrary finite private workspaces. The physical horizon $K$ is distinct from the
depth of an online tree.

\subsection{A cover with physical centers}

\label{up:cover-origin}

The componentwise covering method is already present in
Ref.~\cite[arXiv v1, Lemma~25]{raza2024online}. For its class of multitime
processes built from $G\geq1$ basic components on $q\geq2$ qubits and
$0<\delta<1$, the strategy-norm covering number satisfies
\[
 N(\delta)\leq {q\choose2}^{G}(6G/\delta)^{1024G}.
\]
Its telescoping estimate adds local approximation errors and
controls the probability of every allowed multitime test. Taking
$\delta=\gamma/4$ and counting witnesses at their first divergent
tree node gives, for their response class and $0<\gamma\leq1$,
\[
 \operatorname{sfat}_\gamma
 \leq\left\lfloor G\log_2{q\choose2}
       +1024G\log_2(24G/\gamma)\right\rfloor.
\]
For the present memory model, we apply the same method to one initial
state and $K$ arbitrary fixed-dimensional CPTP channels. The construction
below keeps the covering centers inside $\mathcal P_{d,r}(K)$ and gives
the fixed-memory upper order used in the main law.

We first put all realizations on a common memory carrier. Choose isometries
$i_j:M_j\longrightarrow\mathbb C^r$, let $P_j=i_ji_j^*$, and embed the initial
state by $I_d\otimes i_0$. For a transition $\Phi_j$, set
$L_j=I_d\otimes i_{j-1}^*$, $O_j=I_d\otimes i_j$, and
$Q_j=I_d\otimes P_{j-1}$. Its extension to an $n$-dimensional input and output is
\begin{equation}
 \widetilde\Phi_j(X)
 =O_j\Phi_j(L_jXL_j^*)O_j^*
   +\operatorname{Tr}[(I-Q_j)X]\tau_j,
 \label{up:padding}
\end{equation}
with any fixed output density matrix $\tau_j$. Both summands are completely
positive. Their traces add to $\operatorname{Tr}X$, so the extension is trace
preserving on the whole input algebra. On the embedded memory sector it
agrees with the original map and outputs in the next embedded sector. These
identities also hold with any untouched reference. Consequently every process
in $\mathcal P_{d,r}(K)$ has one initial state on $\mathbb C^n$ and $K$
channels from $\mathbb C^n$ to itself, with the prescribed visible-memory
split and final memory discard.

\begin{proposition}[Physical covering bound]
\label{up:physical-cover}
Put
\begin{equation}
 p_{n,K}=(n^2-1)+K(n^4-n^2).
 \label{up:physical-dimension}
\end{equation}
For every $\varepsilon>0$, the class $\mathcal P_{d,r}(K)$ has a proper
$D$-cover of radius $\varepsilon$ satisfying
\begin{equation}
 \log_2\mathcal N(\varepsilon,\mathcal P_{d,r}(K),D)
 \leq p_{n,K}\log_2\!\left(1+\frac{4(K+1)}{\varepsilon}\right).
 \label{up:cover-entropy}
\end{equation}
In particular,
\begin{equation}
 \operatorname{sfat}_\gamma(\mathcal P_{d,r}(K))
 \leq
 \left\lfloor p_{n,K}\log_2\!\left(1+\frac{8(K+1)}\gamma\right)\right\rfloor.
 \label{up:cover-capacity}
\end{equation}
\end{proposition}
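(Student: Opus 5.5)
The plan is to build an explicit $\varepsilon$-net over the finite-dimensional parameters of a realization, with centers that are themselves realizations, and then to control the operational distance by telescoping. By the padding construction \eqref{up:padding}, every process in $\mathcal P_{d,r}(K)$ is given by one density matrix $\rho_0$ on $\mathbb C^n$ and $K$ CPTP maps $\widetilde\Phi_j$ on $\mathbb C^n$. The set of density matrices is a compact convex set of real dimension $n^2-1$. The set of CPTP maps on $\mathbb C^n$, viewed through Choi matrices with the trace-preservation constraint, is a compact convex set of real dimension $n^4-n^2$. These counts produce exactly $p_{n,K}$.

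\textbf{Nets for the components.} I will work with the trace norm for states and the diamond norm for channels. For a convex body $S$ of real dimension $m$, the standard volumetric argument gives a net of radius $\delta$ in a norm $\|\cdot\|$, with centers in $S$, of size at most $(1+2\,\mathrm{diam}/\delta)^m$. Here it suffices to use a maximal $\delta$-separated subset of $S$ and compare volumes of $\delta/2$-balls intersected with the affine hull, using convexity. The diameter is at most $2$ both for states in trace norm and for channels in diamond norm, so each component admits a net of size $(1+4/\delta)^m$. The centers are genuine states or CPTP maps, so the resulting process lies in $\mathcal P_{d,r}(K)$ with memory carrier $\mathbb C^r$. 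This is what makes the cover proper.

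\textbf{Telescoping.} Take $\delta=\varepsilon/(K+1)$. Compare two processes whose initial states differ by at most $\delta$ in trace norm and whose transitions differ pairwise by at most $\delta$ in diamond norm. For any tester, replace the components one at a time. Each hybrid step changes the acceptance probability by at most half of the corresponding distance. This holds because the rest of the interaction, namely the tester's instruments together with the remaining process channels, followed by the binary effect, is a normalized operation acting on the output of the swapped component jointly with a reference. The same holds for the initial state, which is followed by the full remaining interaction. Summing over the $K+1$ steps gives $2\sup_E|f_P-f_Q|\le (K+1)\delta=\varepsilon$. The resulting count is $(1+4(K+1)/\varepsilon)^{p_{n,K}}$, which is \eqref{up:cover-entropy}.

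\textbf{From cover to capacity.} Take a $\gamma$-shattered tree of depth $L$ with leaf witnesses $P_x$, and a proper cover of radius $\varepsilon=\gamma/2$, so that $\sup_E|f_P-f_Q|<\gamma/2$ within a cell. I want to make the separation strict; the slack can be handled by choosing a radius slightly smaller than $\gamma/2$ and absorbing it, since $8$ versus $4\cdot2$ leaves room. Two distinct leaves first diverge at some node $u$. There the responses of their witnesses satisfy $f\ge s_u+\gamma/2$ and $f\le s_u-\gamma/2$, so the responses differ by at least $\gamma$. Their distance is therefore at least $2\gamma>\varepsilon$, and the two witnesses lie in different cells. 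Hence $2^L\le\mathcal N$, and \eqref{up:cover-capacity} follows after substituting $\varepsilon=\gamma/2$ and taking the floor.

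The main obstacle is the hybrid step. I need the diamond-norm bound to apply to the adaptive, memory-carrying interaction: the swapped channel acts on input and memory while the tester holds references. In addition, the memory register is internal and is never returned. This is handled by treating the remaining process channels, the discard of $M_K$, and the tester jointly as one CPTP postprocessing, whose output is a binary outcome.
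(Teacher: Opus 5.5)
Your proposal is correct and follows essentially the paper's argument: padding every realization to a common $\mathbb C^n$ carrier, volumetric proper nets of size $(1+4/\delta)^a$ for the initial state and each channel with $\delta=\varepsilon/(K+1)$, a hybrid replacement controlled by diamond-norm bounds and trace-norm contractivity, and first-divergence counting at $\varepsilon=\gamma/2$. One small correction to your last step: no slack or strictness adjustment is needed, because witnesses sharing a center would have $D\le2\varepsilon=\gamma$ while first divergence forces $D\ge2\gamma$. The comparison should be with the cell diameter $2\varepsilon$, not with $\varepsilon$, and in any case $8=4\cdot2$ leaves no room to absorb anything.
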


\begin{proof}
The affine space of Hermitian trace-one $n\times n$ matrices has real
dimension $n^2-1$. The affine space of Hermiticity-preserving,
trace-preserving maps on $n\times n$ matrices has dimension $n^4-n^2$.
The latter count follows from the Hermitian Choi representation, with
$n^4$ real coordinates and $n^2$ independent partial-trace constraints.
Use the trace norm on state differences and the diamond norm on channel
differences. Each physical set has diameter at most two in its respective
norm.

A compact subset of diameter at most two in a real normed affine space of
dimension $a$ has a proper $\delta$-net with at most
$(1+4/\delta)^a$ points. To see this, choose a maximal set of points whose
pairwise distances exceed $\delta$. Balls of radius $\delta/2$ around
these points have disjoint interiors and lie in a ball of radius
$2+\delta/2$ around any fixed point of the set. Comparing the
$a$-dimensional volumes proves the cardinality bound and also finiteness;
maximality gives the covering property. A zero-dimensional set requires
one point.

Apply this construction to the initial-state and channel factors with
$\delta=\varepsilon/(K+1)$. Every tuple of centers defines a physical
process with memory cap $r$. Fix a tester and replace the factors of a
target realization by their centers one at a time. The initial-state
replacement contributes at most $\delta$ in trace norm. Each channel
replacement contributes at most $\delta$ by its diamond-norm bound on
the joint visible-memory input, with all other retained registers as the
untouched reference. The maps before and after each replacement are physical and
contract Hermitian trace distance. Thus the terminal states differ by at
most $(K+1)\delta=\varepsilon$ in trace norm. A binary outcome probability
differs by at most half this amount, which proves the uniform
$D$-bound and \eqref{up:cover-entropy}.

Choose one complete process witness for each leaf of a finite
$\gamma$-shattered tree. At the first node where two leaves diverge,
their tester responses differ by at least $\gamma$, so their
$D$-distance is at least $2\gamma$. A ball of radius $\gamma/2$ contains
at most one of these witnesses. A tree of depth $L$ therefore requires
at least $2^L$ cover elements. Substitute $\varepsilon=\gamma/2$ in
\eqref{up:cover-entropy} and take the integer floor. The conclusion holds
for every finite shattered depth and hence for their supremum.
\end{proof}

\subsection{Information carried by one program for a complete leaf}

Using the binary entropy $h_2$, define the information increment
\begin{equation}
 c_z=1-h_2\!\left(\frac{1-z}{2}\right).
 \label{up:information-increment}
\end{equation}
The next argument is the serial random-access
encoding bound applied to whole-process programs
\cite{aaronson2019online}.

\begin{lemma}[Whole-process program bound]
\label{up:information}
Suppose every process $P$ in a response class has a density matrix
$\sigma_P$ on a common $D_0$-dimensional Hilbert space. Suppose also that
every tester $E$ has an effect $0\leq A_E\leq I$ on that space, independent
of $P$, such that
\begin{equation}
 \sup_E|f_P(E)-\operatorname{Tr}(A_E\sigma_P)|\leq\eta/2
 \label{up:program-accuracy}
\end{equation}
for some $0\leq\eta<\gamma$. Then
\begin{equation}
 \operatorname{sfat}_\gamma
 \leq\left\lfloor\frac{\log_2D_0}{c_{\gamma-\eta}}\right\rfloor.
 \label{up:program-information-bound}
\end{equation}
The same conclusion applies to an approximation by the convex hull of
$D_0$ fixed response functions taking values in $[0,1]$.
\end{lemma}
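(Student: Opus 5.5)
We prove Lemma~\ref{up:information} with the standard serial random-access coding argument. The tree labels are encoded into one program state, and we show that each tree level forces a definite amount of classical--quantum mutual information.

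\textbf{Setup.} Let a $\gamma$-shattered tree of depth $L$ have testers $E_u$, thresholds $s_u$ and leaf witnesses $P_x$. Draw $X=(X_1,\ldots,X_L)$ uniformly from $\{0,1\}^L$ and form the classical--quantum state
\[
 \Omega=2^{-L}\sum_x|x\rangle\langle x|\otimes\sigma_{P_x}.
\]
Since $I(X;Q)_\Omega\leq S(Q)\leq\log_2D_0$, it suffices to show $I(X;Q)\geq Lc_{\gamma-\eta}$. The chain rule gives
\[
 I(X;Q)=\sum_jI(X_j;Q\mid X_{<j}),
\]
so we prove that every conditional term is at least $c_{\gamma-\eta}$.

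\textbf{Per-level bound.} Fix a prefix $u=x_{<j}$ and consider the conditional state given $X_{<j}=u$. The bit $X_j$ is still uniform. For each value $b$, the program state averaged over the suffixes is
\[
 \bar\sigma_b=\mathbb E[\sigma_{P_X}\mid X_{<j}=u,X_j=b].
\]
By the shattering inequality and the accuracy condition~\eqref{up:program-accuracy}, every leaf with $x_j=1$ satisfies $\Tr(A_u\sigma_{P_x})\geq s_u+\gamma/2-\eta/2$. Every leaf with $x_j=0$ satisfies $\Tr(A_u\sigma_{P_x})\leq s_u-\gamma/2+\eta/2$. These inequalities are linear, so they pass to the averages $\bar\sigma_1$ and $\bar\sigma_0$. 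The measurement $\{A_u,I-A_u\}$ applied to $\bar\sigma_b$ therefore has acceptance probabilities separated by at least $\gamma-\eta$ across the threshold.

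We then guess $X_j$ from the outcome. One way is to first randomize the effect, $A'=\lambda A_u+\mu I$, so that the two acceptance probabilities become symmetric, $(1\pm z)/2$ with $z\geq\gamma-\eta$. Data processing and Fano's inequality then give
\[
 I(X_j;Q\mid X_{<j}=u)\geq I(X_j;\text{outcome})\geq1-h_2\!\left(\frac{1-z}{2}\right)\geq c_{\gamma-\eta}.
\]
Averaging over $u$ and summing over $j$ yields $L\leq\log_2D_0/c_{\gamma-\eta}$. The depth is an integer, which gives the floor.

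\textbf{Main obstacle.} The delicate point is the symmetrization in the per-level step. An arbitrary threshold $s_u$ can give an asymmetric binary channel, whose mutual information can be smaller than that of the symmetric channel with the same gap. I would handle this with a mixed coarse-graining: accept with $A_u$ and flip the outcome with a suitable probability. This post-processing maps the two acceptance probabilities affinely to $(1\pm z')/2$, with $z'$ at least the gap whenever the thresholds lie in $[0,1]$. Thresholds outside $[0,1]$ make the gap condition vacuous unless $\gamma-\eta\leq0$. Alternatively, one can use directly that the capacity-achieving uniform input on an asymmetric binary channel with gap $\delta$ yields at least $1-h_2((1-\delta)/2)$; I expect this to need a short convexity check.

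\textbf{Convex-hull statement.} For the final sentence of the lemma, I would replace $\sigma_P$ by a diagonal state whose weights are the mixture weights over the $D_0$ fixed response functions $g_1,\ldots,g_{D_0}$. The effect $A_E$ is then the diagonal operator with entries $g_i(E)\in[0,1]$. With these choices the same argument applies verbatim.
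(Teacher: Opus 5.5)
Your architecture is the paper's: the uniform leaf label, the state $2^{-L}\sum_x|x\rangle\langle x|\otimes\sigma_{P_x}$, the chain rule over levels, one binary measurement per level on the two child-subtree averages, $I(X:Q)\le S(Q)\le\log_2D_0$, and the diagonal encoding for the convex hull.

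The step that fails is the symmetrization you put at the centre of the per-level bound. Flipping the outcome with some probability is a post-processing of the binary outcome. It maps $p_b\mapsto\lambda p_b+\mu$ with $|\lambda|\le1$. Demanding $p_0'+p_1'=1$ then forces $|\lambda|\le1/\max(c,2-c)$, where $c=p_0+p_1$, so the gap shrinks unless $c=1$. Concretely, take $p_0=0$ and $p_1=z=\gamma-\eta$, which is admissible with threshold $z/2$; for example, let $A_u$ be a rank-one projector with $\bar\sigma_0$ in its kernel, so the same restriction applies to $\lambda A_u+\mu I$. This pair can be symmetrized only to gap $z/(2-z)<z$. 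The route therefore gives a constant strictly worse than $c_{\gamma-\eta}$, and your claim that $z'$ is at least the gap is false.

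The obstacle that motivated the detour is not real. Your ``alternative'' is the correct route, needs no convexity check, and is exactly the paper's argument. Use the outcome $Y$ of $\{A_u,I-A_u\}$ directly as the guess for $X_j$. Whatever the threshold, its error probability is $e=\tfrac12[(1-p_1)+p_0]\le(1-z)/2$. The error indicator $X_j\oplus Y$ determines $X_j$ given $Y$, so
$H(X_j\mid Y,u)=H(X_j\oplus Y\mid Y,u)\le h_2(e)\le h_2((1-z)/2)$,
using that $h_2$ increases on $[0,1/2]$. Hence $I(X_j:Y\mid u)\ge c_z$ for every binary channel with gap at least $z$, symmetric or not; an asymmetric channel never carries less than the symmetric one with the same gap. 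Data processing then gives $I(X_j:Q\mid u)\ge c_z$. Delete the symmetrization and use this step, and your proof is complete and coincides with the paper's.
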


\begin{proof}
Let a finite depth-$L$ tree be $\gamma$-shattered. Choose its complete
leaf witnesses $P_x$, $x\in\{0,1\}^L$, and encode each witness once as
$\sigma_x=\sigma_{P_x}$. At every node on that leaf's path,
\eqref{up:program-accuracy} reduces the signed margin by at most
$\eta/2$. The same thresholds and effects therefore shatter the program
states at scale $z=\gamma-\eta$.

We use $H$ for classical entropy, $S$ for von Neumann entropy, and
$I$ for the corresponding mutual information, all in bits.
Each node uses a separate measurement in the following information
calculation.
Let $X$ be a uniformly distributed leaf label and let $Q$ hold
$\sigma_X$. Condition on any prefix $X_{<j}=u$. The next bit is uniform,
and the conditional states for its two values are the averages of
$\sigma_x$ over the two child subtrees. Measuring the common node
effect on these averages gives probabilities $p_0,p_1$ with
$p_1-p_0\geq z$. Guessing $X_j$ from the measured binary outcome $Y$
has error
\[
 e=\frac{1-p_1+p_0}{2}\leq\frac{1-z}{2}.
\]
The error indicator $Z=X_j\mathbin{\oplus}Y$ determines $X_j$ given
$Y$. Hence
$H(X_j\mid Y,u)=H(Z\mid Y,u)\leq h_2(e)$, and
$I(X_j:Y\mid u)\geq c_z$. Quantum data processing for this measurement
gives $I(X_j:Q\mid u)\geq c_z$.

Average over prefixes and sum over $j$. The classical-quantum chain rule
and the entropy bound for a $D_0$-dimensional state yield
\begin{align}
 Lc_z
 &\leq I(X:Q)\nonumber\\
 &=S\!\left(2^{-L}\sum_x\sigma_x\right)
       -2^{-L}\sum_xS(\sigma_x)
 \leq\log_2D_0.
 \label{up:serial-chain}
\end{align}
Here the chain rule can be checked by expanding each conditional
mutual information as the entropy of the parent average minus the
average entropy of its two children; all intermediate entropies cancel.
The final inequality uses nonnegativity of each component entropy and
$S(\rho)\leq\log_2D_0$. Taking the floor proves the claim for every
finite $L$.

For a convex hull, represent the mixture weights as a diagonal density
matrix and take $A_E$ diagonal with the fixed response values on its
diagonal. These are valid effects, so the same argument applies.
\end{proof}

\subsection{Finite mixtures of channels with small Kraus rank}

The use of low-rank channel mixtures and common program spaces has
antecedents in quantum programming
\cite{gschwendtner2021programmability,yoshida2026isometry}.
The rank decomposition determines the dimension used below.

\begin{lemma}[Finite rank decomposition]
\label{up:rank-decomposition}
Every channel from $\mathbb C^n$ to itself is a finite convex combination
of channels with Kraus rank at most $n$.
\end{lemma}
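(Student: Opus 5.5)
Fix a channel $\Phi$ from $\mathbb C^n$ to itself with Kraus operators $A_1,\ldots,A_m$ satisfying $\sum_k A_k^*A_k=I$. The Choi operator $J=\sum_{ij}|i\rangle\langle j|\otimes\Phi(|i\rangle\langle j|)$ is positive on $\mathbb C^n\otimes\mathbb C^n$, with partial trace over the output equal to $I_n$. The set of such Choi operators is a compact convex subset of a real affine space of dimension $n^4-n^2$, as counted in Proposition~\ref{up:physical-cover}. By Carath\'eodory's theorem in finite dimension, it suffices to show that every extreme point of this set is the Choi operator of a channel with Kraus rank at most $n$. Then $J$ is a convex combination of at most $n^4-n^2+1$ extreme points, and the decomposition is finite.

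The extreme points can be identified by Choi's classical criterion. A channel with a linearly independent Kraus family $\{A_k\}$ is extreme exactly when the $m^2$ operators $A_k^*A_l$ are linearly independent in $M_n(\mathbb C)$. Since $M_n$ has complex dimension $n^2$, extremality forces $m^2\leq n^2$, so $m\leq n$. The Kraus rank equals the minimal number of Kraus operators, which is the rank of $J$, so every extreme channel has Kraus rank at most $n$.

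To keep the argument self-contained, I will prove the direction of the criterion that is needed. Suppose $\Phi$ has minimal Kraus family $\{A_k\}_{k=1}^m$ with $m>n$, so the $A_k$ are linearly independent and $m=\rank J$. Because $m^2>n^2$, there is a nonzero Hermitian $m\times m$ matrix $H=(h_{kl})$ with $\sum_{kl}h_{kl}A_k^*A_l=0$. Hermiticity can be arranged since the constraint space is closed under $H\mapsto H^*$. Now set
\[
 \Phi_\pm(X)=\sum_{kl}(\delta_{kl}\pm\epsilon h_{kl})A_lXA_k^*.
\]
For small $\epsilon>0$ the coefficient matrix $I\pm\epsilon H$ is positive definite, so both maps are completely positive. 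By the choice of $H$ they are trace preserving, and $\Phi=(\Phi_++\Phi_-)/2$. Linear independence of the $A_k$ makes the Choi operators of $\Phi_+$ and $\Phi_-$ distinct, so $\Phi$ is not extreme. Hence extreme points have rank at most $n$.

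The main obstacle I expect is the justification that the $A_k^*A_l$ are linearly dependent and that this dependence yields a distinct pair of channels. This needs the Kraus family to be linearly independent, which holds after passing to a minimal representation, for example from the eigendecomposition of $J$. An alternative route avoids extreme points: iterate the perturbation above, moving along $\pm\epsilon H$ until $I+\epsilon H$ first becomes singular. Each step lowers the rank, so the process terminates after finitely many steps. I would present the Carath\'eodory version because it gives the explicit count of at most $n^4-n^2+1$ components, which may be useful for the later program dimension.
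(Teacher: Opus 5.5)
Your proof is correct, and its core step is the same as the paper's. With $|a_k\rangle=\sum_i|i\rangle\otimes A_k|i\rangle$, the Choi operator of $\Phi_\pm$ is $J\pm\epsilon\sum_{kl}h_{kl}|a_l\rangle\langle a_k|$, and $\Tr_{\rm out}|a_l\rangle\langle a_k|=(A_k^*A_l)^{\mathsf T}$. So your Hermitian $H$ with $\sum_{kl}h_{kl}A_k^*A_l=0$ is exactly the paper's nonzero Hermitian $D$ supported on $\operatorname{ran}J$ with vanishing output partial trace, written in the Kraus basis.

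The global organization differs. The paper never mentions extreme points. It moves along $\pm D$ all the way to the boundary of the positive cone. This is why it needs $D$ indefinite, via $\Tr D=0$, so that both step sizes are finite. Each move splits $J$ into two channels of strictly smaller Choi rank, and the paper iterates. This is the alternative you sketch at the end; finiteness comes from a splitting tree of depth at most $\rank J-n$.

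You instead use only a small perturbation to show non-extremality, and then appeal to convex geometry. One correction is needed there. Carath\'eodory's theorem alone does not say that $J$ is a convex combination of extreme points. You first need Minkowski's theorem (finite-dimensional Krein--Milman) for the compact convex set of Choi operators. Only after that does Carath\'eodory cap the number of terms at $n^4-n^2+1$.

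The trade-off is this. Your route gives a sharper component count but relies on two convex-geometry theorems. The paper's route is elementary and constructive, but its tree can have up to $2^{\rank J-n}$ leaves. The count is never used downstream, so the explicit bound does not shrink any program dimension. Lemma~\ref{up:compression} places the whole mixture on one common program space with no component label, and Proposition~\ref{up:rank-upper} needs only finiteness.
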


\begin{proof}
Let $J\geq0$ be its unnormalized Choi matrix, ordered with the output
factor first, so that $\operatorname{Tr}_{\rm out}J=I_n$. Suppose its
rank is $a>n$. The real space of Hermitian matrices supported on
$\operatorname{ran}J$ has dimension $a^2>n^2$. The partial-trace map
therefore has a nonzero Hermitian kernel vector $D$ in that space.
Since $\operatorname{Tr}D=0$, this vector is indefinite. On
$\operatorname{ran}J$, the matrix
$W=J^{-1/2}DJ^{-1/2}$ has eigenvalues of both signs. Set
\[
 t_+=-\lambda_{\min}(W)^{-1},\qquad
 t_-=\lambda_{\max}(W)^{-1}.
\]
The matrices $J_+=J+t_+D$ and $J_-=J-t_-D$ are positive,
have partial trace $I_n$, and have ranks strictly smaller than $a$.
The original Choi matrix is their convex combination,
\[
 J=\frac{t_-}{t_++t_-}J_+
     +\frac{t_+}{t_++t_-}J_-.
\]
Repeat this splitting on each term whose rank exceeds $n$.
Every branch has at most $a-n$ rank decreases, so the resulting
decomposition is finite. Spectral factorization of a terminal Choi
matrix gives at most $n$ Kraus operators $A_j$, satisfying
$\sum_jA_j^*A_j=I_n$. Padding with zero operators realizes the channel
by the isometry
\[
 V:\mathbb C^n\longrightarrow\mathbb C^n\otimes\mathbb C^n,
 \qquad V|u\rangle=\sum_{j=1}^n A_j|u\rangle\otimes|j\rangle.
\]
Tracing the second factor gives the required channel.
\end{proof}

\paragraph*{An alternative classical construction.}
The same decomposition also gives a finite convex approximation using
classical labels.

\begin{remark}
\label{up:relaxation}
The following bounds use auxiliary mathematical encodings of response
functions, separately from the target's physical process memory.
The finite net consists of physical processes obeying the original
memory cap $r$, and its convex hull is a larger class of response
functions. A physical implementation of a mixture may need to retain
its component label, which contributes to process memory.
\end{remark}

\begin{proposition}[Rank-mixture upper bound]
\label{up:rank-upper}
For $0<\eta<\gamma$,
\begin{equation}
 \operatorname{sfat}_\gamma(\mathcal P_{d,r}(K))
 \leq\left\lfloor
 \frac{(2n+2Kn^3)\log_2(1+8(K+1)/\eta)}{c_{\gamma-\eta}}
 \right\rfloor.
 \label{up:rank-capacity}
\end{equation}
\end{proposition}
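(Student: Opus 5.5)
The plan is to apply the convex-hull clause of Lemma~\ref{up:information} to a finite net of processes whose transitions have Kraus rank at most $n$. We need a set of $D_0$ fixed response functions such that every response function of $\mathcal P_{d,r}(K)$ lies within $\eta/2$, uniformly over testers, of a point in their convex hull. The target is
\[
 \log_2D_0\leq(2n+2Kn^3)\log_2\bigl(1+8(K+1)/\eta\bigr).
\]
Dividing by $c_{\gamma-\eta}$ and taking the floor then gives Eq.~\eqref{up:rank-capacity}.

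\textbf{Step 1: reduce to rank-$n$ factors.} Use the padding \eqref{up:padding} to place every realization on $\mathbb C^n$. Lemma~\ref{up:rank-decomposition} writes each transition $\widetilde\Phi_j$ as a finite convex combination of channels of Kraus rank at most $n$. Each such channel is given by an isometry $V:\mathbb C^n\to\mathbb C^n\otimes\mathbb C^n$. Similarly, the initial state is a convex combination of pure states on $\mathbb C^n$.

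Expanding the products shows that the whole process is a convex combination of processes built from pure initial states and rank-$\le n$ channels. The response is multilinear in the factors for a fixed tester, since the tester's operations interleave the transitions linearly in each channel. So the response function $f_P$ is the same convex combination of the component response functions. The components are mathematical objects; the mixture label is never physically stored, as Remark~\ref{up:relaxation} allows.

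\textbf{Step 2: build the net.} A unit vector in $\mathbb C^n$ has $2n$ real parameters. An isometry $\mathbb C^n\to\mathbb C^{n^2}$ is an $n^2\times n$ complex matrix, so it has $2n^3$ real parameters; we use the ambient real dimension and do not quotient by the isometry constraint. Both sets lie in balls of radius one in the operator norm, which controls the diamond norm via $\|V\cdot V^*-W\cdot W^*\|_\diamond\leq2\|V-W\|$. Similarly, pure states satisfy $\|\psi\psi^*-\phi\phi^*\|_1\leq2\|\psi-\phi\|$.

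Take $\delta$-nets with $\delta=\eta/(4(K+1))$ in these norms. To keep the centers physical, we may take the nets inside the compact sets of unit vectors and of isometries, as in the proof of Proposition~\ref{up:physical-cover}. The volume argument for a set of diameter at most two gives at most $(1+4/\delta')^{a}$ points, where $a\in\{2n,2n^3\}$ and $\delta'$ is the radius needed in the vector or operator norm. The constants have to be arranged so that the factor becomes $1+8(K+1)/\eta$.

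\textbf{Step 3: telescope the error.} Replace each factor by its net point. As in Proposition~\ref{up:physical-cover}, the terminal states differ in trace norm by at most $(K+1)$ times the per-factor error, which we want to equal $\eta$. A binary probability then moves by at most $\eta/2$. Averaging this bound over the convex decomposition from Step~1 puts each $f_P$ within $\eta/2$ of the convex hull of the $D_0=N_{\rm state}N_{\rm iso}^K$ net response functions. This gives the dimension count above.

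\textbf{Main obstacle.} The step I expect to need the most care is the constant bookkeeping in Steps~2 and~3. The factor-of-two loss from vectors to density operators and from isometries to channels must be absorbed consistently with the cover radius $\delta'$. That absorption has to come out exactly as $\log_2(1+8(K+1)/\eta)$, using a diameter-two packing of the unit sphere or of the isometry set inside the ambient real space.
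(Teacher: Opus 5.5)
Your proposal is the paper's proof: spectral and rank decompositions (Lemma~\ref{up:rank-decomposition}) make each response function a fixed convex combination of pure-state/isometry processes, proper nets of the $2n$- and $2n^3$-dimensional diameter-two sets are counted by the volume bound, and the hybrid argument feeds the convex-hull clause of Lemma~\ref{up:information}. The bookkeeping you flagged works out, although your Step~2 value $\eta/(4(K+1))$ is off by a factor of two: take Euclidean/operator-norm radius $\delta=\eta/[2(K+1)]$, so each factor errs by at most $2\delta=\eta/(K+1)$ in trace or diamond norm, every response moves by at most $(K+1)\delta=\eta/2$, and $1+4/\delta=1+8(K+1)/\eta$ exactly.
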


\begin{proof}
The unit vectors of $\mathbb C^n$ have a proper Euclidean
$\delta$-net of size at most $(1+4/\delta)^{2n}$ by the volume
argument above. Isometries from $\mathbb C^n$ to
$\mathbb C^n\otimes\mathbb C^n$ form a compact set of operator-norm
diameter at most two in a real vector space of dimension $2n^3$.
They have a proper operator-norm $\delta$-net of size at most
$(1+4/\delta)^{2n^3}$. If $V,W$ are two such isometries, expansion of
$VXV^*-WXW^*$ into two terms gives, including every reference,
\[
 \|\Phi_V-\Phi_W\|_\diamond\leq2\|V-W\|.
\]
Indeed, use $\|AXB\|_1\leq\|A\|\,\|X\|_1\,\|B\|$ on each
term and trace out the environment. Equivalently, the same estimate on
reference-extended input states gives the diamond distance of the two
channels. The corresponding trace distance between pure states
$|u\rangle\langle u|$ and $|v\rangle\langle v|$ is at most
$2\|u-v\|$.

Let $\mathcal F$ consist of all processes obtained from one pure-state
center and $K$ isometry centers. Its cardinality $M_0$ satisfies
\[
 \log_2M_0\leq(2n+2Kn^3)\log_2(1+4/\delta).
\]
For an arbitrary padded process, decompose its initial state spectrally
and each transition by Lemma~\ref{up:rank-decomposition}. A tester
response is multilinear in these $K+1$ factors. Expanding them writes
the entire response function as a finite convex combination of
pure-initial-state processes with rank-at-most-$n$ channels. The weights
are fixed by the process and are independent of the tester.

Replace each component by its corresponding net process. The hybrid
argument gives a response error at most $(K+1)\delta$ for every
component and hence for their mixture. With
$\delta=\eta/[2(K+1)]$, the approximation is uniform within
$\eta/2$. Apply Lemma~\ref{up:information} to
$\operatorname{conv}(\mathcal F)$.
\end{proof}

\subsection{A finite port-based program}

Port-based teleportation transfers an input to one selected output port
without a port-dependent correction on that port. A suitable correlated
resource improves the finite error bound. We use the finite covariant
construction of Refs.~\cite{yang2020optimal,yoshida2026pbt}.

\begin{theorem}[Finite quantum-program upper bound]
\label{up:optimized-program}
For $0<\eta<\gamma$, define
\begin{equation}
 t=\sqrt{K/\eta},\qquad
 m=\lceil24n^2t\rceil,\qquad N=m+1,\qquad
 B=\binom{N+n^3-1}{N}.
 \label{up:program-parameters}
\end{equation}
Then
\begin{equation}
 \operatorname{sfat}_\gamma(\mathcal P_{d,r}(K))
 \leq
 \left\lfloor\frac{\log_2n+K\log_2B}{c_{\gamma-\eta}}\right\rfloor.
 \label{up:optimized-capacity}
\end{equation}
The binomial factor obeys the finite estimate
\begin{equation}
 \log_2B\leq25n^2t\log_2\!\left[e\left(1+\frac{n}{24t}\right)\right].
 \label{up:optimized-simplified}
\end{equation}
\end{theorem}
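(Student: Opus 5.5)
The plan is to build, for each process $P\in\mathcal P_{d,r}(K)$, a program state $\sigma_P$ on a space of dimension $D_0=n\,B^K$. For every tester $E$ there should be a target-independent effect $A_E$ with $\sup_E|f_P(E)-\Tr(A_E\sigma_P)|\leq\eta/2$. Lemma~\ref{up:information} then gives Eq.~\eqref{up:optimized-capacity} directly, since $\log_2D_0=\log_2n+K\log_2B$.

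\textbf{Program construction.} Work on the padded common carrier $\mathbb C^n$ of Eq.~\eqref{up:padding}.

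The initial state uses $n$ dimensions. Purifying it would require $n^2$, so instead I keep the spectral mixture as a density matrix on $\mathbb C^n$ itself and load it via the tester's access to one system.

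For each transition, Lemma~\ref{up:rank-decomposition} writes $\Phi_t=\sum_k\lambda_k\Phi_{V_k}$ with isometries $V_k:\mathbb C^n\to\mathbb C^n\otimes\mathbb C^n$. I encode this channel by the mixture $\sum_k\lambda_k\,\omega_{V_k}$, where $\omega_V$ is a port-based-teleportation resource on which $V$ has been applied to every port. A single PBT retrieval then acts as approximately $\Phi_V$, and linearity makes the mixture act as $\Phi_t$ with no component label needed.

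To reach dimension $B$ per transition, I use the covariant/symmetric optimized PBT of Refs.~\cite{yang2020optimal,yoshida2026pbt}. The relevant resource $\omega_V$ is $(V\otimes I)^{\otimes N}$ applied to a symmetric resource, so it lies in the symmetric subspace of $N$ copies of an $n^3$-dimensional space: input $\otimes$ output $\otimes$ environment, i.e.\ $n\cdot n\cdot n$. That subspace has dimension $\binom{N+n^3-1}{N}=B$.

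\textbf{Simulation and error.} The effect $A_E$ composes the tester with, at each step $t$, a PBT measurement on the tester's current input and the $t$-th program register. The output port is selected, the environment discarded, and the tester continues. This is a target-independent instrument, so the composite is an effect on the program space.

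The error accumulates telescopically. Each retrieval is within diamond distance $\epsilon_{\rm PBT}$ of the intended channel, uniformly over references, so the total response error is at most $K\epsilon_{\rm PBT}/2$. The known optimized bound is $\epsilon_{\rm PBT}=O(n^2/N)$ in infidelity; since infidelity controls diamond error only through a square root, this gives $O(n/\sqrt N)$ in diamond norm. Thus $N\asymp n^2K/\eta$ would seem needed, whereas the stated $m=\lceil 24n^2\sqrt{K/\eta}\rceil$ points to a better composition.

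I would therefore bound fidelity along the whole run, where infidelities add to $K\cdot O(n^2/N)^2$ or similar, and convert to total variation only once at the end. With optimized covariant PBT infidelity $O(n^4/N^2)$, the requirement $K n^4/N^2\lesssim\eta^2$ gives $N\asymp n^2\sqrt{K}/\eta$. Matching the exact $t=\sqrt{K/\eta}$ requires the precise form of the cited bound, and this constant bookkeeping is the main obstacle.

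\textbf{Binomial estimate.} Use $\binom{N+a-1}{N}=\binom{N+a-1}{a-1}\leq\bigl(e(N+a-1)/N\bigr)^N$ with $a=n^3$. This gives $\log_2B\leq N\log_2[e(1+n^3/N)]$. Substituting $N\leq 24n^2t+2\leq25n^2t$ and $n^3/N\leq n/(24t)$ yields Eq.~\eqref{up:optimized-simplified}; the only care needed is the rounding in $m$ for small $t$.
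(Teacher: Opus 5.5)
Your program construction, dimension count and binomial estimate follow the paper's route. The error analysis, however, has a genuine gap, and it sits exactly where the stated parameter $t=\sqrt{K/\eta}$ comes from.

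The missing idea is that the optimized port-based teleportation channel $\mathcal T_N$ is unitarily covariant. Both the resource $(O_w\otimes I)|\Omega_n\rangle^{\otimes N}$ and the square-root measurement commute with collective unitaries, so the normalized Choi state is isotropic:
\[
J_{\mathcal T_N}=F\Omega_n+(1-F)\frac{I-\Omega_n}{n^2-1}.
\]
Hence $\mathcal T_N=F\,\mathrm{id}+(1-F)\mathcal Q$ for some channel $\mathcal Q$, and $\|\mathcal T_N-\mathrm{id}\|_\diamond=2(1-F)$. This is linear in the infidelity, with no square-root loss. The optimized resource has the finite bound $1-F\leq 2\pi^2(n-1)^4(3n-2)^2/(n^2m^2)$. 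Combining the two gives $\|\mathcal T_N-\mathrm{id}\|_\diamond<576n^4/m^2\leq\eta/K$ for $m=\lceil 24n^2\sqrt{K/\eta}\rceil$. The plain hybrid argument you already describe (post-processing by $\Phi_V$, then $K$ replacements) then gives total trace error at most $\eta$ and response error at most $\eta/2$.

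Your proposed workaround does not close the gap, for two reasons.
\begin{itemize}
\item The premise that infidelities add along the run is false in general for worst-case adaptive testers. Coherent errors, such as $K$ small rotations $e^{i\epsilon H}$, accumulate amplitude linearly. The total infidelity then grows like $K^2\epsilon^2$, not $K\epsilon^2$. Additivity holds here only because of the depolarizing structure above, and that structure already gives the stronger linear bound directly.
\item Even granting additivity and one Fuchs--van de Graaf conversion at the end, you need $Kn^4/N^2\lesssim\eta^2$, that is, $N\asymp n^2\sqrt K/\eta$. At the theorem's $N\asymp n^2\sqrt{K/\eta}$, your estimate only gives an error of order $\sqrt\eta$, which exceeds $\eta/2$ for small $\eta$. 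This is a different power of $\eta$, not constant bookkeeping.
\end{itemize}

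A smaller point concerns the dimension count. Your claim that the encoded resource lies in $\mathrm{Sym}^N(\mathbb C^{n^3})$ is correct for this resource, because $O_w$ commutes with permutations of Alice's ports. You should state this and supply a fixed decompression isometry. The paper avoids the symmetry requirement altogether: it writes $|\psi(V)\rangle=L_\psi|v(V)\rangle^{\otimes N}$ for a fixed linear map $L_\psi$ and bounds the dimension of the image.
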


We prove the theorem through a finite teleportation estimate and a
compression statement that allows a correlated resource.

\begin{lemma}[Covariant finite-port error]
\label{up:pbt-error}
For the integers $n,m,N$ in \eqref{up:program-parameters}, there is a
finite pure resource on $N$ Alice ports and $N$ Bob ports, each of
dimension $n$, and a complete deterministic correction-free
port-based teleportation channel $\mathcal T_N$ such that
\begin{equation}
 \|\mathcal T_N-\operatorname{id}\|_\diamond
 \leq\frac{4\pi^2(n-1)^4(3n-2)^2}{n^2m^2}
 <\frac{576n^4}{m^2}\leq\frac\eta K.
 \label{up:pbt-diamond}
\end{equation}
\end{lemma}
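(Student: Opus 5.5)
The plan is to take the port-based teleportation (PBT) protocol from the literature and then check the stated numerical inequalities; nothing needs to be built from scratch.

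\textbf{Existence.} I would invoke the finite covariant deterministic PBT scheme with an optimized pure resource on $N$ Alice ports and $N$ Bob ports, each of dimension $n$, as constructed in Refs.~\cite{yang2020optimal,yoshida2026pbt}. Its resource is a pure state obtained by applying an optimized operator, diagonal in the Schur--Weyl decomposition, to $N$ maximally entangled pairs. Alice's measurement is a complete POVM over port labels, and Bob keeps the selected port with no correction. So $\mathcal T_N$ is a channel on the entire input algebra.

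\textbf{Reduction to entanglement fidelity.} I would use the $U\otimes\bar U$ covariance of the protocol in two steps.

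\begin{enumerate}
\item First, covariance makes $\mathcal T_N$ a depolarizing-type channel. Its Choi state is $\rho=F\Phi+(1-F)(I-\Phi)/(n^2-1)$, where $\Phi$ is the maximally entangled state and $F$ is the entanglement fidelity.
\item Second, for a covariant difference $\mathcal T_N-\operatorname{id}$, the diamond norm is attained at the maximally entangled input. I would argue this by twirling an optimal input, then using convexity of the trace norm and the invariance of $\Phi$.
\end{enumerate}

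Together these give $\|\mathcal T_N-\operatorname{id}\|_\diamond=\|\rho-\Phi\|_1=2(1-F)$.

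\textbf{Importing the fidelity bound.} I would then take from the cited finite analysis the nonasymptotic estimate
\[
 1-F\le\frac{2\pi^2(n-1)^4(3n-2)^2}{n^2(N-1)^2}.
\]
This comes from bounding the optimized fidelity through a discretized Fourier/Chebyshev-type trial over Young diagrams, whose error is governed by $\sin^2(\pi/(m+1))$-type terms summed over at most $(n-1)$ row-difference directions. With $m=N-1$, this yields the first inequality in \eqref{up:pbt-diamond}.

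\textbf{Main obstacle.} I expect this step to be the hardest. The literature usually states the fidelity asymptotically, so the point is to confirm that the explicit finite constant, including the $(3n-2)^2$ factor, holds for every $N\ge2$ and not only in the limit. If the cited bound turns out to be stated in infidelity with a different prefactor, the first inequality should still hold after adjusting that constant. The required headroom is available because the numeric estimates below have slack of about a factor $1.6$.

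\textbf{Numerical estimates.} These are routine.
\begin{itemize}
\item Since $(n-1)^4<n^4$ and $(3n-2)^2<9n^2$, we have $(n-1)^4(3n-2)^2/n^2<9n^4$. Combined with $36\pi^2<356<576$, this gives the strict middle inequality.
\item By the definition in \eqref{up:program-parameters}, $m\ge24n^2t$ with $t=\sqrt{K/\eta}$. Hence $576n^4/m^2\le 1/t^2=\eta/K$.
\end{itemize}
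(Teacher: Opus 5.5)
\textbf{Gap: the fidelity bound is assumed, not proved.} Your covariance reduction and your two numerical inequalities are fine and match the final step of the paper. The paper only needs the upper bound $\|\mathcal T_N-\operatorname{id}\|_\diamond\le2(1-F)$. That bound follows at once from the isotropic Choi form, $\mathcal T_N=F\,\operatorname{id}+(1-F)\mathcal Q$ with $\mathcal Q$ a channel. Your twirling argument for attainment at $\Phi$ is therefore unnecessary. As phrased, it also runs the wrong way: convexity only shows that a twirled input does no better than the optimal one.

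The real problem is the step you call importing the fidelity bound. The estimate $1-F\le2\pi^2(n-1)^4(3n-2)^2/(n^2m^2)$ is the entire quantitative content of the lemma, and you assume it. The cited works supply a fidelity formula and a trial-vector template, not this explicit finite constant; the paper derives it. Your fallback of adjusting the constant does not help, because the first displayed inequality fixes that constant. The only slack, a factor $576/(36\pi^2)\approx1.6$, lies between the first and second expressions.

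\textbf{What the derivation requires.}
\begin{enumerate}
\item From the covariant construction, take the bound $F\ge F_w=w^{\mathsf T}R^{\mathsf T}Rw/n^2$ for every nonnegative unit vector $w$ on $\mathcal Y_N^{(n)}$. Here $R$ is the one-box incidence matrix from $\mathcal Y_m^{(n)}$ to $\mathcal Y_{m+1}^{(n)}$.
\item Set $w=R^{\mathsf T}v/\|R^{\mathsf T}v\|_2$. Cauchy--Schwarz then gives $F_w\ge v^{\mathsf T}RR^{\mathsf T}v/n^2$.
\item Build the trial vector $v$ on an $(n-1)$-parameter family of partitions $\alpha(k)$ of $m$, with $k\in\{0,\ldots,q-1\}^{n-1}$. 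All row gaps are at least two, so each partition has exactly $n$ children. Two supported partitions share a child exactly when their $k$'s differ by $\pm e_i$ or $e_i-e_j$.
\item Use the product weights $\prod_i\sqrt{g_{k_i}}$ with $g_j=(2/q)\sin^2(\pi(2j+1)/(2q))$. This yields $F_{\mathrm{est}}=[n+(n-1)(n-2)h^2+2(n-1)h]/n^2$ with $1-h\le\pi^2/(2q^2)$.
\item Hence $1-F\le2\pi^2(n-1)^2/(n^2q^2)$.
\item Finally, $q\ge m/((3n-2)(n-1))$, using $m\ge2n(n-1)$.
\end{enumerate}

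Your heuristic of ``$\sin^2(\pi/(m+1))$-type terms'' gets the scale wrong. Fitting this box of gap-two diagrams into $m$ boxes leaves only $q\approx2m/((3n-2)(n-1))$ grid points per direction. That is exactly where the factor $(n-1)^2(3n-2)^2$ in $q^{-2}$ comes from. Without these steps, your proposal establishes the reduction and the arithmetic, but not the lemma.
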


\begin{proof}
Let $\mathcal Y_a^{(n)}$ denote the partitions of $a$ into at most
$n$ rows, with zero padding. Define the incidence matrix $R$ with rows
in $\mathcal Y_m^{(n)}$ and columns in $\mathcal Y_{m+1}^{(n)}$ by
$R_{\alpha\lambda}=1$ when $\lambda$ is obtained by adding one box
to $\alpha$, and zero otherwise. We use the following finite covariant
teleportation bound. For every nonnegative unit vector $w$ indexed by
$\mathcal Y_N^{(n)}$, the covariant $N$-port construction has
normalized entanglement fidelity at least
\begin{equation}
 F_w=\frac{w^{\mathsf T}R^{\mathsf T}Rw}{n^2}.
 \label{up:pbt-primitive}
\end{equation}
This is the representation-theoretic finite resource and fidelity formula of
Ref.~\cite{yoshida2026pbt}, Lemma~10 and Appendix~D, in the
at-most-$n$-row convention. Let $P_\lambda$ be the orthogonal projectors
onto the joint representation sectors of collective unitaries
$U^{\otimes N}$ and port permutations on Alice's $N$ ports,
labelled by $\lambda\in\mathcal Y_N^{(n)}$. The resource is
\begin{equation}
 |\psi_w\rangle=(O_w\otimes I)|\Omega_n\rangle^{\otimes N},
 \qquad
 O_w=n^{N/2}\sum_\lambda
       \frac{w_\lambda}{\sqrt{\rank P_\lambda}}P_\lambda,
 \label{up:pbt-resource}
\end{equation}
where $|\Omega_n\rangle=n^{-1/2}\sum_i|ii\rangle$.
Its squared norm is
$n^{-N}\operatorname{Tr}(O_w^*O_w)=\sum_\lambda w_\lambda^2=1$.
Writing $C$ for Alice's input register, the measurement is the
square-root measurement for the positive signal operators
\[
 \sigma_i=n^{-(N-1)}\Omega_{A_iC}\otimes I_{A_{\ne i}}.
\]
Here $\Omega_{A_iC}$ is the normalized maximally entangled projector
on those two registers. With
$\sigma=\sum_i\sigma_i$ and $P$ its support projector, take
\begin{equation}
 M_i=\sigma^{-1/2}\sigma_i\sigma^{-1/2}+\frac{I-P}{N},
 \label{up:pbt-completion}
\end{equation}
using the inverse on the support. This is a positive measurement
summing to $I$. Its added terms cannot reduce fidelity, since their
contributions are traces of products of positive operators. The
resource and measurement retain unitary covariance;
Bob selects the indicated port and traces the others. Thus
\eqref{up:pbt-primitive} applies to a complete finite deterministic
protocol.

We now construct $w$ and bound its finite score. Following the finite
estimation construction in Ref.~\cite{yang2020optimal}, set
\begin{align}
 q&=\left\lfloor
       \frac{2m/(n-1)+n-2}{3n-2}\right\rfloor,\nonumber\\
 m_0&=m-\frac{[(3n-2)q-n+2](n-1)}2.
 \label{up:pbt-q}
\end{align}
The subtracted quantity is an integer by parity, and $m_0\geq0$.
The condition $q\geq2$ is equivalent to
$m\geq(n-1)(5n-2)/2$, which is satisfied by our choice of $m$.
Let $\mu$ be the balanced nonincreasing $n$-row partition of $m_0$.
For $k\in\{0,\ldots,q-1\}^{n-1}$ define
\begin{align}
 \alpha_i(k)&=\mu_i+q(2n-3)+1-(q+1)(i-1)+k_i,
       \qquad i<n,\nonumber\\[2pt]
 \alpha_n(k)&=\mu_n+(n-1)(q-1)-\sum_{i<n}k_i.
 \label{up:pbt-partitions}
\end{align}
The row sum is $m$, the last row is nonnegative, and every successive
row gap is at least two. For $i<n-1$, this follows from
$q+1+k_i-k_{i+1}\geq2$; the final gap equals
$\mu_{n-1}-\mu_n+2+k_{n-1}+\sum_{i<n}k_i$.
The map $k\mapsto\alpha(k)$ is injective. Each supported partition
has exactly $n$ allowed one-box children, including when its last row
is zero.

For $0\leq j<q$, put
\begin{equation}
 g_j=\frac2q\sin^2\!\left(\frac{\pi(2j+1)}{2q}\right),
 \qquad v_{\alpha(k)}=\prod_{i<n}\sqrt{g_{k_i}},
 \label{up:pbt-weights}
\end{equation}
and set the other entries of $v$ to zero. Since $q\geq2$, the finite
sine sum gives $\sum_jg_j=1$, so $v$ is a nonnegative unit vector.
For
$h=\sum_{j=0}^{q-2}\sqrt{g_jg_{j+1}}$, the product-to-sum identity gives
\begin{equation}
 h=\left(1-\frac1q\right)\cos\frac\pi q+\frac1q,
 \qquad 0\leq1-h\leq\frac{\pi^2}{2q^2}.
 \label{up:pbt-overlap}
\end{equation}
For example, the sum of the cosine terms with arguments
$2\pi(j+1)/q$, $0\leq j\leq q-2$, is $-1$, which gives the
displayed expression directly.

The diagonal entries of $RR^{\mathsf T}$ on this support equal $n$.
Two distinct supported partitions have a common child exactly when
their coordinates differ by $\pm e_i$ or by $e_i-e_j$, with
$i\ne j<n$. Indeed, a common child differs from each parent by one
box, so the two parents differ by moving one box between two rows;
\eqref{up:pbt-partitions} expresses these moves as the listed
coordinate differences. The row gaps ensure that the children stay
in the permitted partition set. Summing the factorized overlaps gives
\begin{equation}
 F_{\mathrm{est}}:=\frac{v^{\mathsf T}RR^{\mathsf T}v}{n^2}
 =\frac{n+(n-1)(n-2)h^2+2(n-1)h}{n^2}.
 \label{up:pbt-estimation-score}
\end{equation}
With $\delta=1-h\in[0,1]$, expansion of the numerator yields
\begin{equation}
 1-F_{\mathrm{est}}
 \leq\frac{2(n-1)^2\delta}{n^2}
 \leq\frac{2\pi^2(n-1)^2}{n^2q^2}.
 \label{up:pbt-estimation-error}
\end{equation}
The floor in
\eqref{up:pbt-q} implies
\begin{equation}
 q\geq\frac{2[m-n(n-1)]}{(3n-2)(n-1)}
   \geq\frac{m}{(3n-2)(n-1)},
 \label{up:pbt-floor}
\end{equation}
because $m\geq2n(n-1)$.

Set $X=RR^{\mathsf T}$ and
$w=R^{\mathsf T}v/\|R^{\mathsf T}v\|_2$. Its denominator is positive,
and $w$ is a nonnegative unit vector. Let $F$ be the normalized
entanglement fidelity of the resulting teleportation channel.
The finite transfer gives
\begin{equation}
 F\geq F_w
 =\frac{v^{\mathsf T}X^2v}{n^2v^{\mathsf T}Xv}
 \geq\frac{v^{\mathsf T}Xv}{n^2}=F_{\mathrm{est}},
 \label{up:pbt-transfer}
\end{equation}
where Cauchy--Schwarz was used for the second inequality. Together
with \eqref{up:pbt-estimation-error} and \eqref{up:pbt-floor}, this gives
\[
 1-F\leq
 \frac{2\pi^2(n-1)^4(3n-2)^2}{n^2m^2}.
\]

It remains to convert this fidelity into the full diamond error.
The unitary covariance of the resource and
\eqref{up:pbt-completion} gives
$\mathcal T_N(U\rho U^*)=U\mathcal T_N(\rho)U^*$ for every
$U\in\operatorname{SU}(n)$. Its normalized Choi state consequently
has the form
\begin{equation}
 J_{\mathcal T_N}
 =F\Omega_n+(1-F)\frac{I-\Omega_n}{n^2-1},
 \qquad \Omega_n=|\Omega_n\rangle\langle\Omega_n|.
 \label{up:isotropic}
\end{equation}
Under vectorization, $\overline U\otimes U$ acts by conjugation on
matrices. Scalar matrices form one invariant subspace and traceless
matrices form an irreducible subspace. To prove the latter assertion,
start with a nonzero invariant complex subspace of traceless matrices.
Conjugation and Fourier averaging over diagonal unitaries isolate an
off-diagonal matrix unit whenever a member has an off-diagonal entry.
A nonzero diagonal traceless member acquires such an entry under a
two-coordinate rotation. Permutations then give all off-diagonal
matrix units, and two-coordinate rotations give all diagonal
differences. These span the traceless matrices. Scalar phases adjust
the determinants to one without changing the conjugation action.
A commuting Hermitian operator is therefore scalar on these two
inequivalent subspaces, giving \eqref{up:isotropic} by its trace and
overlap with $\Omega_n$.

The second state in \eqref{up:isotropic} is positive and has input
marginal $I_n/n$, so it is the normalized Choi state of a channel
$\mathcal Q$. Thus
$\mathcal T_N=F\operatorname{id}+(1-F)\mathcal Q$ and
$\|\mathcal T_N-\operatorname{id}\|_\diamond\leq2(1-F)$.
The maximally entangled input attains the reverse inequality because
the two Choi sectors are orthogonal. Hence the full diamond error is
exactly $2(1-F)$. Finally, $\pi<4$, $n-1<n$, and $3n-2<3n$ give
the strict inequality in \eqref{up:pbt-diamond}; the chosen ceiling
for $m$ gives its last inequality.
\end{proof}

\begin{lemma}[Common support for channel programs]
\label{up:compression}
Fix any finite pure correction-free $N$-port teleportation resource
$|\psi\rangle\in A\otimes(\mathbb C^n)^{\otimes N}$ and its
complete instrument, with induced channel $\mathcal T$. There is a
common program space of dimension
\begin{equation}
 b_\psi\leq\binom{N+n^3-1}{N}
 \label{up:compression-dimension}
\end{equation}
and a fixed physical retrieval circuit such that every channel
$\Phi:\mathbb C^n\longrightarrow\mathbb C^n$ has a program
$\sigma_\Phi$ whose retrieved channel is $\Phi\circ\mathcal T$.
The retrieval circuit is trace preserving on every program input.
\end{lemma}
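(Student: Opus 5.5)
We plan to build the program for $\Phi$ by applying the channel to Bob's half of the teleportation resource, and then to compress all such programs into one symmetric subspace.

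\emph{Step 1: the program and the retrieval circuit.} Using Lemma~\ref{up:rank-decomposition}, write $\Phi$ as a finite convex combination of channels of Kraus rank at most $n$. Realize each component $\Phi_c$ by an isometry $V_c:\mathbb C^n\to\mathbb C^n\otimes\mathbb C^n$. Applying $V_c^{\otimes N}$ to Bob's $N$ ports gives the pure state $(I_A\otimes V_c^{\otimes N})|\psi\rangle$. Retrieval then works as follows: Alice's instrument acts on the input and $A$, with outcome $i$; the circuit selects output port $i$ and traces out the other ports and all environment registers. Because teleportation is correction-free, the retrieved channel for this component is $\Phi_c\circ\mathcal T$. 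Linearity in the program then handles the mixture, so we set $\sigma_\Phi=\sum_c p_c\,\sigma_{\Phi_c}$ and the retrieved channel is $\Phi\circ\mathcal T$. Trace preservation on arbitrary program inputs holds because the retrieval consists only of a complete instrument, a classically controlled port selection and partial traces, all of which are trace preserving on the whole program algebra.

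\emph{Step 2: compression.} The goal is to bound the dimension of a common subspace containing every component program. A naive count gives $\dim A\cdot n^{2N}$, which is too large. The key observation is that each $V_c$ can be encoded as a vector $v_c\in\mathbb C^{n^3}$, namely its matrix entries. The state $(I\otimes V_c^{\otimes N})|\psi\rangle$ is then the image of $v_c^{\otimes N}$, which lies in $\operatorname{Sym}^N(\mathbb C^{n^3})$, under a fixed linear map $L_\psi$ determined by $|\psi\rangle$: contract $v_c^{\otimes N}$ against $|\psi\rangle$ port by port. Hence every program vector lies in $\operatorname{ran}(L_\psi|_{\operatorname{Sym}^N})$, whose dimension is at most $\binom{N+n^3-1}{N}$. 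Take $b_\psi$ to be this range dimension. Choose an isometric embedding of a $b_\psi$-dimensional space onto this range; the retrieval circuit first applies this embedding, extended on its complement by a discard-and-prepare completion as in Eq.~\eqref{up:padding}, and then performs Step 1. Mixtures over $c$ stay inside the same span, so no component label is needed.

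\emph{Main obstacle.} The difficult part is making the linearity claim in Step 2 precise, namely that $(I\otimes V^{\otimes N})|\psi\rangle$ depends linearly on $v^{\otimes N}$ and on nothing else about $V$. This should follow by writing $V=\sum_{a,b,e}v_{abe}|a,e\rangle\langle b|$ and expanding the tensor power. The range is then spanned by images of symmetric product vectors, and these span the symmetric subspace. A second point to check is that the completion on the complement of the range keeps the map CPTP on every program input. This should go through as in Eq.~\eqref{up:padding}, with the cross terms between the range and its complement removed by the compression's Kraus structure.
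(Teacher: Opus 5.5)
Your proposal is correct and follows the paper's proof. You rank-decompose $\Phi$ into isometric components and apply $V^{\otimes N}$ to Bob's ports. You then write the result as a fixed linear map $L_\psi$ applied to $v(V)^{\otimes N}\in\operatorname{Sym}^N(\mathbb C^{n^3})$, and let the retrieval circuit decompress by an isometry $W$ onto $L_\psi(\operatorname{Sym}^N(\mathbb C^{n^3}))$ before the instrument, port selection and partial traces. The one notational difference is that the paper uses the normalized vectorization $(I_n\otimes V)|\Omega_n\rangle$ rather than raw matrix entries, absorbing the factor $n^{N/2}$ into $L_\psi$.

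The discard-and-prepare completion you flag as a point to check is unnecessary. $W$ is defined on the entire $b_\psi$-dimensional program space, so $X\mapsto WXW^*$ is already CPTP there and there is no complement to handle.
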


\begin{proof}
For an isometry
$V:\mathbb C^n\longrightarrow\mathbb C^n_B\otimes\mathbb C^n_E$,
let $|v(V)\rangle=(I_n\otimes V)|\Omega_n\rangle$. This is a
unit vector in $\mathbb C^{n^3}$. Its identical tensor power lies
in the symmetric tensor subspace $\operatorname{Sym}^N(\mathbb C^{n^3})$,
which has dimension
$\binom{N+n^3-1}{N}$. An orthonormal occupation basis proves this
dimension formula by counting $n^3$ nonnegative integers summing to
$N$.

Expand the fixed resource as
$|\psi\rangle=\sum_I|a_I\rangle\otimes|I\rangle$, where $I$
ranges over the product basis of Bob's $N$ ports. Define the fixed
linear map $X_\psi:\mathbb C^{n^N}\longrightarrow A$ by
$X_\psi|I\rangle=n^{N/2}|a_I\rangle$. After regrouping factors,
$|\psi\rangle=(X_\psi\otimes I)|\Omega_n\rangle^{\otimes N}$.
Applying $V$ on all Bob ports gives
\begin{equation}
 |\psi(V)\rangle
 =(I_A\otimes V^{\otimes N})|\psi\rangle
 =L_\psi|v(V)\rangle^{\otimes N},
 \label{up:image-identity}
\end{equation}
where $L_\psi$ is the fixed regrouping followed by $X_\psi$ on
the reference factors. The first expression proves that
$|\psi(V)\rangle$ is normalized. All these vectors belong to the
common image
\[
 H_\psi=L_\psi\bigl(\operatorname{Sym}^N(\mathbb C^{n^3})\bigr).
\]
Let $b_\psi=\dim H_\psi$ and choose an isometry
$W:\mathbb C^{b_\psi}\longrightarrow H_\psi$ onto that image.
This isometry performs physical decompression, while $L_\psi$
supplies the image for the dimension bound.
Then $|u(V)\rangle=W^*|\psi(V)\rangle$ is normalized and
$W|u(V)\rangle=|\psi(V)\rangle$.

Use Lemma~\ref{up:rank-decomposition} to write
$\Phi=\sum_a w_a\Phi_a$ and choose an $n$-environment
isometry $V_a$ for each component. Define the program as a mixture on this same
space by
\[
 \sigma_\Phi=\sum_a w_a|u(V_a)\rangle\langle u(V_a)|.
\]
Decompression
and tracing all environment factors give the resource
\begin{equation}
 \sum_a w_a
 (\operatorname{id}_A\otimes\Phi_a^{\otimes N})
          (|\psi\rangle\langle\psi|).
 \label{up:mixed-resource}
\end{equation}
For one component, its operations on Bob's ports commute with
Alice's instrument. In branch $i$, trace preservation removes
$\Phi_a$ from every discarded port, even when the ports are
correlated and the input has an external reference. With correction-free port selection, the retained
port undergoes $\Phi_a$ after the original teleportation output.
Summing the branches and components therefore implements
\begin{equation}
 \sum_a w_a\Phi_a\circ\mathcal T=\Phi\circ\mathcal T.
 \label{up:retrieval-identity}
\end{equation}
Finally, the circuit consists of isometric decompression, partial
traces, a complete instrument, and port selection. It is therefore
completely positive and trace preserving on every program and input,
including reference-entangled programs. The image dimension gives
\eqref{up:compression-dimension}.
\end{proof}

\begin{proof}[Proof of Theorem~\ref{up:optimized-program}]
Use the resource in Lemma~\ref{up:pbt-error} and its common program
space from Lemma~\ref{up:compression}. For each complete process,
fix one product encoding for all its tester queries. Encode its padded
initial state exactly in a factor of dimension $n$, and encode each
of its $K$ transitions in a separate factor of dimension $b_\psi$.
The product encoding has total dimension
\[
 D_0=nb_\psi^K\leq nB^K.
\]
A fixed simulator uses the initial factor as the visible output and
memory, then consumes the transition programs in order. Each
retrieval map is physical on every program input. Thus this is a
physical circuit on the entire $D_0$-dimensional space, even for
inputs entangled between program factors. Composing any adaptive
binary tester with it gives an ordinary effect $0\leq A_E\leq I$
on that space.

For a target's product encoding, replace its transitions by their
retrieved channels one at a time. Their individual full diamond
errors are at most $\eta/K$ by \eqref{up:pbt-diamond} and
postprocessing contractivity. The acted-on input is the visible system
together with the process memory; all other retained registers form
the untouched reference. The initial state is exact. The terminal
trace error is consequently at most $\eta$, uniformly in the
tester, and its binary probability error is at most $\eta/2$.
Lemma~\ref{up:information} now gives \eqref{up:optimized-capacity}.

For the finite simplification, put $b=n^3-1$. The factorial bound
$k!\geq(k/e)^k$ gives
\begin{equation}
 \log_2\binom{N+b}{N}
 \leq\min\left\{
 N\log_2\!\left[e\left(1+\frac bN\right)\right],
 b\log_2\!\left[e\left(1+\frac Nb\right)\right]\right\}.
 \label{up:binomial-two-sides}
\end{equation}
The integral bound
$\sum_{j=1}^k\ln j\geq\int_1^k\ln x\,dx
=k\ln k-k+1$ proves the factorial estimate.
Since $t>1$ and $n\geq2$, the ceiling in
\eqref{up:program-parameters} implies
\[
 24n^2t\leq N\leq24n^2t+2\leq25n^2t,
 \qquad \frac bN\leq\frac{n}{24t}.
\]
Apply these inequalities to the first term in
\eqref{up:binomial-two-sides} to obtain
\eqref{up:optimized-simplified}.
\end{proof}

\section{An upper bound for all causal processes}
\label{up:causal-section}

For large memory, it is useful to work directly with the entire
convex set of causal processes. Online learning over this set has
been studied with matrix-entropy methods
\cite{bansal2025panoply}. We instead use a log-determinant potential
and bound its curvature by the change in a physical tester's
outcome probability. Positive-operator log-determinant inequalities
and maximally mixed shrinkage have established uses in quantum
online learning \cite{tsai2023logdet}; the argument below uses
both outcomes of a complete causal tester.

Let $\mathcal C_d(K)$ be the normalized causal Choi states on
$B_0A'_1B_1\cdots A'_KB_K$, with ambient dimension
\begin{equation}
 \nu=d^{2K+1}.
 \label{up:causal-dimension}
\end{equation}
The causal constraints are those in Eq.~\eqref{main:causality}; they
define a compact convex set containing $\rho_*=I_\nu/\nu$ and the
Choi state of every memory-capped process. A normalized binary tester
has the positive representatives $T_E,T_E^{\mathrm c}$ of
Eq.~\eqref{main:positive-pair}. Write $f_E(\rho)=\Tr(T_E\rho)$
for its acceptance probability. The two representatives sum to one
when paired with any causal state; their operator sum need not be the
identity on the ambient matrix space.

\begin{theorem}[Full causal upper bound]
\label{up:causal-upper}
For $d\geq2$, $K\geq1$, and $0<\gamma\leq1$,
\begin{equation}
 \operatorname{sfat}_\gamma(\mathcal P_{d,r}(K))
 \leq\operatorname{sfat}_\gamma(\mathcal C_d(K))
 \leq\left\lfloor
       \frac{4d^{2K+1}\ln(4/\gamma)}{\gamma^2}
       \right\rfloor
 \label{up:causal-capacity}
\end{equation}
for every $r\geq1$.
\end{theorem}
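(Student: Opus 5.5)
The first inequality is immediate: every process in $\mathcal P_{d,r}(K)$ has a normalized causal Choi state in $\mathcal C_d(K)$, and its responses are given by Eq.~\eqref{main:positive-pair}. A shattered tree for the smaller class is therefore a shattered tree for $\mathcal C_d(K)$. For the second inequality I would run an online mistake-bound argument along one path of a $\gamma$-shattered tree of depth $L$. The learner uses regularized follow-the-leader (or online mirror descent) with the log-determinant regularizer $R(\rho)=-\ln\det\rho-\nu\ln\nu$ on $\mathcal C_d(K)$, and is fed losses designed to make it predict on the wrong side of every threshold. Comparing its cumulative loss with that of the leaf witness $\rho_x$ then bounds $L$.

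\textbf{Key steps.} The steps are as follows.

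(i) \emph{Curvature lemma.} For every causal tangent $\Delta$ and every $\rho$ in the relative interior,
\[
 D^2R(\rho)[\Delta,\Delta]=\Tr(\rho^{-1}\Delta\rho^{-1}\Delta)\ge 8\sup_E|\Tr(T_E\Delta)|^2 .
\]
To prove it, note that $\Tr(T_E\Delta)=-\Tr(T_E^c\Delta)$, because $T_E+T_E^c$ pairs to zero with causal tangents. Cauchy--Schwarz in the $\rho$-weighted inner product gives $|\Tr(T\Delta)|^2\le \Tr(T\rho T)\,\Tr(\rho^{-1}\Delta\rho^{-1}\Delta)$. I would then apply this to $T_E$ and to $T_E^c$ separately. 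The aim is to show that the smaller of $\Tr(T_E\rho T_E)$ and $\Tr(T_E^c\rho T_E^c)$ is at most $1/8$, or else to use a weighted combination with a better constant. Since $\Tr(T_E\rho)+\Tr(T_E^c\rho)=1$ and $0\le T\le$ (an operator bounded on causal states), one has $\Tr(T\rho T)\le\Tr(T\rho)$ after normalizing the tester representatives. Choosing the outcome with probability at most $1/2$, together with a convexity argument, should give the constant. This is the step I expect to be the main obstacle: the operators $T_E,T_E^c$ are not bounded by $I$ on the ambient space, so $\Tr(T\rho T)\le\Tr(T\rho)$ needs a causal-tester normalization argument, for example the sequential factorization of testers of Ref.~\cite{chiribella2009theoretical}. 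If that fails, I would first try a weighted choice of $\rho$-inner product adapted to the tester's sequential structure.

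(ii) \emph{Range control.} Comparators are shrunk to $\tilde\rho_x=(1-\lambda)\rho_x+\lambda\rho_*$, which changes every response by at most $\lambda$; I take $\lambda=\gamma/4$. Then $R(\tilde\rho_x)-R(\rho_*)\le \nu\ln(1/\lambda)=\nu\ln(4/\gamma)$, since $R(\rho_*)=0$ and $\tilde\rho_x\ge\lambda I/\nu$.

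(iii) \emph{Online inequality.} At node $u$ the learner's current point $\rho_t$ is used to choose the branch $x_j$ on which $\Tr(T_{E_u}\rho_t)$ lies on the wrong side of $s_u$. Since the tree is shattered, the witness for the resulting path lies at distance $\ge\gamma/2$ on the correct side, and so $\tilde\rho_x$ lies at distance $\ge\gamma/4$ there. Each round therefore has a linear loss gap of at least $\gamma/4$ between learner and comparator. By strong convexity from (i), the standard FTRL bound with step size $\beta$ gives
\[
 L\gamma/4\le \nu\ln(4/\gamma)/\beta+\beta L/16 .
\]
Optimizing $\beta$, or choosing it proportional to $\gamma$, yields $L\le 4\nu\ln(4/\gamma)/\gamma^2$. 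The floor follows from integrality.
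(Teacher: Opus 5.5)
Your overall plan is the paper's. It uses log-determinant FTRL on the causal set with curvature constant $8$ in the norm $\|\Delta\|_{\mathrm{op}}=\sup_E|\Tr(T_E\Delta)|$, shrinks the comparator to $(1-\gamma/4)\rho+(\gamma/4)\rho_*$, picks the adversarial branch at each node, and compares regrets. Steps (ii) and (iii) are fine. The gap is in (i), the step that fixes the constant $4$ in the theorem, and the route you sketch there fails.

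First, the Cauchy--Schwarz factor should be $\Tr(\rho T\rho T)$, not $\Tr(T\rho T)=\Tr(\rho T^2)$. With $B=\rho^{1/2}T_E\rho^{1/2}$ and $C=\rho^{-1/2}\Delta\rho^{-1/2}$ one has $\Tr(T_E\Delta)=\Tr(BC)$ and $\Tr C^2=q_\rho(\Delta)$. Your version makes you need $\Tr(\rho T_E^2)\le\Tr(\rho T_E)$, essentially $T_E\le I$. No renormalization of representatives can supply that. Take $K=1$, processes emitting $|0\rangle$ at $B_0$, and the tester that feeds $|0\rangle$ and accepts output $|0\rangle$. It has response $1$ on the identity channel and $0$ on the channels preparing $|k\rangle$, $k\neq0$. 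These constraints force every positive representative $T'$ to satisfy $\langle0,00|T'|0,00\rangle=d$ on $B_0A'_1B_1$, so $\|T'\|\ge d$. No operator bound is needed, however: $B\ge0$ with $\Tr B=p$ already gives $\Tr B^2\le(\Tr B)^2=p^2$ by positivity alone.

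Second, even after this correction, using only the outcome with probability at most $1/2$ gives $\Tr B^2\le1/4$. That means $q_\rho(\Delta)\ge4\|\Delta\|_{\mathrm{op}}^2$, and your step (iii) then yields only $L\le8\nu\ln(4/\gamma)/\gamma^2$, twice the claimed bound. The missing idea is to combine both outcomes in one test operator:
\begin{itemize}
\item Put $B^{\mathrm c}=\rho^{1/2}T_E^{\mathrm c}\rho^{1/2}$ and $\ell=\Tr(T_E\Delta)$. Your tangent identity gives $\Tr(B^{\mathrm c}C)=-\ell$.
\item Hence $H=(1-p)B-pB^{\mathrm c}$ still satisfies $\Tr(HC)=\ell$.
\item In $\Tr H^2$, the cross term $-2p(1-p)\Tr(BB^{\mathrm c})$ is nonpositive because both matrices are positive. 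The two diagonal terms are each at most $p^2(1-p)^2$.
\item So $\Tr H^2\le2p^2(1-p)^2\le1/8$, and Cauchy--Schwarz gives $\ell^2\le q_\rho(\Delta)/8$.
\end{itemize}
The degenerate cases $p\in\{0,1\}$ are included, since then $B=0$ or $B^{\mathrm c}=0$ and $\ell=0$. With this lemma in place, the rest of your argument gives exactly $\lfloor4\nu\ln(4/\gamma)/\gamma^2\rfloor$.
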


\begin{proof}
Let $V$ be the real vector space parallel to the causal affine
space, and give it the operational norm
\begin{equation}
 \|\Delta\|_{\mathrm{op}}
 =\sup_E|\operatorname{Tr}(T_E\Delta)|.
 \label{up:causal-norm}
\end{equation}
This is finite. Positivity and
$\operatorname{Tr}(T_E\rho_*)\leq1$ imply
$\operatorname{Tr}T_E\leq\nu$, so each pairing is bounded by
$\nu\|\Delta\|_\infty$. It also separates nonzero elements of
$V$. A tester can prepare the normalized Choi experiment by
supplying one half of a normalized maximally entangled pair at
each input and retaining the other halves and all outputs. It
can then measure any effect on the collected Choi state.
Such effects distinguish every nonzero Hermitian trace-zero
$\Delta$. Homogeneity and the triangle inequality follow from
the supremum. A query, modulo constants on the causal affine
space, therefore has dual norm at most one.

For every $\Delta\in V$,
\begin{equation}
 \operatorname{Tr}(T_E^{\mathrm c}\Delta)
 =-\operatorname{Tr}(T_E\Delta).
 \label{up:complement-tangent}
\end{equation}
Indeed, $\rho_*+s\Delta$ is a positive causal state for all
sufficiently small positive and negative $s$. Differentiate the
sum of the two identities in \eqref{main:positive-pair} along this
segment.

On positive-definite causal states define
\begin{equation}
 R(\rho)=-\ln\det\rho-\nu\ln\nu,
 \qquad
 q_\rho(\Delta)
 =\operatorname{Tr}(\rho^{-1}\Delta\rho^{-1}\Delta).
 \label{up:logdet}
\end{equation}
The second expression is the Hessian of $R$ evaluated twice on
$\Delta$. Fix a tester, set
$p=\operatorname{Tr}(T_E\rho)$ and
$\ell=\operatorname{Tr}(T_E\Delta)$, and put
\[
 C=\rho^{-1/2}\Delta\rho^{-1/2},\quad
 B=\rho^{1/2}T_E\rho^{1/2},\quad
 B^{\mathrm c}=\rho^{1/2}T_E^{\mathrm c}\rho^{1/2}.
\]
The positive matrices $B,B^{\mathrm c}$ have traces $p,1-p$,
and their pairings with $C$ are $\ell,-\ell$. For these positive matrices, including noncommuting pairs,
$H=(1-p)B-pB^{\mathrm c}$ satisfies
$\operatorname{Tr}(HC)=\ell$ and
\begin{align}
 \operatorname{Tr}H^2
 &=(1-p)^2\operatorname{Tr}B^2
   +p^2\operatorname{Tr}(B^{\mathrm c})^2
   -2p(1-p)\operatorname{Tr}(BB^{\mathrm c})\nonumber\\
 &\leq2p^2(1-p)^2\leq\frac18.
 \label{up:tester-curvature-matrix}
\end{align}
The first inequality uses
$\operatorname{Tr}B^2\leq(\operatorname{Tr}B)^2$ and its
complementary version, together with
$\operatorname{Tr}(BB^{\mathrm c})\geq0$. Hilbert--Schmidt Cauchy--Schwarz now gives
$\ell^2\leq q_\rho(\Delta)/8$. The inequality also covers
$p=0,1$, when the relevant positive matrix vanishes and
$\ell=0$. Taking the supremum over all testers proves
\begin{equation}
 q_\rho(\Delta)\geq8\|\Delta\|_{\mathrm{op}}^2.
 \label{up:strong-curvature}
\end{equation}
Integration along a segment of positive causal states gives
\begin{equation}
 R(y)\geq R(x)+DR(x)[y-x]
              +4\|y-x\|_{\mathrm{op}}^2.
 \label{up:strong-convexity}
\end{equation}
Thus $R$ is strongly convex with constant $\mu=8$ in the
convention $\mu\|y-x\|^2/2$.

Fix $0<\epsilon<1$ and consider the compact convex domain
\[
 \mathcal C_\epsilon
 =\{(1-\epsilon)\rho+\epsilon\rho_*:
       \rho\in\mathcal C_d(K)\}.
\]
Its states have eigenvalues at least $\epsilon/\nu$.
The arithmetic-geometric mean inequality at trace one gives
$R\geq0$ and $R(\rho_*)=0$. Hence
\begin{equation}
 0\leq R(\rho)\leq A:=\nu\ln(1/\epsilon)
 \quad(\rho\in\mathcal C_\epsilon).
 \label{up:logdet-radius}
\end{equation}
Replacing a comparator $\rho$ by
$u=(1-\epsilon)\rho+\epsilon\rho_*$ changes each tester
probability by at most $\epsilon$, since both original
probabilities lie in $[0,1]$.

The online potential calculation uses constrained strongly convex
regularization
\cite{liu2022ftrl}. For a finite depth $L\geq1$ and rate
$\alpha>0$, define signed linear query losses
$g_j(\rho)=s_jf_{E_j}(\rho)$, $s_j\in\{-1,1\}$, and set
\begin{equation}
 F_j=R+\alpha\sum_{i=1}^jg_i,\qquad
 x_j=\mathop{\rm argmin}_{\rho\in\mathcal C_\epsilon}F_{j-1}(\rho),
 \qquad
 \phi_j=\min_{\rho\in\mathcal C_\epsilon}F_j(\rho).
 \label{up:ftrl-definitions}
\end{equation}
The minimizers exist by compactness and are unique by strong
convexity. The queries and signs may depend on past observations
and on the current $x_j$. Each loss is one-Lipschitz for
\eqref{up:causal-norm}. Put
$\delta_j=\|x_{j+1}-x_j\|_{\mathrm{op}}$.
First-order optimality at $x_j$ along feasible segments,
together with \eqref{up:strong-convexity}, gives
\begin{align}
 \phi_j-\phi_{j-1}
 &\geq\frac\mu2\delta_j^2+\alpha g_j(x_{j+1})\nonumber\\
 &\geq\alpha g_j(x_j)+\frac\mu2\delta_j^2-\alpha\delta_j
 \geq\alpha g_j(x_j)-\frac{\alpha^2}{2\mu}.
 \label{up:ftrl-step}
\end{align}
This argument is valid when a minimizer lies on the boundary of
$\mathcal C_\epsilon$, because the feasible directional
derivative is nonnegative. Since $\phi_0=0$ and
$\phi_L\leq R(u)+\alpha\sum_jg_j(u)$, summation yields,
for every $u\in\mathcal C_\epsilon$,
\begin{equation}
 \sum_{j=1}^L[g_j(x_j)-g_j(u)]
 \leq\frac A\alpha+\frac{L\alpha}{2\mu}.
 \label{up:regret}
\end{equation}
Taking $\alpha=\sqrt{2\mu A/L}$ makes the right side
$\sqrt{AL}/2$.

Suppose a complete depth-$L$ tree is $\gamma$-shattered.
At its current node, with tester $E_j$ and threshold $a_j$,
compute $x_j$. If $f_{E_j}(x_j)\geq a_j$, choose the lower
child and set $s_j=1$; otherwise choose the upper child and
set $s_j=-1$. After this path is fixed, the shattering
definition supplies one complete process $\rho$ witnessing
every chosen branch. For all $j$,
\[
 g_j(x_j)-g_j(\rho)\geq\gamma/2.
\]
Use its single smoothed comparator
$u=(1-\epsilon)\rho+\epsilon\rho_*$ in \eqref{up:regret}.
The uniform probability error gives
\[
 L(\gamma/2-\epsilon)\leq\sqrt{AL}/2.
\]
Choose $\epsilon=\gamma/4$. Squaring and cancelling the
positive $L$ yields
$L\leq4\nu\ln(4/\gamma)/\gamma^2$. The depth-zero case
is immediate, and the integer floor bounds every finite
shattered depth. The inclusion of each $\mathcal P_{d,r}(K)$
in the causal domain gives the same upper bound for every memory cap.
\end{proof}

Theorem~\ref{up:causal-upper} gives the upper estimate used
at the saturated endpoint in Theorem~\ref{sat:matching}.
For smaller caps, one can take its minimum with
Proposition~\ref{up:physical-cover}, Theorem~\ref{up:optimized-program},
and the heralded estimate~\eqref{inf:heralded-bound}, optimizing the permitted
choice of $\eta$ in each bound.

\section{Dense tables and the saturated memory regime}
\label{sat:section}

The causal upper bound in Eq.~\eqref{up:causal-capacity} has a matching
horizon dependence once the memory can retain a complete quantum prefix.
The required memory is sufficient for every process on this interface.

\begin{lemma}[A sufficient memory for the full causal class]
\label{sat:realization}
Every initialized $K$-transition process with visible dimension $d$ and
$K\geq1$ has a realization with propagated memory at most $d^{2K-1}$.
\end{lemma}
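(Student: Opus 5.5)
Starting from a normalized causal Choi state $\rho_K\in\mathcal C_d(K)$, I will build the realization by successive purification. The prefix marginals $\rho_t$ on $B_0A'_1B_1\cdots A'_tB_t$ obey Eq.~\eqref{main:causality}. The memory after transition $t$ will be a purifying register for the prefix $\rho_t$, but with the visible systems of the prefix that the tester already holds removed.

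\textbf{Step 1: initial state.} The marginal $\rho_0$ lives on $B_0$ alone. I will take a purification of $\rho_0$ with purifying system $M_0$ of dimension at most $\operatorname{rank}\rho_0\leq d$, and let the initial state be that purification. With $M_0$ retained, the emitted $B_0$ has the correct marginal.

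\textbf{Step 2: the general transition.} I will use the standard Stinespring argument for combs (Chiribella--D'Ariano--Perinotti). Let $|\Psi_{t-1}\rangle$ purify the unnormalized $d^{t-1}\rho_{t-1}$ on the prefix systems together with a memory $M_{t-1}$, which is the purifying system. The causal identity $\Tr_{B_t}\rho_t=\rho_{t-1}\otimes I/d$ says that a purification of $\rho_t$ and the vector $|\Psi_{t-1}\rangle\otimes|\Omega\rangle_{A'_tA_t}$ (unnormalized) share the same marginal on all prefix systems and $A'_t$. By Uhlmann's theorem, or uniqueness of purifications up to isometry, there is a partial isometry from $A_tM_{t-1}$ to $B_tM_t$ mapping the one purification to the other. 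Here $M_t$ is a purifying system for $\rho_t$ on the systems held outside the process, namely $B_0A'_1\cdots A'_tB_t$.

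On the support it is an isometry, and I will complete it to a CPTP map on all of $A_tM_{t-1}$ by the discard-and-prepare completion already used in Eq.~\eqref{up:padding}. Linearity of the Choi correspondence, with the tester's reference copies playing the role of $A'_s$, then shows that the composed process has Choi state $\rho_K$. The final $M_K$ is discarded, so the last transition needs no retained memory beyond what is traced out.

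\textbf{Step 3: dimension count.} This is the main point. The memory $M_t$ only needs dimension $\operatorname{rank}$ of $\rho_t$ on the outside systems, which is at most the dimension of those systems, $d\cdot d^{2t}=d^{2t+1}$. That is too crude, so I will refine it in two ways.

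First, since $M_K$ is discarded, the relevant memories are $M_0,\ldots,M_{K-1}$. Second, I will use the complementary purification bound: $\dim M_t$ can be taken as the rank of the complementary reduced state, i.e.\ the state on the future systems. The state $\rho_t$ on the $2t+1$ prefix systems has the same nonzero spectrum as the reduced state on the purifying system. Choosing $M_t$ to be the minimal purifying system for the transitions that remain, its dimension is bounded by what the future transitions can consume. The future inputs $A_{t+1},\dots,A_K$ and outputs $B_{t+1},\dots,B_K$ have total dimension $d^{2(K-t)}$. A minimal dilation needs at most the Choi rank of the remaining comb from $A_{t+1}M_t$, which is bounded by the future output-times-input dimension divided by one input, $d^{2(K-t)-1}$.

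Taking the minimum of the prefix bound $d^{2t+1}$ and the suffix bound $d^{2(K-t)-1}$ over $0\leq t\leq K-1$ gives at most $d^{2K-1}$; the binding cases are $t=0$ (suffix) and $t=K-1$ (prefix $d^{2K-1}$).

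\textbf{Main obstacle.} The hardest step will be making the suffix bound rigorous. I will try to build the realization backwards: set $M_{K-1}$ to be the smallest purifier of the prefix state on $B_0\cdots A'_{K-1}B_{K-1}$, with dimension at most $d^{2K-1}$. The earlier memories $M_t$ are then bounded directly by $d^{2t+1}\leq d^{2K-1}$. This means the simple prefix bound alone already suffices for every $t\leq K-1$, because $M_K$ is discarded and never needs to be stored. I will therefore present the proof with the prefix bound only, checking that the last transition may output into a discarded environment.
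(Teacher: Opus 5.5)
Your final plan is correct and is essentially the paper's proof. It uses successive purification, with the isometry $A_tM_{t-1}\to B_tM_t$ supplied by uniqueness of purifications, a CPTP completion off the support, and the prefix bound $\dim M_t=\operatorname{rank}\rho_t\leq d^{2t+1}\leq d^{2K-1}$ for $t\leq K-1$, with $M_K$ discarded inside the last transition.

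Delete the suffix-bound digression in Step~3 rather than trying to repair it. It is unnecessary, and the claimed bound $d^{2(K-t)-1}$ is false in general. For $K=2$, $t=1$, take a process that stores $A_1$ together with the partner of a Bell half emitted on $B_1$, and releases either one on command at transition~2. A mutual-information count then forces $\dim M_1\geq d^2>d$.
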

\begin{proof}
Write its normalized Choi marginals as $\rho_t$ on
$B_0A'_1B_1\cdots A'_tB_t$, so that
\[
 \operatorname{Tr}_{B_t}\rho_t
 =\rho_{t-1}\otimes I_{A'_t}/d.
\]
Choose a minimal purification of $\rho_0$ using a memory $M_0$ of
dimension $\operatorname{rank}\rho_0\leq d$.
Suppose a purification of $\rho_{t-1}$ has been realized. Tensor it with
a maximally entangled state on $A_tA'_t$. The resulting vector and a
purification of $\rho_t$ are purifications of the same state on the old
visible registers and $A'_t$. Uniqueness of purification gives an
isometry from $M_{t-1}A_t$ to $B_tM_t$, with
\[
 \dim M_t=\operatorname{rank}\rho_t\leq d^{2t+1}.
\]
Minimality makes the input purifying space exactly the support dimension;
if a larger space is used, its orthogonal complement can be extended by a
fixed completely positive trace-preserving map.
At the last transition the purifying output is discarded locally. Only
$M_0,\ldots,M_{K-1}$ propagate between transitions, and their dimensions
are at most $d^{2K-1}$. This also reconstructs the usual sequential
realization of a causal Choi operator~\cite{chiribella2009theoretical}.
\end{proof}

Lemma~\ref{sat:realization} identifies the class at the sufficient
memory cap $d^{2K-1}$ with the full causal class.

\begin{theorem}[Saturation at sufficient memory]
\label{sat:matching}
Let $d\geq2$, $K\geq1$, $r\geq d^{2K-1}$ and $0<\gamma\leq1$.
Then
\begin{equation}
 d^{2K}\lfloor\log_2d\rfloor
 \leq \operatorname{sfat}_\gamma\mathcal P_{d,r}(K)
 \leq
 \left\lfloor
 \frac{4d^{2K+1}\ln(4/\gamma)}{\gamma^2}
 \right\rfloor.
 \label{sat:bounds}
\end{equation}
For fixed $d$ and $\gamma$, both the full causal class and the indicated
memory-capped class therefore have sequential capacity
$\Theta_{d,\gamma}(d^{2K})$.
\end{theorem}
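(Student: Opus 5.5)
The upper bound is immediate from Theorem~\ref{up:causal-upper}: every process in $\mathcal P_{d,r}(K)$ has a normalized causal Choi state, so $\operatorname{sfat}_\gamma\mathcal P_{d,r}(K)\leq\operatorname{sfat}_\gamma\mathcal C_d(K)\leq\lfloor 4d^{2K+1}\ln(4/\gamma)/\gamma^2\rfloor$, independently of $r$. By Lemma~\ref{sat:realization}, at $r\geq d^{2K-1}$ the memory-capped class realizes every causal process. The lower bound may therefore be proved inside the full causal class, as long as the explicit witnesses respect the cap.

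For the lower bound I would build a family of perfect zero-or-one response tables indexed by $d^{2K}\lfloor\log_2d\rfloor$ independent bits, and read the bits one per tree level. As the text after Eq.~\eqref{main:saturated} suggests, I would use Bell-pair address digits, in the spirit of the dense-coding loader of Theorem~\ref{low:direct-sum}. The process emits a half of $|\Phi_d\rangle$ at $B_0$ and at each transition $t<K$, retaining the partner in memory. The tester returns each half encoded by one of the $d^2$ operators of Eq.~\eqref{low:rectangular-bell}. At transition $t$ the process measures the returned visible register together with its stored partner in the Bell basis and retains the outcome coherently as a classical record.

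After $K-1$ such rounds the memory holds an address word $w\in[d^2]^{K-1}$. The final transition receives one more visible input $a\in[d]$ and outputs a basis state $|F(w,a)\rangle$, where $F:[d^2]^{K-1}\times[d]\to[d]$ is an arbitrary table. That table has $d^{2K-1}$ entries, each with $\lfloor\log_2 d\rfloor$ bits if we restrict the outputs to a subset of size $2^{\lfloor\log_2d\rfloor}$. To reach the claimed count $d^{2K}$, one must also use the last round fully. I would let the final input be half of a Bell pair as well and measure $A_K$ with a fresh pair emitted at $B_{K-1}$, giving $d^{2K}$ addresses. The output $B_K$ then carries the table value, extended by a fixed map outside the supported subspace as in Eq.~\eqref{up:padding}.

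The memory count needs checking. Storing a prefix of $t$ Bell outcomes ($d^{2t}$ values) together with one live Bell half ($d$) at the end of round $t\leq K-1$ gives at most $d^{2(K-1)}\cdot d=d^{2K-1}$. This matches the cap and is the reason for the hypothesis. With Kraus maps of the form in Eq.~\eqref{low:prefix-kraus}, each transition is CPTP on its whole domain.

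The tree reads each address in turn and each bit of its table entry, with threshold $1/2$. The tester deterministically loads the address by local Pauli-type encodings on the emitted halves, measures $B_K$ in the computational basis, and accepts iff the selected bit is $1$. The response is exactly $0$ or $1$, so the node test does not depend on later labels and every complete path fixes one table, hence one process. That gives depth $d^{2K}\lfloor\log_2d\rfloor$ at every $\gamma\leq1$.

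The main obstacle is the exact bookkeeping that achieves $d^{2K}$ addresses, rather than $d^{2K-1}$, within memory $d^{2K-1}$. I must check carefully which Bell halves are emitted at $B_0$ and at the intermediate outputs, and confirm that the last measurement can consume the final stored half without an extra memory factor. Combining both bounds gives $\Theta_{d,\gamma}(d^{2K})$ for fixed $d$ and $\gamma$.
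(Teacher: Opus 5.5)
Your final scheme is the paper's construction. Bell halves are emitted at $B_0,\dots,B_{K-1}$, and a generalized Bell measurement of $A_tE_t$ is made at every transition, including the $K$th, giving $d^{2K}$ addresses. The largest propagated memory is $d^{2K-1}$, in $M_{K-1}$. The output is the $\lfloor\log_2d\rfloor$-bit word for the address, read by a coordinate tree with exact zero-or-one responses, and the upper bound comes from Theorem~\ref{up:causal-upper}.

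The bookkeeping you flag as the main obstacle is already settled by your own memory count. Transition $K$ consumes the half stored in $M_{K-1}$, emits the table value and discards everything, so $M_K$ can be trivial.
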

\begin{proof}
Put $b=\lfloor\log_2d\rfloor$. A table assigns an arbitrary $b$-bit
word $z_a$ to each address $a\in\{1,\ldots,d^2\}^K$.
There are $d^{2K}b$ independent binary coordinates.

Initially the process emits half of a maximally entangled $d$-level pair
and retains its mate $E_1$. At transition $t<K$, it measures the pair
$A_tE_t$ in a generalized Bell basis, stores the resulting $d^2$-ary
digit, and emits half of a fresh maximally entangled pair while retaining
$E_{t+1}$. The stored prefix of Bell outcomes and this retained half have joint
dimension $d^{2t+1}$ within the process memory.
At transition $K$, the last Bell outcome completes $a$. The process emits
the computational-basis state labeled by $z_a$ and discards its remaining
registers. These measurements, conditional preparations, and discards
define normalized channels on all inputs. Dephasing the classical prefix
also specifies their action on off-diagonal input blocks.

To query an address, the tester applies the corresponding generalized
Weyl operator to each received half-pair before returning it as the next
input. With the Bell basis chosen accordingly, each intended address
digit is obtained with probability one. The final measurement reads one
chosen bit of $z_a$. Thus every table coordinate is an exact zero or one
response. A full assignment of the coordinates fixes one complete
process, which witnesses the entire path of the coordinate-query tree.
Its largest propagated memory is $d^{2K-1}$.
This proves the lower bound at margin one and hence at every
$0<\gamma\leq1$.

Equation~\eqref{up:causal-capacity} gives the upper bound.
\end{proof}

\section{Information interfaces and an exact heralded upper bound}
\label{inf:section}

\subsection{Information in one physical trajectory}

The depth of a response tree counts distinguishability across experiments
on fresh process runs, with leaf labels indexing complete process witnesses.
The information retained after one run satisfies a different dimension bound.

\begin{proposition}[One-trajectory information]
\label{inf:trajectory}
Let a finite classical variable $X$ select a complete process of visible
dimension $d$ and horizon $K$. Suppose the tester's initial retained state
is independent of $X$. For any normalized adaptive tester, its final
quantum-classical register $F$ obeys
\begin{equation}
 I(X:F)\leq2(K+1)\log_2d.
 \label{inf:trajectory-bound}
\end{equation}
The same bound holds for the classical transcript of its final measurement.
\end{proposition}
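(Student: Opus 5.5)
The plan is to bound $I(X:F)$ by following how much correlation with $X$ can enter the tester's side through the $2K+1$ visible ports of the run. Purify everything. The tester's initial state is independent of $X$, so before any interaction $I(X:\text{tester})=0$. Let $W_t$ be the tester's full retained system (private workspace, stored outcomes and undelivered carriers) just after it receives $B_t$, and let $W_t^-$ be its system just before it sends $A_{t+1}$. I would track the two kinds of step separately.

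First, an intervention acts only on the tester's side and does not touch $X$ or the process memory. Dilate each instrument as an isometry that writes its outcome coherently into a register and keeps that register. Then data processing gives $I(X:W_t^-)\leq I(X:W_t)$. At the end, the final binary measurement and any discarding of registers cannot increase $I(X:F)$ either.

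Second, consider a port exchange. Sending $A_{t+1}$ removes a $d$-dimensional system from the tester. By the chain rule,
\[
I(X:W^-_t)\leq I(X:W^-_t\setminus A_{t+1})+2\log_2 d,
\]
where the bound $I(X:A\mid \cdot)\leq 2\log_2 d$ is the standard one for a $d$-dimensional quantum system. Receiving $B_{t+1}$ adds a $d$-dimensional system, so
\[
I(X:W_{t+1})\leq I(X:W_t^-\setminus A_{t+1})+I(X:B_{t+1}\mid W_t^-\setminus A_{t+1})\leq I(X:W_t^-\setminus A_{t+1})+2\log_2 d .
\]
The subtlety is to keep the $-2\log_2 d$ available from sending $A$ rather than adding it. Counting naively would charge both directions at every transition and give $4K+2$.

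The better route is the quantum superdense-coding style bound. Receiving a $d$-dimensional system can raise mutual information with a classical $X$ by at most $2\log_2 d$, and only when the tester already holds entanglement with that system. Sending a system never increases $I(X:\text{tester})$. So I would prove the one-step claim $I(X:W\,B)\leq I(X:W)+2\log_2 d$, using
\[
I(X:WB)-I(X:W)=I(X:B\mid W)\leq 2\log_2\dim B ,
\]
and also $I(X:W\setminus A)\leq I(X:W)$ by monotonicity under partial trace. The process's channel acts only on $A_t$ and its memory, whose joint state with $X$ is arbitrary, but the tester's remaining systems are untouched. Hence after receiving $B_t$ the increment is $I(X:B_t\mid W_{t-1}^-\setminus A_t)\leq 2\log_2 d$.

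Summing gives $K+1$ receptions ($B_0,\dots,B_K$) and zero gain from sends and interventions, so $I(X:F)\leq 2(K+1)\log_2 d$. The classical transcript is a further measurement of $F$, so the same bound applies to it by data processing.

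The main obstacle is justifying that the conditional mutual information of a $d$-dimensional system is at most $2\log_2 d$ when $X$ is classical and the conditioning systems are quantum. To do this I would write
\[
I(X:B\mid W)=S(B\mid W)-S(B\mid WX),
\]
and bound each conditional entropy within $[-\log_2 d,\log_2 d]$.
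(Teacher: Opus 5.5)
Your proof is correct and is essentially the paper's proof. Both arguments start from zero initial information and note that local processing, sending, and discarding cannot increase $I(X:\cdot)$; each of the $K+1$ receptions $B_0,\ldots,B_K$ then adds at most $I(X:B_t\mid W)=S(B_t\mid W)-S(B_t\mid XW)\leq 2\log_2 d$, and data processing handles the transcript. Your early chain-rule detour at the sending step is unnecessary, since monotonicity under partial trace already covers it, as you eventually use.
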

\begin{proof}
Let $W_t$ contain everything retained by the tester immediately before
receiving the $d$-dimensional output $B_t$, including its earlier classical
records. The information increment is
\begin{align*}
 I(X:W_tB_t)-I(X:W_t)
 &= I(X:B_t\mid W_t)\\
 &=S(B_t\mid W_t)-S(B_t\mid XW_t)
 \leq2\log_2d.
\end{align*}
The inequality follows from the dimension bounds
$-\log_2d\leq S(B_t\mid V)\leq\log_2d$ for any conditioning
system $V$. Local processing, discarding, and sending part of the tester's
state to the process cannot increase its retained mutual information.
The initial information is zero. Summing over $B_0,\ldots,B_K$ and
using data processing proves the claim.
\end{proof}

For the memory-one qubit phase family in Eq.~\eqref{low:phase-lower},
the logarithm of the number of leaf processes and the perfect-margin
tree depth grow as $K\log K$ across fresh runs. In one physical run,
Eq.~\eqref{inf:trajectory-bound} bounds the tester's retained information
by $2(K+1)$ bits.

\subsection{Exact component programs}

Suppose a response class is implemented by fixed processors, with an
initial program of dimension $D_0$ and a program of dimension $D_t$ at
transition $t$. For each complete target, assume one fixed product state
of these program inputs, used at every node on that target's response path.
With these inputs, the processors simulate the corresponding components
exactly as complete channels, including propagated memory and
reference-entangled inputs. Composing the processors with a tester
then induces a single effect on that
whole-process program. Lemma~\ref{up:information}, with $\eta=0$, gives
\begin{equation}
 \operatorname{sfat}_\gamma
 \leq\left\lfloor
 \frac{\log_2D_0+\sum_{t=1}^K\log_2D_t}{c_\gamma}
 \right\rfloor.
 \label{inf:exact-program}
\end{equation}

For example, consider component input and output spaces $\mathcal A_t$
and $\mathcal B_t$, including any propagated process memory.
The target dependence is specified by the address form below.
Fixed coherent wiring and an auxiliary system carried past a component
with target-independent processing may be composed with this action.
For an arbitrary reference register $R$, consider transitions of the address form
\begin{equation}
(\operatorname{id}_R\otimes\Phi_{t,\boldsymbol\sigma})(\rho_{R\mathcal A_t})
 =\sum_{x=1}^{a_t}
 \langle x|\rho_{R\mathcal A_t}|x\rangle
 \otimes\sigma_{t,x}^{\mathcal B_t},
 \qquad \dim\mathcal A_t=a_t,\quad\dim\mathcal B_t=b_t.
 \label{inf:address-form}
\end{equation}
The fixed processor measures the address, returns the selected factor
of $\bigotimes_x\sigma_{t,x}$, and discards the others. This is exact
also for arbitrary $R$. Its program dimension is $b_t^{a_t}$, so
\begin{equation}
 \operatorname{sfat}_\gamma
 \leq\frac{\log_2D_0+\sum_ta_t\log_2b_t}{c_\gamma}.
 \label{inf:address-bound}
\end{equation}
If $a_t,b_t,D_0\leq dr$, this is
$O_d(Kr\log(dr)/\gamma^2)$.

\subsection{An alternative bound from heralded teleportation}

The optimized deterministic program used in Theorem~\ref{up:optimized-program}
trades simulation accuracy against the number of ports. An exact
successful branch gives another bound, with a different dependence on
the margin.

\begin{proposition}[Heralded-program capacity bound]
\label{inf:heralded}
Let $d\geq2$, $r\geq1$, $K\geq1$, and $n=dr$. Set
\[
 s=2K(n^2-1),\qquad p=\frac{2K}{2K+1},\qquad
 D_h=n^{1+4K^2(n^2-1)}.
\]
For every $0<\gamma\leq1$,
\begin{equation}
 \operatorname{sfat}_\gamma\mathcal P_{d,r}(K)
 \leq\left\lfloor\frac{\log_2D_h}{c_{p^K\gamma}}\right\rfloor
 <\frac{8\ln2}{\gamma^2}\log_2D_h.
 \label{inf:heralded-bound}
\end{equation}
\end{proposition}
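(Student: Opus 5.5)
The plan is to build an exact but heralded whole-process program from probabilistic port-based teleportation and then feed it to Lemma~\ref{up:information} with $\eta=0$. A known affine rescaling of the responses absorbs the failure branch.

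\textbf{Program construction.} Pad every process to the common carrier $\mathbb C^n$ as in Eq.~\eqref{up:padding}. Store its initial state directly as a density matrix on $\mathbb C^n$, which is a factor of dimension $n$. For each transition $\Phi_t$, use the standard probabilistic port-based teleportation scheme with $N=s=2K(n^2-1)$ ports of dimension $n$ and a maximally entangled resource $|\Omega_n\rangle^{\otimes N}$.

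The program for transition $t$ is $(\operatorname{id}_A\otimes\Phi_t^{\otimes N})(|\Omega_n\rangle\langle\Omega_n|^{\otimes N})$, a state of dimension $n^{2N}$. Over $K$ transitions this gives
\[
 \log_2 D_h=\log_2 n+K\cdot 2N\log_2 n=\bigl[1+4K^2(n^2-1)\bigr]\log_2n .
\]
The fixed processor performs Alice's heralded measurement on the visible-plus-memory input and the stored Alice halves. On success with port $i$, it outputs Bob's port $i$.

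As in Lemma~\ref{up:compression}, Bob-side channels commute with Alice's instrument, and trace preservation removes $\Phi_t$ from the discarded ports. Hence the success branch implements exactly $\Phi_t$ on the input, including all reference correlations.

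\textbf{Input-independent success probability.} I will invoke the known fact that, for the maximally entangled resource, the success POVM element is proportional to the identity on the teleported input. The success probability is therefore
\[
 \frac{N}{N+n^2-1}=\frac{2K}{2K+1}=p
\]
for every input and reference. This is also where I expect the main care to be needed: the statement must be independent of arbitrary adaptive inputs, correlations with the tester's workspace and earlier programs, and the Bob-side channel. I would verify it by the $U\otimes\overline U$ covariance of the square-root measurement for the signals $\sigma_i$, which forces the reduced success operator on the input to be scalar.

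\textbf{Simulated effect.} Given a tester $E$, run the simulator. At a failed transition, feed the tester an arbitrary fixed state and continue, but record the failure. Accept if no failure occurred and $E$ accepts; after any failure, output an independent fair coin. The heralds have probability $p$ conditioned on everything earlier. All-success therefore has probability exactly $p^K$, and conditioned on it the simulation is exact. The result is a single effect $0\le A_E\le I$ with
\[
 \Tr(A_E\sigma_P)=p^Kf_P(E)+\tfrac12(1-p^K)\quad\text{for every }P .
\]
A $\gamma$-shattered tree with thresholds $s_u$ is then a $p^K\gamma$-shattered tree for the program states, with thresholds $p^Ks_u+(1-p^K)/2$. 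The serial information argument in the proof of Lemma~\ref{up:information} then gives the first inequality in Eq.~\eqref{inf:heralded-bound}.

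\textbf{Numerical bound.} For the second inequality, use $c_z\ge z^2/(2\ln2)$, which follows from $1-h_2((1-z)/2)\ge z^2/(2\ln 2)$ via the Taylor series of the binary entropy. Also, $(2K/(2K+1))^{2K}$ is decreasing in $K$ with limit $1/e$, so $p^{2K}>1/e>1/4$. Therefore
\[
 c_{p^K\gamma}>\frac{\gamma^2}{8\ln2},
\]
which yields the strict bound stated in Eq.~\eqref{inf:heralded-bound}.
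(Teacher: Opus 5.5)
Your architecture follows the paper's proof: padding to $\mathbb C^n$, an exact $n$-dimensional initial factor, and one $n^{2s}$-dimensional port program per transition with $\Phi_t$ applied to every receiver port, so your $D_h$ count is right. You then transfer the tree at scale $p^K\gamma$ and apply Lemma~\ref{up:information} with $\eta=0$. Your fair-coin handling of failures is a harmless variant of the paper's rejection, which gives $\Tr(M_E\Omega_P)=p^Kf_P(E)$. Your estimate $p^{2K}>1/e>1/4$ is also fine.

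\textbf{The gap is the teleportation primitive.} You use the maximally entangled resource $|\Omega_n\rangle^{\otimes N}$ and assert that its heralded success probability is $N/(N+n^2-1)$. That value is the \emph{optimal} probabilistic port-based success probability. It is attained only by an optimized, non-maximally-entangled resource with its matching POVM, which is the scheme the paper invokes from Ref.~\cite{studzinski2017pbt}. With plain maximally entangled pairs the success probability is strictly smaller. Its failure probability decays only like $N^{-1/2}$, not like $(n^2-1)/N$; for qubit ports it is $1-\sqrt{8/(\pi N)}+O(1/N)$. With $N=s=2K(n^2-1)$, the quantity $K(1-p_{\rm EPR})$ then grows like a constant times $\sqrt K$, so the all-success probability $p_{\rm EPR}^K$ tends to zero. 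Your argument would give only $\lfloor\log_2D_h/c_{p_{\rm EPR}^K\gamma}\rfloor$. That is weaker than the first inequality, which is stated with $p=2K/(2K+1)$, and the $K$-independent bound $8\ln2\log_2D_h/\gamma^2$ would fail for large $K$.

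The verification you sketch has a related problem. The square-root measurement for the signals $\sigma_i$ belongs to \emph{deterministic} port-based teleportation, as in Lemma~\ref{up:pbt-error}. It has no failure outcome and returns the input only approximately, so it cannot supply an exact heralded branch.

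\textbf{The repair.} Replace both the resource and the measurement by the optimized probabilistic scheme on the same $N$ Alice ports and $N$ Bob ports. The program dimension, and hence $D_h$, is unchanged. Its successful branches are exact, with total probability $s/(s+n^2-1)=p$. Input independence, including arbitrary references, needs no covariance calculation. Exactness on success forces every Kraus operator of the summed success map to preserve every ray, so each is a scalar multiple of the identity. The summed map is therefore $p\,\operatorname{id}_n$, and applying $\Phi_t$ to every receiver port turns it into $p\Phi_t$. With this substitution the rest of your argument goes through as written.
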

\begin{proof}
Optimized probabilistic port-based teleportation in dimension $n$ with
$s$ ports has perfect successful branches and success probability
$p=s/(s+n^2-1)$~\cite{studzinski2017pbt}. Sum those branches after
routing the successful output port to a common output. Its completely
positive map is $p\operatorname{id}_n$. Indeed, perfect conditional
fidelity makes each Kraus operator preserve every ray. Such a linear
operator is a scalar identity, and the average success probability
fixes the sum of squared scalars to $p$. This equality of complete maps
also holds with arbitrary untouched reference systems.

For a channel $\Phi$, apply $\Phi$ to each receiver port of the
optimized resource state. The resulting state $\omega_\Phi$ has
dimension $n^{2s}$. The sender's measurement commutes with these channel
applications, and trace preservation on unselected ports leaves their
discard unchanged. The fixed processor acting on $\omega_\Phi$ thus
has complete successful map $p\Phi$.

Use the padded realization in Eq.~\eqref{up:padding}. Its initial state
and $K$ channel programs form a single state $\Omega_P$ of dimension
$D_h=n^{1+2sK}$. Run a tester through the fixed processors and reject
whenever a herald reports failure. The accepting event is an effect
$M_E$ on the program, satisfying
\[
 \operatorname{Tr}(M_E\Omega_P)=p^Kf_P(E).
\]
Reference-complete equality at each transition proves this identity
through arbitrary adaptive prefixes. A $\gamma$-shattered process tree
therefore yields a $p^K\gamma$-shattered state-program tree, with its
thresholds multiplied by $p^K$. Lemma~\ref{up:information} proves the
first inequality.

Since $c_z\geq z^2/(2\ln2)$ for $0\leq z\leq1$, and
Bernoulli's inequality gives
$p^K\geq1-K/(2K+1)>1/2$, the second inequality follows.
\end{proof}

The physical-cover, deterministic-program, heralded-program,
and full-causal bounds may be minimized at the
required parameters.

\section{Capacity with known dephasing and a classical address}
\label{noise:section}

We prove Theorem~\ref{main:noisy-law} by constructing short coherent
phase tests and an exact finite program for the noisy process. The
Boolean coordinates are those of Theorem~\ref{low:phase-tree}; the
circuit calculation below gives their visibility under the specified
noise. All logarithms in this appendix have base two, except for
$\ln$.

\subsection{Complete channel model}
\label{noise:model}

Use the model of Section~\ref{main:noise}, with $R=2^h$, $T\geq1$
and $K=h+T$. The loading length $h$ may exceed $T$.
Identify the address strings $a\in\{0,1\}^h$ with
$j\in\{1,\ldots,R\}$. The initialized state is
\begin{equation}
 \rho_{B_0M_0}
 =|0\rangle\langle0|_{B_0}\otimes
   |0^h\rangle\langle0^h|_{M_0}.
 \label{noise:initial-state}
\end{equation}
At loading transition $\ell=1,\ldots,h$, the
channel~\eqref{main:noisy-loader} has Kraus operators from
$A\otimes M$ to $B\otimes M$
\begin{equation}
 L^{(\ell)}_{a,b}
 =|0\rangle_B|w_\ell(a,b)\rangle_M
      \langle b|_A\langle a|_M,
 \qquad a\in\{0,1\}^h,\quad b\in\{0,1\}.
 \label{noise:loading-kraus}
\end{equation}
Each output ket is normalized and the input bras form an orthonormal
basis. Hence
$\sum_{a,b}(L^{(\ell)}_{a,b})^*L^{(\ell)}_{a,b}=I_{AM}$.
The map is therefore CPTP on the entire input space, including arbitrary
memory and reference-entangled inputs. Starting from
\eqref{noise:initial-state}, the $h$ measured input bits form the
address, so supplying the basis bits of $j$ selects it deterministically.
For $h=0$, the memory is one-dimensional and no loading is needed.

Write $\Phi_{\theta,p}=\mathcal D_{\theta,p}$ for the local
channel~\eqref{main:noisy-local}, with the same $V_\theta$ and $Z$.
The rate $p\in[0,1/4]$ is known and fixed throughout the class;
the complete table belongs to
$(\mathbb R/2\pi\mathbb Z)^{T\times R}$. The convention
$V_\theta=e^{i\theta}U_{-\theta}$ relates it to
\eqref{low:phase-process} and fixes the coherence sign below.
The full phase transition~\eqref{main:noisy-address} has Kraus operators
\begin{equation}
 K_{j,z}=\sqrt{q_z}\,V_{\theta_{tj}}Z^z\otimes
             |j\rangle\langle j|,
 \qquad q_0=1-p/2,\quad q_1=p/2,
 \label{noise:phase-kraus}
\end{equation}
whose squared products sum to $I_{AM}$. For any reference $W$,
the extension of \eqref{main:noisy-address} applies
$\operatorname{id}_W\otimes\Phi_{\theta_{tj},p}$ inside each
address block. It deletes cross-address coherence and preserves the
classical address, while allowing reference-entangled visible inputs
within each block. The process discards $M_K$ at the end.

Measurement records and fresh noise environments are discarded locally;
the address is the entire propagated memory. Classical--quantum
correlations with the tester are included in the channel above.
As in Section~\ref{main:noise}, normalized testers access only the
visible interface and have target-independent initial states, arbitrary
finite private space, ideal preparations, adaptive operations and final
measurements. Each query is a fresh
run of all $h+T$ transitions of the same initialized target.
Every leaf of a response tree for $\mathcal N_{R,T,p}$ has one fixed table
with margin $\gamma/2$ on each side of its thresholds.
The specified realization uses the full $R$-dimensional register,
including for degenerate tables that admit a smaller realization.

\subsection{Exact finite simulation of the noisy channels}
\label{noise:simulation}

The input quantum state in \eqref{main:noisy-address} is transmitted
through a selected channel. We therefore use a full-channel simulation
of the form required in \eqref{inf:exact-program}.
Formula~\eqref{inf:address-form} describes measure-and-prepare transitions.

\begin{lemma}[A common polygon of phase channels]
\label{noise:polygon}
For $0<p\leq1/4$, set
\begin{equation}
 m=\left\lceil\frac{\pi}{\arccos(1-p)}\right\rceil,
 \qquad v_k=e^{2\pi ik/m},\quad k=0,\ldots,m-1.
 \label{noise:polygon-size}
\end{equation}
For every $\theta$, there are nonnegative weights $w_k(\theta)$
summing to one such that
\begin{equation}
 \Phi_{\theta,p}
 =\sum_{k=0}^{m-1}w_k(\theta)
     \operatorname{Ad}_{V_{2\pi k/m}},
 \qquad \operatorname{Ad}_V(X)=VXV^*.
 \label{noise:polygon-identity}
\end{equation}
The equality is an identity of channels and holds after adjoining
any reference system.
\end{lemma}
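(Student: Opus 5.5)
The plan is to reduce the channel identity \eqref{noise:polygon-identity} to a statement about a single complex number, then solve that by elementary planar geometry. Both sides of \eqref{noise:polygon-identity} are linear maps on $2\times2$ matrices, so it suffices to compare them on the matrix units $|a\rangle\langle b|$, $a,b\in\{0,1\}$. Once two linear maps agree on a basis, they agree everywhere. Then $\operatorname{id}_W\otimes(\cdot)$ of the two sides also agree on every product basis element $|x\rangle\langle y|\otimes|a\rangle\langle b|$. This gives the reference-extended identity with no separate argument.

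\textbf{Step 1: both sides act diagonally on matrix units.}
From \eqref{main:noisy-local} and $V_\theta=\diag(e^{i\theta},1)$, the map $\Phi_{\theta,p}$ fixes $|0\rangle\langle0|$ and $|1\rangle\langle1|$. Conjugation by $Z$ negates the off-diagonal units, so
\[
 \Phi_{\theta,p}(|0\rangle\langle1|)
 =(1-p/2-p/2)\,e^{i\theta}|0\rangle\langle1|
 =(1-p)e^{i\theta}|0\rangle\langle1|,
\]
and $|1\rangle\langle0|$ receives the complex conjugate factor. Likewise each $\operatorname{Ad}_{V_{2\pi k/m}}$ fixes the diagonal units and multiplies $|0\rangle\langle1|$ by $v_k$. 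A convex combination with weights $w_k$ therefore fixes the diagonal and multiplies $|0\rangle\langle1|$ by $\sum_kw_kv_k$, and $|1\rangle\langle0|$ by its conjugate, because the weights are real.

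\textbf{Step 2: reduction to a convex-hull condition.}
By Step 1, the identity holds if and only if
\[
 (1-p)e^{i\theta}\in\operatorname{conv}\{v_0,\ldots,v_{m-1}\}
\]
for every $\theta$. This is the main point to verify, but it is elementary. For $m\geq3$, the regular polygon with vertices $v_k$ contains the closed disk of radius equal to its inradius $\cos(\pi/m)$. Each edge lies on a line at distance $\cos(\pi/m)$ from the origin, and the polygon is the intersection of the corresponding half-planes. So it suffices that $\cos(\pi/m)\geq1-p$.

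\textbf{Step 3: check the inradius condition.}
Since $0<1-p<1$, we have $\arccos(1-p)\in(0,\pi/2)$. The choice $m\geq\pi/\arccos(1-p)$ gives $\pi/m\leq\arccos(1-p)$, and because $\cos$ is decreasing on $[0,\pi]$ this yields $\cos(\pi/m)\geq1-p$. It remains to confirm $m\geq3$, so that the polygon is nondegenerate. For $p\leq1/4$ we have $\arccos(1-p)\leq\arccos(3/4)<\pi/4$, so in fact $m\geq5$.

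\textbf{Step 4: existence of the weights.}
For each $\theta$, choose nonnegative $w_k(\theta)$ summing to one whose barycenter is $(1-p)e^{i\theta}$. Such weights exist by the definition of the convex hull; Carathéodory even allows at most three nonzero weights. By Steps 1 and 2, this yields \eqref{noise:polygon-identity} on all $2\times2$ inputs, and by the basis argument above it holds after adjoining any reference.
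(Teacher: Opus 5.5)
Your proposal is correct and follows the paper's proof essentially step for step: compare both sides on the four matrix units, then use that the inradius satisfies $\cos(\pi/m)\geq1-p$, so $(1-p)e^{i\theta}$ lies in the polygon and the reference extension follows by tensoring with the identity. The differences are minor: you obtain the weights by existence (Carath\'eodory), whereas the paper gives an explicit ray-and-edge choice, and you additionally check $m\geq5$ so the polygon is nondegenerate.
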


\begin{proof}
The regular polygon with vertices $v_k$ has inradius
$\cos(\pi/m)$. The definition of $m$ gives
\begin{equation}
 \cos(\pi/m)\geq1-p.
 \label{noise:inradius}
\end{equation}
Consequently it contains the disk $|z|\leq1-p$, and in particular
$z=(1-p)e^{i\theta}$. One explicit choice of weights is obtained
by intersecting the ray through $z$ with the polygon boundary at
$y$. Express $y$ as a convex combination of the two vertices of its
edge, then use
\[
 z=\frac{|z|}{|y|}y+
       \left(1-\frac{|z|}{|y|}\right)0.
\]
The center is the uniform average of all vertices. Fix either adjacent
edge by a deterministic rule when the intersection is a vertex.
This gives $\sum_kw_k(\theta)v_k=(1-p)e^{i\theta}$.

Both sides of \eqref{noise:polygon-identity} preserve the two diagonal
matrix units. On $|0\rangle\langle1|$, the right side multiplies
by $\sum_kw_kv_k=(1-p)e^{i\theta}$, and on its adjoint it
multiplies by the complex conjugate. These four equalities prove the
linear-map identity. Tensoring that identity with
$\operatorname{id}_W$ proves the reference extension.
\end{proof}

For a complete discrete table
$\boldsymbol k=(k_{tj})\in\{0,\ldots,m-1\}^{T\times R}$, define
\begin{equation}
 W_\Theta(\boldsymbol k)
 =\prod_{t=1}^T\prod_{j=1}^R
       w_{k_{tj}}(\theta_{tj}).
 \label{noise:table-weights}
\end{equation}
Let $P_{\boldsymbol k}$ be the known complete process with
initialization \eqref{noise:initial-state} and loading
\eqref{noise:loading-kraus}, followed at slot $t$ by the address
measurement and the selected unitary $V_{2\pi k_{tj}/m}$.
Averaging its full transition over all $k_{tj}$ at a fixed $t$
gives \eqref{main:noisy-address}. Indeed, in each address block
the selected coordinate gives \eqref{noise:polygon-identity},
while all other coordinate weights sum to one. This argument is
valid for blocks correlated with the tester.

For a fixed adaptive tester, its response is multilinear in the
successive transitions. Expanding them in time therefore gives
\begin{equation}
 f_\Theta(E)=\sum_{\boldsymbol k}
       W_\Theta(\boldsymbol k)f_{P_{\boldsymbol k}}(E)
 \quad\text{for every normalized tester }E.
 \label{noise:whole-process-mixture}
\end{equation}
Each coordinate $(t,j)$ occurs in only one transition of a run.
The product in \eqref{noise:table-weights} thus implements the
independent local noise of the original process.

We count auxiliary program storage separately from the propagated
process memory. Enumerating the program basis and preparing exact
real-valued weights are computational tasks outside the dimension bound
below. Each complete target fixes one program state for all of its node
effects.
On a common space of dimension $D=m^{RT}$, define
\begin{equation}
 \Omega_\Theta=\sum_{\boldsymbol k}W_\Theta(\boldsymbol k)
       |\boldsymbol k\rangle\langle\boldsymbol k|.
 \label{noise:whole-program}
\end{equation}
A fixed processor measures the program in this basis and runs
$P_{\boldsymbol k}$ according to the measured complete table.
Measurement followed by this controlled circuit is CPTP on every
program input. Composing it with a normalized tester produces the
target-independent effect
\begin{equation}
 A_E=\sum_{\boldsymbol k}f_{P_{\boldsymbol k}}(E)
              |\boldsymbol k\rangle\langle\boldsymbol k|,
 \qquad 0\leq A_E\leq I,
 \qquad \operatorname{Tr}(A_E\Omega_\Theta)=f_\Theta(E).
 \label{noise:program-effect}
\end{equation}
\subsection{Information and covering bounds}
\label{noise:upper-bounds}

Apply Lemma~\ref{up:information} with $\eta=0$ to the complete
target programs in \eqref{noise:whole-program} and the target-independent
effects in \eqref{noise:program-effect}. With $D=m^{RT}$ and
$c_\gamma$ as in \eqref{up:information-increment}, this gives, for $p>0$,
\begin{equation}
 \operatorname{sfat}_\gamma(\mathcal N_{R,T,p})
 \leq\left\lfloor\frac{RT\log_2m}{c_\gamma}\right\rfloor.
 \label{noise:program-upper}
\end{equation}

\begin{lemma}[Phase cover]
\label{noise:cover}
For every $p\in[0,1/4]$ and $0<\gamma\leq1$, let
$N=\lceil4\pi T/\gamma\rceil$. Then
\begin{equation}
 \operatorname{sfat}_\gamma(\mathcal N_{R,T,p})
 \leq\lfloor RT\log_2N\rfloor.
 \label{noise:cover-upper}
\end{equation}
\end{lemma}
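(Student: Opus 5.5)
The bound follows from a finite cover by phase tables together with the first-divergence counting used in Proposition~\ref{up:physical-cover}.

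\textbf{Grid and hybrid estimate.} Round every table entry $\theta_{tj}$ to the nearest point of the grid $\{2\pi k/N: 0\leq k<N\}$, so that $|\theta_{tj}-\theta'_{tj}|\leq\pi/N$ modulo $2\pi$. This produces at most $N^{RT}$ rounded tables, and each one is a member of $\mathcal N_{R,T,p}$ with the same known $p$ and the same loader.

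To bound the response change, use the hybrid argument: replace the phase transitions one slot at a time, keeping the rest of the process fixed. At slot $t$ the two transitions differ only inside each address block, through $\Phi_{\theta,p}$ versus $\Phi_{\theta',p}$. Both channels are $\mathcal D_{\cdot,p}=\mathcal D_{0,p}\circ\operatorname{Ad}_{V_\cdot}$, since dephasing commutes with diagonal unitaries. Contractivity of the common dephasing therefore gives
\[
 \|\Phi_{\theta,p}-\Phi_{\theta',p}\|_\diamond\leq\|\operatorname{Ad}_{V_\theta}-\operatorname{Ad}_{V_{\theta'}}\|_\diamond\leq2\|V_\theta-V_{\theta'}\|\leq2|\theta-\theta'|.
\]
The estimate $2\|V-W\|$ was shown in the proof of Proposition~\ref{up:rank-upper}. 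The addressed channel is block diagonal: it first measures the address and then applies a channel in each block. Its diamond distance is therefore the maximum over blocks, which is again at most $2\pi/N$. This holds with arbitrary references, because the address measurement is common to both channels.

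Summing over the $T$ slots, and noting that the loader and initial state are identical, the terminal trace distance is at most $2\pi T/N\leq\gamma/2$. Every acceptance probability therefore moves by at most $\pi T/N\leq\gamma/4$, uniformly over all adaptive testers.

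\textbf{Counting.} Fix a $\gamma$-shattered tree of depth $L$ with leaf witnesses $P_x$. Suppose two distinct leaves $x,x'$ rounded to the same grid table. At their first divergent node the two witnesses have responses on opposite sides of the common threshold, each at margin $\gamma/2$, so the responses differ by at least $\gamma$. But each response is within $\gamma/4$ of the common rounded table's response, so they differ by at most $\gamma/2<\gamma$, a contradiction. Hence rounding is injective on the $2^L$ leaves, giving $2^L\leq N^{RT}$. Taking logarithms and the integer floor, and then the supremum over finite trees, gives \eqref{noise:cover-upper}.

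\textbf{Expected difficulty.} There is no real obstacle. The only points needing care are that the rounding is taken modulo $2\pi$, that the estimate does not depend on $p$ (so the case $p=0$ is covered), and that the diamond bound is justified for the addressed channel with cross-address coherent inputs, which the block-diagonal structure handles.
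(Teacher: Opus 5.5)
Your proof is correct and follows essentially the same route as the paper. You round each entry to an $N$-point phase grid, run a slot-by-slot hybrid argument with the $2\delta$ diamond bound (the common dephasing acts as postprocessing, and the address measurement makes the bound independent of $R$), and use first-divergence injectivity to get $2^L\leq N^{RT}$. The one slip is terminological: the quantity you bound by $2\pi T/N$ is the trace norm of the terminal difference, not the half-norm trace distance, and it is this reading that gives your $\pi T/N\leq\gamma/4$ probability estimate.
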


\begin{proof}
Take $N$ equally spaced points on the phase circle and round every
table entry to a nearest point. The circular distance is at most
$\delta=\pi/N\leq\gamma/(4T)$. Choosing corresponding real lifts
gives $\|V_\theta-V_{\theta'}\|\leq\delta$. Expanding a difference
of unitary conjugations and applying the trace-norm multiplication
bound, also on a reference extension, gives
\begin{equation}
 \|\Phi_{\theta,p}-\Phi_{\theta',p}\|_\diamond
 \leq2\delta.
 \label{noise:local-distance}
\end{equation}
The dephasing is a common postprocessing channel, so its inclusion
preserves this bound.

For the full address transition, first measure the address of an
arbitrary normalized input state with reference. This produces
positive, possibly subnormalized blocks $\rho_j^{WA}$ with
$\sum_j\operatorname{Tr}\rho_j^{WA}=1$. The difference between
the outputs of the two rounded transitions obeys
\begin{align}
 &\left\|\sum_j
  (\operatorname{id}_W\otimes
       [\Phi_{\theta_{tj},p}-\Phi_{\theta'_{tj},p}])
       (\rho_j^{WA})\otimes|j\rangle\langle j|\right\|_1
       \nonumber\\
 &\hspace{25mm}\leq
       2\delta\sum_j\operatorname{Tr}\rho_j^{WA}=2\delta.
 \label{noise:address-distance}
\end{align}
For a difference of channels, optimizing over normalized states
with a reference suffices for the diamond norm. Thus the same
$2\delta$ bound holds for the complete transition, independently
of the number of address blocks.

Fix an arbitrary adaptive tester and replace the $T$ phase
transitions one at a time. Its retained workspace is included in
the reference for each replacement. All preceding and succeeding
operations are CPTP, so trace-distance contractivity bounds the
final trace-norm error by $2T\delta$. The binary probability error
is at most $T\delta\leq\gamma/4$. The loading maps are identical
and contribute no replacement error.

There are $N^{RT}$ rounded physical phase tables. At the first
divergence of two distinct leaf witnesses in a shattered tree,
their responses at the common node differ by at least $\gamma$.
If both witnesses had the same rounded center, their difference
would be at most $2(\gamma/4)=\gamma/2$. Hence different leaves
require different centers, and $2^L\leq N^{RT}$ for every finite
shattered depth $L$.
\end{proof}

Combining \eqref{noise:program-upper} and \eqref{noise:cover-upper}
gives the finite estimate
\begin{equation}
 \operatorname{sfat}_\gamma(\mathcal N_{R,T,p})
 \leq\left\lfloor RT\min\left\{
      \log_2\left\lceil\frac{4\pi T}{\gamma}\right\rceil,
      \frac{\log_2m}{c_\gamma}\right\}\right\rfloor
 \quad(p>0).
 \label{noise:finite-upper}
\end{equation}
At $p=0$, the covering term still applies.

Put $H=\min(T,1/p)$, with $1/0=\infty$. Since
$N\leq(1+4\pi/\gamma)(1+T)$ and $\log_2(1+T)\geq1$,
\begin{equation}
 \log_2N\leq
   [1+\log_2(1+4\pi/\gamma)]\log_2(1+T).
 \label{noise:cover-simplification}
\end{equation}
For $p>0$, the inequality $1-\cos x\leq x^2/2$ gives
$\arccos(1-p)\geq\sqrt{2p}$, and therefore
\begin{equation}
 m\leq1+\frac{\pi}{\sqrt{2p}}\leq\frac4{\sqrt p},
 \qquad \log_2m\leq2\log_2(1+1/p).
 \label{noise:polygon-simplification}
\end{equation}
For the last inequality, set $u=1/p\geq4$ and use
$2+\tfrac12\log_2u\leq\tfrac32\log_2u
\leq2\log_2(1+u)$. Use the cover when $T\leq1/p$ and the
program bound when $T>1/p$. The resulting upper constant is
\begin{equation}
 \kappa_\gamma=
 \max\left\{1+\log_2(1+4\pi/\gamma),\frac2{c_\gamma}\right\},
 \qquad
 \operatorname{sfat}_\gamma(\mathcal N_{R,T,p})
 \leq\kappa_\gamma RT\log_2(1+H).
 \label{noise:uniform-upper}
\end{equation}
The cover also proves this estimate at $p=0$.

\subsection{A retained control and a reset carrier}
\label{noise:parking}

The lower bound uses a tester with one coherent control qubit $C$
and one reusable visible carrier $A$. The tester uses deterministic
known controls, without postselection or error correction.
After deterministically
loading address $j$, prepare $C$ in $|+\rangle$ and $A$ in
$|0\rangle$. For $s\in\{-1,0,1\}$, define the known preparation
unitaries, in the order $C,A$, by
\begin{equation}
 W_+=\operatorname{CNOT}_{C\to A},\qquad
 W_-=X_A\operatorname{CNOT}_{C\to A},\qquad W_0=I.
 \label{noise:preparation-gates}
\end{equation}
The carrier values on control branches $0,1$ are respectively
$(0,1)$, $(1,0)$, and $(0,0)$. At each phase slot the tester applies
$W_s$, sends only the carrier once forward through the process,
receives the output, and applies the known preparation inverse $W_s^*$.
Pauli $X_A$ on the CNOT target qubit commutes with
this CNOT, so $W_-$ may also be reversed with the same gates.

To compute the reduced action on $C$, let $a_c$ be the prepared
carrier value on branch $c$. A noise Kraus branch
$V_\theta Z^z$ multiplies that branch by
\begin{equation}
 e^{i\theta(1-a_c)}(-1)^{za_c}.
 \label{noise:kraus-phase}
\end{equation}
Undoing $W_s$ returns the carrier to $|0\rangle$ in each Kraus
branch, decoupled from the control. The control diagonals are
unchanged. Applying \eqref{noise:kraus-phase} to its $01$ entry
and summing the two noise branches gives the multipliers
\begin{equation}
 \begin{cases}
 (1-p)e^{i\theta},&s=1,\\
 (1-p)e^{-i\theta},&s=-1,\\
 1,&s=0.
 \end{cases}
 \label{noise:control-multiplier}
\end{equation}
This is a channel identity on the control, so it can be composed
over slots after the fresh noise environment at each slot is
discarded. For a complete string $s\in\{-1,0,1\}^T$, put
$S=\sum_ts_t\theta_{tj}$ and
$v_s=(1-p)^{|\operatorname{supp}s|}$, where
$\operatorname{supp}s=\{t:s_t\ne0\}$ is the set of active slots.
The final control state is
\begin{equation}
 \rho_C=\frac12
  \begin{pmatrix}1&v_se^{iS}\\v_se^{-iS}&1\end{pmatrix}.
 \label{noise:control-state}
\end{equation}
Projection onto $(|0\rangle+e^{-i\phi}|1\rangle)/\sqrt2$ gives
the normalized binary response
\begin{equation}
 f_\Theta(E_{j,s,\phi})
 =\frac{1+v_s\cos(S-\phi)}2.
 \label{noise:signed-response}
\end{equation}
At a slot with $s_t=0$, the carrier is in $|0\rangle$ and the
control remains in the tester's private space. This explains why
only active slots contribute to $v_s$.

\subsection{Two addresses and two blocks of length two}
\label{noise:two-block-example}

Consider $h=1$, $R=2$, and $T=4$, with phase blocks $(1,2)$
and $(3,4)$. Each node begins by supplying the selected address
bit to the loading channel. The following table specifies all
required slot patterns. An entry $W_s$ denotes the entire sequence
$W_s$, one process transition, and $W_s^*$; the tester retains
$C$ throughout all four slots.
\begin{center}
\begin{tabular}{lccccc}
\toprule
Queried phase & Slot 1 & Slot 2 & Slot 3 & Slot 4 & Visibility\\
\midrule
$\theta_{1j}+\theta_{2j}$ & $W_+$ & $W_+$ & $W_0$ & $W_0$ & $(1-p)^2$\\
$-\theta_{1j}+\theta_{2j}$ & $W_-$ & $W_+$ & $W_0$ & $W_0$ & $(1-p)^2$\\
$\theta_{2j}$ & $W_0$ & $W_+$ & $W_0$ & $W_0$ & $1-p$\\
$\theta_{3j}+\theta_{4j}$ & $W_0$ & $W_0$ & $W_+$ & $W_+$ & $(1-p)^2$\\
$-\theta_{3j}+\theta_{4j}$ & $W_0$ & $W_0$ & $W_-$ & $W_+$ & $(1-p)^2$\\
$\theta_{4j}$ & $W_0$ & $W_0$ & $W_0$ & $W_+$ & $1-p$\\
\bottomrule
\end{tabular}
\end{center}

For either block $(a,b)$ at a selected address, independently choose
\begin{equation}
 \alpha=\theta_{aj}+\theta_{bj}=\pi\varepsilon,
 \qquad \beta=\theta_{bj}=\frac{\pi m_0}{2},
 \qquad \varepsilon\in\{0,1\},\quad m_0\in\{0,1,2,3\}.
 \label{noise:two-block-labels}
\end{equation}
Use, in order, the sum pattern with $\phi=0$, the difference
pattern with $\phi=-\alpha$, and the final-slot pattern with
$\phi=\pi(m_0\bmod2)/2$. Their cosine factors in
\eqref{noise:signed-response} are respectively
\begin{equation}
 (-1)^\varepsilon,\qquad
 (-1)^{m_0\bmod2},\qquad
 (-1)^{\lfloor m_0/2\rfloor}.
 \label{noise:two-block-responses}
\end{equation}
Indeed, the difference phase is $2\beta-\alpha$, and subtraction
of its offset leaves $2\beta=\pi m_0$. For the last query,
$\beta-\pi(m_0\bmod2)/2=\pi\lfloor m_0/2\rfloor$.
The first revealed bit specifies $\alpha$; the second specifies
$m_0\bmod2$. Thus every offset is known at its node.

These three queries have responses
$(1\pm(1-p)^2)/2$, $(1\pm(1-p)^2)/2$, and
$(1\pm(1-p))/2$. Use threshold $1/2$ and assign the positive
branch to exponent bit zero. Make these choices independently
for the two blocks at both addresses. Every complete assignment
fixes all eight entries by
$\theta_{aj}=\alpha-\beta$ and $\theta_{bj}=\beta$, and gives
one witness for a complete depth-$12$ tree. The same $4\times2$
phase table serves every node of its path.

Even at $p=1/4$, a two-slot query has visibility $9/16$, which
exceeds every $\gamma\leq1/16$. Sending a single coherent probe
through all four slots would instead give visibility $(1-p)^4$,
including when some target phases are zero. At $p=1/4$ these
visibilities are $0.5625$ and $0.31640625$. Retaining the control
outside the process during inactive slots is therefore a substantive
part of the experiment. The uniform construction below chooses
shorter blocks when needed.

\subsection{Boolean phase trees in noisy blocks}
\label{noise:boolean}

For a dyadic block of length $n=2^\ell$ at one address, apply
Theorem~\ref{low:phase-tree} with its horizon $K$, exponent $n$
and coordinate set $T$ replaced by $n$, $\ell$ and $A$.
The slots are $Z\subseteq[\ell]$. Its subset-sum inversion
\eqref{low:mobius-forward}--\eqref{low:mobius-inverse} assigns one
fixed phase table to every choice of the labels
$m_A\in\{0,\ldots,2^{|A|+1}-1\}$, with
$\alpha_A=2\pi m_A/2^{|A|+1}$. Their total bit count is
\begin{equation}
 d(n)=\sum_{A\subseteq[\ell]}(|A|+1)
     =n\left(1+\frac\ell2\right).
 \label{noise:block-depth}
\end{equation}
This includes the singleton $\ell=0$, with depth one.

Order $A$ by increasing size, fix an order within each size, and read
each label from least to most significant bit, as in the phase tree.
At bit $k$ the earlier labels determine $r_k=m_A\bmod2^k$ and
every proper-subset coordinate. Choose $J\subseteq A$ of size
$|A|-k$. Equations~\eqref{low:signed-polynomial} and
\eqref{low:known-phase} supply the sign string $q_{A,J}$ and known
offset $\beta_{A,J}$, respectively. With
$\phi=\beta_{A,J}+2\pi r_k/2^{k+1}$, the bit-isolation identity is
\begin{equation}
 \sum_Zq_{A,J}(Z)\theta_Z-\phi=\pi b_k\pmod{2\pi},
 \label{noise:bit-isolation}
\end{equation}
where $b_k$ is the $k$th digit of $m_A$. It holds for every completion
of the prefix; $J=\varnothing$ has $\beta_{A,J}=0$.

Use $q_{A,J}$ as the sign string on this block and zero at every
other phase slot. Equation~\eqref{noise:signed-response} becomes
\begin{equation}
 f_\Theta(E_{j,q_{A,J},\phi})
 =\frac{1+(-1)^{b_k}v_q}{2},
 \qquad v_q=(1-p)^{|\operatorname{supp}q_{A,J}|}.
 \label{noise:noisy-bit}
\end{equation}
With threshold $1/2$, let the response branch label be $x=1-b_k$.
Its signed margin is $v_q/2$, so the shattering condition is
exactly $v_q\geq\gamma$. When $np\leq1/4$, Bernoulli's
inequality and $|\operatorname{supp}q_{A,J}|\leq n$ give
\begin{equation}
 v_q\geq(1-p)^n\geq1-np\geq\frac34.
 \label{noise:visibility-bound}
\end{equation}
For $p=0$, every block has visibility one regardless of its length.

Choose parameters independently across addresses and blocks, and
concatenate the trees in a fixed order of address, block, coordinate,
and bit. A complete path specifies every independent $m_A$, hence
fixes all $RT$ phase entries through \eqref{low:mobius-inverse}
with the replacements above. The loading maps and noise
rate are common to all paths. Each node's signs and offset depend
only on the earlier labels, while \eqref{noise:bit-isolation} holds
for every completion of those labels. Thus a single complete target
witnesses all node inequalities on each leaf.

\subsection{Block partition and uniform constants}
\label{noise:rounding}

For $p=0$, put $b=T$. For $p>0$, put
$b=\min(T,\lfloor1/(4p)\rfloor)$. The assumption $p\leq1/4$
ensures $b\geq1$. Define
\begin{equation}
 q=2^{\lfloor\log_2b\rfloor},\qquad
 a=\lfloor T/q\rfloor,\qquad r_0=T-aq.
 \label{noise:block-partition}
\end{equation}
Partition the first $aq$ slots into $a$ consecutive blocks of
length $q$. Partition the remaining $r_0$ slots into disjoint
dyadic blocks according to its binary expansion. Every block has
length $n\leq q\leq b$ and hence meets
\eqref{noise:visibility-bound} when $p>0$. An empty remainder
contributes no block, and a singleton uses the $\ell=0$ case
of \eqref{noise:block-depth}.

The complete construction gives depth
\begin{equation}
 D_*=R\left[
       aq\left(1+\frac12\log_2q\right)
       +\sum_{\ell:\,\operatorname{bit}_\ell(r_0)=1}
           2^\ell\left(1+\frac\ell2\right)\right].
 \label{noise:constructed-depth}
\end{equation}
To obtain a uniform lower constant, it suffices to retain the
first term of \eqref{noise:constructed-depth}. Since $T/q\geq1$,
we have $a\geq T/(2q)$. For $p>0$, the integer $q$ is the largest
power of two at most $\min(T,1/(4p))$, and consequently
\begin{equation}
 q>\frac12\min(T,1/(4p))\geq\frac H8,
 \qquad H=\min(T,1/p).
 \label{noise:block-size-bound}
\end{equation}
For $p=0$, the inequality $q>T/2=H/2$ gives the same last bound.
It follows that
\begin{equation}
 D_*\geq\frac{RT}{2}\left(1+\frac12\log_2q\right)
      \geq\frac{RT}{4}\log_2(2q).
 \label{noise:depth-lower}
\end{equation}
For every $q\geq1$, we have
$1+H<1+8q\leq(2q)^4$. The second inequality holds at $q=1$
because $9\leq16$, and $16q^4-8q-1$ is increasing for $q\geq1$.
Thus $\log_2(1+H)\leq4\log_2(2q)$. Every constructed node has
visibility at least $3/4$ for $p>0$, and visibility one for $p=0$.
In particular, for every fixed $0<\gamma\leq1/16$,
\begin{equation}
 \operatorname{sfat}_\gamma(\mathcal N_{R,T,p})
 \geq D_*\geq\frac{RT}{16}\log_2(1+H).
 \label{noise:uniform-lower}
\end{equation}
Together with \eqref{noise:uniform-upper}, this proves
Theorem~\ref{main:noisy-law}, with constants uniform in $R,T,p$.

The endpoint cases are included in the construction. When $T=1$,
the partition has $q=1$ and $D_*=R$, with one phase bit per address
and visibility at least $3/4$. When $R=1$, loading is empty.
At $p=0$, the resulting order is
$\Theta_\gamma(RT\log_2(1+T))$. At $p=1/4$, the partition has
$q=1$ and $D_*=RT$, and the polygon bound permits $m=5$.

\subsection{Sampling threshold responses}
\label{noise:sampling}

If probabilities are estimated from samples, consider at
most $L$ adaptively selected nodes, for an integer $L\geq1$, and a failure probability
$0<\delta<1$. Under independent complete resets,
\begin{equation}
 n_{\rm shot}=\left\lceil\frac{8}{\gamma^2}
              \ln\frac{2L}{\delta}\right\rceil
 \label{main:noisy-shots}
\end{equation}
shots per node suffice for simultaneous error at most $\gamma/4$.
Hoeffding's inequality~\cite{hoeffding1963probability} gives conditional
failure at most $\delta/L$
at each node, and a union bound gives the claim. With a promised
$\gamma/2$ threshold gap, these estimates determine the response
side. The cost is $L n_{\rm shot}$ full runs and
$K L n_{\rm shot}$ process transitions, together with reset,
tester-control, readout and classical computation costs.

\section{Capacity with known Pauli noise}
\label{pauli:section}

We extend the measured-address model of Section~\ref{main:noise}
to known Pauli errors applied after each phase. The resulting capacity
is determined by the phase errors that remain after an explicit
syndrome measurement. All logarithms in this appendix have base two.

Let $R=2^h$, $T\geq1$ and $K=h+T$, with qubit visible interfaces
and an $R$-dimensional address register $M$. Use the initial state
and the first $h$ loading transitions of
Eqs.~\eqref{noise:initial-state}--\eqref{noise:loading-kraus}.
These channels are defined on the complete input algebra. Supplying
the $h$ basis bits of $j$ loads address $j$ deterministically;
when $R=1$, loading is empty.

Fix a known probability vector $q=(q_I,q_X,q_Y,q_Z)$ and write
$U_\theta=\diag(1,e^{i\theta})$. The local phase channel is
\begin{equation}
 \mathcal E_{\theta,q}(\rho)
 =\sum_{P\in\{I,X,Y,Z\}}q_P\,P U_\theta\rho U_\theta^*P^*,
 \label{pauli:local}
\end{equation}
where $I,X,Y,Z$ are the qubit Pauli matrices. For a fixed table
$\Theta=(\theta_{tj})\in(\mathbb R/2\pi\mathbb Z)^{T\times R}$,
the full transition at phase slot $t$ is
\begin{equation}
 \Lambda^{\mathrm P}_{t,\Theta,q}(\rho)
 =\sum_j\mathcal E_{\theta_{tj},q}
       (\langle j|_M\rho|j\rangle_M)\otimes|j\rangle\langle j|_M.
 \label{pauli:address}
\end{equation}
Its Kraus operators are
$\sqrt{q_P}\,P U_{\theta_{tj}}\otimes|j\rangle\langle j|$;
their squared products sum to $I$. Thus the map is CPTP on the
entire input algebra, including inputs entangled with a reference.
It measures the address and preserves its classical value. The
Pauli errors are independent between slots, their environments are
discarded locally, and the final address is discarded. The entire
propagated memory has dimension $R$.

Denote the initialized family by $\mathcal N^{\mathrm P}_{R,T,q}$,
with responses on the normalized adaptive testers of
Definition~\ref{main:sfat}. Tester operations and private workspace
are ideal and separate from the propagated process memory. Each
query uses a fresh run of all $K$ transitions of the same target;
each complete leaf fixes one table. The table is the only unknown
parameter, while $q$ and the loaders are common to the class. Define
\begin{equation}
 e(q)=\min(q_I,q_Z)+\min(q_X,q_Y).
 \label{pauli:residual}
\end{equation}
The recovery below gives $e(q)$ its operational meaning as a
residual phase-flip probability. Write $e=e(q)$ when $q$ is fixed.

\begin{theorem}[Capacity under known Pauli noise]
\label{pauli:law}
For every fixed $0<\gamma\leq1/16$, the family defined by
Eq.~\eqref{pauli:address} satisfies
\begin{equation}
 \operatorname{sfat}_\gamma\mathcal N^{\mathrm P}_{R,T,q}
 =\Theta_\gamma\!\left(
 RT\log_2\bigl[1+\min\{T,1/e(q)\}\bigr]\right)
 \label{pauli:matching}
\end{equation}
simultaneously for all $R=2^h$, integers $T\geq1$, and known $q$
with $0\leq e(q)\leq1/8$. The constants are independent of $R,T,q$,
and $1/0=\infty$.
\end{theorem}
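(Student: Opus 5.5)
The proof reduces both bounds to the dephasing analysis of Appendix~\ref{noise:section}, with the residual phase-flip probability $e=e(q)$ in the role of $p/2$.

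\textbf{Lower bound.} The tester corrects every error except a residual phase flip, and then reuses the Boolean block trees of Appendix~\ref{noise:boolean}. At each active slot it encodes the control $C$ into the carrier as in Eq.~\eqref{noise:preparation-gates}, sends the carrier once, and applies $W_s^*$. Under the Pauli branch $P$, the returned carrier is then flipped exactly when $P\in\{X,Y\}$. The control sees:
\begin{itemize}
\item for $P=I$, the relative phase $e^{is\theta}$;
\item for $P=Z$, an extra sign $-1$;
\item for $P=X$ or $P=Y$, a phase $e^{-is\theta}$ times a known conjugation, which must be tracked.
\end{itemize}
It measures the carrier, a syndrome separating $\{I,Z\}$ from $\{X,Y\}$. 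Conditionally on that syndrome it applies a known correction on $C$ (a $Z$ and/or a conjugation fix), chosen to undo the more probable member of each pair. I expect this bookkeeping to need care: in the $X/Y$ branches the roles of the two carrier values are swapped, so the accumulated phase sign may flip. The first fix I would try is an additional $X_C$ on syndrome $\{X,Y\}$, combined with a sign convention for later slots. If that fails, I would use the same carrier-reset identities as Eq.~\eqref{noise:kraus-phase}.

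The target is a per-slot control multiplier of $(1-2e)e^{\pm i\theta}$ at active slots and $1$ at inactive ones. This reproduces Eq.~\eqref{noise:signed-response} with visibility $(1-2e)^{|\operatorname{supp}s|}$. Then I apply the block partition of Appendix~\ref{noise:rounding} with $p$ replaced by $2e$. Since $e\le1/8$ gives $2e\le1/4$, blocks of length at most $\min(T,\lfloor1/(8e)\rfloor)$ keep visibility at least $3/4$. This yields depth at least $c\,RT\log_2(1+\min(T,1/e))$.

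\textbf{Upper bound.} The cover term reuses Lemma~\ref{noise:cover} unchanged. Pauli noise is a fixed postprocessing, so the diamond estimate \eqref{noise:local-distance} still holds, giving $RT\log_2\lceil4\pi T/\gamma\rceil$. For $e>0$ and $T>1/e$ I need an exact program of $O(\log(1/e))$ qubits per table entry. Write
\[
\mathcal E_{\theta,q}=(q_I+q_Z)\,\mathcal D+(q_X+q_Y)\,X\circ\mathcal D',
\]
where $\mathcal D$ and $\mathcal D'$ are dephased phase channels with phase-flip fractions $q_Z/(q_I+q_Z)$ and $q_Y/(q_X+q_Y)$, up to harmless $Z$ relabelings. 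Because the min can be placed on either member of each pair, each piece is a phase channel whose coherence has modulus at most $1-2e'$, where $e'$ is comparable to $e$ when weighted by the pair masses.

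The difficulty is that the pairs may have unequal weights. Pairs with small residual could be bad, so instead I would decompose the whole channel at once. The polygon argument of Lemma~\ref{noise:polygon} mixes unitary vertices $V_{2\pi k/m}$ and $XV_{2\pi k/m}$. This needs $m=O(1/\sqrt e)$ vertices, provided the two off-diagonal coherences, namely $\langle0|\cdot|1\rangle$ and the $X$-twisted one, are jointly representable. I expect this to be the main obstacle. The paper hints at a ``doubled-angle program'' for it, together with a non-exactness result for general dictionaries. I would instead carry the syndrome branch as a program qubit and use a polygon inside each branch. To keep the vertex count at $O(1/e)$, it suffices that $\log_2 m\le2\log_2(1+1/e)$. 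The cases with small pair mass are absorbed by a vertex that is $Z$-shifted.

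With such a program, Lemma~\ref{up:information} with $\eta=0$ gives $RT\log_2 m/c_\gamma$. As in Eq.~\eqref{noise:uniform-upper}, I take the cover bound when $T\le1/e$ and the program bound otherwise. This gives constants uniform in $R,T,q$; at $e=0$ only the cover bound is used.
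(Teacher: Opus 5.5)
\textbf{Genuine gap: the program bound for $e(q)>0$.} Your fallback, a syndrome-branch label plus a polygon of unitary vertices inside each branch, is a finite classical channel dictionary. No such dictionary can work uniformly in $q$.

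If $q_Y=0<q_I,q_X,q_Z$, for instance $q=(7/16,1/2,0,1/16)$ with $e=1/16$, Proposition~\ref{pauli:dictionary} shows that no finite set of fixed channels has $\{\mathcal E_{\theta,q}\}_\theta$ in its convex hull.

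More generally, compress any convex decomposition to the odd Choi block. The $\{X,Y\}$ pair alone, with mass $\mu=q_X+q_Y$ and residual $r=\min(q_X,q_Y)$, then forces at least $\pi/\arccos(1-2r/\mu)$ dictionary elements, and likewise for $\{I,Z\}$ on the even block. This count is unbounded in terms of $e$ when one pair has large mass and tiny residual. The two blocks decouple in this argument, so a ``$Z$-shifted vertex'' cannot borrow noise from the other pair.

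Approximation does not rescue a classical label either. Theorem~\ref{pauli:whole-program-cost} shows that for $q=(1-2p,p,p,0)$, classical process dictionaries need $\Theta(T\log T)$ bits at constant accuracy, even with correlations across slots. The capacity you must bound there is only $O_p(T)$.

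\textbf{Missing idea: a coherent program.} The pair with the smaller residual must be reproduced exactly rather than approximated. The paper does this with the doubled-angle states $\bigotimes_{a=0}^{b-1}|+_{2^a\theta}\rangle$, where $\beta=\max\{\min(q_I,q_Z),\min(q_X,q_Y)\}\ge e/2$ and $b=\lceil\log_2(1/(2\beta))\rceil\le\lceil\log_2(1/e)\rceil$.
\begin{itemize}
\item Each CNOT-and-measure step applies $U_{\pm2^a\theta}$ with probability $1/2$. Stopping at the first success therefore gives exactly $U_\theta$ with probability $1-\delta$, where $\delta=2^{-b}\le2\beta$.
\item The all-failure branch is erased by $\mathcal D_Z$ and followed by $W\in\{I,X\}$, chosen in the pair of residual $\beta$. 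It contributes $\delta/2$ to each of $W$ and $WZ$, which that pair can afford.
\item The success branch applies a Pauli drawn with the reduced weights $r_P$.
\end{itemize}
This gives the exact identity~\eqref{pauli:exact-program} with $b$ qubits per table entry. Lemma~\ref{up:information} then yields $\lfloor bRT/c_\gamma\rfloor$, and $b\le2\log_2(1+1/e)$ closes the regime $T>1/e$.

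\textbf{The rest of your plan.} The cover bound and the block partition with $p=2e$ are sound, but one claim in the lower bound needs correcting. Because the Pauli acts after $U_\theta$, Eq.~\eqref{pauli:decoded-isometry} leaves $Z_C^vU_{s\theta}$ on the control and $|u\rangle$ on the carrier. The $X$ part of the error therefore never conjugates the control phase. The recovery is just $Z_C^{v_u}$. An uncompensated extra $X_C$ on the $\{X,Y\}$ syndrome would conjugate the accumulated phase and spoil Eq.~\eqref{pauli:response}.
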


The proof combines a physical reduction to the noisy phase tree
with an exact quantum program for each noisy channel. In particular,
the lower bound below is
$RT\log_2[1+\min\{T,1/e(q)\}]/32$.

\subsection{Syndrome recovery and the phase tree}
\label{pauli:recovery-section}

After loading a selected address, retain a control qubit $C$ in
the tester and use a visible carrier $S$ initialized in $|0\rangle$
at each slot. For an active sign $s\in\{-1,1\}$, set
\begin{equation}
 A_+=\operatorname{CNOT}_{C\to S},\qquad
 A_-=X_S\operatorname{CNOT}_{C\to S}.
 \label{pauli:encoding}
\end{equation}
Apply $A_s$, send $S$ through one phase transition, and apply
$A_s^*$. Write a Pauli as $X^uZ^v$ up to scalar phase. On the
whole logical input space, direct action on its two basis vectors
gives
\begin{equation}
 A_s^*(I_C\otimes X^uZ^vU_\theta)A_s
       (|\psi\rangle_C|0\rangle_S)
 =\lambda_{s,u,v,\theta}
       (Z_C^vU_{s\theta}|\psi\rangle_C)|u\rangle_S,
 \qquad |\lambda_{s,u,v,\theta}|=1.
 \label{pauli:decoded-isometry}
\end{equation}
The scalar is independent of $|\psi\rangle$. For $s=1$ the
encoded carrier value is the control bit; for $s=-1$ it is its
complement, which contributes only the common factor
$e^{i\theta}(-1)^v$ and changes the phase sign. This also proves
Eq.~\eqref{pauli:decoded-isometry} after adjoining any reference.

Measure $S$ in the computational basis. Its outcome $u$ distinguishes
the error pairs $\{I,Z\}$ and $\{X,Y\}$. For each outcome, use the known
vector $q$ to choose the more likely phase bit $v_u$ in that pair,
using a fixed rule for ties, and apply $Z_C^{v_u}$.
Discard the carrier and the measured bit.
Each active slot uses one forward target call, at most two CNOTs,
known one-qubit gates and one transient syndrome bit. A single
private control qubit is retained between slots, and every syndrome
outcome is included. The complete recovered channel on $C$ is exactly
\begin{equation}
 \mathcal R_{s,\theta,q}
 =(1-e(q))\operatorname{Ad}_{U_{s\theta}}
       +e(q)\operatorname{Ad}_{ZU_{s\theta}}.
 \label{pauli:recovery}
\end{equation}
Indeed, the wrong phase correction has probability
$\min(q_I,q_Z)$ in the first pair and $\min(q_X,q_Y)$ in the
second.

At an inactive slot $s=0$, keep $C$ untouched, send a separate
carrier in $|0\rangle$ through the transition, and discard its
output. The reduced channel on $C$ is the identity. Composing these
complete channel identities from the initial state $|+\rangle_C$
and measuring the projection onto
$(|0\rangle+e^{i\phi}|1\rangle)/\sqrt2$ gives
\begin{equation}
 f_\Theta(E_{j,s,\phi})
 =\frac{1+(1-2e(q))^{|\operatorname{supp}s|}
       \cos(\sum_t s_t\theta_{tj}-\phi)}2,
 \qquad \operatorname{supp}s=\{t:s_t\ne0\}.
 \label{pauli:response}
\end{equation}
In particular, noise in inactive slots does not attenuate the
retained coherence.

Equation~\eqref{pauli:response} is the signed response of
Eq.~\eqref{noise:signed-response} with $p=2e(q)\leq1/4$.
Use its Boolean phase coordinates and short-block partition from
Sections~\ref{noise:boolean} and~\ref{noise:rounding}. At $e>0$,
each block has length at most $\min(T,\lfloor1/(8e)\rfloor)$,
so its visibility is at least $3/4$. At $e=0$, the whole phase
sequence can be used. The construction includes singleton and tail
blocks, uses threshold $1/2$ with signed margin at least $3/8$,
and concatenates the independent block trees over all addresses.
Every complete path fixes all table entries through the same Boolean
inverse as in Eq.~\eqref{low:mobius-inverse}. The node's signs and
offset depend only on earlier labels.

By Eq.~\eqref{noise:uniform-lower}, this gives
\begin{equation}
 \operatorname{sfat}_\gamma\mathcal N^{\mathrm P}_{R,T,q}
 \geq\frac{RT}{16}\log_2\bigl[1+\min\{T,1/(2e)\}\bigr]
 \geq\frac{RT}{32}\log_2\bigl[1+\min\{T,1/e\}\bigr].
 \label{pauli:lower}
\end{equation}
For the last inequality, the first minimum is at least half the
second and $(1+x/2)^2\geq1+x$ for $x\geq0$. The convention at
$e=0$ makes the first bound the noiseless bound.

\subsection{An exact quantum program and uniform upper bounds}
\label{pauli:program-section}

The program uses the stochastic phase construction of Vidal,
Masanes and Cirac~\cite{vidal2002storing}. Its unsuccessful branch
is incorporated into the prescribed noise, following the programming
method of Ji et al.~\cite{ji2008parameter}. We give the full
channel calculation for the Pauli weights in Eq.~\eqref{pauli:local}.

Assume $e=e(q)>0$ and put
\begin{equation}
 \beta=\max\{\min(q_I,q_Z),\min(q_X,q_Y)\},\qquad
 b=\left\lceil\log_2\frac1{2\beta}\right\rceil,
 \qquad \delta=2^{-b}.
 \label{pauli:program-size}
\end{equation}
Choose $W\in\{I,X\}$ so that the pair $\{W,WZ\}$ has minimum
weight $\beta$, treating Pauli labels modulo scalar phase. Thus
$e/2\leq\beta\leq e\leq1/8$, $\delta\leq2\beta$, and
$2\leq b\leq\lceil\log_2(1/e)\rceil$. The $b$-qubit program is
\begin{equation}
 |\Pi_\theta\rangle
 =\bigotimes_{a=0}^{b-1}|+_{2^a\theta}\rangle,
 \qquad |+_\alpha\rangle
       =\frac{|0\rangle+e^{i\alpha}|1\rangle}{\sqrt2}.
 \label{pauli:program}
\end{equation}

For one data qubit and a program $|+_\alpha\rangle$, apply a
CNOT with the data as control and measure the program in the
computational basis. The data Kraus operators are
\begin{equation}
 M_0(\alpha)=\frac{\diag(1,e^{i\alpha})}{\sqrt2},
 \qquad
 M_1(\alpha)=\frac{\diag(e^{i\alpha},1)}{\sqrt2}.
 \label{pauli:program-instrument}
\end{equation}
Each outcome has probability $1/2$ on every data input, including
inputs entangled with a reference. Outcome zero implements
$U_\alpha$, and outcome one implements $U_{-\alpha}$ up to scalar
phase. After an outcome one, repeat with the next doubled angle.
The first zero leaves $U_\theta$ on the data because
$2^a-(2^a-1)=1$. If all $b$ outcomes are one, the data undergoes
$U_{-(2^b-1)\theta}$. The total probabilities of these two branches
are $1-\delta$ and $\delta$.

On a branch containing a zero, apply a Pauli drawn from the known
probability vector
\begin{equation}
 r_P=\frac{q_P-(\delta/2)\mathbf1_{\{P=W,WZ\}}}{1-\delta}.
 \label{pauli:success-weights}
\end{equation}
The numerators are nonnegative and sum to $1-\delta$. On the
all-one branch, apply the complete computational-basis dephaser
$\mathcal D_Z=(\operatorname{Ad}_I+\operatorname{Ad}_Z)/2$
followed by $W$. Since
$\mathcal D_Z\operatorname{Ad}_{U_\alpha}=\mathcal D_Z$ for
every $\alpha$, this branch erases the residual phase. Averaging
all outcomes gives the exact channel identity
\begin{align}
 &(1-\delta)\sum_P r_P\operatorname{Ad}_{PU_\theta}
       +\delta\operatorname{Ad}_W\mathcal D_Z\notag\\
 &\hspace{18mm}=\sum_Pq_P\operatorname{Ad}_{PU_\theta}
       =\mathcal E_{\theta,q}.
 \label{pauli:exact-program}
\end{align}
The failure contribution supplies $\delta/2$ of each channel with
errors $W$ and $WZ$. All operations form one fixed CPTP processor
on arbitrary joint program and data inputs. Equation~\eqref{pauli:exact-program}
gives its action on the target programs and holds after adjoining
any reference.

Each complete target table specifies one program for all of its node
effects. For a full table, use
\begin{equation}
 |\Pi_\Theta\rangle=\bigotimes_{t=1}^T\bigotimes_{j=1}^R
       |\Pi_{\theta_{tj}}\rangle,
 \qquad D_{\mathrm{prog}}=2^{bRT}.
 \label{pauli:table-program}
\end{equation}
Run the fixed loaders. At slot $t$, measure the address, use the
program factor assigned to $(t,j)$, and preserve the measured
address value. Each time uses a new program factor; unused factors
can be discarded. The channel identity holds inside every address
block for inputs correlated with the tester. Induction over the
slots therefore reproduces the complete initialized process.

For any normalized adaptive tester $E$, composition with this fixed
processor is a target-independent binary measurement on the common
program space.
Its acceptance effect satisfies
\begin{equation}
 0\leq A_E\leq I,\qquad
 \operatorname{Tr}(A_E|\Pi_\Theta\rangle\langle\Pi_\Theta|)
       =f_\Theta(E).
 \label{pauli:program-effect}
\end{equation}
Lemma~\ref{up:information} gives
\begin{equation}
 \operatorname{sfat}_\gamma\mathcal N^{\mathrm P}_{R,T,q}
 \leq\left\lfloor\frac{bRT}{c_\gamma}\right\rfloor,
 \qquad c_\gamma=1-h_2((1-\gamma)/2).
 \label{pauli:program-upper}
\end{equation}

A phase cover supplies a second bound, including $e=0$.
Let $N=\lceil8\pi T/\gamma\rceil$ and round each table entry to
the nearest point of an equally spaced $N$-point phase grid. The
circular error is at most $\pi/N$. Unitary conjugation and common
Pauli postprocessing give local half-diamond error at most $\pi/N$.
For a measured-address transition, the reference-extended input
splits into positive blocks whose traces sum to one. Summing their
errors gives the same bound for the complete transition, independently
of $R$, as in Eq.~\eqref{noise:address-distance}. Replacing its
$T$ transitions one at a time therefore changes every adaptive
response by at most $T\pi/N\leq\gamma/8$.

There are $N^{RT}$ rounded tables. At the first divergent node of
two leaf witnesses, their responses differ by at least $\gamma$.
Witnesses sharing one rounded table would differ by at most
$\gamma/4$ at that node. Hence each leaf needs a different center,
and
\begin{equation}
 \operatorname{sfat}_\gamma\mathcal N^{\mathrm P}_{R,T,q}
 \leq\lfloor RT\log_2N\rfloor.
 \label{pauli:cover-upper}
\end{equation}
For fixed $\gamma$, the grid term is at most a constant times
$RT\log_2(1+T)$. For $e>0$, the program term is at most
$(2/c_\gamma)RT\log_2(1+1/e)$, since
$b\leq\lceil\log_2(1/e)\rceil$. Taking the smaller of the two
bounds, and using the cover alone at $e=0$, proves the uniform upper
bound in Theorem~\ref{pauli:law}. Equation~\eqref{pauli:lower}
completes the proof.

\begin{remark}[Residual phase error]
\label{pauli:resources}
For any mixture of $I$ and $X$, the residual probability is zero,
so the logarithmic duration enhancement persists. For fixed $e>0$,
the capacity is linear in $T$ per address.
\end{remark}

\subsection{Obstruction to a finite classical channel dictionary}
\label{pauli:dictionary-section}

A classical channel dictionary consists of fixed CPTP maps. Its
program is a classical label sampled from a parameter-dependent
distribution, and its average output is the corresponding convex
mixture. Each dictionary map may itself act quantum mechanically.
Probabilistic channel and state synthesis study convex approximation
with such randomized encodings
\cite{akibue2024unitary,akibue2024state}.

For convex channel classes, Ref.~\cite{gschwendtner2021programmability}
characterizes finite classical dictionaries with entries inside the
class by whether that class is a polytope, a convex hull of finitely
many points. It also exhibits covariant classes, whose channels obey
a prescribed group symmetry, with finite exact quantum programs and
nonpolytopal channel sets.
For the Pauli family, the rotating support of the Choi operator gives
the following obstruction to a finite classical dictionary.

\begin{proposition}[A rank-three obstruction]
\label{pauli:dictionary}
Let $q_I,q_X,q_Z>0$ and $q_Y=0$. There is no finite collection of
fixed qubit channels whose convex hull contains
$\{\mathcal E_{\theta,q}:\theta\in\mathbb R/2\pi\mathbb Z\}$.
\end{proposition}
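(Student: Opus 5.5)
The plan is to work with Choi operators and use the fact that every component of a convex decomposition has support inside the support of the mixture. Suppose, for contradiction, that fixed channels $\Phi_1,\ldots,\Phi_N$ have a convex hull containing every $\mathcal E_{\theta,q}$. For each $\theta$, fix one decomposition $\mathcal E_{\theta,q}=\sum_k\lambda_k(\theta)\Phi_k$. Each unnormalized Choi operator $J_{\Phi_k}$ with $\lambda_k(\theta)>0$ is positive and satisfies $\lambda_k(\theta)J_{\Phi_k}\leq J_{\mathcal E_{\theta,q}}$. Hence its range lies in $S_\theta:=\operatorname{ran}J_{\mathcal E_{\theta,q}}$.

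The first step is to compute $S_\theta$. Since $q_Y=0$ and $q_I,q_X,q_Z>0$, the Kraus operators are $\sqrt{q_P}PU_\theta$ for $P\in\{I,X,Z\}$. The range of the Choi operator is therefore the span of their vectorizations. With $U_\theta=\diag(1,e^{i\theta})$, these vectors are, up to a fixed ordering of the two tensor factors,
\[
 |00\rangle+e^{i\theta}|11\rangle,\qquad
 |00\rangle-e^{i\theta}|11\rangle,\qquad
 |10\rangle+e^{i\theta}|01\rangle .
\]
Thus $S_\theta=\operatorname{span}\{|00\rangle,|11\rangle,\ |10\rangle+e^{i\theta}|01\rangle\}$, a rank-three support whose off-diagonal direction rotates with $\theta$.

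The second step is the intersection. Take two angles with $e^{i\theta}\neq e^{i\theta'}$. A vector in $S_\theta\cap S_{\theta'}$ has its $\{|10\rangle,|01\rangle\}$ component proportional to both $|10\rangle+e^{i\theta}|01\rangle$ and $|10\rangle+e^{i\theta'}|01\rangle$, so this component must vanish. Hence $S_\theta\cap S_{\theta'}=\operatorname{span}\{|00\rangle,|11\rangle\}$. Call $\Phi_k$ \emph{off-diagonal} if $\operatorname{ran}J_{\Phi_k}\not\subseteq\operatorname{span}\{|00\rangle,|11\rangle\}$. Each off-diagonal $\Phi_k$ can then appear with positive weight for at most one angle. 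Since there are only finitely many channels and infinitely many angles, some $\theta_0$ has a decomposition using only channels whose Choi ranges lie in $\operatorname{span}\{|00\rangle,|11\rangle\}$.

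The final step derives the contradiction. A channel with Choi range in $\operatorname{span}\{|00\rangle,|11\rangle\}$ has all Kraus operators diagonal in the computational basis. It therefore maps $|0\rangle\langle0|$ to $|0\rangle\langle0|$, and so does any mixture of such channels. However,
\[
 \mathcal E_{\theta_0,q}(|0\rangle\langle0|)=(q_I+q_Z)|0\rangle\langle0|+q_X|1\rangle\langle1|,
\]
and this has nonzero $|1\rangle\langle1|$ weight because $q_X>0$.

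I expect the main care point to be bookkeeping rather than depth. First, the vectorization convention must be fixed consistently, since the transpose convention of the paper's Choi operators only permutes tensor factors. Second, the support-inclusion property must be applied to unnormalized Choi operators, where convex combinations of channels correspond exactly to convex combinations of Choi operators. The hypotheses $q_I,q_Z>0$ only ensure that $S_\theta$ has exactly the stated form. The essential ingredients are $q_X>0$, which gives the rotating off-diagonal direction and the final contradiction, and $q_Y=0$, which prevents $S_\theta$ from containing both $|10\rangle$ and $|01\rangle$ separately.
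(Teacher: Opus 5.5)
Your proof is correct and follows the paper's route. Both arguments use support inclusion for positive Choi operators in a convex decomposition, and both use the rank-three support, namely the even subspace plus the rotating odd line $\ell_\theta$, so each dictionary element with nonzero odd part can serve at most one phase. Your final step, checking the $|1\rangle\langle1|$ weight of $\mathcal E_{\theta_0,q}(|0\rangle\langle0|)$, is an explicit version of the paper's remark that operators with zero odd compression cannot supply the required positive odd weight $q_X$.
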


\begin{proof}
Use the normalized Choi operator, with output--input basis order.
The sum of the $I$ and $Z$ contributions has full support on the
even subspace $\operatorname{span}\{|00\rangle,|11\rangle\}$.
The $X$ contribution has positive weight on the single line
\begin{equation}
 \ell_\theta
 =\operatorname{span}\{|10\rangle+e^{i\theta}|01\rangle\}
 \subset\operatorname{span}\{|10\rangle,|01\rangle\}.
 \label{pauli:odd-line}
\end{equation}
The Choi operator thus has rank three. If a positive Choi operator
occurs with positive coefficient in a convex representation, its
support lies in the target support: evaluation on any vector in
the target kernel makes each nonnegative summand zero.

A fixed dictionary operator with nonzero compression to the odd
subspace $\operatorname{span}\{|10\rangle,|01\rangle\}$ must
therefore have that compression supported on
$\ell_\theta$. Distinct phases modulo $2\pi$ give distinct lines,
so this operator can participate for at most one phase. Operators
with zero odd compression cannot supply the required positive odd
weight. A finite dictionary consequently cannot represent the entire
circle.
\end{proof}

For example, $q=(7/16,1/2,0,1/16)$ has $e=\beta=1/16$.
Its three-qubit program in Eq.~\eqref{pauli:program} is exact,
although Proposition~\ref{pauli:dictionary} rules out every finite
classical channel dictionary. This separates the two types of exact
channel representation within the capacity law of
Theorem~\ref{pauli:law}.

\subsection{Classical and quantum programs for a complete process}
\label{pauli:process-programs}

Consider the single-address family with $R=1$, $K=T$, and known
$q=(a,p,p,0)$, where $a=1-2p$ and $0<p\leq1/8$. Its $T$ phases
vary independently over $\mathbb R/2\pi\mathbb Z$, and its initialized
process $P_{\boldsymbol\theta}$ has propagated memory dimension one.
A classical process dictionary is a fixed collection
$\mathcal D_1,\ldots,\mathcal D_M$ of trace-preserving causal quantum processes
on the same interface. Each entry may use arbitrary quantum memory
and correlations between slots. An encoder that knows
$\boldsymbol\theta$ sends one label sampled with probabilities
$\lambda_j(\boldsymbol\theta)$; the decoder executes $\mathcal D_j$.
The label is the only resource passed or shared between encoder and
decoder. Decoder-local randomness and ancillary states are included
in each fixed entry.

Write $M_{\rm cl}^{(T)}(\epsilon;p)$ for the smallest dictionary size
such that every target admits one mixture satisfying
\begin{equation}
 \frac12D\!\left(P_{\boldsymbol\theta},
        \sum_j\lambda_j(\boldsymbol\theta)\mathcal D_j\right)
 \leq\epsilon,
 \qquad
 B_{\rm cl}^{(T)}(\epsilon;p)
 =\left\lceil\log_2 M_{\rm cl}^{(T)}(\epsilon;p)\right\rceil .
 \label{pauli:dictionary-cost}
\end{equation}
Thus the same mixture approximates all normalized adaptive testers.
For quantum programs, let $B_{\rm q}^{(T)}(0;p)$ be the minimum
number of qubits supplied once at the start to a fixed CPTP processor
that implements every target exactly from a target-dependent program state.
Both costs measure program
storage; program preparation and decoder computation are separate
resources.

Visible quantum data compression studies classical and quantum
register costs when the encoder knows the source string, including
continuously parameterized, arbitrarily varying product
sources~\cite{hayden2002trading}. Lower bounds for randomized state
synthesis use high-fidelity decoder states to obtain covering
bounds~\cite{akibue2024state}. Here that argument is applied to
orthogonal noisy sectors, keeping the dependence on the number of
phase coordinates.

\begin{theorem}[Program costs for noisy phase sequences]
\label{pauli:whole-program-cost}
For the dictionary model above, let $T\geq1$, $0<p\leq1/8$,
$0<\epsilon\leq1/64$, and $m_0=\lceil T/2\rceil$. Then
\begin{align}
 \frac{m_0}{2}\log_2\frac{m_0}{36\epsilon}-T
 &\leq B_{\rm cl}^{(T)}(\epsilon;p)
 \leq\frac T2\log_2\frac T\epsilon+T+1,
 \label{pauli:classical-bits}\\
 B_{\rm q}^{(T)}(0;p)
 &\leq T\left\lceil\log_2\frac1{2p}\right\rceil .
 \label{pauli:quantum-bits}
\end{align}
In particular,
$B_{\rm cl}^{(T)}(\epsilon;p)=\Theta(T\log_2(T/\epsilon))$
with absolute constants on this parameter range.

For each fixed $T,p$,
$M_{\rm cl}^{(T)}(\epsilon;p)=\Theta_{T,p}(\epsilon^{-T/2})$
as $\epsilon\downarrow0$, and $M_{\rm cl}^{(T)}(0;p)=\infty$.
\end{theorem}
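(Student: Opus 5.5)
The proof splits into three parts: the quantum program bound, the classical upper bound, and the classical lower bound. The statements about $M_{\rm cl}$ at fixed $T,p$ then follow from the same two classical arguments.

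\textbf{Quantum bound.} We reuse Section~\ref{pauli:program-section}. For $q=(1-2p,p,p,0)$ we have $\min(q_I,q_Z)=0$ and $\min(q_X,q_Y)=p$, so $e=\beta=p$. Eq.~\eqref{pauli:program-size} then gives $b=\lceil\log_2(1/(2p))\rceil$ program qubits per slot. Since $R=1$, the exact processor of Eq.~\eqref{pauli:exact-program} gives Eq.~\eqref{pauli:quantum-bits} at once.

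\textbf{Classical upper bound.} Build an explicit dictionary from a phase grid combined with mixing. Each local channel is $(1-2p)\operatorname{Ad}_{U_\theta}+p\operatorname{Ad}_{XU_\theta}+p\operatorname{Ad}_{YU_\theta}$.

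A plain grid of spacing $\delta$ costs diamond error $O(\delta)$ per slot. Summed over $T$ slots this gives $T\log(T/\epsilon)$ bits, which is too many by a factor of two.

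To reach the exponent $T/2$, we use second-order mixing. Write the target $U_\theta$ as a convex combination of the two neighbouring grid unitaries. The resulting error is second order: the mixture matches the channel up to $O(\delta^2)$ in the coherence entries. Equivalently, use a polygon approximation of the arc, as in Lemma~\ref{noise:polygon}, but at unit radius, where the chord deficit is $O(\delta^2)$.

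With $\delta\sim\sqrt{\epsilon/T}$, per-slot errors of $\epsilon/T$ sum by the hybrid argument of Proposition~\ref{up:physical-cover} to $\epsilon$. The dictionary consists of all product grid processes, so $\log_2M\leq T\log_2(2\pi/\delta)\leq\frac T2\log_2(T/\epsilon)+T+1$. The mixture weights are products over slots.

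\textbf{Classical lower bound (main obstacle).} Follow the covering argument for randomized state synthesis~\cite{akibue2024state}, applied to orthogonal noisy sectors.

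First pair the slots into $m_0$ disjoint groups. Use the syndrome recovery of Eq.~\eqref{pauli:decoded-isometry}. Because $q_Z=0$, the $X$-branch occurs with probability $2p$ per slot and is perfectly heralded. Within it, the $X$ versus $Y$ ambiguity gives phase error $1/2$, so it carries nothing. Hence the $I$-branch must carry the phase cleanly.

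Next, feed Choi inputs to the process. This produces a state that, on the sector where no $X/Y$ errors occur, is proportional to a pure state $|\psi_{\boldsymbol\theta}\rangle$ with weight $(1-2p)^T$. These sectors are orthogonal, flagged by the syndromes.

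The key step is this. A mixture $\epsilon$-close to the target in $D$ must, on the pure component $|\psi_{\boldsymbol\theta}\rangle$, have some dictionary entry whose output has fidelity $1-O(\epsilon)$ with it. Passing to the orthogonal noisy sectors is what lets the $(1-2p)^T$ weight be avoided here. Concretely, one conditions on sectors with a bounded error pattern in each pair, so only a constant loss per pair appears; this accounts for the $-T$ term.

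Then a packing argument finishes the bound. The states $|\psi_{\boldsymbol\theta}\rangle$ for $\boldsymbol\theta$ on an $m_0$-dimensional torus have overlaps $\prod\cos^2$ of the half-differences. Each dictionary entry is $O(\sqrt\epsilon)$-close to a fraction at most $(c\sqrt{\epsilon/m_0})^{m_0}$ of the torus. Covering the torus therefore forces $M\geq(m_0/(36\epsilon))^{m_0/2}2^{-T}$.

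\textbf{Fixed $T,p$ asymptotics.} The same two bounds give $M_{\rm cl}^{(T)}(\epsilon;p)=\Theta_{T,p}(\epsilon^{-T/2})$ as $\epsilon\downarrow0$. At $\epsilon=0$, a finite dictionary would have to contain every sector state exactly, and there are continuum many; alternatively, the rank argument of Proposition~\ref{pauli:dictionary} gives $M_{\rm cl}^{(T)}(0;p)=\infty$ directly.

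I expect the main obstacle to be the sector-fidelity step, specifically making the constants $36$ and $-T$ come out cleanly.
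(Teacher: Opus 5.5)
Your quantum bound and your classical upper bound follow the paper ($\beta=p$ and $W=X$ in the processor of Eq.~\eqref{pauli:exact-program}; per-slot polygon interpolation with chord deficit $O(\delta^2)$ and a product dictionary of $M_0^T$ entries). The gap is in the uniform classical lower bound, at the step you flag. Any conditioning event that restricts every pair, such as a clean slot in each pair, has target weight $w\leq c^{\lfloor T/2\rfloor}$ for some $c<1$. At $p=1/8$ the clean-slot requirement gives $w\leq(15/16)^{\lfloor T/2\rfloor}$. The leakage argument then gives overlap $1-t$ only with $t=\epsilon/(w-\epsilon)$. This is vacuous once $w\leq\epsilon$, that is, for $T$ of order $\log(1/\epsilon)$ and beyond. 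Even before that, it multiplies the cap measure by roughly $(1/w)^{m_0/2}$, which costs order $T^2$ in $\log_2M$ rather than $T$. A multiplicative loss in sector weight cannot become an additive $-T$. For the same reason, your claim that some entry has fidelity $1-O(\epsilon)$ with the all-clean pure component fails in this range, since that component has weight $(1-2p)^T$.

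The paper's mechanism is different. With a Bell pair on every slot, take the union $G$ of all sectors $P_S$ with $|S|\geq m_0$; its weight is $\beta_T\geq1/2$ for every $T$. Apply the all-tester guarantee to the two allowed effects $P_G$ and $Q_{\boldsymbol\theta}$ of Eq.~\eqref{pauli:sector-effect}; the latter has target expectation zero. This yields a pair $(j,S)$ with $S\in G$ whose normalized sector state $\tau_{jS}$ has overlap at least $1-t$ with $|\Phi_{\boldsymbol\theta_S}\rangle$, where $t\leq64\epsilon/31$. Each fixed pair $(j,S)$ covers a cylinder of measure at most $(36\epsilon/|S|)^{|S|/2}\leq(36\epsilon/m_0)^{m_0/2}$. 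The factor $2^{-T}$ comes from union-bounding over the at most $M2^T$ such pairs, because the good sector depends on $\boldsymbol\theta$. No pairing of slots is needed.

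The fixed-$T$ asymptotics also do not follow from ``the same two bounds''. The uniform lower bound has exponent $m_0/2\approx T/4$, so it only gives growth like $\epsilon^{-\lceil T/2\rceil/2}$. The full exponent $T/2$ needs the all-clean sector after all, which is harmless once $T,p$ are fixed. With $a=1-2p$ and $\epsilon\leq a^T/9$, one has $t=\epsilon/(a^T-\epsilon)\leq1/8$. The first bound in Eq.~\eqref{pauli:cap-volume} with $m=T$ then gives $M\geq c_{T,p}\epsilon^{-T/2}$.

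For $M_{\rm cl}^{(T)}(0;p)=\infty$, Proposition~\ref{pauli:dictionary} assumes $q_Y=0$, so it does not apply directly to $(1-2p,p,p,0)$. Restrict to the first slot and compose its output with $X$, which relabels the weights to $(p,1-2p,0,p)$. Your alternative argument from ``continuum many sector states'' needs the rank-one rigidity of the all-clean compression: every contributing entry's compression must be proportional to $|\Phi_{\boldsymbol\theta}\rangle\langle\Phi_{\boldsymbol\theta}|$. Counting alone is not enough, because the convex hull of finitely many states is itself uncountable.
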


\begin{proof}
At every slot, prepare a fresh Bell pair, send one half through
the channel, and retain its reference and output. The known initial
carrier is discarded. This tester uses $O(T)$ private qubits and
one call to each slot. Its target output is
\begin{equation}
 \rho_{\boldsymbol\theta}=\bigotimes_{t=1}^T\rho_{\theta_t},
 \qquad
 \rho_\theta=a|\Phi_\theta\rangle\langle\Phi_\theta|
       +p|01\rangle\langle01|+p|10\rangle\langle10|,
 \quad |\Phi_\theta\rangle
       =\frac{|00\rangle+e^{i\theta}|11\rangle}{\sqrt2}.
 \label{pauli:collected-state}
\end{equation}
Each dictionary entry produces a fixed density matrix $\sigma_j$,
possibly correlated across all slots. Since every final binary
effect is allowed, Eq.~\eqref{pauli:dictionary-cost} implies
$\tfrac12\|\rho_{\boldsymbol\theta}-\sum_j\lambda_j\sigma_j\|_1
\leq\epsilon$.

Let $P_e=|00\rangle\langle00|+|11\rangle\langle11|$ and
$P_o=I-P_e$. The orthogonal sector $P_S$, for $S\subseteq[T]$,
uses $P_e$ on slots in $S$ and $P_o$ on the other slots, in their
original order. Its target weight is
$a^{|S|}(1-a)^{T-|S|}$; its conditional state has a pure product
$|\Phi_{\boldsymbol\theta_S}\rangle$ on the even factors and
$P_o/2$ on each odd factor. For
$G=\{S:|S|\geq m_0\}$ and $P_G=\sum_{S\in G}P_S$,
Bernoulli monotonicity and symmetry at probability $1/2$ give
$\beta_T=\Tr(P_G\rho_{\boldsymbol\theta})\geq1/2$.
Define the effect
\begin{equation}
 Q_{\boldsymbol\theta}=\sum_{S\in G}
 \left(P_e^{\otimes|S|}
       -|\Phi_{\boldsymbol\theta_S}\rangle
        \langle\Phi_{\boldsymbol\theta_S}|\right)
       \otimes P_o^{\otimes(T-|S|)},
 \qquad 0\leq Q_{\boldsymbol\theta}\leq P_G\leq I.
 \label{pauli:sector-effect}
\end{equation}
Tensor factors retain their slot order. Each numerical phase vector
specifies a fixed allowed final effect, covered by the all-tester
guarantee. Its target expectation is zero.

Put $w_{jS}=\Tr(P_S\sigma_j)$. For a positive weight, normalize
$P_S\sigma_jP_S$ and trace out the odd factors, obtaining a fixed
state $\tau_{jS}$ on the $|S|$ even factors. The two effects give
\begin{equation}
\begin{aligned}
 &\sum_{j,S\in G}\lambda_jw_{jS}\geq\beta_T-\epsilon,\\
 &\sum_{j,S\in G}\lambda_jw_{jS}
 \left(1-\langle\Phi_{\boldsymbol\theta_S}|
       \tau_{jS}|\Phi_{\boldsymbol\theta_S}\rangle\right)
 \leq\epsilon.
\end{aligned}
 \label{pauli:sector-leakage}
\end{equation}
Consequently some positive-weight pair $(j,S)$ has overlap at least
$1-t$, where $t=\epsilon/(\beta_T-\epsilon)
\leq64\epsilon/31\leq1/31$.

For any density matrix $\tau$ on $m$ even factors, consider the set
$A_\tau(t)$ of phase vectors whose product state has overlap at
least $1-t$ with $\tau$. Fix an anchor in this set when it is
nonempty. The trace distance from each such pure state to $\tau$
is at most $\sqrt t$: apply convexity to a pure-state decomposition
of $\tau$, then use concavity of the square root. The triangle
inequality and the pure-state distance formula imply, for wrapped
phase differences $\Delta_i\in[-\pi,\pi]$ from the anchor,
\begin{equation}
 1-4t\leq\prod_{i=1}^m\cos^2(\Delta_i/2)
       \leq\exp\!\left(-\tfrac14\sum_i\Delta_i^2\right),
 \qquad \sum_i\Delta_i^2\leq32t.
 \label{pauli:phase-cap}
\end{equation}
The second inequality follows by integrating $\tan x\geq x$;
the last uses $-\ln(1-4t)\leq8t$ for $t\leq1/8$.
The unit $m$-ball volume satisfies
$V_m\leq(2\pi\mathrm e/m)^{m/2}$, as follows by bounding the
Gaussian integral below on a ball of radius $\sqrt m$.
Normalized Haar measure on the phase torus therefore obeys
\begin{equation}
 \mu_m(A_\tau(t))
 \leq\left(\frac{16\mathrm e\,t}{\pi m}\right)^{m/2}
 \leq\left(\frac{36\epsilon}{m}\right)^{m/2}.
 \label{pauli:cap-volume}
\end{equation}
Here $1024\mathrm e/(31\pi)<36$.

Each fixed pair $(j,S)$ covers a cylinder in the full $T$-torus,
with measure given by its $|S|$-dimensional phase set. Equation
\eqref{pauli:sector-leakage} supplies a covering pair for every
target. There are at most $M2^T$ pairs, and
$(36\epsilon/m)^{m/2}$ decreases for $m\geq1$. Thus
\begin{equation}
 1\leq M2^T\left(\frac{36\epsilon}{m_0}\right)^{m_0/2},
 \label{pauli:cylinder-count}
\end{equation}
which proves the lower bound in Eq.~\eqref{pauli:classical-bits}.
It also gives $B_{\rm cl}^{(T)}\geq
(T/4)\log_2(T/\epsilon)-3T$ by monotonicity of
$x\log_2(x/(36\epsilon))$ for $x\geq1/2$.
If $\log_2(T/\epsilon)\geq24$, this is at least
$(T/8)\log_2(T/\epsilon)$. Otherwise $T<2^{18}$.
One dictionary entry cannot approximate both first-slot phases
$0$ and $\pi$: their Bell outputs have trace distance
$a\geq3/4>2\epsilon$. Hence $B_{\rm cl}^{(T)}\geq1$ in this
remaining range, giving an absolute positive lower constant for
the asserted uniform order.

For the upper bound, use
$M_0=\lceil(\pi/2)\sqrt{aT/\epsilon}\rceil$ equally spaced
phases at each slot. Randomized interpolation fills their regular
polygon. The phase-circle convex geometry is the one used in
probabilistic unitary synthesis~\cite{akibue2024unitary}.
Channels with the same populations and coherence multipliers $az,aw$
have half-diamond distance $a|z-w|/2$: on a normalized positive
qubit--reference block matrix with off-diagonal block $B$, the
output difference has trace norm $2a|z-w|\|B\|_1$, while
$\|B\|_1\leq1/2$; an equatorial input attains equality.
The polygon's maximal local error is
$a\sin^2(\pi/(2M_0))\leq\epsilon/T$.
Replacing the $T$ complete channels one at a time bounds every
adaptive response error by $\epsilon$. Sample all local labels
together at the start: the resulting $M_0^T$ fixed product processes
form a legal dictionary. Since $M_0\leq2\sqrt{T/\epsilon}$,
taking the logarithm and its ceiling gives the upper bound in
Eq.~\eqref{pauli:classical-bits}.

For fixed $T,p$, use only the all-even sector, of weight $a^T$.
When $\epsilon\leq a^T/9$, its mass and leakage give
$t=\epsilon/(a^T-\epsilon)\leq1/8$.
The first bound in Eq.~\eqref{pauli:cap-volume} then yields
$M\geq c_{T,p}\epsilon^{-T/2}$; the same polygon construction
gives the opposite bound. For exact simulation, restrict to the
first slot and compose its output with $X$. The Pauli weights become
$(p,a,0,p)$, so Proposition~\ref{pauli:dictionary} excludes a
finite dictionary.

Finally, the exact processor in Eq.~\eqref{pauli:exact-program}
has $\beta=p$ and $W=X$ for this family. Equation
\eqref{pauli:table-program} supplies all $Tb$ program qubits once,
with $b=\lceil\log_2(1/(2p))\rceil$, and consumes the appropriate
factor at each slot. Its complete-reference identity proves
Eq.~\eqref{pauli:quantum-bits} under every adaptive tester.
\end{proof}

\begin{remark}[Phase alphabet and noise support]
\label{pauli:dictionary-boundaries}
A fixed $L$-point phase alphabet has an exact classical program of
at most $\lceil T\log_2L\rceil$ bits. For weights
$q=(a-\eta,p,p,\eta)$ with fixed $0<\eta\leq a/8$, the
coherence circle has radius $a-2\eta$ inside the available radius
$a$. An inscribed regular polygon with
$M=\lceil(\pi/2)\sqrt{a/\eta}\rceil$ vertices contains it,
since $a\cos(\pi/M)\geq a-2\eta$. Its product dictionary
gives an exact $O_\eta(T)$-bit classical program.
\end{remark}

\section{Classical comparison and fixed-target prediction}
\label{pred:section}

\paragraph*{Fully measured classical processes.}
Fix computational bases for the visible space and the complete internal
memory, and put $n=dr$. Let $\mathcal C_{d,r}(K)$ consist of processes
with a diagonal initial state on $B_0M_0$ and transition kernels
$p_t(v\mid u)$ on the $n$ visible--memory pairs. On every joint input
with an untouched reference $Z$, the transition is
\begin{equation}
 (\operatorname{id}_Z\otimes\Phi_t)(\rho)
 =\sum_{u,v=1}^n p_t(v\mid u)
       \langle u|\rho|u\rangle_Z\otimes|v\rangle\langle v|.
 \label{pred:classical-model}
\end{equation}
Here $\langle u|\rho|u\rangle_Z$ is the subnormalized operator on $Z$
obtained by contracting the visible--memory factor against $|u\rangle$.
Also, $p_t(v\mid u)\geq0$ and $\sum_vp_t(v\mid u)=1$.
The Kraus operators $\sqrt{p_t(v\mid u)}|v\rangle\langle u|$
give a CPTP map on the entire input algebra. The memory component of
$v$ is passed to the next transition. Thus the class includes
time-dependent stochastic memory updates with arbitrary real
probabilities. Every persistent random record belongs to the $r$-state
memory; fresh local randomness is discarded. The tester class is the
same as for $\mathcal P_{d,r}(K)$.

\begin{proposition}[Capacity of fully measured processes]
\label{pred:classical-capacity}
For integers $d\geq2$, $r\geq1$, $K\geq1$, and $0<\gamma\leq1$,
\begin{equation}
 K+1\leq\operatorname{sfat}_\gamma\mathcal C_{d,r}(K)
 \leq\left\lfloor
       \frac{(1+Kdr)\log_2(dr)}{c_\gamma}
      \right\rfloor,
 \label{pred:classical-bound}
\end{equation}
where $c_\gamma$ is defined in Eq.~\eqref{up:information-increment}.
Consequently, at fixed $d,r,\gamma$, this capacity is
$\Theta_{d,r,\gamma}(K)$.
\end{proposition}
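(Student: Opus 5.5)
The plan splits into a lower bound from a concrete subfamily and an upper bound from an exact classical program fed to Lemma~\ref{up:information}.

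\emph{Lower bound.} The aim is $K+1$ independent perfect-response coordinates inside $\mathcal C_{d,r}(K)$, following the replacer construction in the proof of Proposition~\ref{low:grid} but with only two probability levels. Take memory of size one, a diagonal initial state $|b_0\rangle\langle b_0|$ with $b_0\in\{0,1\}$, and at transition $t$ the kernel $p_t(v\mid u)=\mathbf 1_{\{v=b_t\}}$, which discards the input and emits the basis state $|b_t\rangle$. This kernel is a legitimate stochastic kernel of the form \eqref{pred:classical-model}, so each such process lies in $\mathcal C_{d,r}(K)$. A tester that measures $|1\rangle\langle1|$ on output $B_t$ and completes the remaining transitions has response exactly $b_t$. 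Concatenating the $K+1$ one-bit trees with threshold $1/2$ gives margin $1/2\ge\gamma/2$. Every leaf fixes all the $b_t$, hence one process, so the depth is $K+1$ for every $0<\gamma\le1$.

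\emph{Upper bound.} First I will treat the case $r\ge1$ with $n=dr$ visible--memory labels.
\begin{itemize}
\item Each kernel $p_t(\cdot\mid u)$, for each of the $n$ inputs $u$, is a probability vector on $n$ outcomes. So $p_t$ is the product mixture $\sum_{g}\prod_u p_t(g(u)\mid u)\,\delta_g$ over the $n^n$ deterministic functions $g:[n]\to[n]$.
\item The diagonal initial state is a mixture of $n$ basis states.
\item Because the tester response is multilinear in the initial state and in the $K$ transitions, and each transition acts as in \eqref{pred:classical-model} including on reference-correlated inputs, I will expand $f_P(E)=\sum_{(x_0,g_1,\dots,g_K)}w(x_0,g_1,\dots,g_K)\,f_{P_{x_0,g}}(E)$. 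Here the weights are fixed by $P$ and independent of $E$, and each $P_{x_0,g}$ is a fixed deterministic measured process.
\end{itemize}
This is the same reference-complete expansion already used in \eqref{noise:whole-process-mixture}. It realizes the response class inside the convex hull of $D_0=n\cdot n^{nK}$ fixed response functions. The convex-hull clause of Lemma~\ref{up:information} with $\eta=0$ then gives
\[
 \operatorname{sfat}_\gamma\le\lfloor(1+Kn)\log_2 n/c_\gamma\rfloor,
\]
which is the stated bound with $n=dr$.

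The only point needing care is that the measured form \eqref{pred:classical-model} makes each deterministic component a genuine CPTP channel with the same reference action. Then the mixture identity holds as an identity of complete channels, and therefore under every adaptive tester and not merely on the diagonal inputs; I expect checking this to be the main, though mild, obstacle.

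Finally I need $\log_2 n\ge1$ for the bound to be nontrivial, which holds since $d\ge2$. The conclusion $\Theta_{d,r,\gamma}(K)$ follows because both sides are linear in $K$ at fixed $d,r,\gamma$, using $c_\gamma>0$.
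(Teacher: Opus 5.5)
Your proposal is correct and follows the paper's proof essentially step for step. The lower bound uses the same $K+1$ bit-replacer subfamily. The upper bound uses the same decomposition of each kernel into the $n^n$ deterministic tables with product weights $\prod_u p_t(g(u)\mid u)$, which gives $\log_2 D_0=(1+Kn)\log_2 n$ in Lemma~\ref{up:information} with $\eta=0$. The only cosmetic difference is that you invoke the lemma's convex-hull clause directly. The paper instead packages the same mixture as a diagonal program state read by a fixed measuring processor, which is exactly how that clause is proved.
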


\begin{proof}
For each function $g:[n]\to[n]$, assign the weight
\[
 w_t(g)=\prod_{u=1}^n p_t(g(u)\mid u).
\]
These weights sum to one, and the sum over $g$ satisfying $g(u)=v$
equals $p_t(v\mid u)$. Thus each transition is a convex mixture of
$n^n$ fixed channels that measure $u$ and prepare $|g(u)\rangle$.
This identity holds in Eq.~\eqref{pred:classical-model}, including
its reference blocks.

Encode the initial basis pair and one deterministic table per
transition in a diagonal state on a common space of dimension
\[
 D_{\rm cl}=n(n^n)^K=n^{1+Kn}.
\]
This auxiliary space stores an information encoding of the target;
the propagated process memory has dimension $r$.
The target determines its initial distribution and the product of
the table distributions $w_t$. A fixed processor measures these
program registers, delivers the indicated initial visible--memory
pair, and applies the indicated measured table at each transition.
It is a CPTP circuit on every program input. Its equality with the
target as a complete channel at each step persists under arbitrary
adaptive tester interventions. Composing the circuit with any tester
therefore gives an ordinary effect on this common program space.
Lemma~\ref{up:information}, with $\eta=0$, proves the upper bound.
This is the diagonal specialization of the component program in
Eq.~\eqref{inf:address-form}.

For the lower bound, choose $K+1$ independent bits. Emit the first
as the initial output and replace the input at each transition by
the corresponding next bit. To query one bit, measure that output,
retain the classical result, and complete the remaining transitions.
These are memory-one processes of the form
\eqref{pred:classical-model}. One fixed bit list realizes each
complete leaf, giving $K+1$ exact response levels.
\end{proof}

\begin{remark}[Binary endpoint and resources]
Using the common tester domain with $d=2$, $r=1$, and $\gamma=1$,
the upper bound is $2K+1$ and is attained. Choose the initial bit and both
outputs of every
deterministic binary transition independently. A fresh complete run
queries any one of these bits by preparing the appropriate input
at its transition and recording its output. Hence
$\operatorname{sfat}_1\mathcal C_{2,1}(K)=2K+1$.
The phase-tree witnesses in Theorem~\ref{low:phase-tree} also have
one-dimensional process memory. At $K=8$, that tree has depth
$L(8)=20$, while this classical maximum is $17$.
\end{remark}

\paragraph*{Prediction from scalar responses.}
We distinguish the number of prediction rounds from the number of
transitions in each process run.

\begin{definition}[Fixed-target scalar feedback]
\label{pred:scalar-protocol}
Fix a target $P\in\mathcal P_{d,r}(K)$ before round one and a
precision $\varepsilon>0$. Predictions are scalars for individual
testers, with no requirement that they be responses of one fixed
physical hypothesis process. At round $t$, the learner receives the
classical specification of a tester $E_t$, predicts
$\widehat y_t\in[0,1]$, and receives only a scalar $b_t\in[0,1]$
such that
\begin{equation}
 |b_t-f_P(E_t)|\leq\varepsilon/8.
 \label{pred:feedback-accuracy}
\end{equation}
The announced testers and feedback errors may depend on the
preceding history. The learner's random coins are independent of
$P$; its access to the target consists of this scalar feedback,
with no additional target queries, internal tester records, or
retained quantum systems. For $N$ rounds, define
$M_\varepsilon=\sum_{t=1}^N
 \mathbf1_{\{|\widehat y_t-f_P(E_t)|>\varepsilon\}}$.
Let $\mathfrak M_\varepsilon(N;d,r,K)$ be the infimum, over
randomized learners, of the supremum of $\mathbb E M_\varepsilon$
over fixed targets and permitted tester and feedback histories.
\end{definition}

\begin{theorem}[Fixed-precision prediction law]
\label{pred:mistake-law}
For integers $d\geq2$, $r,K,N\geq1$, and
$0<\varepsilon<1/2$, put $n=dr$ and let $p_{n,K}$ be as in
Eq.~\eqref{up:physical-dimension}. Then
\begin{equation}
 \frac12\min\{N,L(K)\}
 \leq\mathfrak M_\varepsilon(N;d,r,K)
 \leq\min\left\{N,
 \left\lfloor p_{n,K}\log_2
       \left(1+\frac{16(K+1)}{\varepsilon}\right)\right\rfloor
       \right\}.
 \label{pred:mistake-bounds}
\end{equation}
In particular, for fixed $d,r,\varepsilon$,
\begin{equation}
 \mathfrak M_\varepsilon(N;d,r,K)
 =\Theta_{d,r,\varepsilon}
       \bigl(\min\{N,K\log(K+1)\}\bigr).
 \label{pred:mistake-order}
\end{equation}
\end{theorem}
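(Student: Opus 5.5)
The plan is to prove the upper and lower bounds in Eq.~\eqref{pred:mistake-bounds} separately, using the physical cover for the first and the exact phase tree for the second. The order statement \eqref{pred:mistake-order} then follows: $L(K)=\Theta(K\log(K+1))$ by the computation in Corollary~\ref{low:fixed-memory-joint}, and $p_{n,K}\log_2(1+16(K+1)/\varepsilon)=O_{d,r,\varepsilon}(K\log(K+1))$.

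\emph{Upper bound.} The bound $N$ is trivial. For the other term, take a proper $D$-cover of $\mathcal P_{d,r}(K)$ of radius $\varepsilon/4$ from Proposition~\ref{up:physical-cover}. Its centers $P_1,\dots,P_M$ satisfy $\log_2M\leq p_{n,K}\log_2(1+16(K+1)/\varepsilon)$, and every response of the target lies within $\varepsilon/8$ of the responses of some center $P_{i^*}$. The learner runs a halving (median) scheme on a version space $V$, initialized to all centers.
\begin{itemize}
\item In round $t$, predict $\widehat y_t$ equal to a median of the values $\{f_{P_i}(E_t):i\in V\}$.
\item After receiving $b_t$, delete every $i$ with $|f_{P_i}(E_t)-b_t|>\varepsilon/4$.
\end{itemize}
The true center survives every round, since $|f_{P_{i^*}}-b_t|\leq\varepsilon/8+\varepsilon/8$.

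Now suppose a mistake occurs, $|\widehat y_t-f_P(E_t)|>\varepsilon$. Then $|\widehat y_t-b_t|>7\varepsilon/8$. Assume, say, $b_t>\widehat y_t$. At least half of $V$ has values $\leq\widehat y_t$, and each of these is farther than $\varepsilon/4$ from $b_t$, so at least half of $V$ is deleted. The case $b_t<\widehat y_t$ is symmetric. Hence the number of mistakes is at most $\lfloor\log_2M\rfloor$. This learner is deterministic and uses only the scalar feedback.

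\emph{Lower bound.} Use the finite family of Theorem~\ref{low:phase-tree}, whose tree has depth $L=L(K)$, and draw a uniformly random leaf $x\in\{0,1\}^L$. This fixes the target $P_x$ before round one. For $t\leq\min\{N,L\}$, the adversary announces the node tester $E_{x_{<t}}$ and returns the exact feedback $b_t=f_{P_x}(E_{x_{<t}})=x_t$ (up to labeling), which is exactly zero or one. Exact feedback is allowed by \eqref{pred:feedback-accuracy}, and the announced testers depend only on previous feedback, as required.

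Conditioned on the learner's history and its coins, which are independent of $x$, the bit $x_t$ is still uniform. This is the main point to verify. It holds because the tree's responses at earlier nodes reveal only $x_{<t}$, and the leaves are chosen uniformly, so $x_t$ is independent of the learner's information.

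Since $\varepsilon<1/2$, the prediction $\widehat y_t$ is within $\varepsilon$ of at most one of $0$ and $1$. Therefore each round is a mistake with probability at least $1/2$, and $\mathbb E M_\varepsilon\geq\tfrac12\min\{N,L(K)\}$. Any remaining rounds after the tree is exhausted can repeat an earlier node and only add mistakes. A randomized learner's supremum over targets is at least this average over $x$, which gives the bound.

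Where care is needed: the rest is bookkeeping, and the only delicate points are the radius and deletion-threshold choices in the cover argument, which must be checked against the constant $16(K+1)/\varepsilon$.
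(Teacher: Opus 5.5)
Your proposal is correct and follows the paper's proof essentially step for step. For the upper bound you use the same radius-$\varepsilon/4$ physical cover, median prediction and deletion threshold $\varepsilon/4$, with the constant $4(K+1)/(\varepsilon/4)=16(K+1)/\varepsilon$ and the halving count to $\lfloor\log_2 M\rfloor$. For the lower bound you use the same uniformly random leaf of the exact phase tree with exact feedback, which leaves each queried bit fair.
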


\begin{proof}
For the upper bound, use a halving argument on a finite cover of
processes, maintaining its surviving centers and evaluating their responses.
By Proposition~\ref{up:physical-cover}, there is a proper $D$-cover
$V_0$ of radius $\varepsilon/4$ with
\[
 \log_2|V_0|\leq p_{n,K}\log_2
                 \left(1+\frac{16(K+1)}{\varepsilon}\right).
\]
It contains a fixed center $g_*$ whose response differs from the
target's by at most $\varepsilon/8$ on every tester. Starting with
$V_0$, predict a median of the current responses
$\{f_g(E_t):g\in V_{t-1}\}$ and retain
\[
 V_t=\{g\in V_{t-1}:|f_g(E_t)-b_t|\leq\varepsilon/4\}.
\]
The center $g_*$ survives every update. Every surviving center
has response within $3\varepsilon/8$ of $f_P(E_t)$. On a round
with error greater than $\varepsilon$, all survivors therefore
lie strictly on one side of the previous median. At most half
of the centers survive. Since $V_t$ remains nonempty, the number
of such rounds is at most $\lfloor\log_2|V_0|\rfloor$.

For the lower bound, choose a uniformly random leaf
$B\in\{0,1\}^{L(K)}$ of Theorem~\ref{low:phase-tree} before
prediction begins, independently of the learner's coins, and
fix its witness $P_B$. Label the tree by its exact response bits.
At round $t\leq\min\{N,L(K)\}$, announce the tester determined
by $B_{<t}$ and give exact feedback $b_t=B_t$. Conditional on
the preceding history and the learner's coins, $B_t$ is a fair
bit: both the tester and all earlier scalar feedback depend
only on the prefix. For $\varepsilon<1/2$, at least one of
$0,1$ is more than $\varepsilon$ from every scalar prediction.
Thus each of these rounds has error probability at least
$1/2$. Averaging over leaves supplies one fixed target with
the asserted expected-error lower bound for the given learner.
Remaining rounds, if any, can use an arbitrary tester.
The asymptotic statement follows from
$L(K)=\Theta(K\log(K+1))$ and $p_{n,K}=O_{d,r}(K)$.
\end{proof}

\begin{remark}[Sampling scalar feedback]
To implement the feedback in
Eq.~\eqref{pred:feedback-accuracy} for a prescribed $N$, let
$0<\zeta<1$ and use
\begin{equation}
 m=\left\lceil32\varepsilon^{-2}
                    \ln\frac{2N}{\zeta}\right\rceil
 \label{pred:sample-cost}
\end{equation}
independent fresh complete runs of the announced tester per
round. Conditional on the past, Hoeffding's
inequality~\cite{hoeffding1963probability} bounds the failure
probability of its empirical mean by
$2\exp(-m\varepsilon^2/32)\leq\zeta/N$. A union bound gives
simultaneous accuracy for all $N$ rounds with probability at
least $1-\zeta$, including adaptively chosen testers. The total is $Nm$ complete
process calls and $KNm$ transition calls, together with reset,
tester-control, readout, and classical computation costs.
\end{remark}

\begin{remark}[Comparison feedback]
Consider deterministic prediction from threshold-side feedback, using
the announced tester and threshold and the earlier branch labels.
Exact-probability and Bernoulli feedback are different protocols,
treated separately in this section. At a node, the learner's prediction
of the response probability lies on one side of the threshold.
Choosing the other branch forces absolute error at
least $\gamma/2$. Following these branches through a finite shattered
tree leaves one fixed process consistent with every chosen response.
\end{remark}

\paragraph*{Prediction from single outcomes.}
Fix one target $P\in\mathcal P_{d,r}(K)$ before learning begins.
In round $t$, announce a tester $E_t$, predict
$\widehat p_t\in[0,1]$, and perform one freshly initialized complete
run, revealing its final binary outcome $Y_t$. Writing
$p_t=f_P(E_t)$, the feedback satisfies
\begin{equation}
 Y_t\mid P,\mathcal H_{t-1},E_t,\widehat p_t
 \sim\operatorname{Bernoulli}(p_t),
 \label{pred:bernoulli-feedback}
\end{equation}
where $\mathcal H_{t-1}$ contains the previous announced testers,
predictions and outcomes. The tester-selection rule uses this public
history and random coins independent of $P$; its target access is
through the public outcomes. The learner also uses independent coins,
receives only tester specifications and final outcomes, and retains
classical records between rounds. Computation and finite tester
workspace are unrestricted. Define
\begin{equation}
 \mathcal R_N(K;d,r)=\inf_{\mathcal A}\sup_{P,\mathcal E}
 \mathbb E\sum_{t=1}^N(\widehat p_t-f_P(E_t))^2,
 \label{pred:bernoulli-risk-definition}
\end{equation}
where the supremum ranges over fixed targets and permitted
tester-selection rules. Single-outcome feedback and squared
probability error were also studied for fixed quantum states by
Aaronson et al.~\cite{aaronson2019online}. Here the physical cover
and the exact phase tree give the joint dependence on process
length and prediction rounds.

\begin{corollary}[Single-outcome prediction risk]
\label{pred:bernoulli-risk}
For every fixed $d\geq2$ and $r\geq1$, uniformly over integers
$K,N\geq1$,
\begin{equation}
 \mathcal R_N(K;d,r)
 =\Theta_{d,r}\!\left(\min\{N,K\ln(K+N)\}\right).
 \label{pred:bernoulli-order}
\end{equation}
\end{corollary}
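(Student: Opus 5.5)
The plan is to prove the two sides of Eq.~\eqref{pred:bernoulli-order} separately. The upper bound comes from a likelihood-weighted mixture over the physical cover of Proposition~\ref{up:physical-cover}. The lower bound combines two subfamilies, the exact phase tree of Theorem~\ref{low:phase-tree} and continuous-bias replacer processes as in Proposition~\ref{low:grid}, mirroring Corollary~\ref{low:fixed-memory-joint}. The bound $\mathcal R_N\leq N$ is trivial, since each squared error is at most one, so only the regime $K\ln(K+N)<N$ needs the cover.

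\emph{Upper bound.} I take a proper $D$-cover $V$ of $\mathcal P_{d,r}(K)$ with radius $2/N$, so each tester response of the target is within $1/N$ of some center $g_*$. By Eq.~\eqref{up:cover-entropy}, $\ln|V|\leq p_{n,K}\ln(1+2(K+1)N)=O_{d,r}(K\ln(K+N))$, using $\ln(KN)\leq2\ln(K+N)$.

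Each center predicts the smoothed value $q_g(E_t)=(1-\delta)f_g(E_t)+\delta/2$ with $\delta=1/N$; these values are computable by the learner from the tester specification. The learner predicts with the Bayes mixture $\widehat p_t$ of the $q_g(E_t)$ under a uniform prior, reweighted by the Bernoulli likelihoods of the observed $Y_1,\dots,Y_{t-1}$. The outcomes are Bernoulli$(p_t)$ conditionally on the public history, even though the testers are chosen adaptively. The standard chain-rule argument for the mixture therefore gives
\[
\mathbb E\sum_t \mathrm{KL}(p_t\|\widehat p_t)\leq \ln|V|+\mathbb E\sum_t\mathrm{KL}(p_t\|q_{g_*}(E_t)).
\]
The smoothing keeps $q_{g_*}\in[\delta/2,1-\delta/2]$, while $|p_t-q_{g_*}|\leq 1/N+\delta$. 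The $\chi^2$ bound on KL then makes each comparator term at most $(2/N)^2/(\delta/4)=O(1/N)$, so these terms total $O(1)$. Pinsker's inequality, $(p-\widehat p)^2\leq \mathrm{KL}/2$ in nats, converts the KL bound into the required squared-error bound.

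\emph{Lower bound, phase-tree part.} I draw a uniform leaf $B$ of the tree of Theorem~\ref{low:phase-tree} and let the tester-selection rule announce the node tester determined by the observed prefix. Each response is exactly $0$ or $1$, so $Y_t=B_t$ deterministically. Conditional on the history, $B_t$ is therefore a fair bit, and every prediction incurs expected squared error at least $1/4$. This gives $\mathcal R_N\geq\tfrac14\min\{N,L(K)\}=\Omega(\min\{N,K\ln(K+1)\})$.

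\emph{Lower bound, estimation part.} I use the memory-one replacer processes of Proposition~\ref{low:grid} with $K+1$ coordinates. Each coordinate emits $\rho(p_i)$ with $p_i$ drawn independently and uniformly on $[0,1]$, which is allowed because arbitrary real probabilities are admissible. The selection rule queries the coordinates round-robin. Under the uniform prior the posterior of $p_i$ after $k$ samples is a Beta distribution with variance at least $c/(k+3)$. Since the posterior mean is the optimal predictor, the Bayes risk is at least the sum of posterior variances, namely $\sum_{i}\sum_{k<\lfloor N/(K+1)\rfloor}c/(k+3)\gtrsim K\ln(N/(K+1))$ whenever $N\geq 2(K+1)$. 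A fixed target attaining at least the Bayes average then exists.

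\emph{Combining the cases.} If $N\leq K^2$ (up to constants), then $\ln(K+N)=O(\ln(K+1))$ and the phase-tree bound suffices together with the $\min$ with $N$. If $N\geq K^2$, then $\ln(N/(K+1))\geq c\ln(K+N)$ and the estimation bound applies. I expect the main care to be in the upper bound. The log-loss comparator must be kept bounded via the smoothing, and the mixture regret must be justified under adaptively chosen testers. The latter requires that, conditionally on the history, the likelihood of each $Y_t$ depends only on the announced $E_t$; this is exactly what Eq.~\eqref{pred:bernoulli-feedback} provides.
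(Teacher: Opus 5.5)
Your argument is correct. The two lower bounds are essentially the paper's: a uniform leaf of the exact phase tree, and uniform-prior replacer coordinates queried cyclically with Beta posteriors. Your upper bound, however, takes a different route.

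The paper runs exponential weights directly on the square loss over the cover. Concavity of $x\mapsto\exp[-(x-y)^2/2]$ on $[0,1]$ gives realized Brier regret at most $2\ln S$. The identity $\mathbb E[(u-Y_t)^2\mid\cdot]=(u-p_t)^2+p_t(1-p_t)$ then cancels the outcome variance, leaving $2\ln S+Na^2$ with cover accuracy $a=N^{-1/2}$.

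You instead use the Bayes mixture under log loss, whose regret $\ln|V|$ holds for every outcome sequence. You turn expected log-loss regret into $\sum_t\mathrm{KL}(p_t\|\widehat p_t)$ via the conditional Bernoulli law, and finish with Pinsker. Because log loss is unbounded, you must smooth the experts and refine the cover to accuracy of order $1/N$, so that the comparator's $\chi^2$ terms sum to $O(1)$. This changes the logarithm only by a constant factor. In return you get a smaller leading constant ($\tfrac12\ln|V|$) without appealing to exp-concavity of the square loss.

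Two small precision points:
\begin{itemize}
\item The variance of $\mathrm{Beta}(s+1,k-s+1)$ is not pointwise at least $c/(k+3)$; it is of order $k^{-2}$ at $s\in\{0,k\}$. What you actually use, and what holds, is its average over the uniform marginal of $s$, namely $1/(6(k+2))$.
\item Your case split needs an explicit constant. At the literal threshold $N\geq K^2$ the estimation bound can fail, e.g.\ $K=1$, $N\leq3$, where $N<2(K+1)$. Splitting at $16K^2$ works. So does the paper's additive combination $K\ln(K+N)\leq K\ln(K+1)+K\ln(1+N/K)$ together with $\min\{N,A\}+B\geq\min\{N,A+B\}$.
\end{itemize}
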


\begin{proof}
For $a>0$, Proposition~\ref{up:physical-cover} gives physical
centers $P_1,\ldots,P_S$ with one fixed center approximating the
target's responses within $a$ on every tester, and
\[
 \ln S\leq p_{dr,K}\ln[1+2(K+1)/a].
\]
Use exponential weighting for square loss: put
$g_{i,t}=f_{P_i}(E_t)$, start with $w_{i,1}=1/S$, predict
$\widehat p_t=\sum_iw_{i,t}g_{i,t}$, and update weights in
proportion to $w_{i,t}\exp[-(g_{i,t}-Y_t)^2/2]$.
For $y\in\{0,1\}$, the second derivative of
$\exp[-(x-y)^2/2]$ is
$((x-y)^2-1)\exp[-(x-y)^2/2]\leq0$ on $[0,1]$.
Jensen's inequality and the product of the weight normalizers
therefore give, for every realized history and every fixed $i$,
\[
 \sum_t(\widehat p_t-Y_t)^2
 \leq\sum_t(g_{i,t}-Y_t)^2+2\ln S.
\]
Conditional on the information preceding $Y_t$, including the
current tester and realized predictor randomness, Eq.~\eqref{pred:bernoulli-feedback}
implies
\begin{equation}
 \mathbb E[(u-Y_t)^2\mid\cdot]
 =(u-p_t)^2+p_t(1-p_t).
 \label{pred:brier-decomposition}
\end{equation}
Apply this identity to the prediction and the same approximating
center in every round. The conditional variances cancel after
expectation, leaving risk at most $2\ln S+Na^2$.
Taking $a=N^{-1/2}$ gives
\[
 \mathcal R_N(K;d,r)
 \leq\min\{N/4,\,
  2p_{dr,K}\ln[1+2(K+1)\sqrt N]+1\},
\]
where the first bound uses the constant prediction $1/2$.
Since $p_{dr,K}=O_{d,r}(K)$, this proves the required upper order.

For the first lower bound, draw a uniformly random leaf of the
depth-$L(K)$ tree in Theorem~\ref{low:phase-tree} and fix its
process before learning. Along this tree the probabilities are
exactly zero or one, so the observed outcomes reveal the response
bits without error. The next tester depends only on the revealed
prefix, and the next bit remains fair conditional on the learner's
history and coins. Every prediction therefore has expected square
error at least $1/4$ until the leaf or the round budget is exhausted.
Averaging over leaves yields
$\mathcal R_N\geq\min\{N,L(K)\}/4$, with
$L(K)\geq(K/4)\log_2(K+1)$.

For the second lower bound, use memory-one replacer channels
\[
 \Phi_i(X)=\Tr(X)\bigl[(1-\vartheta_i)|0\rangle\langle0|
                       +\vartheta_i|1\rangle\langle1|\bigr],
 \qquad 1\leq i\leq K,
\]
with a fixed initial output. These are CPTP on the full input
algebra. A tester querying coordinate $i$ measures its output,
keeps the resulting bit, completes all remaining transitions,
and reveals only that bit at the end. Cycle deterministically
through the coordinates and draw the parameters independently
and uniformly from $[0,1]$ before round one.

After $m$ observations of one coordinate, of which $s$ are one,
its posterior is $\operatorname{Beta}(s+1,m-s+1)$. The count $s$
is uniform on $\{0,\ldots,m\}$ under this prior. Other coordinates
and the learner's coins supply no information about this parameter,
so every predictor incurs expected square error at least
\[
 \frac1{m+1}\sum_{s=0}^m
 \frac{(s+1)(m-s+1)}{(m+2)^2(m+3)}
 =\frac1{6(m+2)}.
\]
Summing along the cyclic schedule gives
\begin{align*}
 \mathcal R_N(K;d,r)
 &\geq\frac K6\sum_{t=1}^N\frac1{t-1+2K}\\
 &\geq\frac K6\ln(1+N/(2K))
 \geq\frac K{12}\ln(1+N/K).
\end{align*}
Prior averaging again selects one fixed target for each learner.
Both lower-bound families embed into every $d\geq2,r\geq1$.

Finally, put $A=K\ln(K+1)$ and $B=K\ln(1+N/K)$.
The two lower bounds give constant multiples of $\min\{N,A\}$
and $B$. Since
$K\ln(K+N)\leq A+B$ and
$\min\{N,A\}+B\geq\min\{N,A+B\}$, their maximum has the
lower order in Eq.~\eqref{pred:bernoulli-order}.
\end{proof}

The protocol uses $N$ complete target runs, $KN$ unknown
transitions and $N$ initial outputs, together with resets,
tester controls and readout. Equation~\eqref{pred:brier-decomposition}
separates probability risk from the outcome's random variance.
The cover predictor maintains $S$ weights and updates them in
$O(S)$ arithmetic operations per round, in addition to computing
the $S$ known-center responses. Constructing the physical cover,
representing its centers and evaluating those responses require
further computation and precision.

\section{Temporal parameters and tester workspace}
\label{temp:section}

We first restrict the tester to the visible qubit while retaining
independent phases. We then retain the full tester class and require
one unknown phase channel to be repeated at every transition. These
two changes separate tester workspace from independent temporal
parameters.

\paragraph*{Coherent reuse of the visible interface.}
Consider the memory-one qubit phase processes of
Eq.~\eqref{low:phase-process}, initialized with $|0\rangle$.
Restrict the testers to retaining only the current visible qubit
between transitions, together with classical records. They may
apply known qubit operations between calls and a final binary
measurement. Write $\mathcal F_{\rm vis}(K)$ for the response
class on this tester domain; no additional private quantum
register is carried across a transition.

\begin{proposition}[Phase trees on the visible interface]
\label{temp:interface-tree}
For $K=2^\ell$, $\ell\geq0$, the class
$\mathcal F_{\rm vis}(K)$ has an exact zero-or-one response tree
of depth $1+\ell K/2$. For every fixed $0<\gamma\leq1$ and
all integers $K\geq1$,
\begin{equation}
 \operatorname{sfat}_\gamma\mathcal F_{\rm vis}(K)
 =\Theta_\gamma(K\log(K+1)).
 \label{temp:interface-order}
\end{equation}
\end{proposition}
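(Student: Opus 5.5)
The plan has three parts: check which queries the visible-interface tester can still make, adapt the Boolean phase tree of Theorem~\ref{low:phase-tree} to those queries, and take the upper bound for free.

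\textbf{Realizable queries.} Restricting the tester to the current visible qubit only shrinks the tester domain. Hence every response tree for $\mathcal F_{\rm vis}(K)$ is also a tree for the full class, and the upper half of Eq.~\eqref{temp:interface-order} follows from Proposition~\ref{up:physical-cover} (equivalently Corollary~\ref{low:fixed-memory-joint}) with $d=2$, $r=1$.

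For the lower bound, I first identify the available queries. Prepare the visible qubit in $|+\rangle$ and send it through every transition in turn. Before and after transition $t$, apply either nothing or $X$. Since $XU_\theta X=e^{i\theta}U_{-\theta}$, the relative phase acquired at transition $t$ is $s_t\theta_t$ with $s_t=+1$ (no conjugation) or $s_t=-1$ (conjugation by $X$), up to a global phase. The final measurement projects onto $(|0\rangle+e^{i\phi}|1\rangle)/\sqrt2$. This realizes Eq.~\eqref{low:signed-response} for every $s\in\{-1,1\}^K$. The coefficient $0$ is not available, because the qubit has nowhere to be parked during a transition; this is why the depth drops from $K(1+\ell/2)$ to $1+\ell K/2$.

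\textbf{Adapted Boolean tree.} Take $K=2^\ell$, index transitions by $Z\subseteq[\ell]$, and use the Möbius coordinates $\alpha_T$ of Eq.~\eqref{low:mobius-forward}.
\begin{itemize}
\item \emph{Root bit.} The all-ones query gives $\sum_Z\theta_Z=\alpha_\varnothing$. Choose $\alpha_\varnothing\in\{0,\pi\}$, which contributes one exact bit.
\item \emph{Labels for $T\neq\varnothing$.} Choose $\alpha_T=2\pi m_T/2^{|T|+1}$ with $m_T$ \emph{even}. This leaves $|T|$ free bits, namely bits $1,\ldots,|T|$ of $m_T$.
\item \emph{Total depth.} The count is $1+\sum_{T\neq\varnothing}|T|=1+\ell 2^{\ell-1}=1+\ell K/2$, as claimed.
\end{itemize}
To reach the coefficient $2^{j}\alpha_T$ with a $\pm1$ vector, fix $J\subseteq T$ and consider
\[
 s_{T,J}(Z)=\chi_J(Z)\bigl(2\textstyle\prod_{i\in T\setminus J}z_i(Z)-1\bigr),
 \qquad \chi_J(Z)=\prod_{i\in J}(2z_i(Z)-1).
\]
Its entries lie in $\{-1,1\}$, and $s_{T,J}=2q_{T,J}-\chi_J$. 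By Eqs.~\eqref{low:dyadic-isolation}--\eqref{low:known-phase},
\[
 \sum_Z s_{T,J}(Z)\theta_Z=2^{j+1}\alpha_T+2\beta_{T,J}-\chi_J\!\cdot\theta .
\]
When $J\subsetneq T$, the functional $\chi_J\!\cdot\theta$ involves only $\alpha_A$ with $A\subseteq J$, which are already known on the path. The case $J=T$ instead gives $\chi_T\!\cdot\theta=2^{|T|}\alpha_T+\cdots$, and it must be handled separately.

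Taking $|J|=|T|-k$ for $k=1,\ldots,|T|$, the coefficient $2^{|T|-k+1}$ on $\alpha_T$ isolates bit $k$ of $m_T$. The offset $\phi$ is then set as in Eq.~\eqref{low:node-phase}, using the lower bits already revealed. The response is exactly $1-b_k$, and every complete path fixes one phase vector through Eq.~\eqref{low:mobius-inverse}.

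\textbf{General $K$ and the main obstacle.} For general $K$, pad with identity transitions up to the largest dyadic block $N=2^{\lfloor\log_2K\rfloor}>K/2$. The tester sends the qubit through the padding with sign $+1$. This gives depth $1+N\log_2N/2=\Omega(K\log(K+1))$, which is $\Theta_\gamma$ at any fixed margin since the responses are exactly zero or one.

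I expect the bookkeeping of the previous step to be the hard part: checking that the extra term $\chi_J\!\cdot\theta$ and the doubling leave each bit of $m_T$ readable exactly once. In particular, bit $0$ of $m_T$ cannot be read, because the largest available coefficient $2^{|T|+1}$ kills it modulo $2\pi$. This forces the parity convention, and it is exactly where $K-1$ levels are lost relative to Theorem~\ref{low:phase-tree}. If the $J=T$ case misbehaves, I would first try to read the top bit through $J=\varnothing$, whose query $2\prod_{i\in T}z_i-1$ carries $\alpha_T$ with coefficient $2$, and then reindex the remaining bits accordingly.
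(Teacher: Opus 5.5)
Your overall route matches the paper's. $X$-conjugated forward calls realize every $s\in\{-1,1\}^K$. The strings $\chi_J(2\prod_{i\in T\setminus J}z_i-1)$ with $J\subsetneq T$ give $2^{|J|+1}\alpha_T$ plus an offset in already revealed coordinates. Padding handles general $K$, and the upper bound is the physical cover on a smaller tester domain.

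The step that fails is your label convention and the bit bookkeeping tied to it. With $\alpha_T=2\pi m_T/2^{|T|+1}$, the coefficient $2^{|T|-k+1}$ produces $2\pi m_T/2^{k}$ modulo $2\pi$. Its $\pi$-term is bit $k-1$ of $m_T$, not bit $k$.
\begin{itemize}
\item Your first node for $T$ ($k=1$, coefficient $2^{|T|}$) reads bit $0$, which your parity convention fixes to zero. Its response is identically one, so its lower branch has no witness.
\item Bit $|T|$, which you left free, is never read, and no signed query can read it. For $s\in\{-1,1\}^K$ the coefficient of $\alpha_T$ is $\sum_{Z\subseteq T}(-1)^{|T|-|Z|}s(Z)$. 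This is a sum of $2^{|T|}$ signs, hence even when $T\neq\varnothing$, so $\alpha_T\mapsto\alpha_T+\pi$ changes no response.
\item Already at $K=2$ your labels give $\theta_{\{1\}}\in\{0,\pi\}$. Every $s\in\{\pm1\}^2$ then yields $\pm\alpha_\varnothing$ modulo $2\pi$, so the second level collapses.
\item Your fallback through $J=\varnothing$ does not help, since coefficient $2$ reads bit $|T|-1$.
\end{itemize}

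Your diagnosis of which bit is lost is therefore inverted. The largest available coefficient is $2^{|T|}$, from $|J|=|T|-1$; the case $J=T$ gives the same coefficient and is not needed. That coefficient reads bit $0$ without difficulty. What is lost for each nonempty $T$ is the \emph{top} bit, because an odd coefficient on $\alpha_T$ is impossible.

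The correct convention is the paper's: take $0\le m_T<2^{|T|}$. Read bit $k\in\{0,\ldots,|T|-1\}$ with $|J|=|T|-k-1$, coefficient $2^{|T|-k}$, and offset $\phi=2\beta_{T,J}-\chi_J\cdot\theta+2\pi r_k/2^{k+1}$, where $r_k=m_T\bmod 2^k$. The accumulated phase minus $\phi$ is then $\pi b_k$, all $|T|$ bits are read, and the depth is $1+\ell K/2$. The $K-1$ levels lost relative to Theorem~\ref{low:phase-tree} are exactly these top bits. With this change the rest of your argument goes through as written.
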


\begin{proof}
For a vector $s\in\{-1,1\}^K$, prepare the visible qubit in
$|+\rangle$. Use each forward target transition once, conjugating
the call by the known Pauli $X$ at a negative sign. Since
$XU_\theta X=e^{i\theta}U_{-\theta}$, the final equatorial
measurement gives the response in Eq.~\eqref{low:signed-response}
using the interface itself.

Suppose $K=2^\ell$ and use the Boolean coordinates in
Eqs.~\eqref{low:mobius-forward}--\eqref{low:mobius-inverse}, now
indexed by subsets of $[\ell]$. Choose
$\alpha_\varnothing\in\{0,\pi\}$ and, independently for every
nonempty $T\subseteq[\ell]$, choose
\[
 \alpha_T=\frac{2\pi m_T}{2^{|T|+1}},
 \qquad 0\leq m_T<2^{|T|}.
\]
The inverse formula fixes one complete phase vector for each
choice. The all-positive query reveals
$\alpha_\varnothing$. Process nonempty sets by increasing
cardinality and the bits of $m_T$ from least to most significant.
For $J\subsetneq T$, put
\begin{equation}
 p_{T,J}(Z)=\prod_{i\in J}(2z_i(Z)-1)
       \left(2\prod_{i\in T\setminus J}z_i(Z)-1\right)
       \in\{-1,1\}.
 \label{temp:full-signs}
\end{equation}
This is $2q_{T,J}$ minus the character
$\prod_{i\in J}(2z_i-1)$, where $q_{T,J}$ is defined in
Eq.~\eqref{low:signed-polynomial}. Expanding that character
and using Eq.~\eqref{low:dyadic-isolation} gives, with $j=|J|$,
\begin{align}
 \sum_Zp_{T,J}(Z)\theta_Z
 &=2^{j+1}\alpha_T+\widetilde\beta_{T,J},\nonumber\\
 \widetilde\beta_{T,J}
 &=2\beta_{T,J}
   -\sum_{A\subseteq J}(-1)^{j-|A|}2^{|A|}\alpha_A.
 \label{temp:full-sign-isolation}
\end{align}
All coordinates in $\widetilde\beta_{T,J}$ have indices that are
proper subsets of $T$ and are already known at this node.

For bit $k\in\{0,\ldots,|T|-1\}$, choose a fixed
$J\subset T$ with $|J|=|T|-k-1$. The preceding branches also
specify $r_k=m_T\bmod2^k$. Set the measurement phase to
$\phi=\widetilde\beta_{T,J}+2\pi r_k/2^{k+1}$.
Writing $m_T=r_k+2^kb_k+2^{k+1}a$ makes the accumulated phase
minus $\phi$ equal to $\pi b_k$ modulo $2\pi$. Thus the
response is $1-b_k$, with threshold $1/2$ and signed margin
$1/2$. The number of bits is
\[
 1+\sum_{\varnothing\ne T\subseteq[\ell]}|T|
 =1+\frac{\ell K}{2}.
\]
Every node depends only on its preceding bits, while the entire
leaf determines one fixed phase vector.

For general $K$, use the largest dyadic prefix and fix all
remaining target phases to zero, completing the visible-qubit circuit
with their calls. The prefix has
length greater than $K/2$, giving the required lower order;
$K=1$ has a one-bit tree. The upper bound follows by restricting
the tester domain in Proposition~\ref{up:physical-cover} with
$d=2$ and $r=1$.
\end{proof}

\paragraph*{Repeated transition parameters.}
Let $\mathcal H$ be a family of channels on a fixed visible
space, with a fixed initial output. A process in
$\mathcal H^{[K]}$ applies the same unknown $\Phi\in\mathcal H$
at all $K$ transitions, with one-dimensional process memory.
The testers are the full adaptive class of
Section~\ref{main:model}. A half-diamond cover of radius
$\delta$ approximates every $\Phi$ by a channel $\Psi$ with
$\tfrac12\|\Phi-\Psi\|_\diamond\leq\delta$.

\begin{proposition}[Repeated-channel cover]
\label{temp:repeated-cover}
Suppose that, for $0<\delta\leq1$, $\mathcal H$ has a
half-diamond cover of size at most $(A/\delta)^v$, with
$A\geq1$ and $v>0$ independent of $K$. Then, for
$K\geq1$ and $0<\gamma\leq1$,
\begin{equation}
 \operatorname{sfat}_\gamma\mathcal H^{[K]}
 \leq\left\lfloor v\log_2\frac{4AK}{\gamma}\right\rfloor.
 \label{temp:repeated-bound}
\end{equation}
\end{proposition}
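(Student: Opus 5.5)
The bound will follow from a finite-cover argument of the same type as Proposition~\ref{up:physical-cover} and Lemma~\ref{noise:cover}, using the first-divergence counting. Set $\delta=\gamma/(4K)$, which lies in $(0,1]$, and take a half-diamond cover $\{\Psi_1,\ldots,\Psi_S\}$ of $\mathcal H$ at radius $\delta$ with $S\leq(A/\delta)^v=(4AK/\gamma)^v$. To each $\Phi\in\mathcal H$ assign one fixed center $\Psi(\Phi)$. The center process $\Psi^{[K]}$ applies $\Psi$ at every transition and has the same fixed initial output; it need not belong to $\mathcal H^{[K]}$. That is harmless, since it serves only as a label for counting.

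The key estimate is the hybrid bound. Fix an arbitrary normalized adaptive tester $E$ and replace the $K$ copies of $\Phi$ by $\Psi$ one at a time. At each replacement, the tester's entire retained workspace, together with everything produced earlier, acts as the untouched reference. The half-diamond bound therefore changes the joint state by at most $2\delta$ in trace norm. All surrounding tester operations and the remaining transitions are CPTP, so they contract trace distance. After all $K$ replacements the final states differ by at most $2K\delta$ in trace norm, and the accepting probability changes by at most $K\delta=\gamma/4$, uniformly in $E$. Because the process memory is one-dimensional, no hidden register complicates the reference bookkeeping. The initial output is identical for both processes and contributes no error.

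Now take any finite $\gamma$-shattered tree of depth $L$ and choose its leaf witnesses $\Phi_x^{[K]}$. Two distinct leaves first diverge at a common node with common tester and threshold. There their responses lie on opposite sides of the threshold with margin $\gamma/2$ each, so they differ by at least $\gamma$. If both leaves shared a center, the hybrid bound would make their responses differ by at most $2\cdot\gamma/4=\gamma/2<\gamma$, a contradiction. Distinct leaves therefore have distinct centers. This gives $2^L\leq S\leq(4AK/\gamma)^v$, so $L\leq v\log_2(4AK/\gamma)$; taking the floor and the supremum over finite depths proves Eq.~\eqref{temp:repeated-bound}.

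The only delicate step is the hybrid estimate. I will take care that the diamond norm is applied with the tester's workspace and earlier outputs as the reference, and that intermediate adaptive instruments are complete, so the contraction holds without postselection. Everything else is counting.
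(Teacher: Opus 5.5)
Your proposal is correct and follows the paper's proof step for step. The matching steps are the choice $\delta=\gamma/(4K)$, replacing all $K$ uses of $\Phi$ by one common center with telescoping to get response error at most $K\delta=\gamma/4$, and the first-divergence injection of the $2^L$ leaves into at most $(4AK/\gamma)^v$ centers; your added remarks, that the centers need not lie in $\mathcal H$ and that the tester's workspace is the untouched reference in each diamond-norm step, make explicit what the paper's brief proof leaves implicit.
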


\begin{proof}
Replace each use of $\Phi$ by the same cover center $\Psi$.
Telescoping the complete channels bounds every adaptive
tester response error by $K\delta$, including arbitrary
retained reference systems. Set $\delta=\gamma/(4K)$.
At the first divergence of two leaves of a
$\gamma$-shattered tree, their responses differ by at least
$\gamma$. Two witnesses with the same cover center would
differ there by at most $2K\delta=\gamma/2$. A depth-$L$
tree therefore requires at least $2^L$ centers, proving
the displayed bound.
\end{proof}

\begin{corollary}[One repeated phase]
\label{temp:repeated-phase}
For a memory-one process initialized with $|0\rangle$ and
applying the same unknown qubit phase channel
$U_\theta=\operatorname{diag}(1,e^{i\theta})$ at each of its
$K$ transitions, the sequential capacity under the full tester
class is $\Theta_\gamma(\log(K+1))$ for every fixed
$0<\gamma\leq1$.
\end{corollary}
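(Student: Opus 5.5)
The corollary needs an upper bound of order $\log(K+1)$ and a matching lower bound. I would prove the two sides separately, the upper side from Proposition~\ref{temp:repeated-cover} and the lower side from a phase-estimation-style bisection tree.

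\textbf{Upper bound.} Take $\mathcal H=\{\operatorname{Ad}_{U_\theta}:\theta\in\mathbb R/2\pi\mathbb Z\}$. Put $N=\lceil\pi/\delta\rceil$ equally spaced grid phases on the circle and round each $\theta$ to a nearest grid point $\theta'$.
\begin{itemize}
\item After choosing lifts, the operator-norm error is $\|U_\theta-U_{\theta'}\|\leq|\theta-\theta'|\leq\pi/N$.
\item The difference-of-conjugations estimate used in Lemma~\ref{noise:cover} then gives half-diamond error at most $\pi/N\leq\delta$.
\item This is a cover of size at most $(A/\delta)^1$ with, for example, $A=2\pi$, valid for $0<\delta\leq1$.
\end{itemize}
Proposition~\ref{temp:repeated-cover} with $v=1$ gives $\operatorname{sfat}_\gamma\leq\log_2(8\pi K/\gamma)=O_\gamma(\log(K+1))$.

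\textbf{Lower bound.} I would build a bisection tree on the single phase with exact responses. Let $\ell=\lfloor\log_2K\rfloor$, so $2^\ell\leq K$, and restrict to dyadic phases $\theta=2\pi m/2^{\ell+1}$ with $0\leq m<2^{\ell+1}$. The query for bit $k$ is the controlled experiment of Theorem~\ref{low:phase-tree}, which yields response $\frac{1+\cos(c\theta-\phi)}{2}$:
\begin{itemize}
\item Use coefficient $c=2^{\ell-k}$, realized by putting sign $+1$ on $2^{\ell-k}$ transitions and $0$ on the rest. All transitions carry the same $\theta$, and $2^{\ell-k}\leq K$, so this is available.
\item Then $c\theta=2\pi m/2^{k+1}$.
\item Given the known lower bits $r_k=m\bmod 2^k$, set $\phi=2\pi r_k/2^{k+1}$. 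The response is $1-b_k$, exactly $0$ or $1$.
\end{itemize}
Reading the bits $k=0,\ldots,\ell$ in order gives depth $\ell+1\geq\log_2(K+1)$ with margin $1/2$, valid for every $\gamma\leq1$. Each leaf fixes one $m$, hence one repeated process.

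Both halves are straightforward. The only points to check are that the half-diamond cover meets the hypothesis of Proposition~\ref{temp:repeated-cover} uniformly for $\delta\leq1$, and that the integer coefficients never exceed the number of available transitions.
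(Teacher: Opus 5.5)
Your proposal is correct and follows the paper's proof essentially step for step. The upper bound feeds a $\lceil\pi/\delta\rceil$-point phase grid into Proposition~\ref{temp:repeated-cover} with $v=1$; the paper takes $A=\pi+1$, and your $A=2\pi$ works equally well. The lower bound is the same dyadic bisection tree, using $2^{\ell-k}$ active transitions and the known-lower-bits offset, which gives exact zero-or-one responses at depth $\lfloor\log_2K\rfloor+1$.
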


\begin{proof}
A grid of $\lceil\pi/\delta\rceil$ equally spaced phases
is a half-diamond $\delta$-cover: the nearest phase has
circular distance at most $\pi/\lceil\pi/\delta\rceil$,
which bounds the half-diamond distance of their unitary
channels. For $\delta\leq1$ the grid size is at most
$(\pi+1)/\delta$, so Proposition~\ref{temp:repeated-cover}
gives the upper bound with $A=\pi+1$, $v=1$.

Put $a=\lfloor\log_2K\rfloor$ and choose
$\theta=2\pi m/2^{a+1}$ with $0\leq m<2^{a+1}$.
Reveal its bits from least to most significant. To query
bit $k$, use the tester of Eq.~\eqref{low:signed-response}
with $2^{a-k}$ positive signs and zero at the other
transitions. Subtract the known phase
$2\pi(m\bmod2^k)/2^{k+1}$. The response is exactly the
complement of bit $k$. The control qubit holds the probe
while inactive transitions receive a basis input. One fixed $\theta$
realizes each complete leaf of depth $a+1$.
\end{proof}

\bibliography{capacity_references}
\end{document}